\newcommand{\maxCol}{\mathrm{51}}
\newcommand{\lmin}{\mathit{Lmin}}
\newcommand{\lmax}{\mathit{Lmax}}
\newcommand{\EarlyStopCV}{\textsc{EarlyStopCV}\xspace}
\newcommand{\ChooseNewColour}{\textsc{ChooseNewPhase1Colour}\xspace}
\newcommand{\CVChoice}{\textsc{CVChoice}\xspace}
\newcommand{\delay}{\ensuremath{\mathit{delay}(\alpha,\beta)}}
\newcommand{\colv}[2]{\ensuremath{\mathit{col}(v_{#1},#2)}}
\newcommand{\sweeprad}[1]{\ensuremath{\mathit{SR}(#1)}}
\newcommand{\icrit}{\ensuremath{i_{\mathrm{crit}}}}
\newcommand{\sleft}{\ensuremath{s_{\mathrm{left}}}}
\newcommand{\sright}{\ensuremath{s_{\mathrm{right}}}}
\newcommand{\rvalg}{\ensuremath{\mathcal{A}_{\textrm{rv}}}}
\newcommand{\colalg}{\ensuremath{\mathcal{A}_{\textrm{col}}}}
\newcommand{\threecolalg}{\ensuremath{\mathcal{A}_{\textrm{3col}}}}
\newcommand{\rvadjalg}{\ensuremath{\mathcal{A}_{\textrm{rv-adj}}}}
\newcommand{\rvcanonalg}{\ensuremath{\mathcal{A}_{\textrm{rv-canon}}}}
\newcommand{\rvknownDalg}{\ensuremath{\mathcal{A}_{\textrm{rv-D}}}}
\newcommand{\rvunknownDalg}{\ensuremath{\mathcal{A}_{\textrm{rv-noD}}}}
\newcommand{\ep}[2]{\ensuremath{{E}^{#2}_{#1}}}
\newcommand{\Pg}[2]{\ensuremath{\mathcal{P}^{#2}_{#1}}}
\newcommand{\klog}[1]{\ensuremath{\kappa\log^*(#1)}}
\DeclarePairedDelimiter{\floor}{\lfloor}{\rfloor}
\newcommand{\dcrit}{\ensuremath{d_{\textrm{crit}}}}
\newcommand{\first}[2]{\ensuremath{F(#1,#2)}}
\newtheorem{theorem}{Theorem}[section]
\newtheorem{lemma}[theorem]{Lemma}
\newtheorem{corollary}[theorem]{Corollary}
\newtheorem{claim}{Claim}
\newtheorem{proposition}[theorem]{Proposition}
\newtheorem{definition}[theorem]{Definition}
\newtheorem{fact}[theorem]{Fact}
\crefname{claim}{Claim}{Claims}
\crefname{fact}{Fact}{Facts}
\begin{document}
	\def\thefootnote{\fnsymbol{footnote}}
	
	\title{Fast Deterministic Rendezvous in Labeled Lines} 
	
	\author{
		Avery Miller\footnotemark[1]
		\and Andrzej Pelc\footnotemark[2]
	}
	
	\footnotetext[1]{Department of Computer Science, University of Manitoba, Winnipeg, Manitoba, R3T 2N2, Canada. {\tt avery.miller@umanitoba.ca}. Supported by NSERC Discovery Grant RGPIN--2017--05936.}
	\footnotetext[2]{D\'epartement d'informatique, Universit\'e du Qu\'ebec en Outaouais, Gatineau,
		Qu\'ebec J8X 3X7, Canada. {\tt pelc@uqo.ca}. Partially supported by NSERC Discovery Grant RGPIN--2018--03899
		and by the Research Chair in Distributed Computing at the
		Universit\'e du Qu\'ebec en Outaouais.}

	\maketitle

\begin{abstract}
Two mobile agents, starting from different nodes of a network modeled as a graph, and woken up at possibly different times, have to meet at the same node. This problem is known as {\em rendezvous}. Agents move in synchronous rounds. In each round, an agent can either stay idle or move to an adjacent node. We consider deterministic rendezvous in the infinite line, i.e., the infinite graph with all nodes of degree 2. Each node has a distinct label which is a positive integer. An agent currently located at a node can see its label and both ports 0 and 1 at the node. The time of rendezvous is the number of rounds until meeting, counted from the starting round of the earlier agent. We consider three scenarios. In the first scenario, each agent knows its position in the line, i.e., each of them knows its initial distance from the smallest-labeled node, on which side of this node it is located, and the direction towards it. For this scenario, we design a rendezvous algorithm working in time $O(D)$, where $D$ is the initial distance between the agents. This complexity is clearly optimal. In the second scenario, each agent initially knows only the label of its starting node and the initial distance $D$ between the agents. In this scenario, we design a rendezvous algorithm working in time $O(D\log^*\ell)$, where $\ell$ is the larger label of the starting nodes. We also prove a matching lower bound $\Omega(D\log^*\ell)$. Finally, in the most general scenario, where each agent initially knows only the label of its starting node, we design a rendezvous algorithm working in time $O(D^2(\log^*\ell)^3)$, which is thus at most cubic in the lower bound. All our results remain valid (with small changes) for arbitrary finite lines and for cycles. Our algorithms are drastically better than approaches that use graph exploration, which have running times that depend on the size or diameter of the graph. Our main methodological tool, and the main novelty of the paper, is a two way reduction: from fast colouring of the infinite labeled line using a constant number of colours in the $\mathcal{LOCAL}$ model to fast rendezvous in this line, and vice-versa. In one direction we use fast node colouring to quickly break symmetry between the identical agents. In the other direction, a lower bound on colouring time implies a lower bound on the time of breaking symmetry between the agents, and hence a lower bound on their meeting time.
\end{abstract}

\newpage

\section{Introduction}

\subsection{Background}
Two mobile agents, starting from different nodes of a network modeled as a graph, and woken up at possibly different times, have to meet at the same node.
This problem is known as {\em rendezvous}.
The autonomous mobile entities (agents) may be natural, such as people who want to meet in a city whose streets form a network. They may also represent 
human-made objects, such as software agents in computer networks or mobile robots navigating in a network of corridors in a mine or a building. 
Agents might want to meet to share previously collected data 
or to coordinate future network maintenance tasks. 

\subsection{Model and problem description}
We consider deterministic rendezvous in the infinite line, i.e., the infinite connected graph with all nodes of degree 2. Each node has a distinct label which is a positive integer. An agent currently located at a node can see its label and both ports 0 and 1 at the node. Technically, agents are anonymous, but each agent could adopt the label of its starting node as its label. Agents have synchronized clocks ticking at the same rate, measuring {\em rounds}.  Agents are woken up by an adversary in possibly different rounds. The clock of each agent starts in its wakeup round. In each round, an agent can either stay idle or move to an adjacent node. After moving, an agent knows on which port number it arrived at its current node. When agents cross, i.e., traverse an edge simultaneously in different directions, they do not notice this fact. No limitation is imposed on the memory of the agents. Computationally, they are modeled as Turing Machines. The time of rendezvous is the number of rounds until meeting, counted from the starting round of the earlier agent.

In most of the literature concerning rendezvous, nodes are assumed to be anonymous. The usual justification of this weak assumption is that in labeled graphs (i.e., graphs whose nodes are assigned distinct positive integers) each agent can explore the entire graph and then meet the other one at the node with the smallest label. Thus, the rendezvous problem in labeled graphs can be reduced to exploration. While this simple strategy is correct for finite graphs, it requires time at least equal to the size of the graph, and hence very inefficient for large finite graphs, even if the agents start at a very small initial distance. In infinite graphs, this strategy is incorrect: indeed, if label 1 is not used in some labeling, an agent may never learn what is the smallest used label. For large environments that would be impractical to exhaustively search, it is desirable and natural to relate rendezvous time to the initial distance between the agents, rather than to the size of the graph or the value of the largest node label. This motivates our choice to study the infinite labeled line: we must avoid algorithms that depend on exhaustive search, and there cannot be an efficient algorithm that is dependent on a global network property such as size, largest node label, etc. In particular, the initial distance between the agents and the labels of encountered nodes should be the only parameters available to measure the efficiency of algorithms. Results for the infinite line can then be applied to understand what's possible in large path-like environments.

We consider three scenarios, depending on the amount of knowledge available {\em a priori} to the agents. In the first scenario, each agent knows its position in the line, i.e., each of them knows its distance from the smallest-labeled node, on which side of this node it is located, and the direction towards it. This scenario is equivalent to assuming that the labeling is very particular, such that both above pieces of information can be deduced by the agent by inspecting the label of its starting node and port numbers at it. One example of such a labeling is $...8,6,4,2,1,3,5,7,...$ (recall that labels must be positive integers), with the following port numbering. For nodes with odd labels, port 1 always points to the neighbour with larger label and port 0 points to the neighbour with smaller label, while for nodes with even label, port 1 always points to the neighbour with smaller label and port 0 points to the neighbour with larger label. We will call the line with the above node labeling and port numbering the {\em canonical line}, and use, for any node, the term ``right of the node'' for the direction corresponding to port 1 and the term ``left of the node'' for the direction corresponding to port 0. In the second scenario that we consider, the labeling and port numbering are arbitrary and each agent knows {\em a priori} only the label of its starting node and the initial distance $D$ between the agents. Finally, in the third scenario, the labeling and port numbering are arbitrary and each agent knows {\em a priori} only the label of its starting node.

\subsection{Our results}
We start with the scenario of the canonical line. This scenario was previously considered in \cite{CCGKM}, where the authors give a rendezvous algorithm with optimal complexity $O(D)$, where $D$ is the (arbitrary) initial distance between the agents. However, they use the strong assumption that both agents are woken up in the same round. In our first result we get rid of this assumption: we design
a rendezvous algorithm with complexity $O(D)$, regardless of the delay between the wakeup of the agents. This complexity is clearly optimal. 

The remaining results are expressed in terms of the \emph{base-2 iterated logarithm} function, denoted by $\log^*$. The value of $\log^*(n)$ is 0 if $n \leq 1$, and otherwise its value is $1+\log^*(\log_2 n)$. This function grows extremely slowly with respect to $n$, e.g., $\log^*(2^{65536})=5$.

For the second scenario (arbitrary labeling with known initial distance $D$) we design a rendezvous algorithm working in time $O(D\log^*\ell)$, where $\ell$ is the larger label of the two starting nodes. We also prove a matching lower bound $\Omega(D\log^*\ell)$, by constructing an infinite labeled line in which this time is needed, even if $D$ is known and if agents start simultaneously. As a corollary, we get the following impossibility result: for any deterministic algorithm and for any finite time $T$, there exists an infinite labeled line such that two agents starting from some pair of adjacent nodes cannot meet in time $T$.

Finally, for the most general scenario, where each agent knows {\em a priori} only the label of its starting node, we design a rendezvous algorithm working in time
$O(D^2(\log^*\ell)^3)$, for arbitrary unknown $D$. This complexity is thus at most cubic in the above lower bound that holds even if $D$ is known.

It should be stressed that the complexities of our algorithms in the second and third scenarios depend on $D$ and on the larger label of the two starting nodes. No algorithm whose complexity depends on $D$ and on the maximum label in even a small vicinity of the starting nodes would be satisfactory, since neighbouring nodes can have labels differing in a dramatic way,
e.g., consider two agents that start at adjacent nodes, and these nodes are labeled with small integers like 2 and 3, but other node labels in their neighbourhoods are extremely large. Our approach demonstrates that this is an instance that can be solved very quickly, but an approach whose running time depends on the labels of other nodes in the neighbourhood can result in arbitrarily large running times (which are very far away from the lower bound).

An alternative way of looking at the above three scenarios is the following. In the first scenario, the agent is given {\em a priori} the entire (ordered) labeling of the line. Of course, since this is an infinite object, the labeling cannot be given to the agent as a whole, but the agent can {\em a priori} get the answer to any question about it, in our case answers to two simple questions: how far is the smallest-labeled node and in which direction. In the second and third scenarios, the agent gets information about the label of a given node only when visiting it. It is instructive to consider another, even weaker, scenario: the agent learns the label of its initial node but the other labels are never revealed to it. This weak scenario is equivalent to the scenario of labeled agents operating in an anonymous line: the labels of the agents are guaranteed to be different integers (they are the labels of the starting nodes) but all other nodes look the same. It follows from \cite{DFKP} that in this scenario the optimal time of rendezvous is $\Theta(D\log \ell)$, where $\ell$ is, as usual, the larger label of the starting nodes.\footnote{In \cite {DFKP} cycles are considered instead of the infinite line, and the model and result are slightly different but obtaining, as a corollary, this complexity in our model is straightforward.} 
Hence, at least for known $D$, we get strict separations between optimal rendezvous complexities, according to three ways of getting knowledge about the labeling of the line: $\Theta(D)$, if all knowledge is given at once, 
$\Theta(D\log^* \ell)$ if knowledge about the labeling is given as the agents visit new nodes, and $\Theta(D\log \ell)$, if no knowledge about the labeling is ever given to the agents, apart from the label of the starting node. Of course, in the ultimately weak scenario where even the label of the starting node is hidden from the agents, deterministic rendezvous would become impossible: there is no deterministic algorithm guaranteeing rendezvous of anonymous agents in an anonymous infinite line, even if they start simultaneously from adjacent nodes.

It should be mentioned that the $O(D\log \ell)$-time rendezvous algorithm from  \cite{DFKP} is also valid in our scenario of arbitrary labeled lines, with labels of nodes visible to the agents at their visit and with unknown $D$: the labels of the visited nodes can be simply ignored by the algorithm from  \cite{DFKP}. This complexity is incomparable with our complexity $O(D^2(\log^* \ell)^3)$. For bounded $D$ our algorithm is much faster but for large $D$ compared to $\ell$ it is less efficient.

As usual in models where agents cannot detect crossing on the same edge when moving in opposite directions, we guarantee rendezvous by creating time intervals in which one agent is idle at its starting node and the other searches a sufficiently large neighbourhood to include this node. Since the adversary can choose the starting rounds of the agents, it is difficult to organize these time intervals to satisfy the above requirement.
Our main methodological tool, and the main novelty of the paper is a two way reduction: from fast colouring of the infinite labeled line using a constant number of colours in the $\mathcal{LOCAL}$ model to fast rendezvous in this line, and vice-versa. In one direction we use fast node colouring to quickly break symmetry between the identical agents. In the other direction, a lower bound on colouring time implies a lower bound on the time of breaking symmetry between the agents, and hence a lower bound on their meeting time.  

As part of our approach to solve rendezvous using node colouring, we provide a result (based on the idea of Cole and Vishkin \cite{CV}) that might be of independent interest: for the $\mathcal{LOCAL}$ model, we give a deterministic distributed algorithm $\EarlyStopCV$ that properly 3-colours any infinite line whose nodes are initially labeled with integers greater than 1 that form a proper colouring\footnote{Note that any labeling of nodes by arbitrary distinct integers greater than 1 is an example of such a proper colouring.}, such that the execution of $\EarlyStopCV$ at any node $x$ with initial label $\mathrm{ID}_x$ terminates in time $O(\log^*(\mathrm{ID}_x))$, and the algorithm does not require that the nodes have any initial knowledge about the network other than their own label. 

All our results remain valid for finite lines and cycles: in the first scenario without any change, and in the two other scenarios with small changes. As previously mentioned, it is always possible to meet in a labeled line or cycle of size $n$ in time $O(n)$ by exploring the entire graph and going to the smallest-labeled node. Thus, in the  second scenario the upper and lower bounds on complexity change to $O(\min(n,D\log^*\ell))$
and $\Omega(\min(n,D\log^*\ell))$, respectively, and in the third, most general scenario, the complexity of (a slightly modified version of) our algorithm changes to $O(\min(n,D^2(\log^*\ell)^3))$.

\subsection{Related work}
Rendezvous has been studied both under randomized and deterministic scenarios.
A  survey of  randomized rendezvous under various models  can be found in
\cite{alpern02b}, cf. also  \cite{alpern95a,alpern02a,anderson90,baston98}. 
Deterministic rendezvous in networks has been surveyed in \cite{Pe,Pe2}.
Many authors
considered geometric scenarios (rendezvous in the real line, e.g.,  \cite{baston98,baston01},
or in the plane, e.g., \cite{anderson98a,anderson98b,BDPP,CGKK}).
Gathering more than two agents was studied, e.g., in \cite{fpsw,lim96,thomas92}.

In the deterministic setting, many authors studied the feasibility and time complexity of synchronous rendezvous in networks. For example, deterministic rendezvous of agents equipped with tokens used to mark nodes was considered, e.g., in~\cite{KKSS}. In most of the papers concerning rendezvous in networks, nodes of the network are assumed to be unlabeled and agents are not allowed to mark the nodes in any way.
In this case, the symmetry is usually broken by assigning the agents distinct labels and assuming that each agent knows its own label but not the label of the other agent.
Deterministic rendezvous of labeled agents in rings was investigated, e.g., in \cite{DFKP,KM} and in arbitrary graphs in  \cite{DFKP,KM,TSZ07}.
In \cite{DFKP}, the authors present a rendezvous algorithm whose running time is polynomial in the size of the graph, in the length of the shorter
label and in the delay between the starting times of the agents. In \cite{KM,TSZ07}, rendezvous time is polynomial in the first two of these parameters and independent of the delay.
In \cite{BP1,BP2} rendezvous was considered, respectively, in unlabeled infinite trees, and in arbitrary connected unlabeled infinite graphs, but the complexities depended, among others, on the logarithm of the size of the label space.
Gathering many anonymous agents in unlabeled networks was investigated in \cite{DP}. In this weak scenario, not all initial configurations of agents are possible to gather, and the authors of \cite{DP} characterized all such configurations and provided universal gathering algorithms for them. On the other hand, time of rendezvous in labeled networks was studied, e.g., in \cite{MP}, in the context of algorithms with advice. In \cite{CCGKM}, the authors studied rendezvous under a very strong assumption that each agent has a map of the network and knows its position in it.

Memory required by the agents to achieve deterministic rendezvous was studied in \cite{FP2} for trees and in  \cite{CKP} for general graphs.
Memory needed for randomized rendezvous in the ring was discussed, e.g., in~\cite{KKPM08}. 

Apart from the synchronous model used in this paper, several authors investigated asynchronous rendezvous and approach in the plane \cite{BBDDP,CFPS,fpsw} and in networks modeled as graphs
\cite{BCGIL,DPV,DGKKP}.
In the latter scenario, the agent chooses the edge to traverse, but the adversary controls the speed of the agent. Under this assumption, rendezvous
at a node cannot be guaranteed even in the two-node graph. Hence the rendezvous requirement is relaxed to permit the agents to meet inside an edge.

\subsection{Roadmap} 
In \Cref{sec:tools}, we state two results concerning fast colouring of labeled lines. In \Cref{sec:canon}, we present and analyze our optimal rendezvous algorithm for the canonical line. 
In \Cref{sec:knowndist}, we present our optimal rendezvous algorithm for arbitrary labeled lines with known initial distance between the agents, we analyze the algorithm, and we
give the proof of the matching lower bound. In \Cref{sec:unknowndist}, we give our rendezvous algorithm for arbitrary labeled lines with unknown initial distance between the agents. In \Cref{sec:linecolouring}, we give a fast colouring algorithm for the infinite line, as announced in \Cref{sec:tools}. In \Cref{sec:conclusion}, we conclude the paper and present some open problems.

\section{Tools}\label{sec:tools}

We will use two results concerning distributed node colouring of lines and cycles in the $\mathcal{LOCAL}$ communication model \cite{Pel}. In this model, communication proceeds in synchronous
rounds and all nodes start simultaneously. In each round, each node
can exchange arbitrary messages with all of its neighbours and perform arbitrary local computations. A \emph{proper colouring} of a graph is an assignment of one integer to each node of the graph such that adjacent nodes have different labels. Clearly, if a graph's node labels are distinct integers, then they form a proper colouring. In the graph $k$-colouring problem, the goal is for each node in the graph to adopt one of $k$ colours in such a way that the adopted colours form a proper colouring.

The first result is based on the 3-colouring algorithm of Cole and Vishkin \cite{CV}, but improves on their result in two ways. First, our algorithm does not require that the nodes know the size of the network or the largest label in the network. Second, the running time of our algorithm at any node $x$ with initial label $\mathrm{ID}_x$ is $O(\log^*(\mathrm{ID}_x))$. The running-time guarantee is vital for later results in this paper, and it is not provided by the original $O(\log^*(n))$ algorithm of Cole and Vishkin since the correctness of their algorithm depends on the fact that all nodes execute the algorithm for the same number of rounds. Our algorithm from \Cref{CValg}, called $\EarlyStopCV$, is described and analyzed in \Cref{sec:linecolouring}.

\begin{proposition}\label{CValg}
	There exists an integer constant $\kappa > 1$ and a deterministic distributed algorithm $\EarlyStopCV$ such that, for any infinite line $G$ with nodes labeled with integers greater than 1 that form a proper colouring, $\EarlyStopCV$ 3-colours $G$ and the execution at any node $x$ with label $\mathrm{ID}_x$ terminates in time at most $\kappa\log^*(\mathrm{ID}_x)$.
\end{proposition}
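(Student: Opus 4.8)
The plan is to realise \EarlyStopCV as a variant of the Cole--Vishkin colour reduction in which each node's new colour is controlled by its \emph{own} current colour rather than by its neighbour's, so that a node with a small label cannot be forced to spend many rounds recovering from a large neighbour. Fix an orientation of the line so that every node compares itself with one designated neighbour (its predecessor): each reduction step has a node recompute its colour from its current colour $c_x$ and that of its predecessor. Two things must be proved: (\emph{correctness}) the colouring stays proper and ends with at most three colours, and (\emph{running time}) node $x$ finalizes its colour within $\kappa\log^*(\mathrm{ID}_x)$ rounds. I first note why the unmodified rule is inadequate: if $c_x$ is a low-order prefix of the predecessor's colour, the least-significant differing bit can sit arbitrarily high, so $c_x$ blows up to a value governed by the neighbour, and the time to shrink it back is of order $\log^*$ of the \emph{neighbour's} label, not $O(\log^*(\mathrm{ID}_x))$.

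To fix this, I would use the following step, writing $L_x$ for the bit-length of $c_x$ and $i$ for the least index on which $c_x$ and its predecessor's colour differ. If $i<L_x$ (the generic case), set the new colour to $2(2i+c_x[i])$; if instead $c_x$ and the predecessor agree on all of bits $0,\dots,L_x-1$ (the prefix case, which is exactly what caused the blow-up), set the new colour to the odd value $2L_x+1$. In either case the new colour is at most $4L_x\le 4\lceil\log_2(c_x+1)\rceil$, i.e.\ bounded by a fixed function of $c_x$ alone. Hence the colours at $x$ obey the recursion $c^{(k+1)}_x\le 4\lceil\log_2(c^{(k)}_x+1)\rceil$ until they enter a constant range $\{0,\dots,C\}$, which happens after at most $\log^*(\mathrm{ID}_x)+O(1)$ rounds by the standard iterated-logarithm estimate. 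A concluding phase then reduces from $C$ to $3$ colours in $O(1)$ rounds: for each colour value from $3$ up to $C$ in turn, every node currently holding that value recolours itself with the smallest colour in $\{0,1,2\}$ not used by its (at most two) neighbours, which always exists. Combining the two phases yields the bound $\kappa\log^*(\mathrm{ID}_x)$.

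For correctness I would maintain the invariant that the colouring is proper after every round, and establish it for one reduction step by a case analysis on an edge $(v,p)$, with $p$ the predecessor of $v$, according to which branch each endpoint takes. The two branches are separated by parity (generic colours even, prefix colours odd), so a generic endpoint and a prefix endpoint never clash. Two generic endpoints reduce to the usual Cole--Vishkin argument: equal high parts $2i$ force the selected bits to differ. The remaining case, two prefix endpoints, uses the key asymmetry that the prefix relation is strict on bit-lengths: if $c_v$ is a prefix of $c_p$ then $L_v<L_p$, so the odd colours $2L_v+1$ and $2L_p+1$ differ. I would then handle early stopping by freezing: once $x$ enters the constant range and completes the final recolouring it stops and advertises its colour, and a still-active neighbour treats a frozen colour as a fixed boundary value; one checks that a frozen colour already lies in $\{0,1,2\}$ and that the constant-round final phase, applied consistently at the active/frozen interface, never recreates a conflict.

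The main obstacle is the joint design constraint in the reduction rule: a single step must simultaneously (i) bound each node's new colour by a function of its own colour, which is the role of the prefix branch, and (ii) preserve properness. These requirements pull against each other---indeed the naive prefix value $2L_x$ collides with a neighbour's generic value $2j+c_x[j]$ exactly when $j=L_x$---so the encoding (here, separating the branches by parity and tagging the prefix branch by bit-length) must be chosen so that properness survives every combination of branches and, crucially, also across the boundary between already-terminated and still-active nodes. Verifying this interface condition, so that nodes finishing at wildly different times dictated by their own labels still produce a globally proper $3$-colouring, is the delicate part of the argument.
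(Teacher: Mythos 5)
Your colour-reduction rule is a legitimate alternative to the paper's mechanism: where the paper applies a suffix-free encoding to the binary representations before the Cole--Vishkin step (so that the first differing index is automatically bounded by the length of the \emph{shorter} string), you split into a ``generic'' branch and a ``prefix'' branch separated by parity, and your case analysis for one synchronous step (generic/generic, generic/prefix, prefix/prefix with the strict length inequality) is sound. However, there are two genuine gaps. First, you ``fix an orientation of the line'' so that every node has a predecessor; on an undirected infinite line this cannot be done distributively. The standard repair -- orient each edge from smaller to larger label -- creates local minima and maxima that have no predecessor (or are predecessors of both neighbours) and must be treated specially; the paper spends an entire preprocessing phase on this, assigning such nodes reserved colours and letting the resulting directed sublines run in parallel. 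Your plan never addresses these boundary nodes.

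Second, and more seriously, the active/frozen interface is not merely ``delicate'': the rule as you state it fails there. Suppose node $p$ has frozen at final colour $2$ and its still-active successor $y$ currently holds colour $3$. Treating the frozen colour as a fixed boundary value, $y$ computes the least differing index $i=0$ (since $3=11_2$ and $2=10_2$ differ in bit $0$) and adopts $2(2\cdot 0 + 1) = 2$, colliding with $p$. The one-step properness argument only guarantees that $y$'s new colour differs from $p$'s \emph{new} colour when $p$ also applies the rule, not from $p$'s frozen colour -- and this interface is precisely the content of the proposition beyond classical Cole--Vishkin, since per-node termination in $O(\log^*(\mathrm{ID}_x))$ forces neighbours to finish at wildly different times. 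The paper resolves it by having an active node that detects a settled neighbour immediately adopt a reserved colour ($\mathit{Pdone}$ or $\mathit{Cdone}$) guaranteed distinct from every integer colour, and by running the final reduction to $\{0,1,2\}$ as a round-robin keyed to the \emph{global} clock over all possible settled colours, which forces neighbours to pick their final colours in different rounds. Your final phase (``for each colour value from $3$ up to $C$ in turn'') is not anchored to a common clock, so neighbours entering it at different times can recolour simultaneously and collide. Until you supply concrete mechanisms for both the boundary nodes and the frozen/active interface, the proof is incomplete at exactly the point where the proposition goes beyond the classical result.
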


The second result is a lower bound due to Linial \cite{LS, Li}. Note that Linial's original result was formulated for cycles labeled with integers in the range $1,\dots,n$, but the simplified proof in \cite{LS} can be adapted to hold in our formulation below.

\begin{proposition}\label{linial}
	Fix any positive integer $n$ and any set $\mathcal{I}$ of $n$ integers. For any deterministic algorithm $\mathcal{A}$ that 3-colours any path on $n$ nodes with distinct labels from $\mathcal{I}$, there is such a path and at least one node for which algorithm $\mathcal{A}$'s execution takes time at least $\frac{1}{2}\log^*(n)-1$.
\end{proposition}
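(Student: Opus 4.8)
The plan is to cast a fast $3$-colouring algorithm as a proper $3$-colouring of a suitable ``neighbourhood graph'' and then to show that this graph has large chromatic number unless the running time is large. Fix $\mathcal{A}$ and let $t$ be the maximum, over all $n$-node paths with distinct labels from $\mathcal{I}$ and all nodes, of the number of rounds after which $\mathcal{A}$ halts. Because $\mathcal{A}$ is deterministic and runs in the $\mathcal{LOCAL}$ model, the colour output at a node $v$ depends only on the labelled ball of radius $t$ around $v$. I would restrict attention to the subclass of inputs in which the labels increase monotonically along the path; these are legitimate inputs (distinct labels always form a proper colouring), so a lower bound obtained on this subclass applies to $\mathcal{A}$ in general. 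For such a path, the view of any node lying at distance more than $t$ from both endpoints is exactly an increasing $(2t+1)$-tuple of elements of $\mathcal{I}$, and $\mathcal{A}$ reduces to a function $f$ from increasing $(2t+1)$-tuples to $\{1,2,3\}$.

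Define the neighbourhood graph $G$ whose vertices are the increasing $(2t+1)$-tuples over $\mathcal{I}$, with an edge joining $(a_0 < \dots < a_{2t})$ to $(a_1 < \dots < a_{2t} < c)$ for every admissible $c$: these are precisely the views of two consecutive interior nodes on a common increasing path. Provided $n$ is large enough that such paths contain two interior nodes (which holds whenever $\tfrac12\log^*(n)-1 > 0$, the only nontrivial case, since there $t$ is tiny compared with $n$), the fact that $\mathcal{A}$ properly colours every such path forces $f$ to assign different colours to adjacent vertices of $G$. Hence $f$ is a proper $3$-colouring and $\chi(G) \le 3$. Since only the order type of $\mathcal{I}$ matters, $G$ is isomorphic, for every $n$-element $\mathcal{I}$, to the graph built over $\{1,\dots,n\}$, so it suffices to bound the chromatic number of the latter.

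The heart of the argument is to lower-bound $\chi(G)$ by an iterated logarithm. I would identify $G$ as an iterated directed line graph: let $G_0$ be the transitive tournament on $\{1,\dots,n\}$ (a directed arc $i \to j$ whenever $i<j$), whose underlying undirected graph is $K_n$, so $\chi(G_0)=n$. Writing $DL(H)$ for the directed line graph of $H$ (its vertices are the arcs of $H$, with $(u,v)$ joined to $(v,w)$), one checks that the vertices of $DL^{(m)}(G_0)$ are exactly the increasing $(m+1)$-tuples with the shift adjacency above, so $G = DL^{(2t)}(G_0)$. The key lemma is that $\chi(DL(H)) \ge \log_2 \chi(H)$: given a proper colouring $\phi$ of the arcs of $H$ with colour set $C$, colour each vertex $v$ of $H$ by the set $\{\phi(v,w) : v \to w\} \subseteq C$; for any arc $u \to v$ the colour $\phi(u,v)$ lies in the set assigned to $u$ but, by properness of $\phi$, in the set of no outgoing arc of $v$, so the two sets differ, giving a proper colouring of $H$ with at most $2^{|C|}$ colours and hence $\chi(H) \le 2^{\chi(DL(H))}$. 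Iterating this lemma $2t$ times yields $\chi(G) \ge \log_2^{(2t)}(n)$, the $2t$-fold iterated binary logarithm of $n$.

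Combining the two halves gives $\log_2^{(2t)}(n) \le \chi(G) \le 3$. Applying $\log_2$ twice more shows that $2t+2$ iterations of $\log_2$ bring $n$ below $1$, whence $\log^*(n) \le 2t+2$ and $t \ge \tfrac12\log^*(n)-1$. As $t$ was the maximum running time, some node on some path indeed runs for at least $\tfrac12\log^*(n)-1$ rounds; equivalently, if every node halted sooner then $\chi(G) > 3$ and some increasing path would be improperly coloured, contradicting correctness. I expect the main obstacle to be the chromatic lower bound, namely proving the directed-line-graph inequality and verifying that the $2t$-fold directed line graph of the transitive tournament is exactly the shift neighbourhood graph; the reduction itself is routine apart from the mild boundary bookkeeping needed to guarantee that every edge of $G$ is realised by two interior nodes of some $n$-node path.
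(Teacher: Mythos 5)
The paper does not prove this proposition itself; it cites Linial's lower bound and the simplified proof of Laurinharju and Suomela, and your overall strategy --- encode the radius-$t$ views as vertices of a shift/neighbourhood graph, identify that graph as an iterated directed line graph of the transitive tournament, and drive the chromatic number down by the inequality $\chi(H)\le 2^{\chi(DL(H))}$ --- is exactly that standard argument. The line-graph half of your proof is correct: the set-colouring argument for $\chi(DL(H))\ge\log_2\chi(H)$, the identification of $DL^{(2t)}(G_0)$ with the shift graph on increasing $(2t+1)$-tuples, and the final arithmetic $\log_2^{(2t)}(n)\le 3\Rightarrow\log^*(n)\le 2t+2\Rightarrow t\ge\frac12\log^*(n)-1$ are all sound.

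The gap is in the realizability half, and it is not the ``mild boundary bookkeeping'' you anticipate. You restrict the input class to paths whose labels increase monotonically along the \emph{entire} path. Since $|\mathcal{I}|=n$ and the path has $n$ nodes, every label is used, so there is exactly one such path (up to reflection): the elements of $\mathcal{I}$ in sorted order. On that single instance a $0$-round algorithm succeeds (colour the node carrying the $j$-th smallest element of $\mathcal{I}$ with $j\bmod 2$; the algorithm may depend on $\mathcal{I}$), so no lower bound can be extracted from this subclass. Concretely, the step ``the fact that $\mathcal{A}$ properly colours every such path forces $f$ to assign different colours to adjacent vertices of $G$'' fails for almost every edge of $G$: an edge joining $(a_0<\dots<a_{2t})$ to $(a_1<\dots<a_{2t}<c)$ is realized on a globally monotone path only when $a_0,\dots,a_{2t},c$ are \emph{consecutive} in the sorted order of $\mathcal{I}$, and the realized edges form a mere path in $G$, which is $2$-colourable. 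The fix is to drop global monotonicity: for each edge of $G$, take the increasing $(2t+2)$-tuple $a_0<\dots<a_{2t}<c$, place it as a consecutive increasing window anywhere in an $n$-node path (possible once $n\ge 2t+2$, the only nontrivial regime), and fill the remaining $n-2t-2$ positions with the leftover elements of $\mathcal{I}$ in arbitrary order. The radius-$t$ balls of the two centre nodes of the window are then exactly the two tuples, so correctness of $\mathcal{A}$ on that (non-monotone) path forces $f$ to separate them, and only then does $\chi(G)\le 3$ follow. With that repair, and a fixed convention for port numbering so that $f$ is well defined on ordered tuples, the rest of your argument goes through.
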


Finally, we highlight an important connection between the agent-based computational model considered in this paper (where algorithm executions may start in different rounds) and the node-based $\mathcal{LOCAL}$ model (where all algorithm executions start in the same round). Consider a fixed network $G$ in which the nodes are labeled with fixed distinct integers. From the communication constraints imposed by the $\mathcal{LOCAL}$ model, for any deterministic algorithm $\mathcal{A}$, each node $v$'s behaviour in the first $i$ rounds is completely determined by the labeled $(i-1)$-neighbourhood of $v$ in $G$ (i.e., the induced labeled subgraph of $G$ consisting of all nodes within distance $i-1$ from $v$). By \Cref{CValg}, we know that each node $x$ executing $\EarlyStopCV$ in $G$ determines its final colour within $\kappa\log^*(\mathrm{ID}_x)$ rounds, so its colour is completely determined by its labeled $(\kappa\log^*(\mathrm{ID}_x)-1)$-neighbourhood in $G$. So, in the agent-based computational model, an agent operating in the same labeled network $G$ with knowledge of $\EarlyStopCV$ and the value of $\kappa$ from \Cref{CValg} could, starting in any round: visit all nodes within a distance $\kappa\log^*(\mathrm{ID}_x)-1$ from $x$, then simulate in its local memory (in one round) the behaviour of $x$ in the first $\kappa\log^*(\mathrm{ID}_x)$ rounds of $\EarlyStopCV$ in the $\mathcal{LOCAL}$ model, and thus determine the colour that would be chosen by node $x$ as if all nodes in $G$ had executed $\EarlyStopCV$ in the $\mathcal{LOCAL}$ model starting in round 1.

\begin{proposition}\label{simulate}
	Consider a fixed infinite line $G$ such that the nodes are labeled with distinct integers greater than 1. Consider any two nodes labeled $v_\alpha$ and $v_\beta$ in $G$. Suppose that all nodes in $G$ execute algorithm $\EarlyStopCV$ in the $\mathcal{LOCAL}$ model starting in round 1, and let $c_\alpha$ and $c_\beta$ be the colours that nodes $v_\alpha$ and $v_\beta$, respectively, output at the end of their executions.
	Next, consider two agents $\alpha$ and $\beta$ that start their executions at nodes labeled $v_\alpha$ and $v_\beta$ in $G$, respectively (and possibly in different rounds). Suppose that there is a round $r_\alpha$ in which agent $\alpha$ knows the $(\kappa\log^*(v_\alpha))$-neighbourhood of node $v_\alpha$ in $G$, and suppose that there is a round $r_\beta$ in which agent $\beta$ knows the $(\kappa\log^*(v_\beta))$-neighbourhood of node $v_\beta$ in $G$ (and note that we may have $r_\alpha \neq r_\beta$). Then, agent $\alpha$ can compute $c_\alpha$ in round $r_\alpha$, and, agent $\beta$ can compute $c_\beta$ in round $r_\beta$.
\end{proposition}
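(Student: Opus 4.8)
The plan is to formalize the simulation argument sketched in the paragraph preceding the statement. The whole proposition rests on two ingredients: the termination-time guarantee of \Cref{CValg}, and a standard locality property of the $\mathcal{LOCAL}$ model that bounds how far information can travel in a fixed number of rounds. I would first prove the locality property as a self-contained claim, then combine it with \Cref{CValg} to identify exactly which part of $G$ determines $c_\alpha$ (and symmetrically $c_\beta$), and finally argue that the knowledge assumed in round $r_\alpha$ suffices to reconstruct that part and run a faithful local simulation.

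Concretely, I would first establish by induction on $i \ge 1$ the following claim: in any execution of a deterministic algorithm in the $\mathcal{LOCAL}$ model on $G$, the full transcript of a node $w$ through its first $i$ rounds (its states, the messages it sends and receives, and any output it produces) is a function only of the labeled $(i-1)$-neighbourhood of $w$. The base case $i=1$ is immediate, since before any communication a node sees only its own label. For the inductive step, $w$'s behaviour in round $i$ is governed by its state at the start of that round, which is a function of $w$'s own transcript through round $i-1$ (depending on the $(i-2)$-neighbourhood of $w$ by the inductive hypothesis) together with the round-$(i-1)$ messages sent by each neighbour $y$; by the inductive hypothesis applied to $y$, those messages depend only on the $(i-2)$-neighbourhood of $y$, which lies within the $(i-1)$-neighbourhood of $w$ because $y$ is adjacent to $w$. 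This yields the claim.

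Combining the claim with \Cref{CValg}, since the execution of \EarlyStopCV at node $v_\alpha$ terminates and fixes its output colour $c_\alpha$ within $\kappa\log^*(v_\alpha)$ rounds, $c_\alpha$ is determined by the labeled $(\kappa\log^*(v_\alpha)-1)$-neighbourhood of $v_\alpha$, which is contained in the $(\kappa\log^*(v_\alpha))$-neighbourhood that agent $\alpha$ knows in round $r_\alpha$. Since agent $\alpha$ also knows the description of \EarlyStopCV and the constant $\kappa$, in round $r_\alpha$ it can internally reconstruct this labeled subgraph and simulate, for every node within distance $\kappa\log^*(v_\alpha)-1$ of $v_\alpha$, the first $\kappa\log^*(v_\alpha)$ rounds of \EarlyStopCV, thereby reading off the colour that $v_\alpha$ outputs, which is exactly $c_\alpha$. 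The argument for agent $\beta$ in round $r_\beta$ is identical and, crucially, independent of $r_\alpha$, so that no synchronization between the two agents is required.

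The point requiring the most care — and the only place where \Cref{CValg} is essential rather than the generic locality argument — is that \EarlyStopCV has node-dependent termination times: a node $y$ near $v_\alpha$ carrying a much larger label would, in the genuine $\mathcal{LOCAL}$ execution, keep running well past round $\kappa\log^*(v_\alpha)$. I must therefore be explicit that computing $c_\alpha$ does not require simulating $G$ until every node halts, but only for the $\kappa\log^*(v_\alpha)$ rounds after which $v_\alpha$ itself has already committed to its colour; the locality claim then guarantees that this truncated simulation depends only on the $(\kappa\log^*(v_\alpha)-1)$-neighbourhood and hence is faithfully reproducible from the information the agent holds. I would also note the harmless one-hop slack in the radius, namely that the assumed $(\kappa\log^*(v_\alpha))$-neighbourhood is strictly larger than the $(\kappa\log^*(v_\alpha)-1)$-neighbourhood actually used.
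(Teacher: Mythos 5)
Your proposal is correct and follows essentially the same route as the paper, which justifies this proposition via the paragraph immediately preceding it: the locality property of the $\mathcal{LOCAL}$ model (behaviour through round $i$ is determined by the labeled $(i-1)$-neighbourhood) combined with the termination bound of \Cref{CValg}. You merely make explicit the induction behind the locality claim and the point that only $v_\alpha$'s own transcript—not the full halting of nearby larger-labeled nodes—needs to be reproduced, both of which the paper leaves implicit.
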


\section{The canonical line}\label{sec:canon}

In this section, we describe an algorithm $\rvcanonalg$ that solves rendezvous on the canonical line in time $O(D)$ when two agents start at arbitrary positions and the delay between the rounds in which they start executing the algorithm is arbitrary. The agents do not know the initial distance $D$ between them, and do not know the delay between the starting rounds. 

Denote by $\alpha$ and $\beta$ the two agents. Denote by $v_\alpha$ and $v_\beta$ the starting nodes of $\alpha$ and $\beta$, respectively. Denote by $\mathcal{O}$ the node on the canonical line with the smallest label. For $a \in \{\alpha,\beta\}$, we will write $d(v_a,\mathcal{O})$ to denote the distance between $v_a$ and $\mathcal{O}$.

\subsection{Algorithm description}
Algorithm $\rvcanonalg$ proceeds in phases, numbered starting at 0. Each phase's description has two main components. The first component is a colouring of all nodes on the line. At a high level, in each phase $i \geq 0$, the line is partitioned into segments consisting of $2^i$ consecutive nodes each, and the set of segments is properly coloured, i.e., all nodes in the same segment get the same colour, and two neighbouring nodes in different segments get different colours. The second component describes how an agent behaves when executing the phase. At a high level, the phase consists of equal-sized blocks of rounds, and in each block, an agent either stays idle at its starting node for all rounds in the block, or, it spends the block performing a search of nearby nodes in an attempt to find the other agent (and if not successful, returns back to its starting node). Whether an agent idles or searches in a particular block of the phase depends on the colour of its starting node. The overall idea is: there exists a phase in which the starting nodes of the agents will be coloured differently, and this will result in one agent idling while the other searches, which will result in rendezvous.

The above intuition overlooks two main difficulties. The first difficulty is that the agents do not know the initial distance between them, so they do not know how far they have to search when trying to find the other agent. To deal with this issue, the agents will use larger and larger ``guesses'' in each subsequent phase of the algorithm, and eventually the radius of their search will be large enough. The second difficulty is that the agents do not necessarily start the algorithm in the same round, so the agents' executions of the algorithm (i.e., the phases and blocks) can misalign in arbitrary ways. This makes it difficult to ensure that there is a large enough set of contiguous rounds during which one agent remains idle while the other agent searches. To deal with this issue, we carefully choose the sizes of blocks and phases, as well as the type of behaviour (idle vs. search) carried out in each block.

We now give the full details of an arbitrary phase $i \geq 0$ in the algorithm's execution.

\textbf{Colouring:} When an agent starts executing phase $i$, it first determines the colour of its starting node. From a global perspective, the idea is to assign colours to nodes on the infinite line in the following way: 
\begin{enumerate}
	\item Partition the nodes into segments consisting of $2^i$ nodes each, such that node $\mathcal{O}$ is the leftmost node of its segment. Denote the segment containing $\mathcal{O}$ by $S_0$, denote each subsequent segment to its right using increasing indices (i.e., $S_1,S_2,\ldots$) and denote each subsequent segment to its left using decreasing indices (i.e., $S_{-1}, S_{-2},\ldots$).
	\item For each segment with even index, colour all nodes in the segment with ``red''. For each segment with odd index, colour all nodes in the segment with ``blue''. As a result, neighbouring segments are always assigned different colours.
\end{enumerate}
However, the agent executing phase $i$ does not compute this colouring for the entire line, it need only determine the colour of its starting node. To do so, it uses its knowledge about the distance and direction from its starting node to node $\mathcal{O}$. In particular,
\begin{itemize}
	\item if the agent's starting node $s$ is $\mathcal{O}$ or to the right of $\mathcal{O}$, it computes the index of the segment in which $s$ is contained, i.e., $\mathit{index} = \lfloor d(s,\mathcal{O}) / 2^i \rfloor$. If $\mathit{index}$ is even, then $s$ has colour red, and otherwise $s$ has colour blue.
	\item if the agent's starting node $s$ is to the left of $\mathcal{O}$, it computes the index of the segment in which $s$ is contained, i.e., $\mathit{index} = -\lceil (d(s,\mathcal{O})) / 2^i \rceil$. If $\mathit{index}$ is even, then $s$ has colour red, and otherwise $s$ has colour blue.
\end{itemize}

\textbf{Behaviour:} Phase $i$ consists of $44\cdot 2^{i+1}$ rounds, partitioned into 11 equal-sized blocks of $4\cdot 2^{i+1}$ rounds each. The number $2^{i+1}$ has a special significance: it is the search radius used by an agent during phase $i$ whenever it is searching for the other agent. We use the notation $\sweeprad{i}$ to represent the value $2^{i+1}$ in the remainder of the algorithm's description and analysis. In each of the 11 blocks of the phase, the agent behaves in one of two ways: if a block is designated as a \emph{waiting block}, the agent stays at its starting node $v$ for all $4\cdot \sweeprad{i}$ rounds of the block; otherwise, a block is designated as a \emph{searching block}, in which the agent moves right $\sweeprad{i}$ times, then left $2\cdot\sweeprad{i}$ times, then right $\sweeprad{i}$ times. In other words, during a searching block, the agent explores all nodes within its immediate $\sweeprad{i}$-neighbourhood, and ends up back at its starting node. Whether a block is designated as `waiting' or `searching' depends on the agent's starting node colour in phase $i$. In particular, if its starting node is red, then blocks 1,8,9 are searching blocks and all others are waiting blocks; otherwise, if its starting node is blue, then blocks 1,10,11 are searching blocks and all others are waiting blocks.
This concludes the description of $\rvcanonalg$, whose pseudocode appears in \Cref{canonpseudo}.

\begin{algorithm}
	\small
	\caption{Algorithm \rvcanonalg, executed by an agent $a$ from its starting node $v_a$}
	\label{canonpseudo}
	\begin{algorithmic}[1]
		\ForAll{$i \leftarrow 0,\ldots$} \Comment{phase counter}
		\State {\color{gray} \%\% Determine the colour of my starting node $v_a$}
		\If{located at or to the right of $\mathcal{O}$}
		\State $\mathtt{index} \leftarrow \lfloor d(v_a,\mathcal{O}) / 2^i \rfloor$
		\If{$\mathtt{index}$ is even}
		\State $\mathtt{mycolour} \leftarrow \mathrm{red}$
		\Else
		\State $\mathtt{mycolour} \leftarrow \mathrm{blue}$
		\EndIf
		\Else
		\State $\mathtt{index} \leftarrow -\lceil d(v_a,\mathcal{O})/ 2^i\rceil$
		\If{$\mathtt{index}$ is even}
		\State $\mathtt{mycolour} \leftarrow \mathrm{red}$
		\Else
		\State $\mathtt{mycolour} \leftarrow \mathrm{blue}$
		\EndIf
		\EndIf
		\State {\color{gray} \%\% Execute the blocks of this phase}
		\ForAll{$b \leftarrow 1,\ldots,11$} \Comment{block counter}
		\If{$[(\mathtt{mycolour} = red)$ \textbf{and} $(b \in \{1,8,9\})]$ \textbf{or} $[(\mathtt{mycolour} = blue)$ \textbf{and} $(b \in \{1,10,11\})]$}
		\State {\color{gray} \%\% searching block}
		\State move right $2^{i+1}$ times
		\State move left $2^{i+1}$ times
		\State move left $2^{i+1}$ times
		\State move right $2^{i+1}$ times
		\Else
		\State {\color{gray} \%\% waiting block}
		\State wait at starting node $v_a$ for $4\cdot 2^{i+1}$ rounds
		\EndIf
		\EndFor
		\EndFor
	\end{algorithmic}
\end{algorithm}

\subsection{Algorithm analysis}

In this section, we prove that Algorithm $\rvcanonalg$ solves rendezvous within $O(D)$ rounds for arbitrary starting positions and arbitrary starting rounds. There are two key facts that we exploit in the analysis. The first fact is that the search radius $\sweeprad{i}$ doubles from one phase to the next, which addresses the difficulty that the agents do not know the initial distance between them. More specifically, after some number of phases, there is a phase such that the agent's search radius is sufficiently large to reach the other agent's starting node. We call the first such phase the \emph{critical} phase, and we denote the phase number of the critical phase by $\icrit$. As the distance between the starting nodes is $D$, we can directly compute the value of $\icrit$ by solving $2^{\icrit+1} \geq D$, which yields $\icrit = \lceil\log_2(D)\rceil - 1$. However, recall that the agents do not know the value of $D$, so they cannot determine the value of $\icrit$ while executing the algorithm.

The second fact is that each phase colours the canonical line differently, since the segment size doubles from one phase to the next. In some phases, the starting nodes of the agents are coloured the same, but, we will show that there is at least one phase during which the two starting nodes are coloured differently, which allows us to break symmetry between the behaviours of the agents. More formally, for agent $a \in \{\alpha,\beta\}$, we denote by $\colv{a}{i}$ the colour assigned to the agent $a$'s starting node in phase $i$. We say that a phase $i$ is \emph{asymmetric} if $\colv{\alpha}{i} \neq \colv{\beta}{i}$, and otherwise we say it is \emph{symmetric}. 

The following observations give sufficient conditions for when the critical phase is asymmetric. Of the two starting nodes of the agents, we denote by $\sleft$ the leftmost starting node, and denote by $\sright$ the rightmost starting node. We consider the two cases where the two starting nodes are on the same side of node $\mathcal{O}$. The remaining case (one starting node on each side) will be relatively easy to deal with later.

\begin{lemma}\label{lem:bothright}
	Suppose that both $\sleft$ and $\sright$ are at or to the right of node $\mathcal{O}$. If $D \leq 2^{\icrit+1} - 1 - (d(\sleft,\mathcal{O}) \bmod 2^{\icrit})$, then phase $\icrit$ is asymmetric.
\end{lemma}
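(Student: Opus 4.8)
The plan is to reduce the statement to a parity computation on segment indices. Since both $\sleft$ and $\sright$ lie at or to the right of $\mathcal{O}$, both agents compute their phase-$\icrit$ colour using the same rule: the colour of a node $s$ is determined by the parity of $\mathit{index}=\lfloor d(s,\mathcal{O})/2^{\icrit}\rfloor$, with even index giving red and odd index giving blue. Writing $m=2^{\icrit}$ (so that $\sweeprad{\icrit}=2^{\icrit+1}=2m$), phase $\icrit$ is asymmetric precisely when $\lfloor d(\sleft,\mathcal{O})/m\rfloor$ and $\lfloor d(\sright,\mathcal{O})/m\rfloor$ have opposite parities. My goal is therefore to show that, under the stated hypothesis, the index of $\sright$ is exactly one more than the index of $\sleft$.

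First I would record the elementary geometric fact that, since both starting nodes are on the same side of $\mathcal{O}$ and $\sright$ is the rightmost, $d(\sright,\mathcal{O})=d(\sleft,\mathcal{O})+D$. Next, set $a=d(\sleft,\mathcal{O})$ and write $a=qm+r$ with $q=\lfloor a/m\rfloor$ and $r=a\bmod m\in\{0,\dots,m-1\}$, so that the index of $\sleft$ is $q$. Then $d(\sright,\mathcal{O})=qm+(r+D)$, and hence the index of $\sright$ equals $q+\lfloor (r+D)/m\rfloor$. It remains to show $\lfloor (r+D)/m\rfloor=1$, which yields indices $q$ and $q+1$ of opposite parity, i.e.\ different colours.

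The value of $\lfloor (r+D)/m\rfloor$ is pinned down by two bounds. For the lower bound I would use the definition of the critical phase: since $\icrit$ is the first phase whose search radius reaches $D$, phase $\icrit-1$ does not, giving $m=2^{\icrit}<D$; together with $r\geq 0$ this yields $r+D>m$, hence $\lfloor (r+D)/m\rfloor\geq 1$. For the upper bound I would use the hypothesis $D\leq 2m-1-r$, which rearranges to $r+D\leq 2m-1<2m$, hence $\lfloor (r+D)/m\rfloor\leq 1$. Combining the two gives $\lfloor (r+D)/m\rfloor=1$ and completes the argument.

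The computation itself is routine; the step requiring the most care is justifying the strict inequality $m<D$ from the definition of $\icrit$ (and thus that the lower bound is genuinely strict), since this is exactly what forces the index to increase by at least one and is the only place the criticality of the phase is used. A secondary point to verify is that the floor-based colouring rule is indeed the one applied to both nodes, which relies on the hypothesis that neither starting node lies strictly to the left of $\mathcal{O}$.
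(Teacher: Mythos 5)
Your proposal is correct and follows essentially the same route as the paper: both arguments reduce the claim to showing that the segment index of $\sright$ exceeds that of $\sleft$ by exactly one, using $2^{\icrit}<D$ (from the minimality of $\icrit$) for the lower bound and the hypothesis $D\leq 2^{\icrit+1}-1-(d(\sleft,\mathcal{O})\bmod 2^{\icrit})$ for the upper bound. Your packaging via $\lfloor (r+D)/m\rfloor=1$ is just a slight rearrangement of the paper's two-sided bound on the difference of the floors.
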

\begin{proof}
	Suppose that $D \leq 2^{\icrit+1} - 1 - (d(\sleft,\mathcal{O}) \bmod 2^{\icrit})$.
	
	Recall from the description of the algorithm that if an agent's starting node is at or to the right of node $\mathcal{O}$, then, when executing phase $\icrit$, it determines the colour of its starting node $s$ by calculating the parity of $\lfloor d(s,\mathcal{O}) / 2^{\icrit} \rfloor$. We set out to prove that $\lfloor d(\sright,\mathcal{O}) / 2^{\icrit} \rfloor - \lfloor d(\sleft,\mathcal{O}) / 2^{\icrit} \rfloor = 1$, which is sufficient to prove the result since it implies that the nodes $\sleft$ and $\sright$ are coloured differently in phase $\icrit$.
	
	First, we set out to prove that $\lfloor d(\sright,\mathcal{O}) / 2^{\icrit} \rfloor - \lfloor d(\sleft,\mathcal{O}) / 2^{\icrit} \rfloor \geq 1$. As both agents are located at or to the right of node $\mathcal{O}$, note that the initial distance $D$ between the two agents can be calculated by $d(\sright,\mathcal{O}) - d(\sleft,\mathcal{O})$. Next, by the definition of $\icrit$, it is the smallest integer such that $2^{\icrit+1} \geq D$, which implies that $2^{\icrit} < D$. Therefore, we can conclude that $d(\sright,\mathcal{O}) - d(\sleft,\mathcal{O}) > 2^{\icrit}$. Dividing through by $2^{\icrit}$, we get that $(d(\sright,\mathcal{O}) / 2^{\icrit}) - (d(\sleft,\mathcal{O}) / 2^{\icrit}) > 1$, which implies that $\lfloor d(\sright,\mathcal{O}) / 2^{\icrit} \rfloor - \lfloor d(\sleft,\mathcal{O}) / 2^{\icrit} \rfloor > 0$. Since all terms on the left side are integers, we conclude that $\lfloor d(\sright,\mathcal{O}) / 2^{\icrit} \rfloor - \lfloor d(\sleft,\mathcal{O}) / 2^{\icrit} \rfloor \geq 1$, as desired.
	
	Next, we set out to prove that $\lfloor d(\sright,\mathcal{O}) / 2^{\icrit} \rfloor - \lfloor d(\sleft,\mathcal{O}) / 2^{\icrit} \rfloor \leq 1$. Once again, note that the initial distance $D$ between the two agents can be calculated by $d(\sright,\mathcal{O}) - d(\sleft,\mathcal{O})$. Using the Quotient-Remainder Theorem, we can rewrite each of $d(\sright,\mathcal{O})$ and $d(\sleft,\mathcal{O})$ as follows:
	\begin{itemize}
		\item $d(\sright,\mathcal{O}) = 2^{\icrit}\cdot\lfloor d(\sright,\mathcal{O}) / 2^{\icrit} \rfloor + (d(\sright,\mathcal{O}) \bmod 2^{\icrit})$
		\item $d(\sleft,\mathcal{O}) = 2^{\icrit}\cdot\lfloor d(\sleft,\mathcal{O}) / 2^{\icrit} \rfloor + (d(\sleft,\mathcal{O}) \bmod 2^{\icrit})$
	\end{itemize}
	Substituting these expressions into the equation $d(\sright,\mathcal{O}) - d(\sleft,\mathcal{O}) = D$ gives
	\begin{align*}
		& 2^{\icrit}\cdot(\lfloor d(\sright,\mathcal{O}) / 2^{\icrit} \rfloor - \lfloor d(\sleft,\mathcal{O}) / 2^{\icrit} \rfloor) \\
		=\ & D - (d(\sright,\mathcal{O}) \bmod 2^{\icrit}) + (d(\sleft,\mathcal{O}) \bmod 2^{\icrit})
	\end{align*}
	By assumption, we have $D \leq 2^{\icrit+1} - 1 - (d(\sleft,\mathcal{O}) \bmod 2^{\icrit})$. Making this substitution into the previous equation and simplifying, we conclude that
	\[
	2^{\icrit}\cdot(\lfloor d(\sright,\mathcal{O}) / 2^{\icrit} \rfloor - \lfloor d(\sleft,\mathcal{O}) / 2^{\icrit} \rfloor) \leq 2^{\icrit+1} - 1 - (d(\sright,\mathcal{O}) \bmod 2^{\icrit})
	\]
	Finally, note that $d(\sright,\mathcal{O}) \bmod 2^{\icrit} \geq 0$, so the right side of this inequality is strictly less than $2^{\icrit+1}$. Dividing both sides by $2^{\icrit}$ gives
	\[
	\lfloor d(\sright,\mathcal{O}) / 2^{\icrit} \rfloor - \lfloor d(\sleft,\mathcal{O}) / 2^{\icrit} \rfloor < 2
	\]
	As all terms on the left side are integers, we get $\lfloor d(\sright,\mathcal{O}) / 2^{\icrit} \rfloor - \lfloor d(\sleft,\mathcal{O}) / 2^{\icrit} \rfloor \leq 1$, as desired.
\end{proof}
\begin{lemma}\label{lem:bothleft}
	Suppose that both $\sleft$ and $\sright$ are to the left of node $\mathcal{O}$. If $D \leq 2^{\icrit+1} - 1 - ((d(\sright,\mathcal{O}) -1)\bmod 2^{\icrit})$, then phase $\icrit$ is asymmetric.
\end{lemma}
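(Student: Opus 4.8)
The plan is to mirror the proof of \Cref{lem:bothright} almost verbatim, with two changes forced by the fact that both nodes now lie strictly to the left of $\mathcal{O}$: the relevant rounding is a ceiling rather than a floor, and the roles of $\sleft$ and $\sright$ as near/far points are swapped. Recall from the algorithm's colouring rule that when a starting node $s$ is to the left of $\mathcal{O}$, its colour is determined by the parity of $\lceil d(s,\mathcal{O})/2^{\icrit}\rceil$ (since the computed index is $-\lceil d(s,\mathcal{O})/2^{\icrit}\rceil$, whose parity equals that of the ceiling). Since $\mathcal{O}$ is to the right of both nodes, $\sleft$ is the farther one, so $D = d(\sleft,\mathcal{O}) - d(\sright,\mathcal{O})$. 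As in the previous lemma, I would reduce the whole statement to showing
\[
\lceil d(\sleft,\mathcal{O})/2^{\icrit}\rceil - \lceil d(\sright,\mathcal{O})/2^{\icrit}\rceil = 1,
\]
since a difference of exactly $1$ forces opposite parities and hence different colours.

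The lower bound $\lceil d(\sleft,\mathcal{O})/2^{\icrit}\rceil - \lceil d(\sright,\mathcal{O})/2^{\icrit}\rceil \geq 1$ goes through exactly as before: by definition of $\icrit$ we have $2^{\icrit} < D = d(\sleft,\mathcal{O}) - d(\sright,\mathcal{O})$, so $d(\sleft,\mathcal{O})/2^{\icrit} - d(\sright,\mathcal{O})/2^{\icrit} > 1$; combining $\lceil x\rceil \geq x$ with $\lceil y\rceil < y+1$ gives $\lceil x\rceil - \lceil y\rceil > (x-y)-1 > 0$, and integrality upgrades this to $\geq 1$.

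The upper bound is where the one genuinely new idea enters, and is the step I expect to require the most care. The clean way to reuse the Quotient--Remainder argument is to convert ceilings into floors via the identity $\lceil m/2^{\icrit}\rceil = \lfloor (m-1)/2^{\icrit}\rfloor + 1$, valid for every positive integer $m$; this is legitimate here precisely because both nodes are \emph{strictly} left of $\mathcal{O}$, so $d(\sleft,\mathcal{O})$ and $d(\sright,\mathcal{O})$ are at least $1$. This identity is exactly what accounts for the $-1$ appearing inside the hypothesis, and applying it to both terms shows
\[
\lceil d(\sleft,\mathcal{O})/2^{\icrit}\rceil - \lceil d(\sright,\mathcal{O})/2^{\icrit}\rceil = \lfloor (d(\sleft,\mathcal{O})-1)/2^{\icrit}\rfloor - \lfloor (d(\sright,\mathcal{O})-1)/2^{\icrit}\rfloor .
\]
Writing each of $d(\sleft,\mathcal{O})-1$ and $d(\sright,\mathcal{O})-1$ through the Quotient--Remainder Theorem and subtracting (using $(d(\sleft,\mathcal{O})-1)-(d(\sright,\mathcal{O})-1)=D$) yields
\[
2^{\icrit}\!\left(\lfloor (d(\sleft,\mathcal{O})-1)/2^{\icrit}\rfloor - \lfloor (d(\sright,\mathcal{O})-1)/2^{\icrit}\rfloor\right) = D - ((d(\sleft,\mathcal{O})-1)\bmod 2^{\icrit}) + ((d(\sright,\mathcal{O})-1)\bmod 2^{\icrit}).
\]
Substituting the hypothesis $D \leq 2^{\icrit+1} - 1 - ((d(\sright,\mathcal{O})-1)\bmod 2^{\icrit})$ cancels the $(d(\sright,\mathcal{O})-1)\bmod 2^{\icrit}$ term and bounds the right-hand side by $2^{\icrit+1} - 1 - ((d(\sleft,\mathcal{O})-1)\bmod 2^{\icrit}) < 2^{\icrit+1}$.

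Finally, dividing by $2^{\icrit}$ gives that the floor difference is strictly less than $2$, hence at most $1$ by integrality, which transfers back to the ceiling difference and combines with the lower bound to pin it at exactly $1$. Opposite parities of $\lceil d(\sleft,\mathcal{O})/2^{\icrit}\rceil$ and $\lceil d(\sright,\mathcal{O})/2^{\icrit}\rceil$ then give $\sleft$ and $\sright$ different colours in phase $\icrit$, so $\colv{\alpha}{\icrit} \neq \colv{\beta}{\icrit}$ and phase $\icrit$ is asymmetric. The only real obstacle is getting the ceiling-to-floor bookkeeping exactly right so that the $-1$ shift lands the remainder term precisely on the one named in the hypothesis; once that identity is in place, the rest is the same arithmetic as in \Cref{lem:bothright}.
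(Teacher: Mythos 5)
Your proposal is correct and follows essentially the same route as the paper's proof: reduce to showing the ceiling difference equals $1$, get the lower bound from $2^{\icrit} < D$, and get the upper bound by shifting both distances by $1$ so that the identity $\lceil n/m\rceil = \lfloor (n-1)/m\rfloor + 1$ lets the Quotient--Remainder argument from \Cref{lem:bothright} go through with the remainder term named in the hypothesis. Your explicit observation that the $-1$ shift is legitimate because both distances are at least $1$ is a nice touch that the paper leaves implicit.
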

\begin{proof}
	Suppose that $D \leq 2^{\icrit+1} - 1 - ((d(\sright,\mathcal{O}) -1)\bmod 2^{\icrit})$.
	
	Recall from the description of the algorithm that if an agent's starting node is to the left of node $\mathcal{O}$, then, when executing phase $\icrit$, it determines the colour of its starting node $s$ by calculating the parity of $-\lceil d(s,\mathcal{O}) / 2^{\icrit} \rceil$. We set out to prove that $(-\lceil d(\sright,\mathcal{O}) / 2^{\icrit} \rceil) - (-\lceil d(\sleft,\mathcal{O}) / 2^{\icrit} \rceil) = 1$, which is sufficient to prove the result since it implies that the nodes $\sleft$ and $\sright$ are coloured differently in phase $\icrit$.
	
	First, we set out to prove that $(-\lceil d(\sright,\mathcal{O}) / 2^{\icrit} \rceil) - (-\lceil d(\sleft,\mathcal{O}) / 2^{\icrit} \rceil) \geq 1$. As both starting nodes are to the left of node $\mathcal{O}$, note that the initial distance $D$ between the two agents can be calculated by $d(\sleft,\mathcal{O}) - d(\sright,\mathcal{O})$. Next, by the definition of $\icrit$, it is the smallest integer such that $2^{\icrit+1} \geq D$, which implies that $2^{\icrit} < D$. Therefore, we can conclude that $d(\sleft,\mathcal{O}) - d(\sright,\mathcal{O}) > 2^{\icrit}$. Dividing through by $2^{\icrit}$, we get that $(d(\sleft,\mathcal{O}) / 2^{\icrit}) - (d(\sright,\mathcal{O}) / 2^{\icrit}) > 1$, which implies that $\lceil d(\sleft,\mathcal{O}) / 2^{\icrit} \rceil - \lceil d(\sright,\mathcal{O}) / 2^{\icrit} \rceil > 0$. Since all terms on the left side are integers, we conclude that $\lceil d(\sleft,\mathcal{O}) / 2^{\icrit} \rceil - \lceil d(\sright,\mathcal{O}) / 2^{\icrit} \rceil \geq 1$.
	This can be re-written as $(- \lceil d(\sright,\mathcal{O}) / 2^{\icrit} \rceil)-(-\lceil d(\sleft,\mathcal{O}) / 2^{\icrit} \rceil)  \geq 1$, as desired.
	
	Next, we set out to prove that $(-\lceil d(\sright,\mathcal{O}) / 2^{\icrit} \rceil) - (-\lceil d(\sleft,\mathcal{O}) / 2^{\icrit} \rceil) \leq 1$. Once again, note that the initial distance $D$ between the two agents is $d(\sleft,\mathcal{O}) - d(\sright,\mathcal{O})$. This expression is equal to $(d(\sleft,\mathcal{O})-1) - (d(\sright,\mathcal{O})-1)$. Using the Quotient-Remainder Theorem, we can rewrite each of $d(\sleft,\mathcal{O})-1$ and $d(\sright,\mathcal{O})-1$ as follows:
	\begin{itemize}
		\item $d(\sleft,\mathcal{O})-1 = 2^{\icrit}\cdot\lfloor (d(\sleft,\mathcal{O})-1) / 2^{\icrit} \rfloor + ((d(\sleft,\mathcal{O})-1) \bmod 2^{\icrit})$
		\item $d(\sright,\mathcal{O})-1 = 2^{\icrit}\cdot\lfloor (d(\sright,\mathcal{O})-1) / 2^{\icrit} \rfloor + ((d(\sright,\mathcal{O})-1) \bmod 2^{\icrit})$
	\end{itemize}
	Substituting these expressions into the equation $(d(\sleft,\mathcal{O})-1) - (d(\sright,\mathcal{O})-1) = D$ gives
		\begin{align*}
		& 2^{\icrit}\cdot(\lfloor (d(\sleft,\mathcal{O})-1) / 2^{\icrit} \rfloor - \lfloor (d(\sright,\mathcal{O})-1) / 2^{\icrit} \rfloor)\\
		=\ & D - ((d(\sleft,\mathcal{O})-1) \bmod 2^{\icrit}) + ((d(\sright,\mathcal{O})-1) \bmod 2^{\icrit})
		\end{align*}
	Using the identity $\lfloor (n-1)/m \rfloor = \lceil n/m \rceil-1$ and simplifying, the equation becomes
		\begin{align*}
		& 2^{\icrit}\cdot(\lceil d(\sleft,\mathcal{O}) / 2^{\icrit} \rceil - \lceil d(\sright,\mathcal{O}) / 2^{\icrit} \rceil) \\
		=\ & D - ((d(\sleft,\mathcal{O})-1) \bmod 2^{\icrit}) + ((d(\sright,\mathcal{O})-1) \bmod 2^{\icrit})
		\end{align*}
	By assumption, we have $D \leq 2^{\icrit+1} - 1 - ((d(\sright,\mathcal{O}) -1)\bmod 2^{\icrit})$. Making this substitution into the previous equation and simplifying, we conclude that
	{\small
		\[
		2^{\icrit}\cdot(\lceil d(\sleft,\mathcal{O}) / 2^{\icrit} \rceil - \lceil d(\sright,\mathcal{O}) / 2^{\icrit} \rceil) \leq 2^{\icrit+1} - 1 - ((d(\sleft,\mathcal{O})-1) \bmod 2^{\icrit})
		\]
	}
	Finally, note that $d(\sleft,\mathcal{O}) \bmod 2^{\icrit} \geq 0$, so the right side of this inequality is strictly less than $2^{\icrit+1}$. Thus, we have
			\[
	2^{\icrit}\cdot(\lceil d(\sleft,\mathcal{O}) / 2^{\icrit} \rceil - \lceil d(\sright,\mathcal{O}) / 2^{\icrit} \rceil) < 2^{\icrit+1}
	\]
	Dividing both sides by $2^{\icrit}$ gives
	\[
	\lceil d(\sleft,\mathcal{O}) / 2^{\icrit} \rceil - \lceil d(\sright,\mathcal{O}) / 2^{\icrit} \rceil < 2
	\]
	As all terms on the left side are integers, we get $\lceil d(\sleft,\mathcal{O}) / 2^{\icrit} \rceil - \lceil d(\sright,\mathcal{O}) / 2^{\icrit} \rceil \leq 1$. This inequality can be re-written to give the desired inequality $(- \lceil d(\sright,\mathcal{O}) / 2^{\icrit} \rceil) - (-\lceil d(\sleft,\mathcal{O}) / 2^{\icrit} \rceil)  \leq 1$.
\end{proof}

In cases where the critical phase is symmetric, there are situations where rendezvous cannot be solved by the end of phase $\icrit$ (e.g., if the two agents start the algorithm in the same round, their behaviour during phase $\icrit$ will be exactly the same, so rendezvous will not occur). However, we next observe that if phase $\icrit$ is symmetric, then we can guarantee that phase $\icrit+1$ is asymmetric, and we will later use this fact to prove that rendezvous occurs.
\begin{lemma}\label{nextasym}
	If phase $\icrit$ is symmetric, then phase $\icrit+1$ is asymmetric.
\end{lemma}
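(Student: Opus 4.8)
The plan is to split into the same three geometric cases used before: both starting nodes at or to the right of $\mathcal{O}$, both strictly to the left of $\mathcal{O}$, and one on each side. For the two same-side cases I would reuse the index-difference machinery from \Cref{lem:bothright} and \Cref{lem:bothleft}, but now extract information from the assumption that phase $\icrit$ is \emph{symmetric} rather than asymmetric. The single guiding identity is that passing from phase $\icrit$ to phase $\icrit+1$ doubles the segment size, so dividing by $2^{\icrit+1}$ instead of $2^{\icrit}$ halves the relevant index difference; concretely I would use $\lfloor \lfloor x/2^{\icrit}\rfloor /2\rfloor = \lfloor x/2^{\icrit+1}\rfloor$ and the ceiling analogue $\lceil \lceil x/2^{\icrit}\rceil/2\rceil = \lceil x/2^{\icrit+1}\rceil$.

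Consider first the both-right case and write $a = d(\sleft,\mathcal{O})$, $b = d(\sright,\mathcal{O})$, so that $\Delta := \lfloor b/2^{\icrit}\rfloor - \lfloor a/2^{\icrit}\rfloor$ governs the phase-$\icrit$ colours. From the first half of the proof of \Cref{lem:bothright} we already have $\Delta \geq 1$ unconditionally. I would add the complementary unconditional bound $\Delta \leq 2$, which follows from $b - a = D \leq 2^{\icrit+1}$ (the defining property of $\icrit$) together with $\lfloor b/2^{\icrit}\rfloor - \lfloor a/2^{\icrit}\rfloor < (b-a)/2^{\icrit} + 1$. Since phase $\icrit$ being symmetric forces the two floors to have the same parity, $\Delta$ is even, and being sandwiched in $[1,2]$ it must equal exactly $2$. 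Applying the halving identity then gives $\lfloor b/2^{\icrit+1}\rfloor - \lfloor a/2^{\icrit+1}\rfloor = 1$, an odd number, so the two starting nodes receive different colours in phase $\icrit+1$. The both-left case is handled identically using ceilings in place of floors (mirroring the relationship between \Cref{lem:bothright} and \Cref{lem:bothleft}), via the ceiling halving identity.

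For the remaining case, with $\sleft$ strictly left of $\mathcal{O}$ and $\sright$ at or right of $\mathcal{O}$, I expect phase $\icrit+1$ to be asymmetric \emph{unconditionally}, so the symmetry hypothesis is not even needed. Writing $a = d(\sleft,\mathcal{O}) \geq 1$ and $b = d(\sright,\mathcal{O}) \geq 0$ with $a+b = D \leq 2^{\icrit+1}$, all relevant distances fit inside a single phase-$(\icrit+1)$ segment on each side: one computes $\lceil a/2^{\icrit+1}\rceil = 1$ (since $1 \leq a \leq 2^{\icrit+1}$) and $\lfloor b/2^{\icrit+1}\rfloor = 0$ (since $0 \leq b < 2^{\icrit+1}$). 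Hence $\sleft$ gets index $-1$ (blue) and $\sright$ gets index $0$ (red), which differ.

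The calculations themselves are routine; the only genuinely delicate point is obtaining the \emph{upper} bound $\Delta \leq 2$ (and its ceiling counterpart) in the same-side cases, because the earlier lemmas bound this quantity from above only under an extra hypothesis on $D$, whereas here I must bound it unconditionally from the single fact $D \leq 2^{\icrit+1}$. Once that bound is in place, the ``parity forces the difference to be $2$, then halving sends it to $1$'' argument closes each same-side case, and the opposite-side case is immediate.
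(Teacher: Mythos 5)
Your proof is correct, and in the two same-side cases it takes a genuinely different route from the paper's. The paper applies \Cref{lem:bothright} (resp.\ \Cref{lem:bothleft}) in contrapositive form to extract the inequality $D > 2^{\icrit+1}-1-(d(\sleft,\mathcal{O})\bmod 2^{\icrit})$ from the symmetry hypothesis, and then runs a second Quotient--Remainder computation at scale $2^{\icrit+1}$ to pin the phase-$(\icrit+1)$ index difference at exactly $1$. You instead work entirely at scale $2^{\icrit}$: the sandwich $1 \le \Delta \le 2$ follows unconditionally from $2^{\icrit} < D \le 2^{\icrit+1}$ (your bound $\lfloor b/2^{\icrit}\rfloor-\lfloor a/2^{\icrit}\rfloor < (b-a)/2^{\icrit}+1 \le 3$ is valid), symmetry forces $\Delta$ to be even and hence equal to $2$, and the exact halving identities $\lfloor\lfloor x/2^{\icrit}\rfloor/2\rfloor=\lfloor x/2^{\icrit+1}\rfloor$ and $\lceil\lceil x/2^{\icrit}\rceil/2\rceil=\lceil x/2^{\icrit+1}\rceil$ send a difference of $2$ to a difference of exactly $1$ (if $q=p+2$ then $\lfloor q/2\rfloor-\lfloor p/2\rfloor=1$, likewise for ceilings). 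This is shorter and more conceptual: the symmetry hypothesis enters only through parity, and you never need the mod-based threshold that \Cref{lem:bothright,lem:bothleft} are built around --- only the unconditional lower-bound half of their proofs. What the paper's version buys is uniformity: its phase-$(\icrit+1)$ computation is a near-verbatim replay of the phase-$\icrit$ lemmas, so the reader checks one template twice rather than learning a new identity. Your opposite-side case coincides with the paper's Case~3, which is likewise unconditional.
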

\begin{proof}
	We consider three cases depending on the relative positions of $\sleft$ and $\sright$ with respect to $\mathcal{O}$. 
	\begin{itemize}
		\item \textbf{Case 1:  both $\sleft$ and $\sright$ are at or to the right of node $\mathcal{O}$}
		
		Suppose that phase $\icrit$ is symmetric. By \Cref{lem:bothright}, it follows that $D > 2^{\icrit+1} - 1 - (d(\sleft,\mathcal{O}) \bmod 2^{\icrit})$.
		
		Recall from the description of the algorithm that if an agent's starting node is at or to the right of node $\mathcal{O}$, then, when executing phase $\icrit+1$, it determines the colour of its starting node $s$ by calculating the parity of $\lfloor d(s,\mathcal{O}) / 2^{\icrit+1} \rfloor$. We set out to prove that $\lfloor d(\sright,\mathcal{O}) / 2^{\icrit+1} \rfloor - \lfloor d(\sleft,\mathcal{O}) / 2^{\icrit+1} \rfloor = 1$, which is sufficient to prove the result since it implies that the nodes $\sleft$ and $\sright$ are coloured differently in phase $\icrit+1$.
		
		First, we set out to prove that $\lfloor d(\sright,\mathcal{O}) / 2^{\icrit+1} \rfloor - \lfloor d(\sleft,\mathcal{O}) / 2^{\icrit+1} \rfloor \leq 1$. As both agents are located at or to the right of node $\mathcal{O}$, note that the initial distance $D$ between the two agents can be calculated by $d(\sright,\mathcal{O}) - d(\sleft,\mathcal{O})$. Next, by the definition of $\icrit$, it is the smallest integer such that $2^{\icrit+1} \geq D$. Therefore, we can conclude that $d(\sright,\mathcal{O}) - d(\sleft,\mathcal{O}) \leq 2^{\icrit+1}$. Dividing through by $2^{\icrit+1}$, we get that $(d(\sright,\mathcal{O}) / 2^{\icrit+1}) - (d(\sleft,\mathcal{O}) / 2^{\icrit+1}) \leq 1$, which implies the inequality $\lfloor d(\sright,\mathcal{O}) / 2^{\icrit+1} \rfloor - \lfloor d(\sleft,\mathcal{O}) / 2^{\icrit+1} \rfloor \leq 1$, as desired.
		
		Next, we set out to prove that $\lfloor d(\sright,\mathcal{O}) / 2^{\icrit+1} \rfloor - \lfloor d(\sleft,\mathcal{O}) / 2^{\icrit+1} \rfloor \geq 1$. Once again, note that the initial distance $D$ between the two agents can be calculated by $d(\sright,\mathcal{O}) - d(\sleft,\mathcal{O})$. Using the Quotient-Remainder Theorem, we can rewrite each of $d(\sright,\mathcal{O})$ and $d(\sleft,\mathcal{O})$ as follows:
		\begin{itemize}
			\item[$\bullet$] $d(\sright,\mathcal{O}) = 2^{\icrit+1}\cdot\lfloor d(\sright,\mathcal{O}) / 2^{\icrit+1} \rfloor + (d(\sright,\mathcal{O}) \bmod 2^{\icrit+1})$
			\item[$\bullet$] $d(\sleft,\mathcal{O}) = 2^{\icrit+1}\cdot\lfloor d(\sleft,\mathcal{O}) / 2^{\icrit+1} \rfloor + (d(\sleft,\mathcal{O}) \bmod 2^{\icrit+1})$
		\end{itemize}
		Substituting these expressions into the equation $d(\sright,\mathcal{O}) - d(\sleft,\mathcal{O}) = D$ gives
			\begin{align*}
			& 2^{\icrit+1}\cdot(\lfloor d(\sright,\mathcal{O}) / 2^{\icrit+1} \rfloor - \lfloor d(\sleft,\mathcal{O}) / 2^{\icrit+1} \rfloor)\\
			=\ & D - (d(\sright,\mathcal{O}) \bmod 2^{\icrit+1}) + (d(\sleft,\mathcal{O}) \bmod 2^{\icrit+1})
			\end{align*}
		By assumption, we have $D > 2^{\icrit+1} - 1 - (d(\sleft,\mathcal{O}) \bmod 2^{\icrit})$, and re-arranging gives $(d(\sleft,\mathcal{O}) \bmod 2^{\icrit}) > 2^{\icrit+1} - 1 - D$. Applying the general fact that $(k \bmod 2^{j+1}) \geq (k \bmod 2^{j})$, we conclude that $(d(\sleft,\mathcal{O}) \bmod 2^{\icrit+1}) > 2^{\icrit+1} - 1 - D$.  Making this substitution to the previous equation and simplifying, we conclude that
		\begin{align*}
		& 2^{\icrit+1}\cdot(\lfloor d(\sright,\mathcal{O}) / 2^{\icrit+1} \rfloor - \lfloor d(\sleft,\mathcal{O}) / 2^{\icrit+1} \rfloor)\\
		>\ & 2^{\icrit+1} - 1 - (d(\sright,\mathcal{O}) \bmod 2^{\icrit+1})
		\end{align*}
		Finally, note that $(d(\sright,\mathcal{O}) \bmod 2^{\icrit+1}) \leq 2^{\icrit+1}-1$, so we get 
		\[
		2^{\icrit+1}\cdot(\lfloor d(\sright,\mathcal{O}) / 2^{\icrit+1} \rfloor - \lfloor d(\sleft,\mathcal{O}) / 2^{\icrit+1} \rfloor) > 0
		\]
%
%
		Dividing both sides by $2^{\icrit+1}$ gives
		\[
		\lfloor d(\sright,\mathcal{O}) / 2^{\icrit+1} \rfloor - \lfloor d(\sleft,\mathcal{O}) / 2^{\icrit+1} \rfloor > 0
		\]
		Both terms on the left side are integers, so $\lfloor d(\sright,\mathcal{O}) / 2^{\icrit+1} \rfloor - \lfloor d(\sleft,\mathcal{O}) / 2^{\icrit+1} \rfloor \geq 1$, as desired.
		
		\item \textbf{Case 2:  both $\sleft$ and $\sright$ are to the left of node $\mathcal{O}$}
		
		Suppose that phase $\icrit$ is symmetric. By \Cref{lem:bothleft}, it follows that $D > 2^{\icrit+1} - 1 - ((d(\sright,\mathcal{O}) -1)\bmod 2^{\icrit})$.
		
		Recall from the description of the algorithm that if an agent's starting node is to the left of node $\mathcal{O}$, then, when executing phase $\icrit+1$, it determines the colour of its starting node $s$ by calculating the parity of $-\lceil d(s,\mathcal{O}) / 2^{\icrit+1} \rceil$. We set out to prove that $(-\lceil d(\sright,\mathcal{O}) / 2^{\icrit+1} \rceil) - (-\lceil d(\sleft,\mathcal{O}) / 2^{\icrit+1} \rceil) = 1$, which is sufficient to prove the result since it implies that the nodes $\sleft$ and $\sright$ are coloured differently in phase $\icrit+1$.
		
		First, we prove that $(-\lceil d(\sright,\mathcal{O}) / 2^{\icrit+1} \rceil) - (-\lceil d(\sleft,\mathcal{O}) / 2^{\icrit+1} \rceil) \leq 1$. As both starting nodes are to the left of node $\mathcal{O}$, note that the initial distance $D$ between the two agents can be calculated by $d(\sleft,\mathcal{O}) - d(\sright,\mathcal{O})$. Next, by the definition of $\icrit$, it is the smallest integer such that $2^{\icrit+1} \geq D$. Therefore, $d(\sleft,\mathcal{O}) - d(\sright,\mathcal{O}) \leq 2^{\icrit+1}$. Dividing through by $2^{\icrit+1}$, we get that $(d(\sleft,\mathcal{O}) / 2^{\icrit+1}) - (d(\sright,\mathcal{O}) / 2^{\icrit+1}) \leq 1$, which implies that $\lceil d(\sleft,\mathcal{O}) / 2^{\icrit+1} \rceil - \lceil d(\sright,\mathcal{O}) / 2^{\icrit+1} \rceil \leq 1$. This can be re-written as $(- \lceil d(\sright,\mathcal{O}) / 2^{\icrit+1} \rceil)-(-\lceil d(\sleft,\mathcal{O}) / 2^{\icrit+1} \rceil)  \leq 1$, as desired.
		
		Next, we prove that $(-\lceil d(\sright,\mathcal{O}) / 2^{\icrit+1} \rceil) - (-\lceil d(\sleft,\mathcal{O}) / 2^{\icrit+1} \rceil) \geq 1$. Once again, note that the initial distance $D$ between the two agents can be calculated by $d(\sleft,\mathcal{O}) - d(\sright,\mathcal{O})$. This expression is equal to $(d(\sleft,\mathcal{O})-1) - (d(\sright,\mathcal{O})-1)$. Using the Quotient-Remainder Theorem, we can rewrite each of $d(\sleft,\mathcal{O})-1$ and $d(\sright,\mathcal{O})-1$ as follows:
		\begin{itemize}
			\item[$\bullet$] $d(\sleft,\mathcal{O})-1 = 2^{\icrit+1}\cdot\lfloor (d(\sleft,\mathcal{O})-1) / 2^{\icrit+1} \rfloor + ((d(\sleft,\mathcal{O})-1) \bmod 2^{\icrit+1})$
			\item[$\bullet$] $d(\sright,\mathcal{O})-1 = 2^{\icrit+1}\cdot\lfloor (d(\sright,\mathcal{O})-1) / 2^{\icrit+1} \rfloor + ((d(\sright,\mathcal{O})-1) \bmod 2^{\icrit+1})$
		\end{itemize}
		Substituting these into the equation $(d(\sleft,\mathcal{O})-1) - (d(\sright,\mathcal{O})-1) = D$ gives
			\begin{align*}
			& 2^{\icrit+1}\cdot(\lfloor (d(\sleft,\mathcal{O})-1) / 2^{\icrit+1} \rfloor - \lfloor (d(\sright,\mathcal{O})-1) / 2^{\icrit+1} \rfloor)\\
			 =\ & D - ((d(\sleft,\mathcal{O})-1) \bmod 2^{\icrit+1}) + ((d(\sright,\mathcal{O})-1) \bmod 2^{\icrit+1})
			\end{align*}
		Using the identity $\lfloor (n-1)/m \rfloor = \lceil n/m \rceil-1$ and simplifying, the equation becomes
			\begin{align*}
			& 2^{\icrit+1}\cdot(\lceil d(\sleft,\mathcal{O}) / 2^{\icrit+1} \rceil - \lceil d(\sright,\mathcal{O}) / 2^{\icrit+1} \rceil)\\
			 =\ & D - ((d(\sleft,\mathcal{O})-1) \bmod 2^{\icrit+1})) + ((d(\sright,\mathcal{O})-1) \bmod 2^{\icrit+1}))
			\end{align*}
		By assumption, we have $D > 2^{\icrit+1} - 1 - ((d(\sright,\mathcal{O}) -1)\bmod 2^{\icrit})$, and re-arranging gives $((d(\sright,\mathcal{O})-1) \bmod 2^{\icrit}) > 2^{\icrit+1} - 1 - D$. Further, applying the general fact that $(k \bmod 2^{j+1}) \geq (k \bmod 2^{j})$, we conclude that $((d(\sright,\mathcal{O})-1) \bmod 2^{\icrit+1}) > 2^{\icrit+1} - 1 - D$.  Making this substitution into the previous equation and simplifying, we conclude that
		\begin{align*}
		& 2^{\icrit+1}\cdot(\lceil d(\sleft,\mathcal{O}) / 2^{\icrit+1} \rceil - \lceil d(\sright,\mathcal{O}) / 2^{\icrit+1} \rceil)\\
		>\ & 2^{\icrit+1} - 1 - ((d(\sleft,\mathcal{O})-1) \bmod 2^{\icrit+1})
		\end{align*}

		Finally, note that $d(\sleft,\mathcal{O}) \bmod 2^{\icrit} \leq 2^{\icrit+1} - 1$, so
		\[
		2^{\icrit+1}\cdot(\lceil d(\sleft,\mathcal{O}) / 2^{\icrit+1} \rceil - \lceil d(\sright,\mathcal{O}) / 2^{\icrit+1} \rceil) > 0
		\]
		Dividing both sides by $2^{\icrit+1}$ gives
		\[
		\lceil d(\sleft,\mathcal{O}) / 2^{\icrit+1} \rceil - \lceil d(\sright,\mathcal{O}) / 2^{\icrit+1} \rceil > 0
		\]
		All terms on the left side are integers, so $\lceil d(\sleft,\mathcal{O}) / 2^{\icrit+1} \rceil - \lceil d(\sright,\mathcal{O}) / 2^{\icrit+1} \rceil \geq 1$. This can be written as $(- \lceil d(\sright,\mathcal{O}) / 2^{\icrit+1} \rceil) - (-\lceil d(\sleft,\mathcal{O}) / 2^{\icrit+1} \rceil)  \geq 1$, as desired.
		\item \textbf{Case 3: $\sleft$ is to the left of node $\mathcal{O}$, and $\sright$ is at or to the right of node $\mathcal{O}$}
		
		We prove that phase $\icrit+1$ is asymmetric. The value of $D$ can also be written as $d(\sleft,\mathcal{O}) + d(\sright,\mathcal{O})$. Also, by the definition of $\icrit$, we know that $2^{\icrit+1} \geq D$. So we conclude that $d(\sleft,\mathcal{O}) + d(\sright,\mathcal{O}) \leq 2^{\icrit+1}$. From this fact, we get:
		\begin{itemize}
			\item $d(\sleft,\mathcal{O}) \leq 2^{\icrit+1}$ since $d(\sright,\mathcal{O}) \geq 0$.
			\item $d(\sright,\mathcal{O}) < 2^{\icrit+1}$ since $d(\sleft,\mathcal{O}) > 0$ because $\sleft$ is to the left of $\mathcal{O}$.
		\end{itemize}
		From the algorithm's description, the agent that starts at $\sleft$ determines the colour of its starting node by computing the parity of $- \lceil d(\sleft,\mathcal{O}) / 2^{\icrit+1} \rceil)$. But we showed above that $d(\sleft,\mathcal{O}) \leq 2^{\icrit+1}$, so $- \lceil d(\sleft,\mathcal{O}) / 2^{\icrit+1} \rceil)$ evaluates to -1. From the algorithm's description, this means that the agent determines that its starting node's colour is blue.
		
		From the algorithm's description, the agent that starts at $\sright$ determines the colour of its starting node by computing the parity of $\lfloor d(\sright,\mathcal{O}) / 2^{\icrit+1} \rfloor$. But we showed above that $d(\sright,\mathcal{O}) < 2^{\icrit+1}$, so $\lfloor d(\sright,\mathcal{O}) / 2^{\icrit+1} \rfloor$ evaluates to 0. From the algorithm's description, this means that the agent determines that its starting node's colour is red.
		
		Since the two starting nodes are assigned different colours in phase $\icrit+1$, it follows that phase $\icrit+1$ is asymmetric, as desired.
	\end{itemize}
\end{proof}

\begin{theorem}
	Algorithm $\rvcanonalg$ solves rendezvous in $O(D)$ rounds on the canonical line, for arbitrary start positions and arbitrary start delay.
\end{theorem}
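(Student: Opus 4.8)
The plan is to single out one phase in which the two agents are guaranteed to behave differently \emph{and} in which the search radius already spans the gap between them, and then to show that arbitrary delay cannot prevent one agent from sweeping over the other while the latter sits still. By \Cref{nextasym} (whose proof rests on \Cref{lem:bothright} and \Cref{lem:bothleft}), at least one of phases $\icrit$ and $\icrit+1$ is asymmetric; call it $i^{*}$. Since $i^{*} \geq \icrit$ and $2^{\icrit+1} \geq D$ by definition of $\icrit$, the radius satisfies $\sweeprad{i^{*}} = 2^{i^{*}+1} \geq D$. I would also record the timing facts needed later: writing $L = 4\cdot\sweeprad{i^{*}}$ for the block length, phase $i^{*}$ has length $11L$ and, measured from an agent's own wakeup, begins at round $T = 88(2^{i^{*}}-1) = 11L - 88 < 11L$. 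Because $2^{i^{*}} < 2D$, each of $T$, $L$, $11L$ is $O(D)$; this is what yields an $O(D)$ running time, and the inequality $T < 11L$ (phase $i^{*}$ is longer than everything before it) is what will keep the large-delay threshold below the range covered by the symmetry-breaking argument.

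The conceptual core is a single sufficient condition for rendezvous, which I would isolate as a claim: if some searching block $B$ of one agent has radius $\geq D$ and no searching block of the other agent overlaps $B$, then rendezvous occurs during $B$. The justification is that an agent performing no searching during a time interval stays at its starting node throughout it (this covers waiting blocks, dormancy before wakeup, and the boundaries between blocks and phases); hence the non-searching agent remains at its start node for all of $B$, while the searching agent sweeps its entire $\sweeprad{i^{*}}$-neighbourhood, which contains the other's start node because their distance is $D \leq \sweeprad{i^{*}}$. In particular this also handles meeting a still-dormant agent, so the large-delay situation is just the special case in which the other agent has no searching blocks during $B$ at all.

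With this reduction in hand, the bulk of the proof is a case analysis producing, for every delay $\delta \geq 0$ (taking the first-waking agent to be $\alpha$), such a disjoint radius-$\geq D$ searching block. I would split on the colours in the asymmetric phase, giving two genuinely different subcases ($\alpha$ red / $\beta$ blue and $\alpha$ blue / $\beta$ red), since red searches in blocks $1,8,9$ while blue searches in $1,10,11$. In each subcase I would lay both agents' phase grids on one timeline, placing $\alpha$'s phase $i^{*}$ at $[T, T+11L)$ and $\beta$'s at $[\delta+T, \delta+T+11L)$, and then exhibit, according to the size of $\delta$, either a searching block of the later agent $\beta$ (one of its block $1$, its block $8$/$9$ or $10$/$11$, or its block $1$ of phase $i^{*}+1$) lying entirely inside a long waiting stretch of the earlier agent (blocks $2$--$7$ if $\alpha$ is red, blocks $2$--$9$ if $\alpha$ is blue, or blocks $2$--$7$ of phase $i^{*}+1$, which are waiting for either colour), or a searching block of $\alpha$ that precedes $\beta$'s wakeup or lands inside $\beta$'s waiting stretch. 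The purpose of placing two consecutive searching blocks ($8,9$ and $10,11$) and of the constant $11$ is exactly that the $\delta$-ranges handled by these choices overlap and together cover $[0,\infty)$; I would verify this by computing, for each candidate block, the interval of $\delta$ for which it sits in a waiting stretch and avoids the other agent's searches, and checking that the unions leave no gap, with the dormant-agent case (available once $\delta \geq T+L < 12L$) closing the upper end.

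The main obstacle is precisely this last step. The arbitrary delay misaligns the two block grids, and because even a \emph{small}-radius search by the agent we want to keep stationary moves it off its start node, one cannot simply ignore the other agent's early (radius-$<D$) phases; every candidate searching block must be checked disjoint from all of the partner's searches, not merely from its large-radius ones. Making the covered $\delta$-intervals tile $[0,\infty)$ in both subcases is what forces the specific searching-block positions and the number of blocks, and it is where the bookkeeping (several overlapping interval inclusions per subcase) concentrates. Once a disjoint radius-$\geq D$ block is located, the time bound is immediate: in the overlap cases the block used ends by round $\delta + T + 13L$, where the applicable range forces $\delta = O(D)$, and in the dormant-agent case by round $T+L$; so in all cases rendezvous occurs within $O(D)$ rounds of the earlier agent's start, which is the statement for $\rvcanonalg$.
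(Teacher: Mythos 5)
Your proposal is correct and follows essentially the same route as the paper's proof: invoke \Cref{nextasym} to obtain an asymmetric phase $i^{*}\in\{\icrit,\icrit+1\}$ with $\sweeprad{i^{*}}\geq D$, and then case-split on the delay to exhibit a radius-$\geq D$ searching block of one agent (block $8$/$10$ of phase $i^{*}$, block $1$ of phase $i^{*}$ or $i^{*}+1$, or a block executed while the other agent is still dormant) that lies entirely inside an idle stretch of the other. Your reorganization — isolating the ``disjoint searching block inside an idle stretch'' condition as a standalone claim and folding the large-delay case into it — is a cosmetic streamlining of the paper's argument, and your interval computations (in particular that the dormant-case threshold $T+L=12L-88$ sits below the upper end $12L$ of the range covered by block $1$ of phase $i^{*}+1$) check out.
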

\begin{proof}
	The agent that starts executing the algorithm first is called the \emph{early agent}, and the other agent is called the \emph{late agent}. If both agents start executing the algorithm in the same round, then arbitrarily call one of them early and the other one late. Without loss of generality, we assume that $\alpha$ is the early agent. The number of rounds that elapse between the two starting rounds is denoted by $\delay$.   
	
	First, we address the case where the delay between the starting rounds of the agents is extremely large. This situation is relatively easy to analyze, since agent $\beta$ remains idle during agent $\alpha$'s entire execution up until they meet.
	\begin{claim}
		If $\delay > 48\cdot \sweeprad{\icrit}$, then rendezvous occurs by the end of phase $\icrit$ in agent $\alpha$'s execution.
	\end{claim}
	To prove the claim, we note that the total number of rounds that elapse in agent $\alpha$'s execution during phases $0,\ldots,\icrit-1$ is $\sum_{i=0}^{\icrit-1} 44\cdot\sweeprad{i} = 44\sum_{i=0}^{\icrit-1} 2^{i+1} < 44\cdot 2^{\icrit+1} = 44\cdot\sweeprad{\icrit}$. The first block of phase $\icrit$ is a searching block, so agent $\alpha$ explores its $\sweeprad{\icrit}$-neighbourhood during the $4\cdot\sweeprad{\icrit}$ rounds of the block. Since $D \leq \sweeprad{\icrit}$, it follows that agent $\alpha$ will be located at agent $\beta$'s starting node during this block. This occurs within the first $48\cdot\sweeprad{\icrit}$ rounds of $\alpha$'s execution, and by our assumption on $\delay$, we know that agent $\beta$ is located at its starting node during all of these rounds. It follows that rendezvous occurs during phase $\icrit$ of $\alpha$'s execution, which completes the proof of the claim.
	
	So, in the remainder of the proof, we assume that $\delay \leq 48\cdot \sweeprad{\icrit}$. We separate the proof into cases based on whether phase $\icrit$ is asymmetric or symmetric.
	\begin{itemize}
		\item \textbf{Case 1: Phase $\icrit$ is asymmetric}\\
		We consider subcases based on the delay between the agent's start times. 
		\begin{itemize}
			\item \underline{Case (i): $0 \leq \delay \leq 4\cdot\sweeprad{\icrit}$}
			
			There are two subcases, depending on the value of $\colv{\alpha}{\icrit}$. 
			
			First, suppose that $\colv{\alpha}{\icrit} = \mathrm{red}$, and consider the start of block 10 of $\beta$'s execution of phase $\icrit$. By the assumed condition on $\delay$, agent $\alpha$'s execution is between $0$ and $4\cdot\sweeprad{\icrit}$ rounds ahead. Since the block length in phase $\icrit$ is $4\cdot\sweeprad{\icrit}$ rounds, it follows that $\alpha$'s execution is at some round of block 10 of phase $\icrit$. As blocks 10 and 11 are waiting blocks for agent $\alpha$ in this phase, $\alpha$ will be idle at its starting node for the next $4\cdot\sweeprad{\icrit}$ rounds. Since phase $\icrit$ is asymmetric, it follows that $\colv{\beta}{\icrit} = \mathrm{blue}$, which means block 10 is a searching block for agent $\beta$. Therefore, in the next $4\cdot\sweeprad{\icrit}$ rounds of agent $\beta$'s execution, $\beta$ will search all nodes in the $\sweeprad{\icrit}$-neighbourhood of its starting node. As $D \leq \sweeprad{\icrit}$, it follows that $\beta$ will visit $\alpha$'s starting node in some round of block 10 or block 11 of $\alpha$'s execution of phase $\icrit$. Thus, rendezvous will occur by the end of agent $\alpha$'s execution of phase $\icrit$.
			
			Next, suppose that $\colv{\alpha}{\icrit} = \mathrm{blue}$, and consider the start of block 8 of $\beta$'s execution of phase $\icrit$. By the assumed condition on $\delay$, agent $\alpha$'s execution is between $0$ and $4\cdot\sweeprad{\icrit}$ rounds ahead. Since the block length in phase $\icrit$ is $4\cdot\sweeprad{\icrit}$ rounds, it follows that $\alpha$'s execution is at some round of block 8 of phase $\icrit$. As blocks 8 and 9 are waiting blocks for agent $\alpha$ in this phase, $\alpha$ will be idle at its starting node for the next $4\cdot\sweeprad{\icrit}$ rounds. Since phase $\icrit$ is asymmetric, it follows that $\colv{\beta}{\icrit} = \mathrm{red}$, which means block 8 is a searching block for agent $\beta$. Therefore, in the next $4\cdot\sweeprad{\icrit}$ rounds of agent $\beta$'s execution, $\beta$ will search all nodes in the $\sweeprad{\icrit}$-neighbourhood of its starting node. As $D \leq \sweeprad{\icrit}$, it follows that $\beta$ will visit $\alpha$'s starting node in some round of block 8 or block 9 of $\alpha$'s execution of phase $\icrit$. Thus, rendezvous will occur by the end of agent $\alpha$'s execution of phase $\icrit$.
			
			\item \underline{Case (ii): $4\cdot\sweeprad{\icrit} < \delay \leq 8\cdot\sweeprad{\icrit}$}
			
			Consider the start of block 1 of $\beta$'s execution of phase $\icrit$. By the assumed condition on $\delay$, agent $\alpha$'s execution is more than $4\cdot\sweeprad{\icrit}$ rounds ahead and at most $8\cdot\sweeprad{\icrit}$ rounds ahead. Since the block length in phase $\icrit$ is $4\cdot\sweeprad{\icrit}$, it follows that $\alpha$'s execution is more than 1 block ahead and at most 2 blocks ahead. In particular, it follows that $\alpha$'s execution is at some round in block 2 of phase $\icrit$. As blocks 2 and 3 are waiting blocks, it follows that $\alpha$ will be located at its starting node for the next $4\cdot\sweeprad{\icrit}$ rounds. But, block 1 is a searching block for $\beta$. Therefore, in the next $4\cdot\sweeprad{\icrit}$ rounds of agent $\beta$'s execution, $\beta$ will search all nodes in the $\sweeprad{\icrit}$-neighbourhood of its starting node. As $D \leq \sweeprad{\icrit}$, it follows that $\beta$ will visit $\alpha$'s starting node in some round of block 2 or 3 of $\alpha$'s execution of phase $\icrit$. Thus, rendezvous will occur by the end of agent $\alpha$'s execution of phase $\icrit$.
			
			\item \underline{Case (iii): $8\cdot\sweeprad{\icrit} < \delay \leq 48\cdot\sweeprad{\icrit}$}
			
			Consider the start of block 1 of $\beta$'s execution of phase $\icrit+1$. By the assumed condition on $\delay$, agent $\alpha$'s execution is more than $8\cdot\sweeprad{\icrit}$ rounds ahead and at most $48\cdot\sweeprad{\icrit}$ rounds ahead. Since the block length in phase $\icrit+1$ is $4\cdot\sweeprad{\icrit+1} = 8\cdot\sweeprad{\icrit}$, it follows that $\alpha$'s execution is more than 1 block ahead and at most 6 blocks ahead. In particular, $\alpha$'s execution is in some round of block 2, 3, 4, 5, 6 of phase $\icrit+1$. As blocks 2 through 7 are waiting blocks, it follows that $\alpha$ will be located at its starting node for the next $4\cdot\sweeprad{\icrit+1}$ rounds. But, block 1 is a searching block for $\beta$. Therefore, in the next $4\cdot\sweeprad{\icrit+1}$ rounds of agent $\beta$'s execution, $\beta$ will search all nodes in the $\sweeprad{\icrit+1}$-neighbourhood of its starting node. As $D \leq \sweeprad{\icrit} < \sweeprad{\icrit+1}$, it follows that $\beta$ will visit $\alpha$'s starting node in some round of one of the blocks 2-7 of $\alpha$'s execution of phase $\icrit+1$. Thus, rendezvous will occur by the end of agent $\alpha$'s execution of phase $\icrit+1$.
			
		\end{itemize}
		\item \textbf{Case 2: Phase $\icrit$ is symmetric}\\
		By \Cref{nextasym}, phase $\icrit + 1$ is asymmetric. We consider subcases based on the delay between the agent's start times. 
		\begin{itemize}
			\item \underline{Case (i): $0 \leq \delay \leq 8\cdot\sweeprad{\icrit}$}
			
			There are two subcases, depending on the value of $\colv{\alpha}{\icrit+1}$. 
			
			First, suppose that $\colv{\alpha}{\icrit+1} = \mathrm{red}$, and consider the start of block 10 of $\beta$'s execution of phase $\icrit+1$. By the assumed condition on $\delay$, agent $\alpha$'s execution is between $0$ and $8\cdot\sweeprad{\icrit}$ rounds ahead. Since the block length in phase $\icrit+1$ is $4\cdot\sweeprad{\icrit+1} = 8\cdot\sweeprad{\icrit}$ rounds, it follows that $\alpha$'s execution is at some round of block 10 of phase $\icrit+1$. As blocks 10 and 11 are waiting blocks for agent $\alpha$ in this phase, $\alpha$ will be idle at its starting node for the next $4\cdot\sweeprad{\icrit+1}$ rounds. Since phase $\icrit+1$ is asymmetric, it follows that $\colv{\beta}{\icrit+1} = \mathrm{blue}$, which means block 10 is a searching block for agent $\beta$. Therefore, in the next $4\cdot\sweeprad{\icrit+1}$ rounds of agent $\beta$'s execution, $\beta$ will search all nodes in the $\sweeprad{\icrit+1}$-neighbourhood of its starting node. As $D \leq \sweeprad{\icrit} < \sweeprad{\icrit+1}$, it follows that $\beta$ will visit $\alpha$'s starting node in some round of block 10 or block 11 of $\alpha$'s execution of phase $\icrit+1$. Thus, rendezvous will occur by the end of agent $\alpha$'s execution of phase $\icrit+1$.
			
			Next, suppose that $\colv{\alpha}{\icrit+1} = \mathrm{blue}$, and consider the start of block 8 of $\beta$'s execution of phase $\icrit+1$. By the assumed condition on $\delay$, agent $\alpha$'s execution is between $0$ and $8\cdot\sweeprad{\icrit}$ rounds ahead. Since the block length in phase $\icrit+1$ is $4\cdot\sweeprad{\icrit+1} = 8\cdot\sweeprad{\icrit}$ rounds, it follows that $\alpha$'s execution is at some round of block 8 of phase $\icrit+1$. As blocks 8 and 9 are waiting blocks for agent $\alpha$ in this phase, $\alpha$ will be idle at its starting node for the next $4\cdot\sweeprad{\icrit+1}$ rounds. Since phase $\icrit+1$ is asymmetric, it follows that $\colv{\beta}{\icrit+1} = \mathrm{red}$, which means block 8 is a searching block for agent $\beta$. Therefore, in the next $4\cdot\sweeprad{\icrit+1}$ rounds of agent $\beta$'s execution, $\beta$ will search all nodes in the $\sweeprad{\icrit+1}$-neighbourhood of its starting node. As $D \leq \sweeprad{\icrit} < \sweeprad{\icrit+1}$, it follows that $\beta$ will visit $\alpha$'s starting node in some round of block 8 or block 9 of $\alpha$'s execution of phase $\icrit+1$. Thus, rendezvous will occur by the end of agent $\alpha$'s execution of phase $\icrit+1$.

			\item \underline{Case (ii): $8\cdot\sweeprad{\icrit} < \delay \leq 16\cdot\sweeprad{\icrit}$}
			
			Consider the start of block 1 of $\beta$'s execution of phase $\icrit+1$. By the assumed condition on $\delay$, agent $\alpha$'s execution is more than $8\cdot\sweeprad{\icrit}$ rounds ahead and at most $16\cdot\sweeprad{\icrit}$ rounds ahead. Since the block length in phase $\icrit+1$ is $4\cdot\sweeprad{\icrit+1}=8\cdot\sweeprad{\icrit}$, it follows that $\alpha$'s execution is more than 1 block ahead and at most 2 blocks ahead. In particular, it follows that $\alpha$'s execution is at some round in block 2 of phase $\icrit+1$. As blocks 2 and 3 are waiting blocks, it follows that $\alpha$ will be located at its starting node for the next $4\cdot\sweeprad{\icrit+1}$ rounds. But, block 1 is a searching block for $\beta$. Therefore, in the next $4\cdot\sweeprad{\icrit+1}$ rounds of agent $\beta$'s execution, $\beta$ will search all nodes in the $\sweeprad{\icrit+1}$-neighbourhood of its starting node. As $D \leq \sweeprad{\icrit}<\sweeprad{\icrit+1}$, it follows that $\beta$ will visit $\alpha$'s starting node in some round of block 2 or 3 of $\alpha$'s execution of phase $\icrit+1$. Thus, rendezvous will occur by the end of agent $\alpha$'s execution of phase $\icrit+1$.
			
			\item \underline{Case (iii): $16\cdot\sweeprad{\icrit} < \delay \leq 48\cdot\sweeprad{\icrit}$}
			
			Consider the start of block 1 of $\beta$'s execution of phase $\icrit+2$. By the assumed condition on $\delay$, agent $\alpha$'s execution is more than $16\cdot\sweeprad{\icrit}$ rounds ahead and at most $48\cdot\sweeprad{\icrit}$ rounds ahead. Since the block length in phase $\icrit+2$ is $4\cdot\sweeprad{\icrit+2} = 16\cdot\sweeprad{\icrit}$, it follows that $\alpha$'s execution is more than 1 block ahead and at most 3 blocks ahead. In particular, $\alpha$'s execution is in some round of block 2 or 3 of phase $\icrit+2$. As blocks 2, 3 and 4 are waiting blocks, it follows that $\alpha$ will be located at its starting node for the next $4\cdot\sweeprad{\icrit+1}$ rounds. But, block 1 is a searching block for $\beta$. Therefore, in the next $4\cdot\sweeprad{\icrit+1}$ rounds of agent $\beta$'s execution, $\beta$ will search all nodes in the $\sweeprad{\icrit+1}$-neighbourhood of its starting node. As $D \leq \sweeprad{\icrit} < \sweeprad{\icrit+1}$, it follows that $\beta$ will visit $\alpha$'s starting node in some round of one of the blocks 2,3 or 4 of $\alpha$'s execution of phase $\icrit+1$. Thus, rendezvous will occur by the end of agent $\alpha$'s execution of phase $\icrit+1$.
			
		\end{itemize}
	\end{itemize}
	In all cases, we proved that rendezvous occurs by the end of phase $\icrit+2$ in agent $\alpha$'s execution. The total number of rounds that elapse in phases $0,\ldots,\icrit+2$ is $\sum_{i=0}^{\icrit+2} 44\cdot\sweeprad{i} = 44\sum_{i=0}^{\icrit+2} 2^{i+1} < 44\cdot 2^{\icrit+4} = 704\cdot 2^{\icrit} = 704\cdot 2^{\lceil\log_2(D)\rceil - 1} \leq 704\cdot 2^{\log_2(D)} = 704\cdot D$.
\end{proof}

\section{Arbitrary lines with known initial distance between agents}\label{sec:knowndist}

\subsection{Upper bound}\label{knownDupper}
In this section, we describe an algorithm called $\rvknownDalg$ that solves rendezvous on lines with arbitrary node labelings when two agents start at arbitrary positions and when the delay between the rounds in which they start executing the algorithm is arbitrary. The algorithm works in time $O(D\log^* \ell)$, where $D$ is the initial distance between the agents, and $\ell$ is the larger label of the two starting nodes.
The agents know $D$, but they do not know the delay between the starting rounds. Also, we note that the agents have no global sense of direction, but each agent can locally choose port 0 from its starting node to represent `right' and port 1 from its starting node to represent `left'. Further, using knowledge of the port number of the edge on which it arrived at a node, an agent is able to choose whether its next move will continue in the same direction or if it will switch directions. Without loss of generality, we may assume that all node labels are strictly greater than one, since the algorithm could be re-written to add one to each label value in its own memory before carrying out any computations involving the labels. This assumption ensures that, for any node label $v$, the value of $\log^*(v)$ is strictly greater than 0. 

\subsubsection{Algorithm description}
Algorithm $\rvknownDalg$ proceeds in two stages. In the first stage, the agents assign colours to their starting nodes according to a proper colouring of the nodes that are integer multiples of $D$ away. They each accomplish this by determining the node labels within a sufficiently large neighbourhood and then executing the 3-colouring algorithm $\EarlyStopCV$ from \Cref{CValg} on nodes that are multiples of $D$ away from their starting node. 
The second stage consists of repeated periods, with each period consisting of equal-sized blocks of rounds. In each block, an agent either stays idle at its starting node, or, it spends the block performing a search of nearby nodes in an attempt to find the other agent (and if not successful, returns back to its starting node). Whether or not an agent idles or searches in a particular block of the period depends on the colour it picked for its starting node. The overall idea behind the algorithm's correctness is: the agents are guaranteed to pick different colours for their starting node in the first stage, and so, when both agents are executing the second stage, one agent will search while the other idles, which will result in rendezvous. The full details of the algorithm's execution by an agent $x$ is as follows.

\textbf{Stage 1: Colouring.} Let $v_x$ be the starting node of agent $x$. We identify the node with its label.
Let $r = D\cdot\kappa\log^*(v_x)$, where $\kappa > 1$ is the constant defined in the running time of the algorithm $\EarlyStopCV$ from \Cref{CValg}. Denote by $\mathcal{B}_r$ the $r$-neighbourhood of $v_x$. First, agent $x$ determines $\mathcal{B}_r$ (including all node labels) by moving right $r$ times, then left $2r$ times, then right $r$ times, ending back at its starting node $v_x$. Let $V$ be the subset of nodes in $\mathcal{B}_r$ whose distance from $v_x$ is an integer multiple of $D$. In its local memory, agent $x$ creates a path graph $G_x$ consisting of the nodes in $V$, with two nodes connected by an edge if and only if their distance in $\mathcal{B}_r$ is exactly $D$. This forms a path graph centered at $v_x$ with $\kappa\log^*(v_x)$ nodes in each direction. Next, the agent simulates an execution of the algorithm $\EarlyStopCV$ by the nodes of $G_x$ to assign a colour $c_x \in \{0,1,2\}$ to its starting node $v_x$.

\textbf{Stage 2: Search.} The agent repeatedly executes periods consisting of $8D$ rounds each, partitioned into two equal-sized blocks of $4D$ rounds each. In each of the two blocks, the agent behaves in one of two ways: if a block is designated as a \emph{waiting block}, then the agent stays at its starting node for all $4D$ rounds; otherwise, a block is designated as a \emph{searching block}, in which the agent moves right $D$ times, then left $2D$ times, then right $D$ times. Whether a block is designated as a `waiting' or `searching' block depends on the agent's starting node colour that was determined in the first stage. In particular, if $c_x =0$, then both blocks are waiting blocks; if $c_x=1$, then block 1 is a searching block and block 2 is a waiting block; and, if $c_x=2$, then both blocks are searching blocks.

\subsubsection{Algorithm analysis}

In this section, we prove that Algorithm $\rvknownDalg$ solves rendezvous within $O(D\log^*{\ell}$) rounds, where $\ell$ is the larger of the labels of the two starting nodes of the agents.

Consider an arbitrary instance on some line $L$ with an arbitrary labeling of the nodes with positive integers. Suppose that two agents $\alpha$ and $\beta$ execute the algorithm. For each $x \in \{\alpha,\beta\}$, we denote by $v_x$ the label of agent $x$'s starting node, and we denote by $c_x$ the colour assigned to node $v_x$ at the end of Stage 1 in $x$'s execution of the algorithm. To help with the wording of the analysis only, fix a global orientation for $L$ so that $v_\alpha$ appears to the `left' of $v_\beta$ (and recall that the agents have no access to this information). 

First, we argue that after both agents have finished Stage 1 of their executions, they have assigned different colours to their starting nodes.
\begin{lemma}\label{stage1diff}
	In any execution of $\rvknownDalg$ by agents $\alpha$ and $\beta$, in every round after both agents finish their execution of Stage 1, we have $c_\alpha \neq c_\beta$.
\end{lemma}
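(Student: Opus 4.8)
The plan is to show that $\alpha$ and $\beta$ assign different colours to their starting nodes by relating their local simulations of \EarlyStopCV to a single global execution of \EarlyStopCV on a carefully chosen infinite line. The key observation is that although each agent builds its own finite path graph $G_x$ consisting of nodes at distances that are multiples of $D$ from $v_x$, these two path graphs are ``aligned'': since $v_\alpha$ and $v_\beta$ are at distance $D$ apart, the set of nodes at multiples of $D$ from $v_\alpha$ is exactly the same as the set of nodes at multiples of $D$ from $v_\beta$. I would first make this precise by defining the infinite line $H$ whose nodes are all the line-nodes at distances that are multiples of $D$ from $v_\alpha$ (equivalently from $v_\beta$), with consecutive such nodes joined by an edge, and each node keeping its original label as its ID. Since the original labels are distinct integers greater than $1$, they form a proper colouring of $H$, so \Cref{CValg} applies to $H$.

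Next I would invoke \Cref{CValg} and the simulation idea behind \Cref{simulate}. By \Cref{CValg}, if all nodes of $H$ ran \EarlyStopCV in the \mathcal{LOCAL} model, node $v_x$ would terminate within $\kappa\log^*(v_x)$ rounds, so its final colour is determined by its $(\kappa\log^*(v_x)-1)$-neighbourhood in $H$. The agent's local graph $G_x$ contains exactly $\kappa\log^*(v_x)$ nodes of $H$ in each direction from $v_x$, i.e. the full $(\kappa\log^*(v_x))$-neighbourhood of $v_x$ in $H$, which strictly contains the neighbourhood needed to determine $v_x$'s colour. Hence the colour $c_x$ that agent $x$ computes by simulating \EarlyStopCV on $G_x$ equals the colour that $v_x$ would output in a global \mathcal{LOCAL} execution of \EarlyStopCV on $H$. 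Crucially, this holds for \emph{both} agents simultaneously against the \emph{same} global execution on the same line $H$, because $G_\alpha$ and $G_\beta$ are both subgraphs of $H$ and each agent's simulation reproduces the true \mathcal{LOCAL} behaviour on $H$ restricted to its neighbourhood.

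Finally, $v_\alpha$ and $v_\beta$ are adjacent in $H$ (they are consecutive multiples-of-$D$ nodes, at distance $D$ on the original line). Since the global \mathcal{LOCAL} execution of \EarlyStopCV produces a proper $3$-colouring of $H$ by \Cref{CValg}, adjacent nodes receive different colours, so $c_\alpha \neq c_\beta$. Because these colours are fixed once Stage~1 finishes and never change thereafter, we conclude $c_\alpha \neq c_\beta$ in every round after both agents complete Stage~1, as required.

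I expect the main obstacle to be the careful bookkeeping needed to verify that each agent's locally simulated execution on $G_x$ agrees with the true global \mathcal{LOCAL} execution on $H$, rather than on some larger or differently-structured graph. The subtlety is that \EarlyStopCV is a \emph{distributed} algorithm in which node behaviour depends on neighbours' messages, so I must confirm that $G_x$ is large enough that truncating $H$ to $G_x$ does not alter the messages reaching $v_x$ within the $\kappa\log^*(v_x)$ rounds that determine its colour — this is exactly the content of the early-termination guarantee of \Cref{CValg} combined with the locality argument preceding \Cref{simulate}, and I would cite those results to avoid re-deriving the radius-of-influence bound from scratch.
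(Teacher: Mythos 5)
Your proof is correct and rests on the same core idea as the paper's: embed both agents' local path graphs $G_\alpha$ and $G_\beta$ as aligned induced subgraphs of a single common reference graph in which $v_\alpha$ and $v_\beta$ are adjacent, and then invoke the proper-colouring guarantee of \Cref{CValg} together with the locality/early-termination argument of \Cref{simulate}. The difference is in execution: the paper takes the reference graph to be a \emph{finite} path $P$ on $2\kappa\log^*(v_\alpha)+2$ nodes and then proves, via two separate inductions, that $G_\alpha$ and $G_\beta$ sit inside $P$ at the correct offsets; you instead take the reference graph to be the \emph{infinite} line $H$ of all nodes at multiples of $D$ from $v_\alpha$, and the alignment follows in one line from the observation that the multiples-of-$D$ sets around $v_\alpha$ and around $v_\beta$ coincide (since $d(v_\alpha,v_\beta)=D$). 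Your version is shorter and avoids the index bookkeeping, at the cost of having to note (as you do) that truncating $H$ to the finite $G_x$ does not change $v_x$'s output, because $v_x$'s colour depends only on its $(\kappa\log^*(v_x)-1)$-neighbourhood while $G_x$ extends to radius $\kappa\log^*(v_x)$; the paper's finite $P$ sidesteps this by making $P$ itself the object on which $\EarlyStopCV$ is run. Both arguments are sound.
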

\begin{proof}
	Without loss of generality, assume that $v_\alpha > v_\beta$. Let $y_0$ be the node in $L$ to the left of $v_\alpha$ at distance exactly $D \cdot \kappa\log^*(v_\alpha)$. For each $i \in \{1,\ldots, 2\kappa\log^*(v_\alpha)+1\}$, let $y_i$ be the node in $L$ at distance $i\cdot D$ to the right of $y_0$. Note that $v_\alpha = y_{\kappa\log^*(v_\alpha)}$ and $v_\beta = y_{\kappa\log^*(v_\alpha)+1}$.
	
	Create a path graph $P$ consisting of $2\kappa\log^*(v_\alpha)+2$ nodes. Label the leftmost node in $P$ with $y_0$, and label the node at distance $i$ from $y_0$ in $P$ using the label $y_i$.
	
	By the definition of $P$, note that $y_{\kappa\log^*(v_\alpha)}$ and $y_{\kappa\log^*(v_\alpha)+1}$ are neighbours in $P$, which implies that $v_\alpha$ and $v_\beta$ are neighbours in $P$. This is important because it implies that, if the nodes of $P$ run the algorithm $\EarlyStopCV$ from \Cref{CValg}, the nodes labeled $v_\alpha$ and $v_\beta$ will choose different colours from the set $\{0,1,2\}$.
	
	The rest of the proof shows that, for $x \in \{\alpha,\beta\}$, the graph $G_x$ built in Stage 1 by agent $x$ is an induced subgraph of $P$. This is sufficient since it implies that having each agent $x$ simulate the algorithm $\EarlyStopCV$ on their local $G_x$ results in the same colour assignment to the node labeled $v_x$ as an execution of $\EarlyStopCV$ on the nodes of $P$, so $v_\alpha$ and $v_\beta$ will be assigned different colours at the end of Stage 1 of Algorithm $\rvknownDalg$. Since the colour assignment is not changed in any round after Stage 1, the result follows.

	First, consider $v_\alpha$. Let $w_0$ be the label of the leftmost node of $G_\alpha$. By the definition of $G_\alpha$, the node labeled $w_0$ is at distance exactly $\kappa\log^*(v_\alpha)$ to the left of $v_\alpha$ in $G_\alpha$, so the node labeled $w_0$ is at distance exactly $D\cdot\kappa\log^*(v_\alpha)$ to the left of $v_\alpha$ in $L$. This proves that $w_0 = y_0 \in P$. For each $j \in \{0,\ldots,2\kappa\log^*(v_\alpha)\}$, define $w_j$ to be the label of the node at distance $j$ to the right of the node labeled $w_0$ in $G_\alpha$. By induction on $j$, we prove that $w_j = y_j \in P$ for all $j \in \{0,\ldots,2\kappa\log^*(v_\alpha)\}$. The base case $w_0 = y_0 \in P$ was proved above. As induction hypothesis, assume that $w_{j-1} = y_{j-1} \in P$ for some $j \in \{1,\ldots,2\kappa\log^*(v_\alpha)\}$. Consider $w_j$, which by definition of $G_\alpha$ is the neighbour to the right of $w_{j-1}$ in $G_\alpha$, and, moreover, is located distance exactly $D$ to the right of $w_{j-1}$ in $L$. By the induction hypothesis, we know that $w_{j-1} = y_{j-1}$, so $w_j$ is located distance exactly $D$ to the right of $y_{j-1}$ in $L$, and so $w_j = y_j$ by the definition of $y_j$. To confirm that $y_j \in P$, we note that $j \leq 2\kappa\log^*(v_\alpha) < 2\kappa\log^*(v_\alpha)+1$, and that the rightmost node in $P$ is $y_{2\kappa\log^*(v_\alpha)+1}$. This concludes the inductive step, and the proof that $G_\alpha$ is an induced subgraph of $P$.
	
	Next, consider $v_\beta$. Let $u_0$ be the label of the leftmost node of $G_\beta$. By the definition of $G_\beta$, the node labeled $u_0$ in $G_\beta$ is at distance exactly $\kappa\log^*(v_\beta)$ to the left of $v_\beta$, so, in $L$, the node labeled $u_0$ is at distance exactly $D\cdot\kappa\log^*(v_\beta)$ to the left of $v_\beta$. However, since $v_\alpha$ is located at distance exactly $D$ to the left of $v_\beta$ in $L$, and $y_0$ is located at distance exactly $D\cdot\kappa\log^*(v_\alpha)$ to the left of $v_\alpha$ in $L$, it follows that $y_0$ is located at distance exactly $D\cdot(1+\kappa\log^*(v_\alpha))$ to the left of $v_\beta$ in $L$. Since $v_\alpha > v_\beta$, we conclude that $d(y_0,v_\beta) = D\cdot(1+\kappa\log^*(v_\alpha)) > D\cdot\kappa\log^*(v_\beta) = d(u_0,v_\beta)$ in $L$, so $y_0$ must be to the left of $u_0$ in $L$. Further, it means that $d(u_0,y_0) = D\cdot(1+\kappa\log^*(v_\alpha)) - D\cdot\kappa\log^*(v_\beta) = D\cdot[1+\kappa\log^*(v_\alpha)-\kappa\log^*(v_\beta)]$ in $L$. This proves that $u_0 = y_{1+\kappa\log^*(v_\alpha)-\kappa\log^*(v_\beta)}$. We confirm that $y_{1+\kappa\log^*(v_\alpha)-\kappa\log^*(v_\beta)} \in P$ by noticing that the subscript $1+\kappa\log^*(v_\alpha)-\kappa\log^*(v_\beta)$ lies in the set $\{1,\ldots,2\kappa\log^*(v_\alpha)+1\}$ since $v_\alpha > v_\beta$. Next, define $h = 1+\kappa\log^*(v_\alpha)-\kappa\log^*(v_\beta)$, and, for each $j \in \{0,\ldots,2\kappa\log^*(v_\beta)\}$, define $u_j$ to be the label of the node at distance $j$ to the right of the node labeled $u_0$ in $G_\beta$. By induction on $j$, we prove that $u_j = y_{j+h} \in P$ for all $j \in \{0,\ldots,2\kappa\log^*(v_\beta)\}$. The base case $u_0 = y_{h} \in P$ was proved above. As induction hypothesis, assume that $u_{j-1} = y_{j-1+h} \in P$ for some $j \in \{1,\ldots,2\kappa\log^*(v_\beta)\}$. Consider $u_j$, which by definition of $G_\beta$ is the neighbour to the right of $u_{j-1}$ in $G_\beta$, and, moreover, is located distance exactly $D$ to the right of $u_{j-1}$ in $L$. By the induction hypothesis, we know that $u_{j-1} = y_{j-1+h}$, so $u_j$ is located distance exactly $D$ to the right of $y_{j-1+h}$ in $L$, and so $u_j = y_{j+h}$ by the definition of $y_{h}$. To confirm that $y_{j+h} \in P$, we note that $j \leq 2\kappa\log^*(v_\beta)$ and $h = 1+\kappa\log^*(v_\alpha)-\kappa\log^*(v_\beta)$, so $j+h \leq 1+\kappa\log^*(v_\alpha)+\kappa\log^*(v_\beta) \leq 2\kappa\log^*(v_\alpha)+1$, where the last inequality is due to $v_\alpha > v_\beta$. As the rightmost node of $P$ is $y_{2\kappa\log^*(v_\alpha)+1}$, it follows that $y_{j+h}$ is at or to the left of the rightmost node in $P$, so $y_{j+h} \in P$. This concludes the inductive step, and the proof that $G_\alpha$ is an induced subgraph of $P$.
\end{proof}

To prove that the algorithm correctly solves rendezvous within $O(D\log^*\ell)$ rounds for arbitrary delay between starting rounds, there are two main cases to consider. If the delay is large, then the late agent is idling for the early agent's entire execution of Stage 1, and rendezvous will occur while the early agent is exploring its $(D\kappa\log^*\ell)$-neighbourhood. Otherwise, the delay is relatively small, so both agents reach Stage 2 quickly, and the block structure of the repeated periods ensures that one agent will search while the other waits, so rendezvous will occur. These arguments are formalized in the following result.

\begin{theorem}
	Algorithm $\rvknownDalg$ solves rendezvous in $O(D\log^*\ell)$ rounds on lines with arbitrary node labelings when two agents start at arbitrary positions at known distance $D$, and when the delay between the rounds in which they start executing the algorithm is arbitrary.
\end{theorem}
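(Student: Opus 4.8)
The plan is to combine the colour-difference guarantee of \Cref{stage1diff} (namely $c_\alpha \neq c_\beta$ once both agents have left Stage 1) with a case analysis on the delay \delay between the two wakeup rounds. Writing $r_x = D\kappa\log^*(v_x)$ for the Stage 1 search radius of agent $x$, I would first record two facts about Stage 1: its execution by $x$ lasts exactly $4r_x$ rounds (explore the $r_x$-neighbourhood and return), and since $\kappa \geq 2$ and all labels exceed $1$ (so $\log^* \geq 1$) we have $r_x \geq D = d(v_\alpha,v_\beta)$; hence each Stage 1 finishes within $4D\kappa\log^*\ell = O(D\log^*\ell)$ rounds, and each agent's Stage 1 exploration already reaches the other agent's starting node. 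Assume without loss of generality that $\alpha$ is the early agent, measure all rounds from $\alpha$'s wakeup (so $\beta$ wakes at round \delay), and split on whether the delay is large ($\delay \geq 4r_\alpha$) or small ($\delay < 4r_\alpha$).

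The large-delay case is easy: $\beta$ stays asleep at $v_\beta$ throughout $\alpha$'s Stage 1, which occupies rounds $[0,4r_\alpha)$, and because $r_\alpha \geq D$ agent $\alpha$ reaches $v_\beta$ at some round at most $3r_\alpha < 4r_\alpha \leq \delay$, so rendezvous occurs by round $3r_\alpha = O(D\log^*\ell)$ (no crossing is missed, since $\beta$ is stationary). For the small-delay case I would reduce to Stage 2: agent $\alpha$ enters Stage 2 at round $E_\alpha = 4r_\alpha$ and $\beta$ at round $E_\beta = \delay + 4r_\beta < 4r_\alpha + 4r_\beta$, so both run Stage 2 from round $M := \max(E_\alpha,E_\beta) \leq 8D\kappa\log^*\ell = O(D\log^*\ell)$ onward. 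It then suffices to prove that rendezvous occurs within $O(D)$ rounds after $M$, since adding $M$ gives the required bound. For this I would use three facts: each agent's block boundaries sit at fixed absolute rounds spaced $4D$ apart; $c_\alpha \neq c_\beta$; and in a searching block an agent sweeps its whole $D$-neighbourhood, visiting any fixed node at distance $D$ exactly once (at round $D$ or $3D$ into the block, depending on the side).

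The remaining work is a case analysis on the unordered colour pair, and the part I expect to be the \emph{main obstacle} is $\{c_\alpha,c_\beta\} = \{1,2\}$, where neither agent is idle for all of Stage 2. When one colour is $0$ (pairs $\{0,1\}$ and $\{0,2\}$) the argument is immediate: that agent waits at its starting node throughout Stage 2, so in the other agent's first searching block that begins at a round $\geq M$ (which starts within $8D$ rounds of $M$) the searcher walks onto the stationary agent's node and they meet. For $\{1,2\}$, the colour-$2$ agent searches in every block, so it revisits the colour-$1$ agent's starting node at rounds forming an arithmetic progression of common difference $4D$, while the colour-$1$ agent has, every second block, a half-open waiting interval of length exactly $4D$ during which it sits at its node. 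The key observation is that an interval of length $4D$ contains exactly one term of a progression of step $4D$; hence during the colour-$1$ agent's first waiting block after $M$ the colour-$2$ agent visits that node while the colour-$1$ agent is present, forcing rendezvous within one period, i.e.\ $O(D)$ rounds, after $M$. Combining the subcases yields rendezvous by round $M + O(D) = O(D\log^*\ell)$, matching the large-delay bound and completing the proof.
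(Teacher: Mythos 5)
Your proposal is correct and follows essentially the same route as the paper's proof: the same split into a large-delay case (where $\beta$ is still asleep during $\alpha$'s Stage 1 sweep) and a small-delay case resolved by \Cref{stage1diff} plus a case analysis on the colour pair, with the $\{1,2\}$ subcase handled by the observation that back-to-back searching blocks of the colour-2 agent revisit the other starting node with period $4D$, so one visit must land inside a $4D$-round waiting block. Your choice to anchor the Stage-2 analysis at $\max(E_\alpha,E_\beta)$ rather than at $\beta$'s Stage-2 entry round, and your arithmetic-progression phrasing of the $\{1,2\}$ subcase, are only cosmetic variants of the paper's argument.
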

\begin{proof}
	The agent that starts executing the algorithm first is called the \emph{early agent}, and the other agent is called the \emph{late agent}. If both agents start executing the algorithm in the same round, then arbitrarily call one of them early and the other one late. Without loss of generality, we assume that $\alpha$ is the early agent. The number of rounds that elapse between the two starting rounds is denoted by $\delay$. 
	
	First, we address the case where the delay between the starting rounds of the agents is large. This situation is relatively easy to analyze, since agent $\beta$ remains idle during agent $\alpha$'s entire execution up until they meet.
	\begin{claim}
		If $\delay > 4D\kappa\log^*{v_\alpha}$, then rendezvous is guaranteed to occur within $4D\kappa\log^*{v_\alpha}$ rounds in agent $\alpha$'s execution.
	\end{claim}
	To prove the claim, we note that the total number of rounds that elapse in agent $\alpha$'s execution of Stage 1 is $4D\kappa\log^*{v_\alpha}$. By the assumption about $\delay$, agent $\beta$ is idle at its starting node for $\alpha$'s entire execution of Stage 1. Finally, note that $\alpha$ explores its entire $(D\kappa\log^*{\alpha})$-neighbourhood during its execution of Stage 1, where $D\kappa\log^*{\alpha} \geq D$, so it follows that agent $\alpha$ will be located at agent $\beta$'s starting node at some round in its execution of Stage 1, which completes the proof of the claim.
	
	So, in the remainder of the proof, we assume that $\delay \leq 4D\kappa\log^*{\alpha}$. We prove that rendezvous occurs during Stage 2 by considering the round $\tau$ in which agent $\beta$ starts Stage 2 of its execution. By \Cref{stage1diff}, we have $c_\alpha \neq c_\beta$ in every round from this time onward. We separately consider cases based on the values of $c_\alpha$ and $c_\beta$.
	\begin{itemize}
		\item {\bf Case 1: $\mathbf{c_\alpha = 0 \textbf{ or } c_\beta = 0}$.}
		Let $x$ be the agent with $c_x = 0$, and let $y$ be the other agent. Since $c_x \neq c_y$, we know that $c_y$ is either 1 or 2. From the algorithm's description, agent $y$ will start a searching block within $8D$ rounds after round $\tau$. Moreover, agent $x$ remains idle for the entirety of Stage 2 of its execution. Since $y$ explores its entire $D$-neighbourhood during a searching block, it follows that $y$ will be located at $x$'s starting node within $12D$ rounds after~$\tau$.
		\item {\bf Case 2: $\mathbf{c_\alpha = 1 \textbf{ or } c_\beta = 1}$.}
		Let $x$ be the agent with $c_x = 1$, and let $y$ be the other agent. Since $c_x \neq c_y$, we know that $c_y$ is either 0 or 2. As the case $c_y=0$ is covered by Case 1 above, we proceed with $c_y=2$. From the algorithm's description, agent $x$ has a waiting block that begins within $8D$ rounds after $\tau$. Suppose that this waiting block starts in some round $t$ of $x$'s execution, then we know that $x$ stays idle at its starting node for the $4D$ rounds after $t$. But round $t$ corresponds to the $i$'th round of some searching block in agent $y$'s execution of Stage 2 (since $y$ only performs searching blocks). In the first $4D-i$ rounds, agent $y$ completes its searching block, and then performs the first $i$ rounds of the next searching block. Since the searching blocks are performed using the same movements each time, it follows that $y$ explores its entire $D$-neighbourhood in the $4D$ rounds after $t$, so will meet $x$ at $x$'s starting node within those rounds. Altogether, since $t$ occurs within $8D$ rounds after $\tau$, and $y$'s searching block takes another $4D$ rounds, it follows that rendezvous occurs within $12D$ rounds after $\tau$.
	\end{itemize}
	In all cases, we proved that rendezvous occurs within $12D$ rounds after $\tau$. But $\tau$ is the round in which agent $\beta$ starts Stage 2, so $\tau \leq 4D\kappa\log^*v_\beta$ in $\beta$'s execution. By our assumption on the delay, $\beta$'s execution starts at most $4D\kappa\log^*v_\alpha$ rounds after the beginning of $\alpha$'s execution. Altogether, this means that rendezvous occurs within $4D\kappa\log^*v_\alpha + 4D\kappa\log^*v_\beta + 12D$ rounds from the start of $\alpha$'s execution. Setting $\ell = \max\{v_\alpha,v_\beta\}$, we get that rendezvous occurs within time $8D\kappa\log^*\ell + 12D \in O(D\log^*\ell)$, as desired.
\end{proof}

\subsection{Lower bound}\label{lower}
In this section we prove a $\Omega(D\log^* \ell)$ lower bound for rendezvous time on the infinite line, where $\ell$ is the larger label of the two starting nodes, even assuming that agents start simultaneously, they know the initial distance $D$ between them, and they have a global sense of direction.
We start with some terminology about rendezvous executions.
\begin{definition}
	Consider any labeled infinite line $L$, and any two nodes labeled $v,w$ on $L$ that are at fixed distance $D$. Consider any rendezvous algorithm $\mathcal{A}$, and suppose that two agents start executing $\mathcal{A}$ in the same round: one agent $\alpha_v$ starting at node $v$, and the other agent $\alpha_w$ starting at node $w$. Denote by $\gamma(\mathcal{A},L,v,w)$ the resulting execution until $\alpha_v$ and $\alpha_w$ meet. When $\mathcal{A}$ and $L$ are clear from the context, we will simply write $\gamma(v,w)$ to denote the execution.  Denote by $|\gamma(\mathcal{A},L,v,w)|$ the number of rounds that have elapsed before $\alpha_v$ and $\alpha_w$ meet in the execution.  
\end{definition}

The following definition formalizes the notion of ``behaviour sequence'': an integer sequence that encodes the movements made by an agent in each round of an algorithm's execution.
\begin{definition}
	Consider any execution $\gamma(\mathcal{A},L,v,w)$ by an agent $\alpha_v$ starting at node $v$ and an agent $\alpha_w$ starting at node $w$, with both agents starting simultaneously.  Define the behaviour sequence $\mathcal{B}_v(\mathcal{A},L,v,w)$ as follows: for each $t \in \{1,\ldots,|\gamma(\mathcal{A},L,v,w)|\}$, set the $t$'th element to $0$ if $\alpha_v$ moves left in round $t$ of the execution, to $1$ if $\alpha_v$ stays at its current node in round $t$ of the execution, and to $2$ if $\alpha_v$ moves right in round $t$ of the execution. Similarly, define the sequence $\mathcal{B}_w(\mathcal{A},L,v,w)$ using the moves by agent $\alpha_w$. When $\mathcal{A}$ and $L$ are clear from the context, we will simply write $\mathcal{B}_v(v,w)$ and $\mathcal{B}_w(v,w)$ to denote the two behaviour sequences of $\alpha_v$ and $\alpha_w$, respectively.
\end{definition}
As the agents are anonymous and we only consider deterministic algorithms, note that for any fixed $\mathcal{A},L,v,w$, we have $\gamma(v,w)=\gamma(w,v)$ and $\mathcal{B}_v(v,w) = \mathcal{B}_v(w,v)$ and $\mathcal{B}_w(v,w) = \mathcal{B}_w(w,v)$. Moreover, for a fixed starting node $v$ on a fixed line $L$, the behaviour of an agent running an algorithm $\mathcal{A}$ does not depend on the starting node (or behaviour) of the other agent, until the two agents meet. This implies the following result, i.e., if we look at two executions of $\mathcal{A}$ where one agent $\alpha_v$ starts at the same node $v$ in both executions, then $\alpha_v$'s behaviour in both executions is exactly the same up until rendezvous occurs in the shorter execution.
\begin{proposition}\label{prop:prefixes}
	Consider any labeled infinite line $L$, and any fixed rendezvous algorithm $\mathcal{A}$. Consider any fixed node $v$ in $L$, and let $w_1$ and $w_2$ be two nodes other than $v$. Let $p = \min\{|\gamma(v,w_1)|,|\gamma(v,w_2)|\}$. Then $\mathcal{B}_v(v,w_1)$ and $\mathcal{B}_v(v,w_2)$ have equal prefixes of length $p$.
\end{proposition}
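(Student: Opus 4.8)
The plan is to formalize the intuition stated just before the proposition: because the agents are anonymous, the algorithm is deterministic, and crossings on an edge are not detected, agent $\alpha_v$'s behaviour is governed entirely by its own local observations (the label and ports at its current node, the port by which it arrives after each move, and the round number) together with the a priori information it receives, which depends only on $v$ and the fixed scenario parameters, not on the other agent's starting node. The other agent can influence $\alpha_v$ only through a \emph{meeting}, i.e., a round in which both agents occupy the same node. By the definition of $\gamma$, the first such meeting occurs at round $|\gamma(v,w_1)|$ in the first execution and at round $|\gamma(v,w_2)|$ in the second, and both of these are at least $p$. So up to round $p$, agent $\alpha_v$'s local view is identical to the view it would have running $\mathcal{A}$ alone on $L$ from $v$, and in particular is the same across the two executions.

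Concretely, I would prove by induction on $t \in \{1,\dots,p\}$ that the configuration of $\alpha_v$ (its position on $L$ and the state of its Turing machine) at the start of round $t$ is identical in the two executions. The base case holds because $\alpha_v$ begins at the same node $v$ on the same line $L$ with the same a priori knowledge (the label of $v$ and any globally fixed parameter such as $D$), none of which depends on the other agent. For the inductive step, assuming agreement at the start of round $t$ with $t \le p-1$, the occupied node and hence the observed label and ports coincide, so $\mathcal{A}$ prescribes the same move; this makes the $t$-th entries of $\mathcal{B}_v(v,w_1)$ and $\mathcal{B}_v(v,w_2)$ equal. The resulting position at the start of round $t+1$ is the same, and the information gained upon arrival is the same because $L$ is fixed. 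The only way the transition could differ is if $\alpha_v$ detected the other agent; but since $t+1 \le p \le \min\{|\gamma(v,w_1)|,|\gamma(v,w_2)|\}$, no meeting has occurred in either execution, so the configurations at the start of round $t+1$ agree. Assembling the per-round equalities for $t = 1,\dots,p$ gives the common prefix of length $p$.

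The main obstacle is being careful about the round semantics at the boundary $t = p$. I would stress that the move recorded in entry $t$ is decided from the observations available \emph{before} that move is executed, so even if a meeting happens exactly at round $p$ (when $p$ equals one of the two lengths), the entry at position $p$ is still determined by pre-meeting information and therefore agrees in both executions. The remaining details—the precise encoding of an agent's ``observation'' and the appeal to determinism of $\mathcal{A}$—are routine given the model, so the crux is isolating the claim that $\alpha_v$'s trajectory is a function of its solo view alone until the first meeting.
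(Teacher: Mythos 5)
Your proof is correct and follows exactly the reasoning the paper uses: the paper justifies this proposition with the single observation that an agent's behaviour is independent of the other agent's starting node until the two meet, and your induction on rounds is precisely the formalization of that observation, including the right handling of the boundary round $t=p$. No gap.
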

The following proposition states that two agents running a rendezvous algorithm starting at two different nodes cannot have the same behaviour sequence. This follows from the fact that the distance between two agents cannot decrease if they perform the same action in each round (i.e., both move left, both move right, or both don't move).
\begin{proposition}\label{prop:unequal}
	Consider any labeled infinite line $L$, and any fixed rendezvous algorithm $\mathcal{A}$. For any two nodes $x$ and $y$ in $L$, in the execution $\gamma(x,y)$ we have $\mathcal{B}_{x}(x,y) \neq \mathcal{B}_{y}(x,y)$.
\end{proposition}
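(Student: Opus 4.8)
The plan is to argue by contradiction: assume $\mathcal{B}_{x}(x,y) = \mathcal{B}_{y}(x,y)$ and derive that the two agents can never occupy the same node in the same round, which contradicts the fact that $\gamma(x,y)$ is by definition an execution that ends in a meeting. Since we are in the lower-bound setting where the line $L$ carries a fixed global orientation, I would first fix this orientation and identify each node with an integer coordinate, so that the symbols $0$ (move left), $1$ (stay), and $2$ (move right) correspond to the signed displacements $-1$, $0$, and $+1$ respectively, \emph{regardless} of which agent performs the move. This is the one place where the global sense of direction is essential: it guarantees that a single symbol denotes the same signed displacement for both agents.

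The crux is then a one-line invariant. Write $P_x(t)$ and $P_y(t)$ for the integer coordinates of $\alpha_x$ and $\alpha_y$ at the end of round $t$, with $P_x(0),P_y(0)$ the coordinates of $x,y$. Under the assumption that the two behaviour sequences are equal, their $t$-th symbols agree, so both agents undergo the same signed displacement $\delta_t \in \{-1,0,+1\}$ in round $t$. By a trivial induction on $t$, the signed gap is preserved: $P_x(t) - P_y(t) = P_x(0) - P_y(0)$ for every $t \in \{0,1,\ldots,|\gamma(x,y)|\}$. Because $x$ and $y$ are distinct nodes of $L$, we have $P_x(0) \neq P_y(0)$, and hence $P_x(t) \neq P_y(t)$ for all such $t$; that is, the agents are never simultaneously at the same node.

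This directly contradicts the definition of $\gamma(x,y)$, which, being the execution run until the agents meet, requires that at some round in $\{1,\ldots,|\gamma(x,y)|\}$ the two agents occupy a common node. I expect the only point requiring care to be the distinction between \emph{meeting} (same node, same round) and \emph{crossing} (simultaneously traversing a common edge in opposite directions, which the model explicitly does not count as a meeting). This causes no difficulty here, since equal behaviour sequences force the agents to move in the same direction in every round, so they never move toward one another and crossing never arises; the invariant above therefore rules out the only remaining way for rendezvous to occur, completing the contradiction.
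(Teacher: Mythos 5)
Your argument is correct and is exactly the reasoning the paper uses: the paper justifies \Cref{prop:unequal} in one sentence by noting that the distance between the agents cannot decrease when they perform the same action in each round, which is precisely your invariant that equal behaviour sequences preserve the signed gap $P_x(t)-P_y(t)$ and hence preclude meeting. Your added care about crossings versus meetings is a sensible check but, as you note, moot here since the agents never move toward each other.
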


The remainder of this section is dedicated to proving the $\Omega(D\log^*\ell)$ lower bound for rendezvous on the infinite line when the two agents start at a known distance $D$ apart. We proceed in two steps: first, we prove an $\Omega(\log^* \ell)$ lower bound in the case where $D=1$, and then we prove the general $\Omega(D\log^*\ell)$ lower bound using a reduction from the $D=1$ case. Throughout this section, we will refer to the constant $\kappa$ that was defined in \Cref{CValg} in order to state the running time bound of the  algorithm $\EarlyStopCV$.

\subsubsection{The \texorpdfstring{{\boldmath$D=1$}}{D = 1} case}\label{D1}
We prove a $\Omega(\log^* \ell)$ lower bound for rendezvous on the infinite line, where $\ell$ is the larger label of the two starting nodes, in the special case where the two agents start at adjacent nodes. This lower bound applies even to algorithms that start simultaneously and know that the initial distance between the two agents is 1.

The overall idea is to assume that there exists a very fast rendezvous algorithm (i.e., an algorithm that always terminates within $\frac{1}{16\kappa}\log^*(\ell)$ rounds) and prove that this implies the existence of a distributed 3-colouring algorithm for the $\mathcal{LOCAL}$ model whose running time is faster than the lower bound proven by Linial (see \Cref{linial}). This contradiction proves that any rendezvous algorithm must have running time $\Omega(\log^*(\ell))$.

The first step is to reduce distributed colouring in the $\mathcal{LOCAL}$ model to rendezvous. The following result describes how to use the rendezvous algorithm to create the distributed colouring algorithm. The idea is to record the agent's behaviour sequence in the execution of the rendezvous algorithm, and convert the sequence to an integer colour. {\color{black} \Cref{prop:unequal} guarantees that the assigned colours are different.}
\begin{lemma}\label{lem:firstcolouring}
	Consider any rendezvous algorithm $\rvalg$ such that $\rvalg$ always terminates within $\frac{1}{16\kappa}\log^*(\ell)$ rounds, where $\ell$ is the larger label of the two starting nodes. Then there exists a distributed colouring algorithm $\colalg$ such that, for any labeled infinite line $L$, and for any finite subline $P$ of $L$ consisting of nodes whose labels are bounded above by some integer $Y$, algorithm $\colalg$ uses $\lfloor \frac{1}{16\kappa}\log^*(Y) \rfloor+1$ rounds of communication and assigns to each node in $P$ an integer colour from the range $1,\ldots,4^{2\lfloor \frac{1}{16\kappa}\log^*(Y) \rfloor+1}$.
\end{lemma}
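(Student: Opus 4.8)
The plan is to turn the fast rendezvous algorithm $\rvalg$ into a local colouring rule, by colouring each node with an encoding of the behaviour sequence that a single agent starting there would produce in its first few rounds. Set $T = \lfloor \frac{1}{16\kappa}\log^*(Y)\rfloor$, the length to which we truncate. The crucial locality observation is that, for a fixed starting node $v$ on the fixed line $L$, the moves made by an agent running $\rvalg$ from $v$ during its first $T$ rounds are a function of the labels within distance $T$ of $v$ alone, and (by the determinism of $\rvalg$ together with \Cref{prop:prefixes}) do not depend on where any second agent starts until the two have met. Let $\hat{\mathcal{B}}_v \in \{0,1,2\}^T$ denote the length-$T$ behaviour sequence of an agent executing $\rvalg$ from $v$ with no second agent present. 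In the $\mathcal{LOCAL}$ model, after $T+1$ communication rounds node $v$ knows its entire $(T+1)$-neighbourhood (labels included), which is more than enough to simulate these $T$ rounds and read off $\hat{\mathcal{B}}_v$. Algorithm $\colalg$ then outputs the colour $c(v) := 1 + \sum_{t=1}^{T} (\hat{\mathcal{B}}_v)_t\, 4^{t-1}$, an integer in $\{1,\ldots,4^{T}\} \subseteq \{1,\ldots,4^{2T+1}\}$; the number of communication rounds used is $T+1$, matching the statement.

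It remains to verify properness, i.e.\ that $\hat{\mathcal{B}}_v \neq \hat{\mathcal{B}}_w$ for every edge $\{v,w\}$ of $P$. For this I would invoke the simultaneous-start execution $\gamma(v,w)$. Since $v,w$ have labels at most $Y$, the larger label $\ell=\max\{v,w\}$ satisfies $\log^*(\ell)\le\log^*(Y)$ by monotonicity of $\log^*$, so by hypothesis the agents meet after $|\gamma(v,w)| \le \lfloor \frac{1}{16\kappa}\log^*(\ell)\rfloor \le T$ rounds. Before meeting, neither agent's moves are influenced by the other, so the actual behaviour sequences $\mathcal{B}_v(v,w)$ and $\mathcal{B}_w(v,w)$, each of length $|\gamma(v,w)| \le T$, coincide with the length-$|\gamma(v,w)|$ prefixes of $\hat{\mathcal{B}}_v$ and $\hat{\mathcal{B}}_w$ respectively. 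By \Cref{prop:unequal}, $\mathcal{B}_v(v,w)\neq\mathcal{B}_w(v,w)$, so these prefixes differ in some position $t\le T$; hence $\hat{\mathcal{B}}_v$ and $\hat{\mathcal{B}}_w$ differ in position $t$, giving $c(v)\neq c(w)$. Thus $\colalg$ produces a proper colouring in the claimed range and round count.

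The main point requiring care, and the heart of the reduction, is the claim that a node can reconstruct $\hat{\mathcal{B}}_v$ from purely local data even though the behaviour sequence is \emph{defined} through a two-agent execution. The resolution is that $\rvalg$ is deterministic and agents anonymous, so an agent's trajectory up to the meeting round is a function of the labelled path it traverses by itself; \Cref{prop:prefixes} packages exactly this independence, letting the ``no second agent'' simulation agree with the real execution on the relevant prefix. I would also pin down the radius bookkeeping: at the start of round $t$ the agent lies within distance $t-1$ of $v$, so its first $T$ decisions depend only on labels within distance $T-1$, which are contained in the $(T+1)$-neighbourhood gathered in $T+1$ rounds. The remaining ingredients—the passage from $\ell$ to $Y$ via $\log^*$ monotonicity, and the elementary bound $4^{T}\le 4^{2T+1}$ on the encoding—are routine.
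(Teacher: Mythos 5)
Your proof is correct, and it takes a genuinely different --- and in fact leaner --- route than the paper's. The paper has each node $x$ run \emph{two} two-agent simulations (one pairing $x$ with its left neighbour, one with its right neighbour), concatenate the two resulting behaviour sequences with a separator digit $3$, and read the result as a base-4 colour in $\{1,\ldots,4^{2T+1}\}$; properness is then proved by contradiction, decomposing an allegedly shared colour around the separator and combining \Cref{prop:prefixes} with \Cref{prop:unequal} to force two equal behaviour sequences within a single execution. You instead colour $v$ by the truncated \emph{solo} behaviour sequence $\hat{\mathcal{B}}_v$, and properness follows in one step: $\mathcal{B}_v(v,w)$ and $\mathcal{B}_w(v,w)$ are equal-length prefixes of $\hat{\mathcal{B}}_v$ and $\hat{\mathcal{B}}_w$, they differ by \Cref{prop:unequal}, hence the full sequences differ. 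This buys a single simulation per node, a smaller colour space ($4^{T}$ rather than $4^{2T+1}$), and a much shorter correctness argument. The price is that the paper's formal apparatus defines behaviour sequences only inside two-agent executions, so your argument needs the solo sequence $\hat{\mathcal{B}}_v$ as a well-defined object of which every $\mathcal{B}_v(v,w)$ is a prefix; you flag this and justify it correctly --- it is exactly the determinism-plus-independence-until-meeting fact that the paper itself invokes to justify \Cref{prop:prefixes} --- so this is a matter of formalization rather than of substance. Your radius bookkeeping, the passage from $\ell$ to $Y$ via monotonicity of $\log^*$, and the round count $T+1$ all check out, and both constructions meet the stated bounds.
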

\begin{proof}
	Let $u = \lfloor \frac{1}{16\kappa}\log^*(Y) \rfloor$. First, we define the distributed colouring algorithm $\colalg$ for an arbitrary node $x$ in $P$. 
	\begin{enumerate}
		\item First, use $u+1$ rounds of communication to obtain the $(u+1)$-neighbourhood of $x$. For notational convenience, we denote by $x_1,\ldots,x_{2u+3}$ the nodes in the $(u+1)$-neighbourhood of $x$ from left to right (such that $x_{u+2} = x$). 
		\item Next, $x$ locally simulates an execution of the rendezvous algorithm $\rvalg$ with the agents starting at nodes $x_{u+1}$ and $x_{u+2}$. This is possible since the rendezvous algorithm is guaranteed to terminate within $u$ rounds, so the agents will always be located at nodes in the range $x_1,\ldots,x_{2u+2}$. From this simulation, $x$ determines the behaviour sequence $\mathcal{B}_{x_{u+2}}(x_{u+1},x_{u+2})$. 
		\item Next, $x$ locally simulates an execution of the rendezvous algorithm $\rvalg$ with the agents starting at nodes $x_{u+2}$ and $x_{u+3}$. This is possible since the rendezvous algorithm is guaranteed to terminate within $u$ rounds, so the agents will always be located at nodes in the range $x_2,\ldots,x_{2u+3}$. From this simulation, $x$ determines the behaviour sequence $\mathcal{B}_{x_{u+2}}(x_{u+2},x_{u+3})$.
		\item Finally, node $x$ concatenates the two behaviour sequences using the digit 3 to separate them, i.e., creates the quaternary string $\mathcal{B}_{x_{u+2}}(x_{u+1},x_{u+2})\cdot 3 \cdot \mathcal{B}_{x_{u+2}}(x_{u+2},x_{u+3})$ and interprets this as a base-4 integer to obtain its integer colour $c$.
	\end{enumerate}
	First, note that the two behaviour sequences $\mathcal{B}_{x_{u+2}}(x_{u+1},x_{u+2})$ and $\mathcal{B}_{x_{u+2}}(x_{u+2},x_{u+3})$ have at most $u$ digits each, as every execution of $\rvalg$ terminates within $u$ rounds. Therefore, the colour created by $x$ consists of at most $2u+1$ digits from the set $\{0,1,2,3\}$, so $x$ assigns itself an integer colour from the range $1,\ldots,4^{2u+1}$. Further, it is clear from the algorithm's description that it does so using only $u+1$ rounds of communication.
	
	It remains to show that the algorithm $\colalg$ produces a proper colouring. The proof proceeds by contradiction: assume that there are two adjacent nodes $v,w$ in $P$ such that $\colalg$ assigns the same colour $c$ to both nodes. We denote by $v_1,\ldots,v_{2u+3}$ the nodes in the $(u+1)$-neighbourhood of $v$ from left to right (such that $v_{u+2} = v$), and we denote by $w_1,\ldots,w_{2u+3}$ the nodes in the $(u+1)$-neighbourhood of $w$ from left to right (such that $w_{u+2} = w$). Without loss of generality, assume that $w$ is $v$'s right neighbour, so we have $v_{u+2} = w_{u+1}$ and $w_{u+2} = v_{u+3}$.
	
	Writing the colour $c$ as a base-4 integer, there is exactly one position with the digit 3, since $\colalg$ created $c$ by concatenating two strings from the set $\{0,1,2\}^*$ using 3 as a separator. Therefore, we can write $c$ uniquely as the string $\mathcal{B}_{\textrm{left}} \cdot 3 \cdot \mathcal{B}_{\textrm{right}}$. The following facts about $\mathcal{B}_{\textrm{left}}$ and $\mathcal{B}_{\textrm{right}}$ will be used in the remainder of the proof.
	\begin{enumerate}[label=(\Roman*)]
		\item From $v$'s execution of $\colalg$, we get that $\mathcal{B}_{\textrm{left}} = \mathcal{B}_{v_{u+2}}(v_{u+1},v_{u+2})$.\label{vBL}
		\item From $v$'s execution of $\colalg$, we get that $\mathcal{B}_{\textrm{right}} = \mathcal{B}_{v_{u+2}}(v_{u+2},v_{u+3})$.\label{vBR}
		\item From the previous paragraph, we know that $v_{u+2} = w_{u+1}$ and $w_{u+2} = v_{u+3}$, so from \ref{vBR}, we get that $\mathcal{B}_{\textrm{right}} = \mathcal{B}_{w_{u+1}}(w_{u+1},w_{u+2})$.\label{rewriteBR}
		\item From $w$'s execution of $\colalg$, we get that $\mathcal{B}_{\textrm{left}} = \mathcal{B}_{w_{u+2}}(w_{u+1},w_{u+2})$.\label{wBL}
	\end{enumerate}
	
	From \ref{vBL} and \ref{vBR}, $\mathcal{B}_{\textrm{left}} = \mathcal{B}_{v_{u+2}}(v_{u+1},v_{u+2})$ and $\mathcal{B}_{\textrm{right}} = \mathcal{B}_{v_{u+2}}(v_{u+2},v_{u+3})$. So, by applying \Cref{prop:prefixes} to $\mathcal{B}_{\textrm{left}}$ and $\mathcal{B}_{\textrm{right}}$, we conclude that $\mathcal{B}_{\textrm{left}}$ and $\mathcal{B}_{\textrm{right}}$ have equal prefixes of length $\min\{|\mathcal{B}_{\textrm{left}}|,|\mathcal{B}_{\textrm{right}}|\}$.
	
	Consider the execution $\gamma(w_{u+1},w_{u+2})$. By the definition of behaviour sequences, we have that $|\gamma(w_{u+1},w_{u+2})| = |\mathcal{B}_{w_{u+1}}(w_{u+1},w_{u+2})| = |\mathcal{B}_{w_{u+2}}(w_{u+1},w_{u+2})|$. From \ref{rewriteBR} and \ref{wBL}, it follows that $|\mathcal{B}_{\textrm{right}}| = |\mathcal{B}_{\textrm{left}}|$. Together with the fact from the previous paragraph that $\mathcal{B}_{\textrm{right}}$ and $\mathcal{B}_{\textrm{left}}$ have equal prefixes of length $\min\{|\mathcal{B}_{\textrm{left}}|,|\mathcal{B}_{\textrm{right}}|\}$, we conclude that $\mathcal{B}_{\textrm{left}} = \mathcal{B}_{\textrm{right}}$.
	
	Since $\mathcal{B}_{\textrm{left}} = \mathcal{B}_{\textrm{right}}$, from \ref{rewriteBR} and \ref{wBL} we get $\mathcal{B}_{w_{u+1}}(w_{u+1},w_{u+2}) = \mathcal{B}_{w_{u+2}}(w_{u+1},w_{u+2})$, which contradicts \Cref{prop:unequal}. So, the assumption that $\colalg$ assigns the same colour to adjacent nodes $v$ and $w$ was incorrect.
\end{proof}
The second step is to take the algorithm $\colalg$ from \Cref{lem:firstcolouring} and turn it into a 3-colouring algorithm $\threecolalg$ using very few additional rounds. We do this by using the algorithm $\EarlyStopCV$ from \Cref{CValg} to quickly reduce the number of colours down to 3. Combined with the previous lemma, we get the following result that shows how to obtain a very fast distributed 3-colouring algorithm under the assumption that we have a very fast rendezvous algorithm.
\begin{lemma}\label{A3col}
	Consider any rendezvous algorithm $\rvalg$ such that $\rvalg$ always terminates within $\frac{1}{16\kappa}\log^*(\ell)$ rounds, where $\ell$ is the larger label of the two starting nodes. Then there exists a distributed 3-colouring algorithm $\threecolalg$ such that, for any labeled infinite line $L$, and for any finite subline $P$ of $L$ consisting of nodes whose labels are bounded above by some integer $Y$, algorithm $\threecolalg$ uses at most $\left( \frac{1}{4} + \frac{1}{16\kappa}\right) \log^*(Y) + 1 + 3\kappa$ rounds of communication to 3-colour the nodes of $P$.
\end{lemma}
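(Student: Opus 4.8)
The plan is to compose the colouring algorithm $\colalg$ from \Cref{lem:firstcolouring} with the fast iterated-logarithm reduction algorithm $\EarlyStopCV$ from \Cref{CValg}, using $\colalg$'s output colours as the initial labels fed into $\EarlyStopCV$. The key observation that makes this work is that the output of $\colalg$ is already a proper colouring, and by \Cref{CValg} the algorithm $\EarlyStopCV$ properly 3-colours any line whose initial labels (greater than $1$) form a proper colouring, terminating at each node $x$ in time $\kappa\log^*(\mathrm{ID}_x)$. So the composite algorithm $\threecolalg$ is: run $\colalg$ to assign each node $x$ of $P$ a colour $c_x \in \{1,\ldots,4^{2\lfloor\frac{1}{16\kappa}\log^*(Y)\rfloor+1}\}$, then run $\EarlyStopCV$ treating $c_x$ as the initial label of $x$.

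\textbf{The first step} is correctness: since $\colalg$ produces a proper colouring (this is exactly what \Cref{lem:firstcolouring} establishes) with all colours at least $1$, the colours form a valid proper colouring with integer labels. To satisfy the hypothesis of \Cref{CValg} that labels be strictly greater than $1$, I would shift each colour by $1$ (replace $c_x$ by $c_x+1$), which preserves properness and is a harmless local operation requiring no communication. Then $\EarlyStopCV$ applied to these shifted colours produces a proper $3$-colouring of $P$, giving the desired $\threecolalg$.

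\textbf{The main work} is the round-complexity bookkeeping. By \Cref{lem:firstcolouring}, the $\colalg$ phase uses $\lfloor\frac{1}{16\kappa}\log^*(Y)\rfloor + 1$ rounds. For the $\EarlyStopCV$ phase, the relevant initial labels are the (shifted) colours, each bounded above by $Y' := 4^{2\lfloor\frac{1}{16\kappa}\log^*(Y)\rfloor+1}+1$, so by \Cref{CValg} every node terminates within $\kappa\log^*(Y')$ rounds. The crux is to bound $\log^*(Y')$: since $Y'$ is roughly $4$ raised to a linear function of $\log^*(Y)$, taking $\log^*$ collapses the exponent, and I expect
\[
\log^*(Y') \le \log^*\!\Big(4^{2\lfloor\frac{1}{16\kappa}\log^*(Y)\rfloor+1}+1\Big) \le \tfrac{1}{16\kappa}\log^*(Y) + 3,
\]
using the elementary facts $\log^*(4^m) = \log^*(m) + O(1)$ and $\log^*(m) \le m$ for the linear exponent. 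The constant $3$ is the genuinely fiddly part—I would verify it by peeling off the outer exponentials (one application of $\log^*$ removes the base-$4$ exponentiation, reducing to $\log^*$ of something at most $2\lfloor\frac{1}{16\kappa}\log^*(Y)\rfloor+2$, which is itself at most $\frac{1}{16\kappa}\log^*(Y)+c$).

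\textbf{Summing the two phases} then gives a total of at most
\[
\Big(\tfrac{1}{16\kappa}\log^*(Y) + 1\Big) + \kappa\Big(\tfrac{1}{16\kappa}\log^*(Y) + 3\Big)
= \tfrac{1}{8}\log^*(Y) + 1 + 3\kappa
\]
rounds; I suspect the stated bound $\left(\frac14 + \frac{1}{16\kappa}\right)\log^*(Y) + 1 + 3\kappa$ is deliberately loose to absorb slack in the constant from the $\log^*$ estimate, so matching the exact coefficient $\frac14 + \frac{1}{16\kappa}$ is mainly a matter of choosing the crude-but-safe bound $\log^*(Y') \le \frac14\log^*(Y) + 3$ rather than the tighter $\frac{1}{16\kappa}\log^*(Y)+3$, since the communication cost of the $\EarlyStopCV$ phase is $\kappa\log^*(Y')$ and one wants the $\log^*(Y)$ coefficient after multiplying by $\kappa$ to come out to $\frac14$. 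The main obstacle, then, is not conceptual but arithmetic: getting the $\log^*$-of-an-exponential estimate tight enough that, after multiplication by $\kappa$, the coefficient of $\log^*(Y)$ and the additive constant land exactly on the claimed $\left(\frac14 + \frac{1}{16\kappa}\right)\log^*(Y) + 1 + 3\kappa$.
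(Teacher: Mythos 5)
Your proposal is correct and takes essentially the same route as the paper: compose $\colalg$ from \Cref{lem:firstcolouring} with $\EarlyStopCV$ on the resulting colours (the paper does not even bother with your shift-by-one, which is a harmless extra precaution), and then bound $\kappa\log^*$ of the colour range $4^{2u+1}$ where $u = \lfloor\frac{1}{16\kappa}\log^*(Y)\rfloor$. The one place to be careful is exactly the arithmetic you flag at the end: the paper peels off a \emph{single} exponential, writing $\log^*(4^{2u+1}) \le \log^*\bigl(2^{\frac{1}{4\kappa}\log^*(Y)+2}\bigr) = 1+\log^*\bigl(\frac{1}{4\kappa}\log^*(Y)+2\bigr) \le 3+\frac{1}{4\kappa}\log^*(Y)$ via $\log^*(x)\le x$, so that multiplying by $\kappa$ gives $3\kappa+\frac14\log^*(Y)$ on the nose; your tighter double-unrolling estimate would improve the $\log^*(Y)$ coefficient but worsen the additive constant to about $4\kappa$, which does not fit the stated bound for small $Y$, so the cruder single-unrolling bound (with the $\kappa$ in the denominator, i.e.\ $\log^*\le \frac{1}{4\kappa}\log^*(Y)+3$, not $\frac14\log^*(Y)+3$) is the one to use.
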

\begin{proof}
	Let $u = \lfloor \frac{1}{16\kappa}\log^*(Y) \rfloor$. By \Cref{lem:firstcolouring}, there is a colouring algorithm $\colalg$ using $u+1$ rounds of communication that assigns to each node in $P$ a colour from the range $1,\ldots,4^{2u+1}$. Taking these colours as the new node labels, we then apply \Cref{CValg} (i.e., $\EarlyStopCV$) to obtain a proper 3-colouring of the nodes in $P$, and we are guaranteed that all nodes terminate within time $\kappa\log^*(4^{2u+1})$. By our choice of $u$, we get that
	\begin{align*}
		\kappa\log^*(4^{2u+1}) & \leq \kappa\log^*(4^{2\left(\frac{1}{16\kappa}\log^*(Y)\right)+1})\\
		& = 
		\kappa\log^*(4^{\left(\frac{1}{8\kappa}\log^*(Y)\right)+1})\\
		& = 
		\kappa\log^*(2^{\left(\frac{1}{4\kappa}\log^*(Y)\right)+2})\\
		& = \kappa(1+\log^*(\log(2^{\left(\frac{1}{4\kappa}\log^*(Y)\right)+2}))) {\textrm{\ \ \ (by definition of $\log^*$)}}\\
		& = 
		\kappa(1+\log^*(\frac{1}{4\kappa}\log^*(Y)+2))\\
		& \leq \kappa(1+\frac{1}{4\kappa}\log^*(Y)+2) {\textrm{\ \ \   (since $\log^*(x) \leq x$ for $x \geq 2$)}}\\
		& = 3\kappa + \frac{1}{4}\log^*(Y)
	\end{align*}

	Therefore, we have created a 3-colouring algorithm whose running time is at most $\left(\frac{1}{16\kappa}\log^*(Y) + 1\right) + \left(3\kappa + \frac{1}{4}\log^*(Y)\right)$ rounds, as desired. 
\end{proof}

Finally, we demonstrate how to use the above result to prove the desired $\Omega(\log^*\ell)$ lower bound for rendezvous. The idea is to construct an infinite line that contains an infinite sequence of finite sublines, each of which is a worst-case instance (according to Linial's lower bound), and obtaining the desired contradiction by observing that the upper bound on the running time of $\threecolalg$ violates the lower bound guaranteed by Linial's result.
\begin{lemma}\label{lem:D1LB}
	Any algorithm that solves the rendezvous task on all labeled infinite lines, where the two agents start at adjacent nodes, uses $\Omega(\log^*\ell)$ rounds in the worst case, where $\ell$ is the larger label of the two starting nodes.
\end{lemma}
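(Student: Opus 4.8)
The plan is to derive a contradiction by assuming that rendezvous can be solved too quickly and then invoking the fast $3$-colouring algorithm $\threecolalg$ from \Cref{A3col} to violate Linial's lower bound (\Cref{linial}). Concretely, suppose for contradiction that there is a rendezvous algorithm $\rvalg$ that always terminates within $\frac{1}{16\kappa}\log^*(\ell)$ rounds, where $\ell$ is the larger label of the two adjacent starting nodes. By \Cref{A3col}, this yields a distributed $3$-colouring algorithm $\threecolalg$ that, on any finite subline with node labels bounded above by $Y$, runs in at most $\left(\frac{1}{4}+\frac{1}{16\kappa}\right)\log^*(Y)+1+3\kappa$ rounds. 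Since $\kappa>1$ is a fixed constant, this bound is at most $\left(\frac{1}{4}+\frac{1}{16}\right)\log^*(Y)+O(1) = \frac{5}{16}\log^*(Y)+O(1)$ rounds.

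The key step is to construct a single infinite labeled line that embeds, for every $n$, a Linial worst-case instance on $n$ nodes, so that a single $3$-colouring algorithm would have to beat Linial's bound on infinitely many of these instances. First I would fix, for each integer $n$ in an increasing sequence (say $n_1<n_2<\cdots$), a set $\mathcal{I}_n$ of $n$ distinct integer labels such that Linial's lower bound forces any $3$-colouring algorithm to spend at least $\frac{1}{2}\log^*(n)-1$ rounds on some node of some path on $n$ nodes labeled from $\mathcal{I}_n$; I would then concatenate these finite paths (with the label sets chosen disjoint and large enough, and separated by sufficiently long buffer segments of fresh labels so that the $(u+1)$-neighbourhoods used by $\threecolalg$ never straddle two different instances) into one infinite line $L$. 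Because the neighbourhoods are local, $\threecolalg$ restricted to the embedded path on $n$ nodes behaves exactly as a $3$-colouring algorithm for that standalone instance, and Linial's bound applies verbatim to it.

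Now I would compare the two bounds on the $n$-node instance. On one hand, \Cref{A3col} guarantees that $\threecolalg$ finishes in at most $\frac{5}{16}\log^*(Y_n)+O(1)$ rounds, where $Y_n$ is the maximum label appearing in that instance. On the other hand, Linial (\Cref{linial}) forces at least $\frac{1}{2}\log^*(n)-1$ rounds on some node. By choosing the label set $\mathcal{I}_n$ so that its maximum $Y_n$ satisfies $\log^*(Y_n)=\log^*(n)$ (e.g. picking $n$ labels all lying in an interval $[1,n']$ with $\log^*(n')=\log^*(n)$, which is possible since $\log^*$ is constant on long ranges), I get the contradiction
\[
\tfrac12\log^*(n)-1 \ \le\ \tfrac{5}{16}\log^*(n)+O(1),
\]
which fails for all sufficiently large $n$ since $\frac12>\frac5{16}$. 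Thus no such fast $\rvalg$ exists, and every rendezvous algorithm on adjacent starting nodes needs $\Omega(\log^*\ell)$ rounds in the worst case.

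The main obstacle I anticipate is the careful bookkeeping in the embedding: I must ensure that the buffer segments between consecutive Linial instances are long enough (at least the neighbourhood radius $u+1$ used by $\threecolalg$) that the colouring algorithm's view of any node inside an embedded instance is identical to its view in the standalone path, so that Linial's bound transfers cleanly; and I must simultaneously keep every label in the instance bounded by the intended $Y_n$ so that the upper bound from \Cref{A3col} really reads $\frac{5}{16}\log^*(n)+O(1)$ rather than something larger. Matching the constants — confirming that the Linial constant $\frac12$ strictly exceeds the colouring upper-bound constant $\frac{5}{16}$, and absorbing the additive $O(1)$ terms for large $n$ — is the crux that makes the contradiction go through.
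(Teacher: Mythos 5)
Your proposal is correct and follows essentially the same route as the paper: assume a rendezvous algorithm beating $\frac{1}{16\kappa}\log^*(\ell)$, invoke \Cref{A3col} to obtain a 3-colouring algorithm running in $\left(\frac{1}{4}+\frac{1}{16\kappa}\right)\log^*(Y)+O(1)$ rounds, embed an infinite family of Linial-hard finite paths into a single infinite labeled line, and contradict \Cref{linial} because $\frac{1}{4}+\frac{1}{16\kappa}<\frac{1}{2}$. The only differences are cosmetic: the paper concatenates the hard segments directly (the $j$'th segment has $j$ nodes with labels near $j^2$, so $\log^*$ of the labels stays within an additive constant of $\log^*$ of the segment size) instead of inserting explicit buffer zones, and it carries the constant $\frac{1}{4}+\frac{1}{16\kappa}$ through rather than bounding it by $\frac{5}{16}$.
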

\begin{proof}
	Assume the existence of an algorithm $\rvalg$ that always terminates within $\frac{1}{16\kappa}\log^*(\ell)$ rounds, where $\ell$ is the larger label of the two starting nodes. We consider the distributed algorithm $\threecolalg$ whose existence is guaranteed by \Cref{A3col}.
	
	Construct an infinite line $L$ of nodes, pick one node which will act as the `origin' $O$ and label it with 1. Label each node at distance $i$ to the left of $O$ with the integer $2i+1$. Partition the nodes to the right of $O$ into contiguous segments, where the $j$'th segment to the right of $O$ consists of a path of $j$ nodes. The $j$ nodes in the $j$'th segment to the right of $O$ are labeled with positive even integers from the range $(j-1)j+2,...,j(j+1)$ in the following way: first, apply Linial's lower bound (see \Cref{linial}) to find a path graph $P$ on $j$ nodes labeled with integers from $\mathcal{I} = \{(j-1)j+2,(j-1)j+4,...,j(j+1)\}$ for which algorithm $\threecolalg$ requires $\frac{1}{2}\log^*{j} - 1$ rounds in order to 3-colour $P$; then, assign the same labeling to the nodes in the $j$'th segment to the right of $O$. This concludes the construction of the infinite line $L$.
	
	Finally, consider any fixed positive integer $\ell \geq 9$. On one hand, by the construction of $L$, note that:
	\begin{itemize}
		\item the $\lfloor \sqrt{\ell}-1 \rfloor$'th segment to the right of $O$ consists of $\lfloor \sqrt{\ell}-1 \rfloor$ nodes and has labels in the range $\lfloor \sqrt{\ell}-1 \rfloor^2 - \lfloor \sqrt{\ell}-1 \rfloor + 2,...,\lfloor \sqrt{\ell}-1 \rfloor^2 + \lfloor \sqrt{\ell}-1 \rfloor$, which is a subrange of $\ell - 5\sqrt{\ell} + 6,\ldots,\ell-\sqrt{\ell}$, and,
		\item the labeling was chosen such that algorithm $\threecolalg$ uses at least $\frac{1}{2}(\log^*{(\sqrt{\ell}}-1)) - 1 \geq \frac{1}{2}(\log^*{(\log{\ell})}) - 1 = \frac{1}{2}(\log^*(\ell) - 1) - 1 = \frac{1}{2}(\log^*(\ell)) - \frac{3}{2}$ rounds to 3-colour the nodes in this segment.
	\end{itemize}
	On the other hand, since the labels in the segment are bounded above by $\ell-\sqrt{\ell}$, \Cref{A3col} guarantees that the algorithm $\threecolalg$ 3-colours the segment using at most $\left( \frac{1}{4} + \frac{1}{16\kappa}\right) \log^*(\ell-\sqrt{\ell}) + 1 + 3\kappa \leq \left( \frac{1}{4} + \frac{1}{16\kappa}\right) \log^*(\ell) + 1 + 3\kappa$. 
	
	So, we have shown that algorithm $\threecolalg$ uses at most $\left( \frac{1}{4} + \frac{1}{16\kappa}\right) \log^*(\ell) + 1 + 3\kappa$ rounds and at least $\frac{1}{2}(\log^*(\ell)) - \frac{3}{2}$ rounds to 3-colour the nodes of the segment, a contradiction for sufficiently large $\ell$. Therefore, the assumed algorithm $\rvalg$ cannot exist.
\end{proof}

\subsubsection{The \texorpdfstring{{\boldmath$D > 1$}}{D > 1} case}
We prove a $\Omega(D\log^*\ell)$ lower bound for rendezvous on the infinite line, where $\ell$ is the larger label of the two starting nodes, in the case where the two agents start at nodes that are distance $D > 1$ apart. This lower bound applies even to algorithms that start simultaneously and know that the initial distance between the two agents is $D$. Hence it shows that the running time of Algorithm $\rvknownDalg$ has optimal order of magnitude among rendezvous algorithms knowing the initial distance between the agents.

The overall idea is to assume that there exists a very fast rendezvous algorithm called $\rvalg$ (that always terminates within $\frac{1}{224\kappa}D\log^*\ell$ rounds) and prove that this implies the existence of a rendezvous algorithm $\rvadjalg$ for the $D=1$ case that always terminates within $\frac{1}{16\kappa}\log^*\ell$ rounds, which we already proved is impossible in \Cref{D1}. This contradiction proves that any rendezvous algorithm for the $D>1$ case must have running time $\Omega(D\log^*\ell)$.

\begin{theorem}\label{th-lower}
	Any algorithm that solves the rendezvous task on all labeled infinite lines, where the two agents start at known distance $D > 1$ apart, uses $\Omega(D\log^*\ell)$ rounds in the worst case, where $\ell$ is the larger label of the two starting nodes.
\end{theorem}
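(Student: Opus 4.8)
The plan is to prove the bound by contradiction, following the reduction announced in the text: assume a rendezvous algorithm $\rvalg$ that solves the start‑distance‑$D$ problem and always halts within $\frac{1}{224\kappa}D\log^*\ell$ rounds (where $\kappa$ is the constant from \Cref{CValg}), and use it as a black box to build a rendezvous algorithm $\rvadjalg$ for the $D=1$ case that always halts within $\frac{1}{16\kappa}\log^*\ell$ rounds. Since \Cref{lem:D1LB} rules out any such $\rvadjalg$, no such $\rvalg$ can exist, giving the desired $\Omega(D\log^*\ell)$ bound. The chosen constants leave deliberate slack: as $224 = 14\cdot 16$, the simulation overhead may cost a multiplicative factor up to $14$ and an additive $O(1)$ in the $\log^*$ term while still undercutting the $\frac{1}{16\kappa}\log^*\ell$ target.

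First I would describe the construction feeding $\rvalg$. Given a $D=1$ instance on a labeled line $L'$ with the two agents at adjacent nodes, I form a \emph{stretched} line $L$ by inserting $D-1$ fresh \emph{dummy} nodes between each pair of consecutive nodes of $L'$; the two (originally adjacent) starting nodes thereby sit at distance exactly $D$ in $L$, so $L$ is a legal input for $\rvalg$. The labeling of $L$ must be chosen by a fixed, locally computable rule that (i) keeps all labels distinct, and (ii) controls the magnitude of the two starting labels so that the larger one, $\ell_L$, satisfies $\log^*\ell_L \le \log^*\ell + O(1)$. A clean way is to double every real label (making it even) and assign the dummies distinct odd labels by a positional formula depending only on the two surrounding real labels. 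Crucially, the running‑time guarantee of $\rvalg$ depends only on the labels of its two starting nodes, so on $L$ it halts within $\frac{1}{224\kappa}D\log^*\ell_L$ rounds regardless of the (possibly large) dummy labels.

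Next I would specify how $\rvadjalg$ uses $\rvalg$. Because the model charges nothing for local computation, each agent can first read the labels of its two neighbours (bounding $\ell$, hence $\ell_L$ and the horizon $T\le\frac{1}{224\kappa}D\log^*\ell_L$ of $\rvalg$), then explore only the $O(T/D)=O(\tfrac{1}{224\kappa}\log^*\ell_L)$ real nodes of $L'$ whose labels $\rvalg$ could ever inspect (recomputing all intervening dummy labels for free), and finally simulate the entire execution of $\rvalg$ on $L$ in its memory. The intended accounting is that the factor $D$ in $\rvalg$'s bound is repaid by the $D$‑fold stretch: a block of $D$ consecutive fine steps that carries the virtual agent across one dummy block corresponds to a single physical step of the real agent in $L'$, so the physical running time is $\approx \frac{1}{D}\cdot\frac{1}{224\kappa}D\log^*\ell_L = \frac{1}{224\kappa}\log^*\ell_L$, which lies within $\frac{1}{16\kappa}\log^*\ell$ once the factor‑$14$ slack absorbs the exploration cost and the additive $\log^*$ constant.

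The hard part will be the \emph{faithfulness} of this projection: turning the meeting of the two virtual agents in the stretched line $L$ into a genuine \emph{simultaneous co‑location} of the two real agents at one node of $L'$. I would exploit the two concessions granted in the lower‑bound setting — simultaneous start and a global sense of direction — so that both agents reconstruct the same line $L$ and the same joint virtual execution, and then argue that whenever the two virtual trajectories coincide at a node of $L$ they coincide in the same cell, whose index is the real agents' common $L'$‑position. The delicate point, and the main obstacle, is reconciling the $\Theta(D)$ time compression with the requirement that rendezvous be witnessed at a common node in a common round: one must show that the compression scheme keeps the two agents' clocks aligned, so that the round in which the virtual agents meet projects to a single physical round in which both real agents stand at the same node. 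Establishing this co‑location lemma, together with the routine label‑magnitude bookkeeping of step two, is the core technical content; once it is in hand, $\rvadjalg$ is a correct $D=1$ rendezvous algorithm running in $\frac{1}{16\kappa}\log^*\ell$ rounds, contradicting \Cref{lem:D1LB} and completing the proof.
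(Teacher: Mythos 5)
Your skeleton is the paper's: assume $\rvalg$ halting in $\frac{1}{224\kappa}D\log^*\ell$ rounds, stretch a $D=1$ instance by inserting $D-1$ dummy nodes per edge, simulate $\rvalg$ on the stretched line in stages of $D$ virtual rounds per physical phase, and contradict \Cref{lem:D1LB}. But the step you yourself flag as ``the core technical content'' --- turning the virtual meeting into a physical co-location --- is precisely where the real difficulty lives, and your proposed route to it does not work. You write that both agents ``reconstruct the same joint virtual execution'' and then project the virtual meeting point down to $L'$. Neither agent can do this: agent $\alpha$ knows only its own starting node (and, after looking, its two neighbours' labels), not which neighbour the other agent occupies, so it can simulate only its \emph{own} virtual agent $\alpha'$, never the pair. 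Consequently the virtual rendezvous is undetectable, the simulation runs on past it, and by the end of the stage in which the virtual agents met they may well sit in \emph{different} segments --- so projecting end-of-stage positions to segment representatives does not co-locate the real agents. The paper's proof accepts that the meeting is undetectable and instead proves a structural fact (in any stage of $D$ simulated rounds, each virtual agent touches at most two adjacent segments, and the two virtual agents' segment-occupancy patterns around the meeting round are constrained --- Claims \ref{OnlyNbrSegs}--\ref{nonex-1}); it then appends a fixed $3$-round ``dance'' to every phase, chosen by a five-way case analysis on which segments one's own virtual agent visited, and verifies by exhaustive case analysis that in the phase of the first virtual meeting the two dances necessarily intersect at a common node in a common round. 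Some such mechanism is unavoidable, and your proposal contains no substitute for it.

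A second, independent problem is your up-front exploration plan. To pre-explore ``the $O(T/D)$ real nodes $\rvalg$ could ever inspect'' you must bound the horizon $T\le\frac{1}{224\kappa}D\log^*\ell_L$, hence bound $\ell$, the larger of the two \emph{starting} labels. An agent can upper-bound $\ell$ only by $\max\{v,u,w\}$ over itself and its two neighbours, and the non-partner neighbour's label can be astronomically larger than $\ell$ (e.g.\ starting labels $2$ and $3$ with a third neighbour labelled $2^{2^{100}}$), so the exploration budget alone can exceed $\frac{1}{16\kappa}\log^*\ell$ by an unbounded factor. Nor can you fall back on ``explore what the actual trajectory visits,'' since a single agent's simulated trajectory has no detectable endpoint (see the previous paragraph). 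The paper sidesteps both issues by interleaving: each $7$-round phase spends $4$ rounds learning just the two labels adjacent to the agent's \emph{current} position, simulates $D$ rounds (during which the virtual agent cannot leave the already-labelled segments), and moves the real agent at most one node --- so the exploration is always exactly one real node ahead of need and costs only the constant per-phase overhead that your factor-$14$ slack is meant to absorb.
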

\begin{proof}
	Assume the existence of an algorithm $\rvalg$ that always terminates within $\frac{1}{224\kappa}D\log^*(\ell)$ rounds, where $\ell$ is the larger label of the two starting nodes.

	We will use $\rvalg$ to design an algorithm $\rvadjalg$ that solves the rendezvous task on any infinite line when the agents start at adjacent nodes. The idea is to take any instance where the agents start at adjacent nodes, and ``blow it up'' by a factor of $D$: all node labels are multiplied by a factor of $D$, and $D-1$ ``dummy nodes'' are inserted between each pair of nodes. Each node from the original instance, together with the $D-1$ nodes to its right, form a `segment' in the blown-up instance. Then, the algorithm $\rvalg$ is simulated on the blown-up instance in stages of length $D$, and each simulated stage corresponds to 1 round of algorithm $\rvadjalg$ in the original instance. At the end of every simulated stage, the simulated agent is in some segment that has leftmost node $v$, so the real agent situates itself at node $v$ for the corresponding round in the original instance. Roughly speaking, since $\rvalg$ guarantees rendezvous in the blown-up instance, the two simulated agents will end up in the same segment (with the same leftmost node $v$), so the two real agents will end up at node $v$ in the original instance. Further, since each stage of $D$ simulated rounds corresponds to one round in the original instance, the number of rounds used by $\rvadjalg$ is a factor of $D$ less than the running time of the simulated algorithm $\rvalg$. This gives us a contradiction, as the resulting running time for $\rvadjalg$ is smaller than the lower bound proven in \Cref{lem:D1LB}.
	
	However, the above summary overlooks some complications. 
	\begin{enumerate}
		\item Assigning labels to the nodes in the blown-up instance: the agent in the original instance needs to assign labels to the blown-up instance in a way that is consistent with the original. However, the agent initially only knows the label at its own node, so, it must learn about neighbouring labels before it can assign labels in the blown-up instance.
		\item Guaranteeing rendezvous: two agents running $\rvadjalg$ are each independently simulating $\rvalg$ in their own memory, so they cannot detect when the simulated agents meet. So, although $\rvalg$ guarantees rendezvous, it might be the case that there is never a stage of $D$ simulated rounds after which the two simulated agents end up in the same segment, since the simulated agents might continue moving after the undetected rendezvous and end up in different segments. However, by choosing the segment and stage lengths appropriately, we can guarantee that, after the undetected rendezvous occurs in the simulation, the two simulated agents are either in the same segment or in neighbouring segments.  So, after the simulation is done, the real agents do a `dance' in the original instance in such a way that rendezvous occurs even if the simulated agents didn't end up in the same segment.
	\end{enumerate} 
	
	We now present the fully-detailed version of $\rvadjalg$. Consider an agent $\alpha$ initially located at some node labeled with the integer $i$ on an infinite line $L_1$. The agent maintains in its memory a parallel instance: an infinite line $L_D$ with an agent $\alpha'$ initially located at a node labeled $D\cdot i$. 
	
	The algorithm proceeds in phases, each consisting of seven rounds. Consider an arbitrary phase, denote by $v_x$ the integer label of the node where $\alpha$ is located at the start of the phase, and denote by $v_{x-1}$ and $v_{x+1}$ the integer labels of the nodes to the left and right of $v_x$, respectively. The agent performs the following during the phase:
	\begin{enumerate}[label=(\Roman*)]
		\item In the first four rounds of the phase, the agent $\alpha$ moves left to $v_{x-1}$, then right to $v_x$, then right to $v_{x+1}$, then left to $v_x$.
		\item Using its knowledge of $v_{x-1},v_x,v_{x+1}$, agent $\alpha$ updates any previously unlabeled nodes in its parallel instance $L_D$ as follows:
		\begin{itemize}
			\item the node at distance $D$ to the left of $D\cdot v_x$ is labeled with the integer $D\cdot v_{x-1}$
			\item the node at distance $D$ to the right of $D\cdot v_x$ is labeled with the integer $D\cdot v_{x+1}$
			\item the $D-1$ nodes between $D\cdot v_{x-1}$ and $D\cdot v_{x}$ are labeled, from left to right, using the consecutive integers $D\cdot v_{x-1}+1,\ldots,D\cdot v_{x-1}+D-1$
			\item the $D-1$ nodes between $D\cdot v_{x}$ and $D\cdot v_{x+1}$ are labeled, from left to right, using the consecutive integers $D\cdot v_{x}+1,\ldots,D\cdot v_{x}+D-1$
			\item the $D-1$ nodes to the right of $D\cdot v_{x+1}$ are labeled, from left to right, using the consecutive integers $D\cdot v_{x+1}+1,\ldots,D\cdot v_{x+1}+D-1$
		\end{itemize}
		For the sake of presenting the rest of the algorithm details, we define \emph{segments} as follows: for each $v_y \in \{v_{x-1},v_x,v_{x+1}\}$, we define the segment $S_y$ to be the finite subline of $L_D$ consisting of the $D$ nodes labeled $D\cdot v_y, D\cdot v_{y}+1,\ldots,D\cdot v_{y}+D-1$.
		\item Agent $\alpha$ simulates, in $L_D$, the execution of $\rvalg$ by the agent $\alpha'$ for $D$ (simulated) rounds, starting from where $\alpha'$ was located at the end of the previous phase's simulation (or, if this is the first phase, starting from the node labeled $D\cdot i$).
		\item The behaviour of $\alpha$ in $L_1$ in the final three rounds of the phase depends on the behaviour of $\alpha'$ in $L_D$ during the $D$ simulated rounds. There are several cases:
		\begin{enumerate}[label=(\roman*)]
			\item If, in $L_D$, agent $\alpha'$ was located in segment $S_{x}$ for all $D$ simulated rounds, then $\alpha$ stays at node $v_x$ in $L_1$ for the final three rounds of the phase.
			\item If, in $L_D$, agent $\alpha'$ was located in segments $S_{x}$ and $S_{x-1}$ during the $D$ simulated rounds, and ends the simulation in segment $S_{x-1}$, then $\alpha$ does the following in the final three rounds of the phase: stays at $v_x$ for one round, moves left to $v_{x-1}$, then stays at $v_{x-1}$ for the final round.
			\item If, in $L_D$, agent $\alpha'$ was located in segments $S_{x}$ and $S_{x-1}$ during the $D$ simulated rounds, and ends the simulation in segment $S_{x}$, then $\alpha$ does the following in the final three rounds of the phase: stays at $v_x$ for one round, moves left to $v_{x-1}$, then moves right to $v_{x}$.
			\item If, in $L_D$, agent $\alpha'$ was located in segments $S_{x}$ and $S_{x+1}$ during the $D$ simulated rounds, and ends the simulation in segment $S_{x+1}$, then $\alpha$ does the following in the final three rounds of the phase: moves right to $v_{x+1}$, then stays at $v_{x+1}$ for the final two rounds.
			\item If, in $L_D$, agent $\alpha'$ was located in segments $S_{x}$ and $S_{x+1}$ during the $D$ simulated rounds, and ends the simulation in segment $S_{x}$, then $\alpha$ does the following in the final three rounds of the phase: moves right to $v_{x+1}$, stays at $v_{x+1}$ for one round, then moves left to $v_{x}$.
		\end{enumerate}
	\end{enumerate}
	
	In what follows, we will write `simulation $k$' to refer to the $D$ simulation rounds that occur during step (III) in the $k$'th phase of algorithm $\rvadjalg$. In the remainder of the proof, we consider two agents $\alpha$ and $\beta$ both executing $\rvadjalg$ starting in the same round.
	
	To prove the correctness of $\rvadjalg$, we will often need to argue about the locations of the simulated agents within $L_D$. The fact that the number of simulation rounds per phase and the number of nodes in each segment are both $D$ allows us to narrow down in which segment(s) a simulated agent might be located during the simulation rounds. To this effect, the following three claims provide the facts that we will need throughout the proof.
	
	\begin{claim}\label{OnlyNbrSegs}
		During simulation $k$, if there is a round where a particular simulated agent is located in some segment $S_x$, then the same simulated agent can never be located in $S_{x-2}$ or $S_{x+2}$ during simulation $k$.
	\end{claim}
	To prove the claim, suppose that an agent is located at some node $v$ in $S_x$ in some round of simulation $k$. Note that simulation $k$ has exactly $D$ simulated rounds, which is strictly less than the distance between $v$ and any node in $S_{x-2}$ or $S_{x+2}$. Together with the fact that an agent can traverse at most one edge per round, this completes the proof of the claim.
	
	\begin{claim}\label{nonex+1}
		Suppose that, during simulation $k$, there is a simulated round $\tau$ in which both simulated agents are located at the same node $w$ in some segment $S_x$. If a simulated agent is in $S_{x-1}$ in any round before $\tau$ during simulation $k$, then no simulated agent is located in $S_{x+1}$ in any round after $\tau$ during simulation $k$.
	\end{claim}
	To prove the claim, let $d_1$ denote the distance between $w$ and the rightmost node of segment $S_{x-1}$, and let $d_2$ denote the distance between $w$ and the leftmost node of segment $S_{x+1}$. Since $w$ is in $S_x$, and $S_x$ consists of $D$ nodes, it follows that $d_1+d_2 > D$. Suppose that a simulated agent is in $S_{x-1}$ in some round before $\tau$. By the definition of $d_1$, it takes at least $d_1$ rounds for this agent to reach $w$, so there are at least $d_1$ rounds that elapse before $\tau$ in simulation $k$. Similarly, by the definition of $d_2$, it takes at least $d_2$ rounds for any agent to travel from $w$ to the leftmost node of $S_{x+1}$. It follows that, from the start of simulation $k$, the earliest round in which an agent can be located in $S_{x+1}$ is $d_1+d_2$. However, simulation $k$ consists of $D < d_1+d_2$ rounds, which completes the proof of the claim.
	
	By symmetry (swapping the roles of $S_{x-1}$ and $S_{x+1}$), the proof of \Cref{nonex+1} can be reused to prove the following claim.
	\begin{claim}\label{nonex-1}
		Suppose that, during simulation $k$, there is a simulated round $\tau$ in which both simulated agents are located at the same node in some segment $S_x$. If a simulated agent is in $S_{x+1}$ before $\tau$ during simulation $k$, then no simulated agent is located in $S_{x-1}$ after $\tau$ during simulation $k$.
	\end{claim}

	The following claim establishes the key relationship between the real instance and the simulated instance, i.e., it relates the location of a real agent in $L_1$ to the location of a simulated agent in $L_D$. 
	
	\begin{claim}\label{connectInstances}
		Agent $\alpha$ starts a phase at a node $v_x$ in $L_1$ if and only if simulated agent $\alpha'$ starts simulation $k$ at a node in segment $S_x$ in $L_D$.
	\end{claim}
	The proof proceeds by induction on $k$. For the base case, from the algorithm's description we see that $v_x = i$ and that the simulated agent starts at the node labeled $D\cdot i = D\cdot v_x$ in $L_D$, which is the leftmost node in segment $S_x$. As induction hypothesis, suppose that the claim holds for some $k \geq 1$. In particular, agent $\alpha$ starts phase $k$ at some node $v_x$ in $L_1$, and we assume that the simulated agent $\alpha'$ will start simulation $k$ at some node in segment $S_x$ in $L_D$. By \Cref{OnlyNbrSegs}, the simulated agent $\alpha'$ can only be located in two of $\{S_{x-1},S_x,S_{x+1}\}$ during the simulated rounds. Therefore, in the algorithm's description, the five sub-cases of (IV) are exhaustive. Finally, by inspecting these five cases, we can see that, for some fixed $y \in \{x-1,x,x+1\}$, agent $\alpha$ is located at node $v_y$ in the final round of the phase, and simulated agent $\alpha'$ is located at a node in segment $S_y$ at the end of the simulation. It follows that $\alpha$ starts phase $k+1$ at node $v_y$ in $L_1$, and that $\alpha'$ will start simulation $k+1$ at some node in $S_y$ in $L_D$, which concludes the proof of the claim.
	
	We can now argue that $\rvadjalg$ solves rendezvous in $L_1$. Notice that two agents $\alpha,\beta$ starting at distance 1 apart in $L_1$ will simulate agents $\alpha',\beta'$ located distance $D$ apart in $L_D$, and since $\rvalg$ is a correct algorithm, there must be a phase in the execution of $\rvadjalg$ where the simulated agents $\alpha'$ and $\beta'$ are located at the same node $w$ in some segment $S_y$ of $L_D$. Consider the earliest such phase $k$, suppose that the real agents $\alpha$ and $\beta$ have not met before phase $k$, and consider the earliest simulated round $\tau$ in simulation $k$ such that $\alpha'$ and $\beta'$ are both located at $w$. Note that the two simulated agents $\alpha'$ and $\beta'$ did not start simulation $k$ in the same segment, as this would imply (via \Cref{connectInstances}) that the real agents were located at the same node in $L_1$ at the end of the previous phase. Also, by \Cref{OnlyNbrSegs}, the simulated agents can never be located in segments $S_{y-2}$ or $S_{y+2}$ during simulation $k$. Therefore, without loss of generality, we may assume that simulated agent $\alpha'$ starts simulation $k$ in segment $S_{y-1}$ or in segment $S_y$, and that simulated agent $\beta'$ starts simulation $k$ in segment $S_y$ (only if $\alpha'$ starts in $S_{y-1}$) or in segment $S_{y+1}$. We separately consider all the possible cases below.

	\begin{itemize}
		\item Case 1: Agent $\alpha'$ starts simulation $k$ in segment $S_{y-1}$, agent $\beta'$ starts simulation $k$ in segment $S_{y+1}$
		
		\Cref{OnlyNbrSegs,nonex+1,nonex-1} imply that both $\alpha'$ and $\beta'$ are located in segment $S_y$ at the end of simulation $k$. Since agent $\alpha'$ starts simulation $k$ in $S_{y-1}$ and ends the simulation in $S_y$, we know from the algorithm's description that $\alpha$ behaves as described in (IV)(iv) (with $x=y-1$) for the final three rounds of phase $k$. It follows that $\alpha$ is located at node $v_y$ in $L_1$ in the last round of phase $k$. Also, since agent $\beta'$ starts simulation $k$ in $S_{y+1}$ and ends the simulation in $S_y$, we know from the algorithm's description that $\beta$ behaves as described in (IV)(ii) (with $x=y+1$) for the final three rounds of phase $k$. It follows that $\beta$ is located at node $v_y$ in $L_1$ in the last round of phase $k$. Thus, $\alpha$ and $\beta$ will meet at the same node in the final round of phase $k$, if not earlier.
		
		\item Case 2: Agent $\alpha'$ starts simulation $k$ in segment $S_{y-1}$, agent $\beta'$ starts simulation $k$ in segment $S_y$
		
		First, since $\alpha'$ is at $w$ in $S_y$ in simulated round $\tau$, it follows from \Cref{OnlyNbrSegs} that $\alpha'$ can only be located in $S_{y-1} \cup S_y \cup S_{y+1}$ during simulation $k$. Further, by \Cref{nonex+1}, it follows that $\alpha'$ can never be in $S_{y+1}$. Together with the fact that $\alpha'$ starts in $S_{y-1}$, we know that $\alpha'$ is in $S_y$ and $S_{y-1}$ during simulation $k$. From the description of $\rvadjalg$ (with $x=y-1$), in the final three rounds of phase $k$, agent $\alpha$ behaves according to (IV)(iv) or (IV)(v).
		
		With regards to the behaviour of $\beta$ in the final three rounds of phase $k$, there are several cases to consider based on the behaviour of the simulated agent $\beta'$ during simulation $k$. By \Cref{OnlyNbrSegs}, the following cases are exhaustive.
		\begin{itemize}
			\item Case 2(a): $\beta'$ is located in $S_y$ and $S_{y+1}$ during simulation $k$
			
			As $\alpha'$ starts simulation $k$ in $S_{y-1}$, \Cref{nonex+1} implies that $\beta'$ cannot be located in $S_{y+1}$ after $\tau$, so $\beta'$ cannot be located in $S_{y+1}$ at the end of simulation $k$. From the description of $\rvadjalg$ (with $x=y$), in the final three rounds of phase $k$, agent $\beta$ behaves according to (IV)(v), so is located at $v_y$ in the final round of phase $k$. Moreover, the above argument also implies that $\beta'$ must have been located in $S_{y+1}$ before $\tau$, so, by \Cref{nonex-1}, simulated agent $\alpha'$ cannot be located in $S_{y-1}$ after $\tau$, so $\alpha'$ cannot be located in $S_{y-1}$ at the end of simulation $k$. From the description of $\rvadjalg$ (with $x=y-1$), in the final three rounds of phase $k$, agent $\alpha$ behaves according to (IV)(iv), so is located at $v_y$ in the final round of phase $k$. Thus, $\alpha$ and $\beta$ will meet at the same node in the final round of phase $k$, if not earlier.
			
			\item Case 2(b): $\beta'$ is only located in $S_y$ during simulation $k$
			
			From the description of $\rvadjalg$ (with $x=y$), in the final three rounds of phase $k$, agent $\beta$ behaves according to (IV)(i), so is located at $v_y$ in the final three rounds of phase $k$. We proved above that $\alpha$ behaves according to (IV)(iv) or (IV)(v) (with $x = y-1$), and in either case, will be located at $v_y$ in the second-to-last round of phase $k$. Thus, $\alpha$ and $\beta$ will meet at the same node in the sixth round of phase $k$, if not earlier.
			
			\item Case 2(c): $\beta'$ is located in $S_y$ and $S_{y-1}$ during simulation $k$
			
			From the description of $\rvadjalg$ (with $x=y$), in the final three rounds of phase $k$, agent $\beta$ behaves according to (IV)(ii) or (IV)(iii) (with $x=y$), so, in either case, will be located at $v_{y}$ in the fifth round of phase $k$. We already proved above that $\alpha$ behaves according to (IV)(iv) or (IV)(v) (with $x = y-1$), and in either case, will be located at $v_{y}$ in the fifth round of phase $k$. Thus, $\alpha$ and $\beta$ will meet at the same node in the fifth round of phase $k$, if not earlier.
		\end{itemize}
		
		\item Case 3: Agent $\alpha'$ starts simulation $k$ in segment $S_{y}$, agent $\beta'$ starts simulation $k$ in segment $S_{y+1}$
		
		First, since $\beta'$ is at $w$ in $S_y$ in simulated round $\tau$, it follows from \Cref{OnlyNbrSegs} that $\beta'$ can only be located in $S_{y-1} \cup S_y \cup S_{y+1}$ during simulation $k$. Further, by \Cref{nonex-1}, it follows that $\beta'$ can never be in $S_{y-1}$. Together with the fact that $\beta'$ starts in $S_{y+1}$, we know that $\beta'$ is in $S_y$ and $S_{y+1}$ during simulation $k$. From the description of $\rvadjalg$ (with $x=y+1$), in the final three rounds of phase $k$, agent $\beta$ behaves according to (IV)(ii) or (IV)(iii).
		
		With regards to the behaviour of $\alpha$ in the final three rounds of phase $k$, there are several cases to consider based on the behaviour of the simulated agent $\alpha'$ during simulation $k$. By \Cref{OnlyNbrSegs}, the following cases are exhaustive.
		\begin{itemize}
			\item Case 3(a): $\alpha'$ is located in $S_y$ and $S_{y-1}$ during simulation $k$
			
			As $\beta'$ starts simulation $k$ in $S_{y+1}$, \Cref{nonex-1} implies that $\alpha'$ cannot be located in $S_{y-1}$ after $\tau$, so $\alpha'$ cannot be located in $S_{y-1}$ at the end of simulation $k$. From the description of $\rvadjalg$ (with $x=y$), in the final three rounds of phase $k$, agent $\alpha$ behaves according to (IV)(iii), so is located at $v_y$ in the final round of phase $k$. Moreover, the above argument also implies that $\alpha'$ must have been located in $S_{y-1}$ before $\tau$, so, by \Cref{nonex+1}, simulated agent $\beta'$ cannot be located in $S_{y+1}$ after $\tau$, so $\beta'$ cannot be located in $S_{y+1}$ at the end of simulation $k$. From the description of $\rvadjalg$ (with $x=y+1$), in the final three rounds of phase $k$, agent $\beta$ behaves according to (IV)(ii), so is located at $v_y$ in the final round of phase $k$. Thus, $\alpha$ and $\beta$ will meet at the same node in the final round of phase $k$, if not earlier.
			
			\item Case 3(b): $\alpha'$ is only located in $S_y$ during simulation $k$
			
			From the description of $\rvadjalg$ (with $x=y$), in the final three rounds of phase $k$, agent $\alpha$ behaves according to (IV)(i), so is located at $v_y$ in the final three rounds of phase $k$. We proved above that $\beta$ behaves according to (IV)(ii) or (IV)(iii) (with $x = y+1$), and in either case, will be located at $v_y$ in the second-to-last round of phase $k$. Thus, $\alpha$ and $\beta$ will meet at the same node in the sixth round of phase $k$, if not earlier.
			
			\item Case 3(c): $\alpha'$ is located in $S_y$ and $S_{y+1}$ during simulation $k$
			
			From the description of $\rvadjalg$ (with $x=y$), in the final three rounds of phase $k$, agent $\alpha$ behaves according to (IV)(iv) or (IV)(v) (with $x=y$), so, in either case, will be located at $v_{y+1}$ in the fifth round of phase $k$. We proved above that $\beta$ behaves according to (IV)(ii) or (IV)(iii) (with $x = y+1$), and in either case, will be located at $v_{y+1}$ in the fifth round of phase $k$. Thus, $\alpha$ and $\beta$ will meet at the same node in the fifth round of phase $k$, if not earlier.
		\end{itemize}

	\end{itemize}
	In all cases, we proved that rendezvous will occur by the end of phase $k$, which concludes the proof of correctness of $\rvadjalg$.
	
	Finally, we consider the running time of $\rvadjalg$. By our assumption on the running time of $\rvalg$, the number of simulated rounds needed before $\alpha'$ and $\beta'$ are located at the same node in $L_D$ is at most $\frac{1}{224\kappa}D\log^*(D\ell)$, since $D\ell$ is the larger of the two labels at the starting nodes of $\alpha'$ and $\beta'$ in $L_D$. We proved above that $\alpha$ and $\beta$ meet in $L_1$ by the end of the first phase in which $\alpha'$ and $\beta'$ are located at the same node in $L_D$ (if not earlier). Since each phase contains $D$ simulation rounds, it follows that every execution of $\rvadjalg$ lasts for at most $\lceil \frac{1}{224\kappa}\log^*(D\ell)\rceil \leq \frac{1}{224\kappa}\log^*(D\ell) + 1$ phases. Since each phase of $\rvadjalg$ consists of 7 rounds, it follows that every execution of $\rvadjalg$ lasts for at most $\frac{1}{32\kappa}\log^*(D\ell) + 7$ rounds. Finally, for sufficiently large $\ell$, 
	
	\begin{align*}
		\frac{1}{32\kappa}\log^*(D\ell) + 7 & \leq \frac{1}{32\kappa}\log^*(\ell^2) + 7\\
		& = \frac{1}{32\kappa}(1+\log^*(\log(\ell^2))) + 7\\
		& = \frac{1}{32\kappa}\log^*(2\log(\ell)) + \frac{1}{32\kappa} + 7\\
		& \leq \frac{1}{32\kappa}\log^*(\ell) + \frac{1}{32\kappa} + 7\\
		& \leq \frac{1}{16\kappa}\log^*(\ell)
	\end{align*}
	We conclude that $\rvadjalg$ is a rendezvous algorithm for the case $D=1$ that uses at most $\frac{1}{16\kappa}\log^*(\ell)$ rounds. However, this contradicts the proof of \Cref{lem:D1LB}, which demonstrates that no such algorithm can exist. Therefore, the assumed algorithm $\rvalg$ cannot exist.
\end{proof}

\section{Arbitrary lines with unknown initial distance between agents}\label{sec:unknowndist}
In this section, we describe an algorithm called $\rvunknownDalg$ that solves rendezvous on lines with arbitrary node labelings in time $O(D^2(\log^*\ell)^3)$ (where $D$ is the initial distance between the agents and $\ell$ is the larger label of the two starting nodes)  when two agents start at arbitrary positions and when the delay between the rounds in which they start executing the algorithm is arbitrary. The agents do not know the initial distance $D$ between them, and they do not know the delay between the starting rounds. Also, we note that the agents have no global sense of direction, but each agent can locally choose port 0 from its starting node to represent `right' and port 1 from its starting node to represent `left'. Further, using knowledge of the port number of the edge on which it arrived at a node, an agent is able to choose whether its next move will continue in the same direction or if it will switch directions. Without loss of generality, we may assume that all node labels are strictly greater than one, since the algorithm could be re-written to add one to each label value in its own memory before carrying out any computations involving the labels. This assumption ensures that, for any node label $v$, the value of $\log^*(v)$ is strictly greater than 0.

\subsection{Algorithm description}
As seen in \Cref{knownDupper}, if the initial distance $D$ between the agent is known, then we have an algorithm $\rvknownDalg$ that will solve rendezvous in $O(D\log^*\ell)$ rounds. We wish to extend that algorithm for the case of unknown distance by repeatedly running $\rvknownDalg$ with guessed values for $D$. To get an optimal algorithm, i.e., with running time $O(D\log^*\ell)$, a natural attempt would be to proceed by doubling the guess until it exceeds $D$, so that the searching range of an agent includes the starting node of the other agent. However, this approach will not work in our case, because our algorithm $\rvknownDalg$ requires the exact value of $D$ to guarantee rendezvous. More specifically, the colouring stage using guess $g$ only guarantees that nodes at distance exactly $g$ are assigned different colours. So, instead, our algorithm $\rvunknownDalg$ increments the guessed value by 1 so that the guess is guaranteed to eventually be equal to $D$, which results in a running time quadratic in $D$ instead of linear in $D$.

At a high level, our algorithm $\rvunknownDalg$ consists of phases, where each phase is an attempt to solve rendezvous using a value $g$ which is a guess for the value $D$. The first phase sets $g=1$. Each phase has three stages. In the first stage, the agent waits at its starting node for a fixed number of rounds. In the second and third stages, the agent executes a modified version of the algorithm $\rvknownDalg$ from \Cref{knownDupper} using the value $g$ instead of $D$. At the end of each phase, if rendezvous has not yet occurred, the agent increments its guess $g$ and proceeds to the next phase. {\color{black} A major complication is that an adversary can choose the wake-up times of the agents so that the phases do not align well, e.g., the agents are using the same guess $g$ but are at different parts of the phase, or, they are in different phases and not using the same guess $g$. This means we have to very carefully design the phases and algorithm analysis to account for arbitrary delays between the wake-up times.}

The detailed description of the algorithm is as follows.
Consider an agent $x$ whose execution starts at a node labeled $v_x$. We now describe an arbitrary phase in the algorithm's execution. Let $g \geq 1$, let $d = 1+\floor*{\log_2 g}$, and recall that $\kappa > 1$ is an integer constant defined in the running time of the algorithm $\EarlyStopCV$ from \Cref{CValg}. The $g$-th phase executed by agent $x$, denoted by $\Pg{g}{x}$, consists of executing the following three stages.

\textbf{Stage 0: Wait.} Stay at the node $v_x$ for $36\cdot 2^d \cdot \klog{v_x}$ rounds.

\textbf{Stage 1: Colouring.} Let $r = g\cdot\kappa\log^*(v_x)$. Denote by $\mathcal{B}_r$ the $r$-neighbourhood of $v_x$, and let $V_g$ be the subset of nodes in $\mathcal{B}_r$ whose distance from $v_x$ is an integer multiple of $g$. First, agent $x$ determines $\mathcal{B}_r$ (including all node labels) by moving right $r$ times, then left $2r$ times, then right $r$ times, ending back at its starting node $v_x$.  Then, in its local memory, agent $x$ creates a path graph $G_x$ consisting of the nodes in $V_g$, with two nodes connected by an edge if and only if their distance in $\mathcal{B}_r$ is exactly $g$. This forms a path graph centered at $v_x$ with $\kappa\log^*(v_x)$ nodes in each direction. The agent simulates an execution of the algorithm $\EarlyStopCV$ from \Cref{CValg} by the nodes of $G_x$ to obtain a colour $c_x \in \{0,1,2\}$. Let $CV_x$ be the 2-bit binary representation of $c_x$. Transform $CV_x$ into an 8-bit binary string $CV_x'$ by replacing each 0 in $CV_x$ with 0011, and replacing each 1 in $CV_x$ with 1100. Finally, create a 9-bit string $S$ by appending a 1 to $CV_x'$.

\textbf{Stage 2: Search.} This stage consists of performing $\klog{v_x}$ periods of $|S| = 9$ blocks each. Block $i$ of a period is designated as a \emph{waiting} block if the $i$'th bit of $S$ is 0, else it is designated as a \emph{searching} block. A waiting block consists of $4\cdot (2^d)$ consecutive rounds during which the agent stays at its starting node. A searching block consists of $4\cdot (2^d)$ consecutive rounds: the agent first moves right $2^d$ times, then left $2\cdot 2^d$ times, then right $2^d$ times.
\subsection{Algorithm analysis}
In this section, we prove that Algorithm $\rvunknownDalg$ solves rendezvous within $O(D^2(\log^*{\ell})^3)$ rounds, where $\ell$ is the larger of the labels of the two starting nodes of the agents.

Consider an arbitrary instance on some line $L$ with an arbitrary labeling of the nodes with positive integers. Consider any execution of $\rvunknownDalg$ by agents $\alpha$ and $\beta$ starting at nodes labeled $v_\alpha$ and $v_\beta$, respectively. Denote by $\delay$ the number of rounds that elapse between the starting times of the two agents. Without loss of generality, if $\delay > 0$, then assume that $\alpha$'s execution starts first.

At a very high level, there are two main scenarios in which our proposed algorithm must correctly solve rendezvous. In the first scenario, the executions of the algorithm by $\alpha$ and $\beta$ are `highly synchronized', i.e., their executions start at the same time (or roughly at the same time), and, the length of any particular phase $\Pg{g}{x}$ is the same in both executions (i.e., because $\log^*{v_\alpha} = \log^*{v_\beta}$). In this scenario, the main challenge is to break symmetry. Once the phase with $g = D$ is reached, stages 1 and 2 of the algorithm ensure that rendezvous will occur, for roughly the same reason that our algorithm $\rvknownDalg$ works: stage 1 ensures that the two starting nodes of the agents are assigned different colours, and stage 2 ensures that there will be a time interval when one agent is performing a waiting block at its starting node while the other agent is performing a searching block (during which rendezvous will occur). In the second scenario, the executions of the algorithm $\alpha$ and $\beta$ become `desynchronized' at some point, i.e., either there is a significant delay between their starting times, or, their phase lengths are different. In this scenario, the main challenge is to solve rendezvous despite arbitrary asymmetry that we cannot control, i.e., we can think of an adversary that chooses the delay between starting times and chooses the labels for the starting nodes of the agents. Stage 0 plays a central role in ensuring that rendezvous will occur in this situation: we will find a time interval when one agent is executing a significantly long waiting period in stage 0 of a phase (either due to a large value of $g$, or, a large value of $\log^*(v_x)$), while the other agent is performing a search of its own $D$-neighbourhood in stage 1 or 2 of some phase with $g \geq D$. 

The analysis proceeds according to the following roadmap. In \Cref{basic}, we introduce terminology and prove some basic facts that will be used throughout the rest of the analysis. Then, we separately consider two cases: $\log^*{v_\alpha} = \log^*{v_\beta}$ (\Cref{samelogstar}) and $\log^*{v_\alpha} \neq \log^*{v_\beta}$ (\Cref{difflogstar}). Within each of these cases, we will consider subcases based on whether the delay between the agents' start times is small or large. In the $\log^*{v_\alpha} \neq \log^*{v_\beta}$ case, there are further subcases to consider based on which of the two values is bigger, and different approaches are needed depending on the ratio between $\log^*{v_\alpha}$ and $\log^*{v_\beta}$.

\subsubsection{Terminology and basic facts}\label{basic}
Consider an agent $x \in \{\alpha,\beta\}$. For the analysis, we group together various phases of the execution based on the value of $d$ used in the phase. In particular, an \emph{epoch} $\ep{j}{x}$ is defined as the time interval consisting of all phases $\Pg{g}{x}$ executed by agent $x$ such that $1+\floor*{\log_2 g}=j$. For example, $\ep{1}{x}= \Pg{1}{x}$, $\ep{2}{x} = \Pg{2}{x} \cup \Pg{3}{x}$, and $\ep{3}{x} = \Pg{4}{x}\cup\Pg{5}{x}\cup\Pg{6}{x}\cup\Pg{7}{x}$. More generally, for each $j \geq 1$, the first phase of epoch $\ep{j}{x}$ is $\Pg{2^{j-1}}{x}$, and the last phase of epoch $\ep{j}{x}$ is $\Pg{2^j-1}{x}$. Further, the number of phases in epoch $\ep{j}{x}$ is $2^{j-1}$. 

Many of our arguments will depend on calculations involving the length of certain phases and epochs. We denote by $|\Pg{g}{x}|$ the number of rounds in phase $\Pg{g}{x}$, and we denote by $|\ep{j}{x}|$ the number of rounds in epoch $\ep{j}{x}$. Also, we will often want to consider the number of rounds that have elapsed so far in an execution. In particular, for an agent $x$, we will consider the round in which $x$ starts a particular epoch, and we will set out to find a bound on the number of rounds that have elapsed up to that point in time. We denote by $\first{M}{x}$ the number of rounds that elapse during the first $M$ epochs of agent $x$'s execution, i.e., $\first{M}{x} := \sum\limits_{j=1}^{M} |\ep{j}{x}|$.

\subsubsection*{Facts about the lengths of phases and epochs}
In this section, we compute the lengths of phases and epochs, and provide some useful relationships between them.  We start by computing the exact lengths of each phase and each epoch.

\begin{fact}\label{fact:onephase}
	Consider an agent $x$ whose execution starts at a node labeled $v_x$. For any $g \geq 1$ and any $j \geq 1$ such that $\Pg{g}{x}$ is in epoch $\ep{j}{x}$, we have $|\Pg{g}{x}| = (72 \cdot 2^{j}+4g)\cdot\klog{v_x}$.
\end{fact}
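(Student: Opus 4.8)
The plan is to prove this directly, as a bookkeeping identity, by adding up the number of rounds spent in each of the three stages of the phase $\Pg{g}{x}$, after first pinning down that the value $d$ used throughout the phase equals $j$. Indeed, by the definition of an epoch, the phase $\Pg{g}{x}$ lies in epoch $\ep{j}{x}$ precisely when $1+\lfloor \log_2 g\rfloor = j$; since the phase sets $d = 1+\lfloor \log_2 g\rfloor$, we have $d=j$, so every occurrence of the factor $2^d$ in Stages 0 and 2 can be rewritten as $2^j$. Throughout, I would treat local computation (in particular the in-memory simulation of $\EarlyStopCV$ during Stage 1) as instantaneous, so that only the agent's physical moves and waits consume rounds; this matches the convention already used for Stage 1 of $\rvknownDalg$.

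First I would count Stage 0: by its description, the agent simply waits at $v_x$ for $36\cdot 2^d\cdot\klog{v_x}$ rounds, contributing exactly $36\cdot 2^{j}\cdot\klog{v_x}$ rounds. Next, for Stage 1, the agent's only physical movement is to move right $r$ times, then left $2r$ times, then right $r$ times, where $r=g\cdot\klog{v_x}$; since each move occupies one round, Stage 1 uses $r+2r+r=4r=4g\cdot\klog{v_x}$ rounds. Finally, for Stage 2, the agent performs $\klog{v_x}$ periods, each consisting of $9$ blocks, where each block—whether a waiting block or a searching block—occupies exactly $4\cdot 2^d$ rounds (a waiting block is $4\cdot 2^d$ rounds of idling, and a searching block consists of $2^d + 2\cdot 2^d + 2^d = 4\cdot 2^d$ moves). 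Hence each period uses $9\cdot 4\cdot 2^d = 36\cdot 2^d$ rounds, and Stage 2 uses $36\cdot 2^{j}\cdot\klog{v_x}$ rounds in total.

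Summing the three contributions gives $|\Pg{g}{x}| = (36\cdot 2^{j} + 4g + 36\cdot 2^{j})\cdot\klog{v_x} = (72\cdot 2^{j}+4g)\cdot\klog{v_x}$, as claimed. I expect no real obstacle here: the result is a direct calculation, and the only points requiring any care are the identification $d=j$ for phases in epoch $\ep{j}{x}$ (so that the $2^d$ factors become $2^j$) and the convention that the local $\EarlyStopCV$ simulation in Stage 1 does not consume rounds. Everything else is a routine addition of the per-stage round counts.
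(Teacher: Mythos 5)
Your proof is correct and follows exactly the same route as the paper's: sum the round counts of Stages 0, 1, and 2, which are $36\cdot 2^{j}\cdot\klog{v_x}$, $4g\cdot\klog{v_x}$, and $36\cdot 2^{j}\cdot\klog{v_x}$ respectively. The extra care you take in justifying $d=j$ and the zero-cost convention for local computation is consistent with the paper's (implicit) conventions.
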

\begin{proof}
	Stage 0 consists of $36\cdot 2^{j}\cdot \klog{v_x}$ rounds. Stage 1 consists of $4  \cdot g \cdot \klog{v_x}$ rounds. Stage 2 consists of $9 \cdot 4 \cdot 2^{j}\cdot \klog{v_x} = 36\cdot 2^{j}\cdot \klog{v_x}$ rounds. Adding up these expressions gives the desired result.
\end{proof}

\begin{fact}\label{fact:epochlength}
	Consider an agent $x$ whose execution starts at a node labeled $v_x$. For any $j \geq 1$, we have $|\ep{j}{x}| = 2^{j-1}\left[ 75\cdot 2^{j}  -2\right]\cdot\klog{v_x}$.
\end{fact}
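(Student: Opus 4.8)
The plan is to obtain $|\ep{j}{x}|$ by directly summing the per-phase lengths given by \Cref{fact:onephase} over all phases that constitute epoch $\ep{j}{x}$. From the definitions in \Cref{basic}, epoch $\ep{j}{x}$ is exactly the set of phases $\Pg{g}{x}$ with $g$ ranging over the integers $2^{j-1}, 2^{j-1}+1, \ldots, 2^j-1$, so there are precisely $2^{j-1}$ phases in the epoch. Since every such phase lies in epoch $\ep{j}{x}$, \Cref{fact:onephase} applies with the same value of $j$ throughout and gives $|\Pg{g}{x}| = (72 \cdot 2^{j}+4g)\cdot\klog{v_x}$ for each of them. Hence I would write
\[
|\ep{j}{x}| = \sum_{g=2^{j-1}}^{2^j-1} |\Pg{g}{x}| = \klog{v_x}\cdot \sum_{g=2^{j-1}}^{2^j-1}\left(72\cdot 2^{j} + 4g\right).
\]

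Next I would split the inner sum into two pieces. The constant term $72\cdot 2^{j}$ is summed over $2^{j-1}$ phases, contributing $72\cdot 2^{j}\cdot 2^{j-1} = 72\cdot 2^{2j-1}$. The linear term contributes $4\sum_{g=2^{j-1}}^{2^j-1} g$, which is four times the sum of an arithmetic progression with $2^{j-1}$ terms, first term $2^{j-1}$ and last term $2^j-1$; evaluating it as the number of terms times the average of the endpoints yields $4 \cdot 2^{j-1}\cdot \frac{2^{j-1}+(2^j-1)}{2} = 2^j(3\cdot 2^{j-1}-1) = 3\cdot 2^{2j-1} - 2^j$. Adding the two pieces gives $75\cdot 2^{2j-1} - 2^j$, and factoring out $2^{j-1}$ rewrites this as $2^{j-1}(75\cdot 2^{j} - 2)$, which matches the claimed formula once the $\klog{v_x}$ factor is reinstated.

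The only place demanding any care --- I would not call it a genuine obstacle --- is bookkeeping the index range and the term count: one must confirm that the last phase of the epoch is $\Pg{2^j-1}{x}$ (not $\Pg{2^j}{x}$) and that the number of terms is exactly $2^{j-1}$, both of which follow directly from the epoch definition, and then apply the arithmetic-series formula with the correct endpoints. As a sanity check I would verify the case $j=1$, where the epoch is the single phase $\Pg{1}{x}$: the formula gives $2^{0}(75\cdot 2 - 2) = 148$, matching $72\cdot 2 + 4 = 148$ obtained from \Cref{fact:onephase} at $g=1$, which confirms the algebra.
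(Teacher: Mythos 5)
Your proposal is correct and follows the same route as the paper: both sum the per-phase lengths from \Cref{fact:onephase} over $g = 2^{j-1},\ldots,2^j-1$ and simplify; you merely evaluate the arithmetic series more directly (terms times average of endpoints) where the paper shifts the index and expands. The algebra and the $j=1$ sanity check both check out.
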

\begin{proof}
	For $j \geq 1$, each epoch $\ep{j}{x}$ consists of $2^{j-1}$ phases $\Pg{2^{j-1}}{x},\ldots,\Pg{2^j-1}{x}$. For each $g \in \{2^{j-1},\ldots,2^j-1\}$, each phase $\Pg{g}{x}$ has length $(72 \cdot 2^{j}+4g)\cdot\klog{v_x}$. Thus, the length of epoch $\ep{j}{x}$ is given by
	\begin{align*}
		& \sum_{g=2^{j-1}}^{2^j-1} \left[ (72 \cdot 2^{j}+4g)\cdot \klog{v_x}\right] \\
		=\ & 4\klog{v_x}\left[ \left[\sum_{g=2^{j-1}}^{2^j-1} 18 \cdot 2^{j}\right] + \left[\sum_{g=2^{j-1}}^{2^j-1} g \right] \right]\\
		=\ & 4\klog{v_x}\left[ \left[\left(2^{j-1}\right)\left(18 \cdot 2^{j}\right)\right] + \left[\sum_{g=1}^{2^{j-1}}(g+2^{j-1} - 1) \right] \right]\\
		=\ & 4\klog{v_x}\left[ \left[\left(2^{j-1}\right)\left(18 \cdot 2^{j}\right)\right] 
		+ \left[\left[\sum_{g=1}^{2^{j-1}}g\right] + \left[\sum_{g=1}^{2^{j-1}}(2^{j-1} - 1)\right] \right]
		\right]\\
		=\ & 4\klog{v_x}\left[ \left[\left(2^{j-1}\right)\left(18 \cdot 2^{j}\right)\right] 
		+ \left[\left[\frac{2^{j-1}(2^{j-1}+1)}{2}\right] + \left[(2^{j-1})(2^{j-1} - 1)\right] \right]
		\right]\\
		=\ & 4\klog{v_x}\left[ \left[\left(2^{j-1}\right)\left(18 \cdot 2^{j}\right)\right] 
		+ \left[\frac{2^{j-1}(2^{j-1}+1) + 2^j(2^{j-1}-1)}{2} \right]
		\right]\\
		=\ & \klog{v_x}\left[ \left[\left(2^{j+1}\right)\left(18 \cdot 2^{j}\right)\right] 
		+ \left[{2^{j}(2^{j-1}+1) + 2^{j+1}(2^{j-1}-1)} \right]
		\right]\\
		=\ & \klog{v_x}\left[ 18\cdot 2^{2j+1} + 2^{2j-1} + 2^{j} + 2^{2j} - 2^{j+1}
		\right]\\
		=\ & \klog{v_x}\cdot 2^{j-1}\left[ 72\cdot 2^{j} + 2^{j} + 2^{1} + 2\cdot 2^{j} - 2^{2}
		\right]\\
		=\ & \klog{v_x}\cdot 2^{j-1}\left[ 75\cdot 2^{j}  -2 
		\right]\\
	\end{align*}
\end{proof}

Using the length of each individual epoch, we compute the number of rounds that elapse in the first $M \geq 1$ epochs of an execution. Once established, this gives an expression that can be used to derive an upper bound on the running time of the algorithm's execution.
\begin{fact}\label{fact:epochsum}
	For any agent $x$ and any integer $M \geq 1$, the value of $\first{M}{x}$ is equal to $2\klog{v_x}\left[25\cdot 2^{2M} - 2^{M} - 24\right]$.
\end{fact}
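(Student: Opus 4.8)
The plan is to substitute the closed-form expression for the length of a single epoch, provided by \Cref{fact:epochlength}, directly into the definition $\first{M}{x} = \sum_{j=1}^{M} |\ep{j}{x}|$, factor out the common coefficient $\klog{v_x}$, and then evaluate the remaining summation by reducing it to two standard geometric series. Since this is a purely computational identity, the strategy is to keep the algebra organized rather than to find any clever manipulation.

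First I would recall from \Cref{fact:epochlength} that $|\ep{j}{x}| = 2^{j-1}\left[75\cdot 2^{j} - 2\right]\klog{v_x}$ for every $j \geq 1$. Expanding the bracketed product separates the summand into two clean geometric terms, namely $2^{j-1}\left[75\cdot 2^{j} - 2\right] = 75\cdot 2^{2j-1} - 2^{j}$. Pulling the factor $\klog{v_x}$ out of the sum, the task reduces to evaluating $\sum_{j=1}^{M}\left(75\cdot 2^{2j-1} - 2^{j}\right) = 75\sum_{j=1}^{M} 2^{2j-1} - \sum_{j=1}^{M} 2^{j}$.

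Next I would compute the two geometric sums individually. The first is $\sum_{j=1}^{M} 2^{2j-1} = \tfrac{1}{2}\sum_{j=1}^{M} 4^{j} = \tfrac{2}{3}\left(2^{2M} - 1\right)$, using the geometric-series formula together with $4^{j} = 2^{2j}$; the second is the familiar $\sum_{j=1}^{M} 2^{j} = 2^{M+1} - 2$. Substituting these back and simplifying gives $50\cdot 2^{2M} - 2\cdot 2^{M} - 48 = 2\left[25\cdot 2^{2M} - 2^{M} - 24\right]$, and multiplying by the factored-out $\klog{v_x}$ yields exactly $2\klog{v_x}\left[25\cdot 2^{2M} - 2^{M} - 24\right]$, as claimed. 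As a sanity check one can verify the small case $M=3$, where the three epoch lengths sum (after dividing by $\klog{v_x}$) to $148 + 596 + 2392 = 3136$, matching $2\left[25\cdot 64 - 8 - 24\right] = 3136$.

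This argument is essentially a bookkeeping exercise, so I do not anticipate a genuine conceptual obstacle; the only place where care is required is the arithmetic, specifically tracking the factor $\tfrac{2}{3}$ arising from the sum of powers of $4$ and confirming that the constant term resolves to exactly $-24$ once the factor of $2$ is extracted. One could alternatively establish the identity by induction on $M$ using \Cref{fact:epochlength} as the incremental step, but the direct geometric-series computation is more transparent and keeps the dependence on the single-epoch formula explicit.
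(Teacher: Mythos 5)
Your proposal is correct and follows essentially the same route as the paper: both substitute the closed form from \Cref{fact:epochlength}, factor out $\klog{v_x}$, split the summand into a geometric series in $4^j$ and one in $2^j$, and simplify to $2\klog{v_x}\left[25\cdot 2^{2M}-2^{M}-24\right]$. The arithmetic, including the $M=3$ sanity check, is accurate.
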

\begin{proof}
	By \Cref{fact:epochlength}, we have that $|\ep{j}{x}| = 2^{j-1}\left[ 75\cdot 2^{j}  -2\right]\cdot\klog{v_x}$ for each $j \in \{1,\ldots,M\}$. So,
	\begin{align*}
		\first{M}{x} & = \sum_{j=1}^{M} |\ep{j}{x}|\\
		& = \sum_{j=1}^{M} \left[2^{j-1}\left[ 75\cdot 2^{j}  -2\right]\cdot\klog{v_x}\right]\\
		& = \klog{v_x}\sum_{j=1}^{M} \left[2^{j-1}\left[ 75\cdot 2^{j}  -2\right]\right]\\
		& = \klog{v_x}\left[150\sum_{j=1}^{M} 2^{2j-2} - \sum_{j=1}^{M} 2^j\right]\\
		& = \klog{v_x}\left[150\sum_{j=1}^{M} 4^{j-1} - \sum_{j=1}^{M} 2^j\right]\\
		& = \klog{v_x}\left[150\left[\frac{4^M-1}{3}\right] - \left[\frac{2(2^M-1)}{1}\right]\right]\\
		& = \klog{v_x}\left[50\cdot 2^{2M} - 50 - 2^{M+1}+2\right]\\
		& = 2\klog{v_x}\left[25\cdot 2^{2M} - 2^{M} - 24\right]
	\end{align*}
\end{proof}

\begin{corollary}\label{cor:runningtime}
	If rendezvous occurs by the end of epoch $\ep{M}{x}$ in agent $x$'s execution, then the total number of rounds that elapse in $x$'s execution of the algorithm is $O(2^{2M}\log^*(v_x))$.
\end{corollary}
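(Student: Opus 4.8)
The plan is to observe that this corollary is an immediate consequence of \Cref{fact:epochsum}, so the argument is short and essentially just bookkeeping. First I would note that if rendezvous occurs by the end of epoch $\ep{M}{x}$ in agent $x$'s execution, then agent $x$ executes (at most) the first $M$ epochs before meeting the other agent, so the total number of rounds that elapse in $x$'s execution is at most $\first{M}{x}$, the number of rounds in the first $M$ epochs.

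Next, I would invoke \Cref{fact:epochsum} to substitute the exact value
\[
\first{M}{x} = 2\klog{v_x}\left[25\cdot 2^{2M} - 2^{M} - 24\right].
\]
Then I would bound the bracketed quantity from above by dropping the negative terms, i.e. $25\cdot 2^{2M} - 2^{M} - 24 \leq 25\cdot 2^{2M}$, which gives $\first{M}{x} \leq 50\kappa\cdot 2^{2M}\log^*(v_x)$.

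Finally, since $\kappa$ is the fixed integer constant from \Cref{CValg} and does not depend on $M$ or $v_x$, the factor $50\kappa$ is an absolute constant, so $\first{M}{x} \in O(2^{2M}\log^*(v_x))$, as claimed. There is no real obstacle here: the only point requiring any care is recalling that rendezvous ``by the end of epoch $\ep{M}{x}$'' upper-bounds the elapsed rounds by $\first{M}{x}$ (rather than by some partial sum within epoch $M$), after which the result follows directly from the closed form already established and the fact that $\kappa$ is constant.
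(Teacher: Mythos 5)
Your proposal is correct and matches the paper's intent exactly: the corollary is stated without a separate proof precisely because it follows immediately from \Cref{fact:epochsum} by bounding the elapsed rounds by $\first{M}{x} = 2\klog{v_x}\left[25\cdot 2^{2M} - 2^{M} - 24\right] \leq 50\kappa\cdot 2^{2M}\log^*(v_x)$ and absorbing the constant $50\kappa$. Nothing is missing.
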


The following facts provide useful relationships between the lengths of epochs and phases. The first fact shows that each epoch is at least twice as large as the previous one.
\begin{fact}\label{fact:epochincreasing}
	For any agent $x$ and any $j \geq 1$, we have $|\ep{j}{x}| < \frac{1}{2}|\ep{j+1}{x}|$. This implies that $\first{M}{x} = \sum_{j=1}^{M} |\ep{j}{x}| < |\ep{M+1}{x}|$.
\end{fact}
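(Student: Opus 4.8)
The plan is to establish both assertions using only the closed-form length formula $|\ep{j}{x}| = 2^{j-1}(75\cdot 2^{j}-2)\klog{v_x}$ from \Cref{fact:epochlength}, so that the entire fact reduces to elementary arithmetic with no real combinatorial content.

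For the first inequality I would substitute this formula for both $|\ep{j}{x}|$ and $|\ep{j+1}{x}|$. Simplifying the latter gives $\frac{1}{2}|\ep{j+1}{x}| = 2^{j-1}(150\cdot 2^{j}-2)\klog{v_x}$, so both sides of the target inequality share the strictly positive factor $2^{j-1}\klog{v_x}$ (positive because $\kappa>1$ and all labels exceed $1$, hence $\log^*(v_x)>0$). Dividing this common factor out, the claim collapses to $75\cdot 2^{j}-2 < 150\cdot 2^{j}-2$, which is immediate since $75\cdot 2^{j}<150\cdot 2^{j}$. The only care needed here is to note that the shared factor is positive so that the division preserves the direction of the inequality.

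For the second inequality I would argue by induction on $M$, using the first part as the driving step (this is the derivation suggested by the ``this implies'' phrasing). The base case $M=1$ is simply $\first{1}{x}=|\ep{1}{x}| < \frac{1}{2}|\ep{2}{x}| < |\ep{2}{x}|$. For the inductive step, assuming $\first{M}{x}<|\ep{M+1}{x}|$, I would write $\first{M+1}{x}=\first{M}{x}+|\ep{M+1}{x}| < 2|\ep{M+1}{x}|$, and then invoke the first part in its rearranged form $2|\ep{M+1}{x}|<|\ep{M+2}{x}|$ to close the induction.

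I would also record an alternative that sidesteps induction altogether, since the closed forms are already available: plugging in $\first{M}{x}=2\klog{v_x}(25\cdot 2^{2M}-2^{M}-24)$ from \Cref{fact:epochsum} together with $|\ep{M+1}{x}|=\klog{v_x}(150\cdot 2^{2M}-2\cdot 2^M)$ from \Cref{fact:epochlength}, cancelling the positive factor $\klog{v_x}$, and checking that the inequality reduces to $-48<100\cdot 2^{2M}$, which holds for every $M\geq 1$. There is no genuine obstacle in this fact; both routes are a few lines of bookkeeping, and the only subtlety is keeping track of the positive common factor $2^{j-1}\klog{v_x}$ when dividing through.
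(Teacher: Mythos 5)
Your proposal is correct and matches the paper's (implicit) reasoning: the paper states this fact without proof, treating it as immediate from the closed-form epoch length in \Cref{fact:epochlength}, and your computation --- cancelling the positive factor $2^{j-1}\klog{v_x}$ to reduce the first claim to $75\cdot 2^j - 2 < 150\cdot 2^j - 2$, then chaining it by induction (or directly via \Cref{fact:epochsum}) for the second --- is exactly the intended verification.
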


The following fact states that the first phase in an epoch has length bounded above by the average phase length in that epoch. This is because the first phase in an epoch is the shortest.
\begin{fact}\label{fact:firstphasefraction}
	For any agent $x$ and any $j \geq 1$, the first phase $\Pg{2^{j-1}}{x}$ of epoch $\ep{j}{x}$ has length $|\Pg{2^{j-1}}{x}| \leq \frac{1}{2^{j-1}}|\ep{j}{x}|$.
\end{fact}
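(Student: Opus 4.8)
The plan is to prove the inequality by directly computing both sides using the exact closed forms already established, and then observing that the comparison reduces to a trivial numerical bound. First I would evaluate the left-hand side: the first phase of epoch $\ep{j}{x}$ is $\Pg{2^{j-1}}{x}$, so substituting $g = 2^{j-1}$ into \Cref{fact:onephase} and using $4\cdot 2^{j-1} = 2\cdot 2^{j}$ gives $|\Pg{2^{j-1}}{x}| = (72\cdot 2^{j} + 2\cdot 2^{j})\cdot\klog{v_x} = 74\cdot 2^{j}\cdot\klog{v_x}$.

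Next I would evaluate the right-hand side by taking the formula $|\ep{j}{x}| = 2^{j-1}[75\cdot 2^{j} - 2]\cdot\klog{v_x}$ from \Cref{fact:epochlength} and dividing by $2^{j-1}$, which yields $\frac{1}{2^{j-1}}|\ep{j}{x}| = [75\cdot 2^{j} - 2]\cdot\klog{v_x}$. Comparing the two expressions and dividing through by the positive factor $\klog{v_x}$ (positive since all labels exceed $1$, so $\log^*(v_x) > 0$), the claimed inequality $|\Pg{2^{j-1}}{x}| \leq \frac{1}{2^{j-1}}|\ep{j}{x}|$ is seen to be equivalent to $74\cdot 2^{j} \leq 75\cdot 2^{j} - 2$, i.e.\ to $2 \leq 2^{j}$.

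This last inequality holds for every $j \geq 1$, which completes the argument. The computation is essentially one line once the closed forms are substituted, so there is no real obstacle; the only point that warrants attention is verifying the final bound $2^{j} \geq 2$, which is precisely where the hypothesis $j \geq 1$ is invoked. This is the formal counterpart of the stated intuition that the first phase of an epoch, being the one with the smallest guess $g$ and hence the shortest, has length at most the average phase length in that epoch.
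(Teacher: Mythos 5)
Your proof is correct. The paper states this fact without a written proof, offering only the one-line justification that the first phase of an epoch is the shortest of its $2^{j-1}$ phases and is therefore bounded by the average phase length $\frac{1}{2^{j-1}}|\ep{j}{x}|$; that averaging argument only needs monotonicity of $g \mapsto |\Pg{g}{x}|$ within an epoch, not the closed form of $|\ep{j}{x}|$. You instead substitute $g = 2^{j-1}$ into \Cref{fact:onephase} to get $|\Pg{2^{j-1}}{x}| = 74\cdot 2^{j}\cdot\klog{v_x}$, divide the expression from \Cref{fact:epochlength} by $2^{j-1}$ to get $(75\cdot 2^{j}-2)\cdot\klog{v_x}$, and reduce the claim to $2 \leq 2^{j}$. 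Both computations check out, the cancellation of the positive factor $\klog{v_x}$ is justified as you note, and your route has the minor virtue of being fully explicit where the paper's is implicit; the paper's averaging observation is marginally more robust in that it would survive a change to the constants in the phase-length formula. Either way the fact is established.
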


Epoch $\ep{1}{x}$ consists of one phase, and epoch $\ep{2}{x}$ consists of two phases, and the final phase of an epoch is always the longest. This means that the majority of the rounds in each of these epochs are in the final phase. However, the following fact demonstrates that this is no longer true starting from epoch $\ep{3}{x}$ onwards, i.e., the final phase is strictly contained in the latter half of the epoch.
\begin{fact}\label{fact:finalphasefraction}
	For any agent $x$ and any $j \geq 3$, the last phase $\Pg{2^{j}-1}{x}$ of epoch $\ep{j}{x}$ has length $|\Pg{2^{j}-1}{x}| < \frac{1}{2}|\ep{j}{x}|$.
\end{fact}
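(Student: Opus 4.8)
The plan is to prove the inequality by a direct computation, reducing it to a polynomial inequality in $2^j$ and then checking the range of validity. First I would invoke \Cref{fact:onephase} with $g = 2^j - 1$, the guess used in the last phase of epoch $\ep{j}{x}$, to obtain the exact length of the final phase:
\[
|\Pg{2^j-1}{x}| = \left(72\cdot 2^j + 4(2^j-1)\right)\cdot \klog{v_x} = \left(76\cdot 2^j - 4\right)\cdot\klog{v_x}.
\]
Next I would recall from \Cref{fact:epochlength} that $|\ep{j}{x}| = 2^{j-1}\left[75\cdot 2^j - 2\right]\cdot\klog{v_x}$, so that
\[
\tfrac12\,|\ep{j}{x}| = \left(75\cdot 2^{2j-2} - 2^{j-1}\right)\cdot\klog{v_x}.
\]

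Since $v_x > 1$ forces $\log^*(v_x)\geq 1$ and hence $\klog{v_x} > 0$, I can cancel the common factor $\klog{v_x}$ and reduce the claim $|\Pg{2^j-1}{x}| < \tfrac12|\ep{j}{x}|$ to the purely arithmetic statement $76\cdot 2^j - 4 < 75\cdot 2^{2j-2} - 2^{j-1}$. Substituting $t = 2^j$ and clearing the denominator of $4$, this is equivalent to $75\,t^2 - 306\,t + 16 > 0$. The final step is to observe that for $j \geq 3$ we have $t = 2^j \geq 8$, and the quadratic $75\,t^2 - 306\,t + 16$ is strictly positive there: both of its roots are smaller than $5$, so it is positive for all $t \geq 5$ (alternatively, one checks that its value at $t=8$ is positive and that it is increasing for $t \geq 8$).

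The computation is entirely routine, so there is no real conceptual obstacle; the only point requiring care — and the reason the fact is stated for $j \geq 3$ rather than for all $j$ — is the threshold in the final quadratic inequality. The same inequality fails for $j = 1$ and $j = 2$ (where $t = 2$ and $t = 4$ make $75\,t^2 - 306\,t + 16$ negative), which matches the intuition stated just before the fact that the final phase dominates its epoch precisely for the first two epochs. Thus the main thing to verify carefully is the correct range of $j$, not any nontrivial estimate.
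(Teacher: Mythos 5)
Your proposal is correct and follows essentially the same route as the paper: both start from the exact expressions $|\Pg{2^j-1}{x}| = (76\cdot 2^j-4)\klog{v_x}$ (via \Cref{fact:onephase}) and $|\ep{j}{x}| = 2^{j-1}(75\cdot 2^j-2)\klog{v_x}$ (via \Cref{fact:epochlength}) and reduce the claim to a purely arithmetic inequality. The only difference is cosmetic: the paper closes the arithmetic with the chain $2(38\cdot 2^j-2)\leq 2^{j-2}(38\cdot 2^j-2)<2^{j-2}(75\cdot 2^j-2)$ using $2\leq 2^{j-2}$ for $j\geq 3$, whereas you verify the equivalent quadratic inequality $75t^2-306t+16>0$ for $t=2^j\geq 8$ by locating its roots — both are valid, and your observation that the inequality fails for $j\in\{1,2\}$ correctly explains the threshold.
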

\begin{proof}
	By \Cref{fact:onephase}, 
	\begin{align*}
		|\Pg{2^{j}-1}{x}| & = (72 \cdot 2^{j}+4(2^{j}-1))\cdot\klog{v_x}\\
		& = (76 \cdot 2^{j} - 4)\cdot\klog{v_x}\\
		& = \klog{v_x}\cdot 2^1(38 \cdot 2^{j} - 2)\\
		& \leq \klog{v_x}\cdot 2^{j-2}(38 \cdot 2^{j} - 2)\\
		& = \frac{1}{2}\klog{v_x}\cdot 2^{j-1}(38 \cdot 2^{j} - 2)\\
		& < \frac{1}{2}\klog{v_x}\cdot 2^{j-1}(75 \cdot 2^{j} - 2)\\
		& = \frac{1}{2}|\ep{j}{x}|
	\end{align*}
\end{proof}

\subsubsection*{Facts about searching and rendezvous}
In this section, we provide some facts that will be crucial in our proofs that rendezvous occurs. Our approach to guaranteeing rendezvous can be summarized as follows: for some round $\tau$, we establish that one agent $y$ will wait at its starting node for some $Y$ consecutive rounds starting at $\tau$, and, the other agent $x$ will visit every node in the $D$-neighbourhood of $x$'s starting node within some $X \leq Y$ consecutive rounds starting at $\tau$. We can think of agent $x$ performing a `search' for agent $y$, and by establishing the above condition, we guarantee that agent $x$ will `find' agent $y$ at $y$'s starting node within the $X$ rounds following $\tau$.

The following technical fact will enable us to find an explicit upper bound to use as $X$ in the approach described above. In particular, for an arbitrary round $\tau$ in a phase $\Pg{g}{x}$, it gives an upper bound on how many rounds will elapse before $x$ has explored all nodes within the $g$-neighbourhood of $x$'s starting node.

\begin{fact}\label{willdosearch}
	Consider an agent $x$ with starting node $v_x$. Consider any fixed $j \geq 1$, and suppose that agent $x$'s execution is at some round $\tau$ in some phase $\Pg{g}{x}$ with $g \leq 2^j-1$ (i.e., its execution is in an epoch with index at most $j$). If $\tau$ is before the final block of Stage 2 of phase $\Pg{2^{j} - 1}{x}$, then agent $x$ visits all nodes in the $g$-neighbourhood of $v_x$ within the $44\klog{v_x}2^{j}$ rounds following $\tau$.
\end{fact}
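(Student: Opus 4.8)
The plan is to track the moments at which agent $x$ completes a full traversal of its $g$-neighbourhood and to show these ``covering moments'' are never more than $44\klog{v_x}2^{j}$ rounds apart within the relevant range, so the first one after $\tau$ arrives in time. There are two kinds of covering moments in a phase $\Pg{g'}{x}$ lying in an epoch $\ep{j'}{x}$ with $j' \le j$: a point during Stage 1, and the end of each \emph{searching} block of Stage 2. For the latter, a searching block in $\ep{j'}{x}$ has search radius $2^{j'}$, and since every guess used from $\tau$'s phase onward satisfies $g \le g' \le 2^{j'}-1 < 2^{j'}$, each such block sweeps a radius strictly larger than $g$ and thus covers the $g$-neighbourhood. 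For the former, I would first check that during Stage 1 (right $r$, left $2r$, right $r$, with $r = g'\klog{v_x} \ge g$) the agent has visited every node within distance $g$ of $v_x$ by round $2r+g$ of the stage; since $g \le 2^{j}-1$ and $\klog{v_x}\ge 1$, this offset is below $3\cdot 2^{j}\klog{v_x}$.

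The core is to bound the gap between consecutive covering moments. I would first record the structural fact about Stage 2 that makes this work: the string $S$ ends in $1$, so the last block of every period (and hence the last block of Stage 2) is a searching block, and a short check on the three colours $c_x\in\{0,1,2\}$ (for which $S$ reads $001100111$, $001111001$, $110000111$) shows that $S$, read cyclically, never contains more than four consecutive $0$'s. Hence inside a single Stage 2 the gap between the ends of two consecutive searching blocks is at most $5$ blocks, i.e. $20\cdot 2^{j'}\klog{v_x}\le 20\cdot 2^{j}\klog{v_x}$, and the gap from the Stage 1 covering moment to the first Stage 2 covering moment is the remainder of Stage 1 ($2r-g$) plus at most three blocks, which is below $14\cdot 2^{j}\klog{v_x}$. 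The only genuinely large gap spans a phase boundary: from the final searching block of a phase (which, because $S$ ends in $1$, coincides with the end of Stage 2) to the Stage 1 covering moment of the next phase. This gap equals one Stage 0 plus the Stage 1 offset $2r+g$, which is at most $36\cdot 2^{j}\klog{v_x} + 3\cdot 2^{j}\klog{v_x} = 39\cdot 2^{j}\klog{v_x} < 44\cdot 2^{j}\klog{v_x}$.

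With these bounds, the conclusion follows: $\tau$ lies in some inter-covering gap, and the first covering moment at or after $\tau$ is at most the length of that gap away, hence at most $39\cdot 2^{j}\klog{v_x}$ rounds after $\tau$. The hypothesis that $\tau$ precedes the final block of Stage 2 of $\Pg{2^{j}-1}{x}$ is used precisely to keep $\tau$ out of the one dangerous gap, namely the one leading into epoch $\ep{j+1}{x}$, whose Stage 0 has length $36\cdot 2^{j+1}\klog{v_x}$ and would break the bound. For every other position of $\tau$ the next covering moment is automatically within an epoch of index at most $j$; and when $\tau$'s phase is the last of epoch $\ep{j}{x}$, the hypothesis guarantees a searching block (at or before the excluded final block) still lies ahead inside this phase, so again the next covering moment is within epoch $\ep{j}{x}$ and all the $2^{j}$-sized bounds apply.

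I expect the main obstacle to be the bookkeeping around the phase boundary together with the hypothesis: one must argue that using the \emph{early} Stage 1 coverage, rather than merely a Stage 2 searching block, is exactly what pulls the across-Stage-0 gap below $44\cdot 2^{j}\klog{v_x}$ (relying on a searching block alone would add a full Stage 1 and up to three Stage 2 blocks, giving roughly $52\cdot 2^{j}\klog{v_x}$ and breaking the bound), and that the stated hypothesis rules out precisely the configurations whose next covering moment would fall into the longer-Stage-0 epoch $\ep{j+1}{x}$. The remaining pieces are routine: the radius inequality $2^{j'}>g$, the offset computation $2r+g<3\cdot 2^{j}\klog{v_x}$, and the four-consecutive-zeros check on $S$.
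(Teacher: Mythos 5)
Your overall strategy --- partitioning the timeline by ``covering moments'', using the structure of $S$ to bound the spacing of searching blocks, identifying the phase-boundary gap (one Stage 0 plus the next Stage 1) as the critical one, and using the hypothesis to exclude the single gap that leads into epoch $\ep{j+1}{x}$ --- is in substance the same case analysis as the paper's, and your gap computations and the four-consecutive-zeros check on $S$ are correct. However, there is a genuine logical gap in the final step. The fact requires that all nodes of the $g$-neighbourhood be visited \emph{in the rounds following $\tau$} (downstream this is paired with the other agent waiting at its node during exactly those rounds), and the arrival of the first covering moment after $\tau$ does not certify this when $\tau$ falls strictly inside the traversal whose completion is that covering moment. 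Concretely, if $\tau$ is one round before the end of a searching block (the agent is at position $-1$ moving right), the next covering moment is one round away, yet the only node visited after $\tau$ is $v_x$ itself; the first traversal of the $g$-neighbourhood performed entirely after $\tau$ is the \emph{next} searching block, or, if this was the final block of the phase, Stage 1 of the next phase. The same happens if $\tau$ interrupts Stage 1 after the agent has already passed distance $g$ on the right. So the quantity to bound is not the spacing between covering moments but the distance from $\tau$ to the end of the first covering traversal that \emph{begins} after $\tau$, which adds the remainder of the interrupted traversal to each of your gaps.

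The repair is routine and the constants survive: the worst case is $\tau$ just inside the final (searching) block of $\Pg{g}{x}$ with $g<2^j-1$, giving at most $4\cdot 2^{j'}$ (rest of the block) plus $36\klog{v_x}2^{j''}$ (Stage 0) plus $4\klog{v_x}2^{j''}$ (all of Stage 1 of $\Pg{g+1}{x}$), where $j''\le j$ is the index of the epoch containing $\Pg{g+1}{x}$; this totals at most $44\klog{v_x}2^{j''}\le 44\klog{v_x}2^{j}$ and is exactly the paper's accounting --- it explains the constant $44$. It also shows that your concern about needing the ``early'' Stage 1 covering moment is misplaced: the full Stage 1 of the next phase is itself a covering traversal (it sweeps radius $(g+1)\klog{v_x}\ge g+1$), so one never needs to reach that phase's searching blocks and the ``roughly $52\cdot 2^{j}\klog{v_x}$'' scenario does not arise. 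Two minor slips: five blocks amount to $20\cdot 2^{j'}$ rounds, not $20\cdot 2^{j'}\klog{v_x}$ (harmless, since you only use it as an upper bound), and note that the paper's proof sidesteps the issue above by always counting only traversals that start after $\tau$ (``a searching block will \emph{begin} within 3 blocks'', ``agent $x$ will \emph{start} Stage 1'').
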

\begin{proof}
	Suppose that $x$'s execution has not yet reached the final block of Stage 2 of phase $\Pg{2^{j} - 1}{x}$. Since $\Pg{2^{j} - 1}{x}$ is the final phase of epoch $\ep{j}{x}$, this means that $x$'s execution is in some epoch $\ep{{j'}}{x}$ with ${j'} \leq j$. In what follows, we will use the fact that $g \leq 2^{j'}$, which is true since $g = 2^{\log_2 g} \leq 2^{1+\floor*{\log_2 g}} = 2^{j'}$.
	
	First, suppose agent $x$ is in Stage 0 of phase $\Pg{g}{x}$. The length of Stage 0 is $36\klog{v_x}2^{j'}$ rounds, so within this many rounds, agent $x$ will start Stage 1 of phase $\Pg{g}{x}$. In Stage 1, agent $x$ visits all nodes in the $g$-neighbourhood of $v_x$ using $4\klog{v_x}g \leq 4\klog{v_x}2^{j'}$ rounds. So, we have shown that $x$ will visit all nodes in the $g$-neighbourhood of $v_x$ within the next $36\klog{v_x}2^{{j'}}+4\klog{v_x}2^{{j'}}=40\klog{v_x}2^{{j'}}<44\klog{v_x}2^{{j'}} \leq 44\klog{v_x}2^{j}$ rounds, as desired.
	
	Next, suppose agent $x$ is in Stage 1 of phase $\Pg{g}{x}$. The length of Stage 1 is $4\klog{v_x}g \leq 4\klog{v_x}2^{j'}$ rounds, so within this many rounds, agent $x$ will start Stage 2 of phase $\Pg{g}{x}$. Since the first three bits of $S$ contain at least one 1, it follows that agent $x$ will perform a searching block within the first three blocks of Stage 2. In such a searching block, agent $x$ will visit all nodes in the $(2^{j'})$-neighbourhood of $v_x$, which contains the $g$-neighbourhood of $v_x$ since $g \leq 2^{j'}$. As each block has length exactly $4\cdot 2^{j'}$, it follows that agent $x$ will visit all nodes in the $g$-neighbourhood of $v_x$ within $12\cdot 2^{j'}$ rounds. So, we have shown that $x$ will visit all nodes in the $g$-neighbourhood of $v_x$ within the next $4\klog{v_x}2^{j'} + 12\cdot 2^{j'} \leq 16\klog{v_x}2^{{j'}}<44\klog{v_x}2^{{j'}} \leq 44\klog{v_x}2^{j}$ rounds, as desired.
	
	Next suppose that agent $x$ is in Stage 2 of phase $\Pg{g}{x}$. There are two sub-cases to consider:
	\begin{itemize}
		\item Case 1: Agent $x$ is executing a block before the final block of Stage 2 of phase $\Pg{g}{x}$.\\
		By the construction of $S$, the bits equal to 0 occur in pairs, and each such pair is immediately followed by a 1. It follows that, from any round in Stage 2 other than inside the final block, a searching block will begin within 3 blocks. In such a block, agent $x$ visits all nodes in the $(2^{j'})$-neighbourhood of $v_x$, which includes the $g$-neighbourhood of $v_x$ since $g \leq 2^{j'}$. Due to the block length, this means that agent $x$ will visit all nodes in the $g$-neighbourhood of $v_x$ within $16\cdot 2^{j'}$ rounds. So, we have shown that $x$ will visit all nodes in the $g$-neighbourhood of $v_x$ within the next $16\cdot 2^{j'} <44\klog{v_x}2^{{j'}} \leq 44\klog{v_x}2^{j}$ rounds, as desired.
		
		\item Case 2: Agent $x$ is executing the final block of phase $\Pg{g}{x}$ with $g < 2^{j}-1$.\\
		The final block of phase $\Pg{g}{x}$ consists of $4\cdot 2^{j'}$ rounds, and then phase $\Pg{g+1}{x}$ begins. Note that $\Pg{g+1}{x}$ belongs to an epoch $E_{j''}$ with $j'' \in \{{j'},{j'}+1\}$, but, from the assumption $g < 2^{j}-1$, we know that $j'' \leq j$ (in other words, since $\Pg{g}{x}$ is before the final phase of epoch $\ep{j}{x}$, it follows that $\Pg{g+1}{x}$ is in an epoch before $\ep{j+1}{x}$). The length of Stage 0 in phase $\Pg{g+1}{x}$ is $36\klog{v_x}2^{j''}$ rounds, so within this many rounds, agent $x$ will start Stage 1 of phase $\Pg{g+1}{x}$. In Stage 1, agent $x$ visits all nodes in the $(g+1)$-neighbourhood of $v_x$ using $4\klog{v_x}(g+1) = 4\klog{v_x}2^{\log_2(g+1)} \leq 4\klog{v_x}2^{1+\floor*{\log_2(g+1)}}= 4\klog{v_x}2^{j''}$ rounds. Since the $(g+1)$-neighbourhood of $v_x$ includes its $g$-neighbourhood, $x$ will visit all nodes in the $g$-neighbourhood of $v_x$ in the next $4\cdot 2^{{j'}}+ 36\klog{v_x}2^{j''}+4\klog{v_x}2^{j''}\leq 44\klog{v_x}2^{j''} \leq 44\klog{v_x}2^{j}$ rounds, as desired.
	\end{itemize}
\end{proof}

In early phases of the algorithm's execution, it is often the case that rendezvous will not occur, since an agent is never searching far enough for the other agent. However, our algorithm has the desirable property that, from some epoch onward, all searches will have radius at least $D$, so they will include the starting node of the other agent. This epoch will play a critical role in the remainder of the analysis, which motivates the following definition and facts.

\begin{definition}
	Denote by $\dcrit$ the integer $1+\floor*{\log_2 D}$. For any agent $x$, the epoch $\ep{\dcrit}{x}$ is called the \emph{critical epoch}.
\end{definition}

\begin{fact}\label{fact:containsD}
	For any agent $x$, the epoch $\ep{\dcrit}{x}$ contains the phase $\Pg{D}{x}$.
\end{fact}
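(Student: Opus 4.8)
The plan is to unwind the two relevant definitions and observe that the claim is essentially immediate, then supply the one small integrality step needed to make it fully rigorous. Recall from the algorithm description that the parameter used in phase $\Pg{g}{x}$ is $d = 1 + \floor*{\log_2 g}$, and recall that membership of $\Pg{g}{x}$ in epoch $\ep{j}{x}$ is defined precisely by the condition $1 + \floor*{\log_2 g} = j$. On the other side, by definition $\dcrit = 1 + \floor*{\log_2 D}$. So the most direct route is simply to substitute $g = D$: the $d$-value attached to phase $\Pg{D}{x}$ is $1 + \floor*{\log_2 D} = \dcrit$, which is exactly the membership condition for epoch $\ep{\dcrit}{x}$. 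Hence $\Pg{D}{x}$ lies in $\ep{\dcrit}{x}$, as claimed.

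To make the argument self-contained (and to connect it to the first-phase/last-phase description of an epoch used elsewhere in the analysis), I would alternatively verify the equivalent statement that $2^{\dcrit-1} \leq D \leq 2^{\dcrit}-1$, since the first and last phases of $\ep{\dcrit}{x}$ are $\Pg{2^{\dcrit-1}}{x}$ and $\Pg{2^{\dcrit}-1}{x}$ respectively. Writing $\dcrit - 1 = \floor*{\log_2 D}$, the lower bound follows at once from $2^{\floor*{\log_2 D}} \leq 2^{\log_2 D} = D$. For the upper bound I would use $\floor*{\log_2 D} \geq \log_2 D - 1$ is not quite what is needed; instead I would use the defining inequality $\log_2 D < \floor*{\log_2 D} + 1 = \dcrit$, which gives $D < 2^{\dcrit}$.

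The only genuinely nonvacuous step is the integrality promotion in this last line: from the strict inequality $D < 2^{\dcrit}$ between two integers I would conclude $D \leq 2^{\dcrit} - 1$, which is exactly the required upper bound. That step, while trivial, is the one place where one must remember that $D$ and $2^{\dcrit}$ are integers rather than arbitrary reals, so I would state it explicitly. Beyond that there is no real obstacle here — the fact is a definitional unwinding, and its role is purely to license later arguments (such as those invoking \Cref{willdosearch} with $g = D$) to speak of the critical epoch as the epoch in which the search radius first reaches $D$.
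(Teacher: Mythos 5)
Your argument is correct, and the paper itself states \Cref{fact:containsD} without any proof precisely because it is the definitional unwinding you describe: substituting $g=D$ into the epoch membership condition $1+\floor*{\log_2 g}=j$ yields $\dcrit$ by definition. Your alternative verification that $2^{\dcrit-1}\leq D\leq 2^{\dcrit}-1$ (including the integrality step $D<2^{\dcrit}\Rightarrow D\leq 2^{\dcrit}-1$) is also correct and matches how the paper characterizes the first and last phases of an epoch.
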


\begin{fact}\label{fact:ggeqD}
	Consider any agent $x$ and any integer $j \geq \dcrit+1$. For any phase $\Pg{g}{x}$ contained in epoch $\ep{j}{x}$, we have $2^j > 2^{\dcrit} \geq D$ and $g \geq D$.
\end{fact}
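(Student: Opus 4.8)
The plan is to prove the two claimed inequalities separately, both flowing directly from the definition $\dcrit = 1 + \floor*{\log_2 D}$ and from the epoch structure established just before \Cref{fact:onephase} (namely that epoch $\ep{j}{x}$ consists exactly of the phases $\Pg{2^{j-1}}{x},\dots,\Pg{2^j-1}{x}$).

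First I would establish the anchor inequality $2^{\dcrit} \geq D$. Writing $m = \floor*{\log_2 D}$, the defining property of the floor gives $2^m \leq D < 2^{m+1}$. Since $\dcrit = m+1$, this immediately yields $2^{\dcrit} = 2^{m+1} > D$, which is in fact stronger than the required $2^{\dcrit} \geq D$. Combining with the hypothesis $j \geq \dcrit + 1$ and the strict monotonicity of the map $k \mapsto 2^k$ on the integers, we get $2^j > 2^{\dcrit}$, which together with the anchor inequality establishes the first claimed chain $2^j > 2^{\dcrit} \geq D$.

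For the second claim $g \geq D$, I would invoke the epoch structure: the first (hence smallest-indexed) phase of epoch $\ep{j}{x}$ is $\Pg{2^{j-1}}{x}$, so every phase $\Pg{g}{x}$ contained in $\ep{j}{x}$ satisfies $g \geq 2^{j-1}$. Since $j \geq \dcrit + 1$ gives $j - 1 \geq \dcrit$, monotonicity again yields $2^{j-1} \geq 2^{\dcrit}$, and chaining this with the anchor inequality from the previous paragraph gives $g \geq 2^{j-1} \geq 2^{\dcrit} > D$, so in particular $g \geq D$, as desired.

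There is no substantive obstacle here: the statement is a bookkeeping consequence of the definitions, and the only point requiring a moment of care is getting the correct direction of the floor inequality $2^{\floor*{\log_2 D}} \leq D < 2^{1+\floor*{\log_2 D}}$, which is exactly what pins $2^{\dcrit}$ just above $D$. Everything else is monotonicity of powers of two and the explicit index range $\{2^{j-1},\dots,2^{j}-1\}$ of the phases comprising epoch $\ep{j}{x}$.
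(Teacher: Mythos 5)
Your proof is correct and follows essentially the same route as the paper: establish $2^{\dcrit} \geq D$ from the definition $\dcrit = 1+\lfloor\log_2 D\rfloor$, use $j \geq \dcrit+1$ and monotonicity of powers of two, and use the fact that the phases of epoch $\ep{j}{x}$ are indexed by $g \in \{2^{j-1},\ldots,2^j-1\}$ so that $g \geq 2^{j-1}$. The only cosmetic difference is that you derive the (slightly sharper, strict) bound $2^{\dcrit} > D$ from the floor inequality $D < 2^{\lfloor\log_2 D\rfloor+1}$, whereas the paper uses $2^{1+\lfloor\log_2 D\rfloor} \geq 2^{\log_2 D} = D$; both are immediate and equally valid.
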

\begin{proof}
	Consider any $j \geq \dcrit+1$, and note that $2^{j-1} \geq 2^{\dcrit} = 2^{1+\floor*{\log_2 D}} \geq 2^{\log_2 D} \geq D$. Next, suppose that $\Pg{g}{x}$ is contained in epoch $\ep{j}{x}$. By the definition of $\ep{j}{x}$, we have that $g \geq 2^{j-1}$, and it follows from above that $g \geq D$.
\end{proof}

Combining \Cref{willdosearch,fact:ggeqD}, we get the following useful lemma, which will be used to prove that rendezvous occurs by pairing it with an argument that the other agent $y$ will wait at its own starting node for at least the next $44\klog{v_x}2^{j}$ rounds.
\begin{lemma}\label{lemma:willdoDsearch}
	Consider an agent $x$ with starting node $v_x$. For any $j \geq \dcrit+1$, suppose that agent $x$'s execution is in epoch $\ep{j}{x}$ but has not yet reached the final block of Stage 2 of the final phase of epoch $\ep{j}{x}$. Then agent $x$ visits all nodes in the $D$-neighbourhood of $v_x$ within the next $44\klog{v_x}2^{j}$ rounds.
\end{lemma}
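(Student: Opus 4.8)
The plan is to obtain the statement as an immediate consequence of the two facts just established, exactly as the surrounding text anticipates (``Combining \Cref{willdosearch,fact:ggeqD}\ldots''). First I would fix $j \geq \dcrit+1$ and observe that, since agent $x$'s execution is currently in epoch $\ep{j}{x}$, the guess $g$ used in the current phase $\Pg{g}{x}$ satisfies $2^{j-1} \leq g \leq 2^j - 1$ by the definition of an epoch (whose phases are $\Pg{2^{j-1}}{x},\ldots,\Pg{2^j-1}{x}$). The upper bound $g \leq 2^j - 1$ is precisely the numerical hypothesis needed to invoke \Cref{willdosearch} with this value of $j$.

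Next I would verify that the remaining hypothesis of \Cref{willdosearch} is met. That fact requires the current round to be before the final block of Stage 2 of phase $\Pg{2^j - 1}{x}$; since $\Pg{2^j - 1}{x}$ is exactly the final phase of epoch $\ep{j}{x}$, this requirement is literally the assumption of the lemma (``has not yet reached the final block of Stage 2 of the final phase of epoch $\ep{j}{x}$''). Applying \Cref{willdosearch} then yields directly that agent $x$ visits every node in the $g$-neighbourhood of $v_x$ within the next $44\klog{v_x}2^j$ rounds.

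Finally I would upgrade the $g$-neighbourhood to the $D$-neighbourhood. Because $j \geq \dcrit+1$ and $\Pg{g}{x}$ lies in epoch $\ep{j}{x}$, \Cref{fact:ggeqD} gives $g \geq D$, so the $g$-neighbourhood of $v_x$ contains its $D$-neighbourhood, and the claimed bound follows verbatim. There is no genuine obstacle here: the only point requiring care is the bookkeeping that matches the lemma's hypothesis with the phrasing of \Cref{willdosearch}, together with the epoch-membership inequality $g \leq 2^j-1$ that licenses the application. Both are direct from the definition of an epoch, so the argument is essentially a two-line appeal to the two preceding facts.
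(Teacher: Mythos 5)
Your proposal is correct and matches the paper exactly: the paper offers no separate proof of this lemma, presenting it precisely as the combination of \Cref{willdosearch} (applied with the given $j$, whose hypotheses $g \leq 2^j-1$ and ``before the final block of Stage 2 of $\Pg{2^j-1}{x}$'' are exactly the epoch-membership and the lemma's assumption) and \Cref{fact:ggeqD} (which upgrades the $g$-neighbourhood to the $D$-neighbourhood via $g \geq D$). Your bookkeeping of how the two facts' hypotheses line up with the lemma's statement is accurate and complete.
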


We conclude this section with a proof that the algorithm solves rendezvous in the case where the delay between agent starting times is very large. Essentially, if agent $\beta$ has not started its execution before agent $\alpha$ executes a phase with guess $g \geq D$, then rendezvous will occur when $\alpha$ visits $v_\beta$ in that phase.

\begin{lemma}\label{verylargedelay}
	Suppose that agent $\beta$ does not start its execution before agent $\alpha$ completes the epoch $\ep{\dcrit}{\alpha}$, i.e., $\delay \geq \first{\dcrit}{\alpha}$. Then rendezvous occurs by the end of $\alpha$'s execution of epoch $\ep{\dcrit}{\alpha}$, and the number of rounds that elapse in $\alpha$'s execution before rendezvous is $O(D^2\log^*(v_\alpha))$.
\end{lemma}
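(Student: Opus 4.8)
The plan is to exploit the fact that, under the stated delay hypothesis, agent $\beta$ acts as nothing more than a stationary target. Since $\first{\dcrit}{\alpha}$ is by definition the total number of rounds in $\alpha$'s first $\dcrit$ epochs, the assumption $\delay \geq \first{\dcrit}{\alpha}$ means that $\beta$ has not yet begun its own execution at any point while $\alpha$ is executing $\ep{1}{\alpha},\ldots,\ep{\dcrit}{\alpha}$. Consequently $\beta$ remains idle at its starting node $v_\beta$ throughout this entire interval, and it suffices to show that $\alpha$ physically visits $v_\beta$ before finishing epoch $\ep{\dcrit}{\alpha}$.

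To produce such a visit, I would point to the single phase $\Pg{D}{\alpha}$, which by \Cref{fact:containsD} is contained in the critical epoch $\ep{\dcrit}{\alpha}$. In Stage 1 of $\Pg{D}{\alpha}$ the agent explores its entire $r$-neighbourhood with $r = D\cdot\kappa\log^*(v_\alpha) \geq D$ (by moving right $r$ times, then left $2r$ times, then right $r$ times). Because $d(v_\alpha,v_\beta)=D \leq r$, this exploration passes through $v_\beta$, where $\beta$ is sitting idle, so rendezvous occurs at some round during $\alpha$'s execution of $\Pg{D}{\alpha}$, and in particular by the end of epoch $\ep{\dcrit}{\alpha}$, as required. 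Note that \Cref{lemma:willdoDsearch} cannot be invoked directly here, since it requires $j \geq \dcrit+1$; the argument instead relies on the explicit Stage 1 sweep of the specific phase $\Pg{D}{\alpha}$ inside the critical epoch itself.

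For the running-time bound, I would apply \Cref{cor:runningtime} with $M=\dcrit$ and $x=\alpha$: since rendezvous occurs by the end of $\ep{\dcrit}{\alpha}$, the number of rounds elapsed in $\alpha$'s execution is $O(2^{2\dcrit}\log^*(v_\alpha))$. It then remains only to substitute $2^{\dcrit} = 2^{1+\floor*{\log_2 D}} \leq 2\cdot 2^{\log_2 D} = 2D$, which gives $2^{2\dcrit} \leq 4D^2$ and hence the claimed $O(D^2\log^*(v_\alpha))$. The proof carries no substantial obstacle; the only two points needing care are confirming that the delay hypothesis genuinely keeps $\beta$ stationary for the whole of $\alpha$'s first $\dcrit$ epochs (immediate from the definition of $\first{\dcrit}{\alpha}$), and verifying the elementary inequality $2^{\dcrit}\leq 2D$ used to convert the $2^{2\dcrit}$ factor into $D^2$.
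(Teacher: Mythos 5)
Your proposal is correct and follows essentially the same route as the paper: both arguments observe that the delay hypothesis keeps $\beta$ stationary at $v_\beta$ throughout $\alpha$'s first $\dcrit$ epochs, invoke \Cref{fact:containsD} to locate $\Pg{D}{\alpha}$ inside the critical epoch, note that the Stage~1 sweep of that phase covers the $D$-neighbourhood of $v_\alpha$ and hence hits $v_\beta$, and then apply \Cref{cor:runningtime} with $M=\dcrit$ together with $2^{\dcrit}\leq 2D$ to obtain the $O(D^2\log^*(v_\alpha))$ bound. Your added remarks (that \Cref{lemma:willdoDsearch} is inapplicable here and that the explicit Stage~1 sweep must be used instead) are accurate but do not change the argument.
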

\begin{proof}
	By \Cref{fact:containsD}, epoch $\ep{\dcrit}{\alpha}$ contains the phase $\Pg{D}{\alpha}$. Stage 1 of $\Pg{D}{\alpha}$ consists of $\alpha$ visiting all nodes in the $D$-neighbourhood of $v_\alpha$. If agent $\beta$ has not started its execution while $\alpha$ executes this phase, then $\beta$ is located at its starting node $v_\beta$, which is at distance $D$ from $v_\alpha$, so rendezvous occurs (if not earlier). By \Cref{cor:runningtime}, the total number of rounds that elapse in $\alpha$'s execution of the algorithm is $O(2^{\dcrit}\log^*(v_\alpha)) = O(D^2\log^*(v_\alpha))$.
\end{proof}

\subsubsection{Algorithm analysis in the case where \texorpdfstring{$\log^*(v_\alpha) = \log^*(v_\beta)$}{log*(vα) = log*(vβ)}}\label{samelogstar}

In this section, we analyze the algorithm in the case where $\log^*(v_\alpha) = \log^*(v_\beta)$. Under this assumption, we observe from \Cref{fact:onephase,fact:epochlength} that the lengths of phases and epochs are the same across both agents' executions, i.e., for all $g \geq 1$ and $j \geq 1$, we can exploit the fact that $|\Pg{g}{\alpha}|=|\Pg{g}{\beta}|$ and $|\ep{j}{\alpha}| = |\ep{j}{\beta}|$.

\Cref{verylargedelay} handles the case where the delay between the starting times of the agents is very large. The next case we consider is when the delay between starting times is very small, i.e., bounded above by the length of one block in Stage 2 of the $D$-th phase. In this case, both agents will be in Stage 2 of $D$-th phase at roughly the same time, and the distinct colours assigned to the two starting nodes in Stage 1 guarantee that one agent will perform a searching block while the other agent is performing waiting blocks (so rendezvous occurs). 

\begin{lemma}\label{smalldelay}
	Suppose that $\delay \leq 4\cdot 2^{\dcrit}$. Then rendezvous occurs by the end of $\alpha$'s execution of phase $\Pg{D}{\alpha}$, i.e., by the end of epoch $\ep{\dcrit}{\alpha}$.
\end{lemma}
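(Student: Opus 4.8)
The plan is to force rendezvous inside phase $\Pg{D}{\alpha}$ by producing a searching block of $\alpha$ that is entirely covered by idling of $\beta$. In phase $\Pg{D}{}$ we have $d=\dcrit$, so a searching block explores the $2^{\dcrit}$-neighbourhood of the searcher, and since $2^{\dcrit}\ge D$ this neighbourhood contains the other agent's starting node; thus it suffices to find a block during which $\alpha$ searches while $\beta$ stays at $v_\beta$ for the whole block. I would first record that $c_\alpha\neq c_\beta$ in phase $\Pg{D}{}$: its Stage~1 colouring uses the guess $g=D$, equal to the true distance, so it coincides with the Stage~1 colouring of $\rvknownDalg$; since $v_\alpha$ and $v_\beta$ are at distance $D=g$ they are adjacent in the common path graph on which $\EarlyStopCV$ is simulated and hence receive distinct colours, exactly as in the proof of \Cref{stage1diff}.

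Next I would use $\log^*(v_\alpha)=\log^*(v_\beta)$: by \Cref{fact:onephase} every phase, and in particular every Stage~2 block of $\Pg{D}{}$, has the same length in both executions, so $\alpha$'s local clock runs uniformly $\delay$ rounds ahead of $\beta$'s. A Stage~2 block of $\Pg{D}{}$ lasts $4\cdot 2^{\dcrit}$ rounds and $\delay\le 4\cdot 2^{\dcrit}$, so the two executions are offset by at most one block. The bookkeeping claim I would prove is that for any block index $b\ge 2$, throughout the $4\cdot 2^{\dcrit}$ rounds during which $\alpha$ executes block $b$ of the first period of Stage~2 of $\Pg{D}{\alpha}$, agent $\beta$ lies first in its block $b-1$ (for the initial $\delay$ rounds) and then in its block $b$ of the same period; requiring $b\ge 2$ (so $b-1\ge 1$) keeps both of these blocks genuinely inside $\beta$'s Stage~2 rather than the tail of its Stage~1. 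Consequently, if $\beta$'s blocks $b-1$ and $b$ are both waiting while $\alpha$'s block $b$ is searching, then $\beta$ stays at $v_\beta$ for the entire search and is found (the case $\delay=0$ is the degenerate one, where only $\beta$'s block $b$ matters).

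Finally I would exhibit such a $b$, which is where the design of the $9$-bit strings enters. Writing $S_c$ for the string of colour $c$ (namely $001100111,\ 001111001,\ 110000111$ for $c=0,1,2$), I would check the purely combinatorial claim that for every ordered pair of distinct colours $(c_\alpha,c_\beta)$ there is an index $b\ge 2$ with $S_{c_\alpha}[b]=1$ and $S_{c_\beta}[b-1]=S_{c_\beta}[b]=0$; a direct inspection settles all six cases (e.g.\ $b=8$ for $(0,1)$, $b=6$ for $(1,0)$, $b=4$ for $(0,2)$ and $(1,2)$, and $b=2$ for $(2,0)$ and $(2,1)$). Together with the alignment step this yields a searching block of $\alpha$ inside $\Pg{D}{\alpha}$ during which $\beta$ remains at $v_\beta$, so, if the agents have not met earlier, rendezvous occurs by the end of $\Pg{D}{\alpha}$; since $\Pg{D}{}$ lies in $\ep{\dcrit}{}$ by \Cref{fact:containsD}, this is also by the end of epoch $\ep{\dcrit}{\alpha}$. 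I expect the main obstacle to be precisely this interplay: absorbing the one-block offset by demanding two consecutive waiting blocks of $\beta$ (rather than one), keeping those blocks inside Stage~2 (which forces $b\ge 2$), and verifying that the three colour-strings were built so that the required pattern survives for every distinct colour pair.
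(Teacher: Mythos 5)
Your proposal is correct and follows essentially the same route as the paper's proof: use $g=D$ to get $c_\alpha\neq c_\beta$ via the Stage~1 colouring, observe that equal block lengths and $\delay\le 4\cdot 2^{\dcrit}$ confine $\beta$ to blocks $b-1$ and $b$ while $\alpha$ runs block $b$, and then inspect the $9$-bit strings to find a block where $\alpha$ searches while $\beta$ waits twice in a row. The only cosmetic difference is that the paper organizes the inspection by which bit of $CV$ differs (yielding blocks $2,4,6,8$) rather than by ordered colour pair, but the underlying combinatorial check is identical.
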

\begin{proof}
	Consider $\alpha$'s execution of Stage 2 of phase $\Pg{D}{\alpha}$. From the description of the algorithm and \Cref{fact:containsD}, recall that the length of each block of Stage 2 in phase $\Pg{D}{\alpha}$ is $4\cdot 2^{\dcrit}$ rounds. So, by the assumption on delay, when agent $\alpha$ starts a block $i \geq 2$ in Stage 2 of phase $\Pg{D}{\alpha}$, agent $\beta$ is also starting block $i$, or, is in block $i-1$ of Stage 2 of phase $\Pg{D}{\beta}$.
	
	Since both $\alpha$ and $\beta$ are in a phase where $g=D$, and distance between their starting positions is exactly $D$, their value of $CV$ assigned in Stage 1 by the Cole-Vishkin algorithm differs.
	
	\begin{itemize}
		\item Case 1: The $CV$ values differ at the first bit\\
		Then bits 1-4 of $S$ are 0011 for one agent, and 1100 for the other. 
		\begin{itemize}
			\item Case (a): Bits 1-4 of $S$ for $\alpha$ are 0011.\\
			Consider the start of block 4 of agent $\alpha$'s execution of Stage 2 in phase $\Pg{D}{\alpha}$. In the next $4\cdot 2^{\dcrit}$ rounds, agent $\alpha$ visits all nodes in the $(2^{\dcrit})$-neighbourhood of $v_\alpha$ (which contains the entire $D$-neighbourhood of $v_\alpha$). Meanwhile, as bits 3-4 at $\beta$ are both 0, agent $\beta$ is waiting at its starting node $v_\beta$, so rendezvous occurs during block 4 of agent $\alpha$'s execution of Stage 2 in phase $\Pg{D}{\alpha}$. 
			\item Case (b): Bits 1-4 of $S$ for $\alpha$ are 1100.\\
			Consider the start of block 2 of agent $\alpha$'s execution of Stage 2 in phase $\Pg{D}{\alpha}$. In the next $4\cdot 2^{\dcrit}$ rounds, agent $\alpha$ visits all nodes in the $(2^{\dcrit})$-neighbourhood of $v_\alpha$ (which contains the entire $D$-neighbourhood of $v_\alpha$). Meanwhile, as bits 1-2 at $\beta$ are both 0, agent $\beta$ is waiting at its starting node $v_\beta$, so rendezvous occurs during block 2 of agent $\alpha$'s execution of Stage 2 in phase $\Pg{D}{\alpha}$. 
		\end{itemize}
		\item Case 2: The $CV$ values differ at the second bit\\
		Then bits 5-8 of $S$ are 0011 for one agent, and 1100 for the other.
		\begin{itemize}
			\item Case (a): Bits 5-8 of $S$ for $\alpha$ are 0011.\\
			Consider the start of block 8 of agent $\alpha$'s execution of Stage 2 in phase $\Pg{D}{\alpha}$. In the next $4\cdot 2^{\dcrit}$ rounds, agent $\alpha$ visits all nodes in the $(2^{\dcrit})$-neighbourhood of $v_\alpha$ (which contains the entire $D$-neighbourhood of $v_\alpha$). Meanwhile, as bits 7-8 at $\beta$ are both 0, agent $\beta$ is waiting at its starting node $v_\beta$, so rendezvous occurs during block 8 of agent $\alpha$'s execution of Stage 2 in phase $\Pg{D}{\alpha}$. 
			\item Case (b): Bits 5-8 of $S$ for $\alpha$ are 1100.\\
			Consider the start of block 6 of agent $\alpha$'s execution of Stage 2 in phase $\Pg{D}{\alpha}$. In the next $4\cdot 2^{\dcrit}$ rounds, agent $\alpha$ visits all nodes in the $(2^{\dcrit})$-neighbourhood of $v_\alpha$ (which contains the entire $D$-neighbourhood of $v_\alpha$). Meanwhile, as bits 5-6 at $\beta$ are both 0, agent $\beta$ is waiting at its starting node, so rendezvous occurs during block 6 of agent $\alpha$'s execution of Stage 2 in phase $\Pg{D}{\alpha}$. 
		\end{itemize}
	\end{itemize}
\end{proof}

Finally, we consider the remaining case: we assume that agent $\beta$ starts its execution before agent $\alpha$ completes its execution of epoch $\ep{\dcrit}{\alpha}$, but, we assume that the delay is not very small, i.e., we assume that the delay is strictly larger than the length of one block in Stage 2 of phase $D$. In particular, in what follows, we assume that $4\cdot 2^{\dcrit} < \delay < \first{\dcrit}{\alpha}$. At a high level, the analysis starts by considering the moment that agent $\beta$'s execution has reached the start of phase $\Pg{D}{\beta}$, so that all future searches by $\beta$ will include $\alpha$'s starting node. Then, we consider the earliest round, after that moment, in which $\alpha$ starts a new epoch. The lower bound on the delay guarantees that, when $\alpha$ starts this new epoch, agent $\beta$'s execution is far enough behind such that $\beta$'s next search happens relatively soon compared to the length of $\alpha$'s waiting period at the start of the new epoch, and this guarantees that rendezvous occurs. The upper bound on the delay guarantees that this new epoch in $\alpha$'s execution has a relatively small index, which will lead to the desired upper bound on the running time. 

We start by proving a lower bound on the delay between the agents. To do so, we consider how `far ahead' agent $\alpha$'s execution is at the moment agent $\beta$ starts phase $\Pg{D}{\alpha}$.

\begin{lemma}\label{lemma:hasdelay}
	Suppose that when agent $\beta$ starts phase $\Pg{D}{\beta}$, agent $\alpha$'s execution is in a phase in epoch $\ep{\dcrit+ m}{\alpha}$ for some integer $m \geq 0$. Then, $\delay > 4\cdot (2^{\dcrit+m})$.
\end{lemma}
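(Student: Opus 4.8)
The plan is to translate the hypothesis about $\beta$ reaching phase $\Pg{D}{\beta}$ into a statement about how far $\alpha$'s execution has advanced, and then to read off the lower bound on $\delay$ from the length of a single long phase that must have fully elapsed in $\alpha$'s execution during the delay.

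First I would exploit the standing assumption of this subsection, $\log^*(v_\alpha) = \log^*(v_\beta)$, which by \Cref{fact:onephase} and \Cref{fact:epochlength} makes all phase and epoch lengths identical across the two executions. Let $t_D$ denote the number of rounds that elapse in an agent's execution before it starts its phase with guess $D$; this quantity is well-defined and equal for both agents precisely because of the $\log^*$ assumption. Since $\alpha$ starts $\delay$ rounds before $\beta$, at the global round when $\beta$ starts $\Pg{D}{\beta}$ (which is $\beta$-elapsed-time $t_D$) agent $\alpha$'s execution has been running for exactly $t_D + \delay$ rounds. The hypothesis states that this moment falls inside epoch $\ep{\dcrit+m}{\alpha}$, so $t_D + \delay$ is at least the elapsed time at the start of that epoch, namely $\first{\dcrit+m-1}{\alpha}$. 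This yields the core inequality $\delay \geq \first{\dcrit+m-1}{\alpha} - t_D$.

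Next I would dispatch the base case $m=0$ immediately: it holds because the surrounding case assumption of this subsection is already $4\cdot 2^{\dcrit} < \delay$. For $m \geq 1$, the key observation is that the elapsed-time interval $[t_D, \first{\dcrit+m-1}{\alpha})$ of $\alpha$'s execution, whose length lower-bounds $\delay$, contains the entire last phase $\Pg{2^{\dcrit+m-1}-1}{\alpha}$ of epoch $\ep{\dcrit+m-1}{\alpha}$. Indeed, that phase ends exactly at $\first{\dcrit+m-1}{\alpha}$, and it begins no earlier than $t_D$, since $D \leq 2^{\dcrit}-1 \leq 2^{\dcrit+m-1}-1$ guarantees that $\Pg{D}{\alpha}$ occurs at or before $\Pg{2^{\dcrit+m-1}-1}{\alpha}$ in the (increasing-in-$g$) sequence of phases. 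Consequently $\delay \geq |\Pg{2^{\dcrit+m-1}-1}{\alpha}|$, which by \Cref{fact:onephase} equals $(76\cdot 2^{\dcrit+m-1}-4)\klog{v_\alpha} = (38\cdot 2^{\dcrit+m}-4)\klog{v_\alpha}$; using $\klog{v_\alpha} \geq 1$ this is at least $38\cdot 2^{\dcrit+m}-4 > 4\cdot 2^{\dcrit+m}$, as required.

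I expect the main obstacle to be the synchronization bookkeeping in the setup step, i.e.\ correctly arguing that $\beta$ starting $\Pg{D}{\beta}$ corresponds to elapsed time exactly $t_D + \delay$ in $\alpha$'s execution; this is where the equality of phase lengths forced by $\log^*(v_\alpha)=\log^*(v_\beta)$ is essential and must be invoked explicitly. The quantitative step is comparatively routine once one notices that a single complete phase — the last, and hence longest, phase of epoch $\ep{\dcrit+m-1}{\alpha}$ — already overshoots the target $4\cdot 2^{\dcrit+m}$; this choice avoids any delicate summation of epoch lengths and makes the argument uniform in $m$.
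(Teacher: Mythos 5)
Your proof is correct and follows essentially the same route as the paper's: both arguments compare the two agents' elapsed times at the moment $\beta$ starts $\Pg{D}{\beta}$, observe that $\alpha$ has already completed epoch $\ep{\dcrit+m-1}{\alpha}$ while $\beta$ is still at (or before) the start of $\Pg{D}{\beta}$ within that range of phases, and lower-bound $\delay$ by the resulting gap, handling $m=0$ via the standing assumption. The only cosmetic difference is the witness for the gap: you use the entire final phase $\Pg{2^{\dcrit+m-1}-1}{\alpha}$ of epoch $\ep{\dcrit+m-1}{\alpha}$, whereas the paper uses only Stage 2 of that phase (which $\beta$ has provably not yet begun); both comfortably exceed $4\cdot 2^{\dcrit+m}$.
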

\begin{proof}
	If $m=0$, then the result follows directly from the assumption that $\delay > 4\cdot 2^{\dcrit}$. The remainder of the proof handles the cases where $m \geq 1$.
	
	By \Cref{fact:containsD}, phase $\Pg{D}{\beta}$ is in epoch $\ep{\dcrit}{\beta}$. This means that, when $\beta$ starts phase $\Pg{D}{\beta}$, it has not started Stage 2 of the final phase of epoch $\ep{\dcrit}{\beta}$. Moreover, it has not started Stage 2 of the final phase of any epoch $\ep{\dcrit+k}{\beta}$ for any $k \geq 1$.
	
	Suppose that, when $\beta$ starts phase $\Pg{D}{\beta}$, agent $\alpha$'s execution is in a phase in epoch $\ep{\dcrit+m}{\alpha}$ for some integer $m \geq 1$. From above, agent $\beta$ has not started Stage 2 of the final phase of epoch $\ep{\dcrit+m-1}{\beta}$. In particular, this means that the delay between the starting times of the agents is at least equal to the number of rounds in Stage 2 of the final phase of epoch $\ep{\dcrit+m-1}{\beta}$. Stage 2 of this phase consists of at least 9 blocks of length $4\cdot (2^{\dcrit+m-1})$ rounds. But $9\cdot4\cdot(2^{\dcrit+m-1}) > 2\cdot4\cdot(2^{\dcrit+m-1}) = 4\cdot (2^{\dcrit+m})$, as desired.
\end{proof}

Using the previous result, we can argue that, when $\alpha$ starts a new epoch, agent $\beta$'s execution is not in the final block of the epoch that $\alpha$ just finished. Then, we can apply \Cref{lemma:willdoDsearch} to get an upper bound on how soon $\beta$ will visit all nodes in the $D$-neighbourhood of its starting node, and show that $\alpha$'s waiting period at the start of the new epoch is at least as large as that bound (which guarantees that rendezvous occurs).
\begin{lemma}\label{AlphaAlsoD}
	Consider an arbitrary integer $m \geq 0$, and suppose that when agent $\beta$ starts phase $\Pg{D}{\beta}$, agent $\alpha$'s execution is in some phase in epoch $\ep{\dcrit+m}{\alpha}$. Then rendezvous occurs by the end of $\alpha$'s execution of epoch $\ep{\dcrit+m+1}{\alpha}$.
\end{lemma}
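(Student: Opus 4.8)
The plan is to consider the round $\tau$ at which agent $\alpha$ begins epoch $\ep{\dcrit+m+1}{\alpha}$ (if rendezvous has already occurred before $\tau$ we are trivially done), and to show that during the long Stage~0 wait of the first phase $\Pg{2^{\dcrit+m}}{\alpha}$ of this epoch, which keeps $\alpha$ fixed at $v_\alpha$ for $36\cdot 2^{\dcrit+m+1}\klog{v_\alpha}$ rounds, agent $\beta$ performs a search of radius at least $D$ that sweeps over $v_\alpha$. Because $\alpha$ starts first, at every global round after $\beta$ wakes the quantity (rounds executed by $\alpha$) $-$ (rounds executed by $\beta$) equals $\delay$; since $\log^*(v_\alpha)=\log^*(v_\beta)$ makes all phase and epoch lengths agree across the two executions, at round $\tau$ agent $\alpha$ has executed exactly $\first{\dcrit+m}{\alpha}=\first{\dcrit+m}{\beta}$ rounds and hence $\beta$ has executed exactly $\first{\dcrit+m}{\beta}-\delay$ rounds.

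First I would pin down where $\beta$ is at round $\tau$. By \Cref{lemma:hasdelay} we have $\delay>4\cdot 2^{\dcrit+m}$, so $\beta$'s execution time $\first{\dcrit+m}{\beta}-\delay$ falls short of the end of epoch $\ep{\dcrit+m}{\beta}$ by more than $4\cdot 2^{\dcrit+m}$ rounds; as each block in that epoch has length $4\cdot 2^{\dcrit+m}$, the final block of Stage~2 of the last phase $\Pg{2^{\dcrit+m}-1}{\beta}$ occupies precisely the last $4\cdot 2^{\dcrit+m}$ rounds of the epoch, so $\beta$ has not yet reached it and in particular is in an epoch of index at most $\dcrit+m$. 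Next I would show that $\beta$'s current guess $g$ satisfies $g\geq D$. When $m\geq 1$, the case hypothesis $\delay<\first{\dcrit}{\alpha}$ together with \Cref{fact:epochincreasing} (giving $\first{\dcrit}{\beta}<|\ep{\dcrit+1}{\beta}|\leq\first{\dcrit+m}{\beta}-\first{\dcrit}{\beta}$) forces $\beta$'s execution time to exceed $\first{\dcrit}{\beta}$, placing $\beta$ in an epoch of index at least $\dcrit+1$, whence $g\geq D$ by \Cref{fact:ggeqD}. When $m=0$, the hypothesis places $\alpha$ inside epoch $\ep{\dcrit}{\alpha}$ at the instant $\beta$ starts $\Pg{D}{\beta}$, i.e. $\alpha$ has then executed fewer than $\first{\dcrit}{\alpha}$ rounds; translating through the delay shows that by round $\tau$ agent $\beta$ has already passed the start of $\Pg{D}{\beta}$, so again $g\geq D$.

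With these two facts established I would invoke \Cref{willdosearch} with index $\dcrit+m$: since $\beta$'s current phase has $g\leq 2^{\dcrit+m}-1$ and $\beta$ is before the final block of Stage~2 of $\Pg{2^{\dcrit+m}-1}{\beta}$, agent $\beta$ visits every node of the $g$-neighbourhood of $v_\beta$ within the next $44\klog{v_\beta}2^{\dcrit+m}$ rounds, and since $g\geq D$ this neighbourhood contains $v_\alpha$. Finally I would compare the two windows: $\alpha$ idles at $v_\alpha$ for $36\cdot 2^{\dcrit+m+1}\klog{v_\alpha}=72\cdot 2^{\dcrit+m}\klog{v_\alpha}$ rounds starting at $\tau$, and because $\klog{v_\alpha}=\klog{v_\beta}$ and $44\leq 72$, $\beta$'s search finishes while $\alpha$ is still waiting at $v_\alpha$; thus both agents occupy $v_\alpha$ in a common round inside Stage~0 of the first phase of epoch $\ep{\dcrit+m+1}{\alpha}$, giving rendezvous by the end of that epoch.

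The main obstacle is the bookkeeping that localizes $\beta$ at round $\tau$, and especially the case $m=0$: there one cannot simply quote \Cref{lemma:willdoDsearch}, since $\beta$ may still be in the critical epoch $\ep{\dcrit}{\beta}$ (index not at least $\dcrit+1$), so I route the $D$-search argument through \Cref{willdosearch} and argue $g\geq D$ by hand from the fact that $\beta$ has already entered $\Pg{D}{\beta}$. One must also check that the lower bound $\delay>4\cdot 2^{\dcrit+m}$ from \Cref{lemma:hasdelay} matches exactly the one-block margin needed to keep $\beta$ out of the final searching block of epoch $\ep{\dcrit+m}{\beta}$.
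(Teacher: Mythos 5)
Your proof is correct and follows essentially the same route as the paper's: fix the round at which $\alpha$ finishes epoch $\ep{\dcrit+m}{\alpha}$, use \Cref{lemma:hasdelay} to place $\beta$ before the final block of Stage~2 of $\Pg{2^{\dcrit+m}-1}{\beta}$, invoke \Cref{willdosearch} with $j=\dcrit+m$ to get a $44\klog{v_\beta}2^{\dcrit+m}$-round search window containing $v_\alpha$, and absorb it into $\alpha$'s $72\klog{v_\alpha}2^{\dcrit+m}$-round Stage~0 wait. Your extra case analysis establishing $g\geq D$ is sound but unnecessary: the lemma's hypothesis already guarantees that $\alpha$ reaches the end of epoch $\ep{\dcrit+m}{\alpha}$ only after $\beta$ has started $\Pg{D}{\beta}$, which is the one-line argument the paper uses for all $m$.
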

\begin{proof}
	Let $F = 2^{\dcrit+m}-1$. By the definition of epochs, phase $\Pg{F}{\alpha}$ is the final phase of epoch $\ep{\dcrit+m}{\alpha}$. So, it follows from the description of the algorithm, that each block in Stage 2 of phase $\Pg{F}{\alpha}$ consists of $4\cdot 2^{\dcrit+m}$ rounds.
	
	We consider the moment that $\alpha$ reaches the end of phase $\Pg{F}{\alpha}$ in its execution, i.e., the moment that $\alpha$ reaches the end of epoch $\ep{\dcrit+m}{\alpha}$ in its execution. By assumption, this is after $\beta$ starts its execution of phase $\Pg{D}{\beta}$. So, $\beta$ is executing a phase $\Pg{g}{\beta}$ with $g \geq D$ when $\alpha$ reaches the end of epoch $\ep{\dcrit+m}{\alpha}$. Further, by \Cref{lemma:hasdelay}, we know that $\delay > 4\cdot (2^{\dcrit+m})$. So, when $\alpha$ reaches the end of epoch $\ep{\dcrit+m}{\alpha}$, agent $\beta$ has not yet reached the final block of Stage 2 in its execution of phase $\Pg{F}{\beta}$ (as each block has length exactly $4\cdot 2^{\dcrit+m}$ rounds). By \Cref{willdosearch} with $j = \dcrit+m$, agent $\beta$ will visit all nodes in the $g$-neighbourhood of $v_\beta$ (which contains the $D$-neighbourhood of $v_\beta$) within the next $44\klog{v_\beta}2^{\dcrit+m}$ rounds from when $\alpha$ reaches the end of epoch $\ep{\dcrit+m}{\alpha}$.
	
	However, the end of epoch $\ep{\dcrit+m}{\alpha}$ in $\alpha$'s execution is actually the start of epoch $\ep{\dcrit+m+1}{\alpha}$. In the first phase of this epoch, agent $\alpha$ waits at its starting node $v_\alpha$ during Stage 0, which consists of $32\cdot 2^{\dcrit+m+1}\cdot \klog{v_\alpha} = 64 \klog{v_\alpha}2^{\dcrit+m} > 44\klog{v_\alpha}2^{\dcrit+m} = 44\klog{v_\beta}2^{\dcrit+m}$ rounds. In particular, this means that $\alpha$ will be located at its starting node $v_\alpha$ when $\beta$ visits all nodes in the $D$-neighbourhood of $v_\beta$. It follows that rendezvous occurs in or before epoch $\ep{\dcrit+m+1}{\alpha}$ in $\alpha$'s execution.
\end{proof}

Finally, by determining an upper bound on the number of rounds that elapse until $\beta$'s execution reaches the start of phase $\Pg{D}{\beta}$, we can determine that $\alpha$'s execution does not get `too far ahead', i.e., $\alpha$ has not yet reached the end of epoch $\ep{\dcrit+1}{\alpha}$. Then, using \Cref{AlphaAlsoD}, we conclude that rendezvous will occur shortly after $\alpha$ starts executing a new epoch, i.e., shortly after $\alpha$ starts epoch $\ep{\dcrit+2}{\alpha}$.
\begin{lemma}\label{mediumdelay}
	Suppose that $4\cdot 2^{\dcrit} < \delay < \first{\dcrit}{\alpha}$. Then rendezvous occurs by the end of epoch $\ep{\dcrit+2}{\alpha}$ in $\alpha$'s execution.
\end{lemma}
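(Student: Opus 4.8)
The plan is to reduce the statement to \Cref{AlphaAlsoD}, whose hypothesis concerns which epoch $\alpha$ is executing at the moment $\beta$ begins phase $\Pg{D}{\beta}$. Concretely, I will show that at that moment $\alpha$'s execution lies in epoch $\ep{\dcrit+m}{\alpha}$ for some $m\in\{0,1\}$. Then \Cref{AlphaAlsoD} immediately yields rendezvous by the end of $\alpha$'s epoch $\ep{\dcrit+m+1}{\alpha}$, and since $m\le 1$ this is at the latest the end of $\ep{\dcrit+2}{\alpha}$, as required. Note that the hypothesis $4\cdot 2^{\dcrit}<\delay$ of the present lemma is exactly the standing lower bound on the delay that \Cref{AlphaAlsoD} relies on (through \Cref{lemma:hasdelay}), while the upper bound $\delay<\first{\dcrit}{\alpha}$ will be used to pin down $m$.

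First I would fix the moment when $\beta$ starts $\Pg{D}{\beta}$ and count how many rounds each agent has executed by then. Let $R_\beta$ denote the number of rounds elapsed in $\beta$'s execution up to the start of $\Pg{D}{\beta}$. By \Cref{fact:containsD}, phase $\Pg{D}{\beta}$ lies in epoch $\ep{\dcrit}{\beta}$, so its start is strictly before the end of that epoch, and hence $R_\beta < \first{\dcrit}{\beta}$. Because $\alpha$ is the early agent, at this same real-time moment $\alpha$ has executed exactly $R_\beta+\delay$ rounds. The key structural fact I will exploit is the standing assumption of this section, $\log^*(v_\alpha)=\log^*(v_\beta)$, which by \Cref{fact:onephase,fact:epochlength} makes the epoch boundaries (as cumulative round counts) identical for both agents, i.e.\ $\first{j}{\alpha}=\first{j}{\beta}$ for all $j$.

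I then bound $R_\beta+\delay$ on both sides. For the lower bound, $\alpha$'s elapsed count $R_\beta+\delay$ strictly exceeds $\beta$'s elapsed count $R_\beta$; since the two agents have identical epoch lengths and $\beta$ has already entered epoch $\ep{\dcrit}{\beta}$ (it is executing $\Pg{D}{\beta}$), agent $\alpha$ has also entered epoch $\ep{\dcrit}{\alpha}$ or a later one, giving $m\ge 0$. For the upper bound, $R_\beta+\delay < \first{\dcrit}{\beta}+\first{\dcrit}{\alpha} = 2\,\first{\dcrit}{\alpha}$, using $\delay<\first{\dcrit}{\alpha}$ together with $\first{\dcrit}{\beta}=\first{\dcrit}{\alpha}$. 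By \Cref{fact:epochincreasing} we have $\first{\dcrit}{\alpha}<|\ep{\dcrit+1}{\alpha}|$, whence $2\,\first{\dcrit}{\alpha} < \first{\dcrit}{\alpha}+|\ep{\dcrit+1}{\alpha}| = \first{\dcrit+1}{\alpha}$. Thus $\alpha$ has executed strictly fewer than $\first{\dcrit+1}{\alpha}$ rounds, so it has not yet completed epoch $\ep{\dcrit+1}{\alpha}$, giving $m\le 1$. Combining the two bounds yields $m\in\{0,1\}$, and invoking \Cref{AlphaAlsoD} finishes the proof.

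The main obstacle is entirely the bookkeeping of translating between ``real time'' and each agent's own elapsed-round count while the two executions are misaligned: the delay is measured from $\alpha$'s earlier start, so $\alpha$'s round count equals $\beta$'s count plus $\delay$, and one must keep straight that the equality of per-agent epoch lengths (valid \emph{only} because $\log^*(v_\alpha)=\log^*(v_\beta)$) is precisely what lets the assumption $\delay<\first{\dcrit}{\alpha}$ close the gap $2\,\first{\dcrit}{\alpha}<\first{\dcrit+1}{\alpha}$. No new estimates beyond \Cref{fact:epochsum} and \Cref{fact:epochincreasing} are needed; all the substantive content lies in locating $\alpha$'s epoch and handing off to \Cref{AlphaAlsoD}.
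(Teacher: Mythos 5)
Your proof is correct and follows essentially the same route as the paper's: bound the number of rounds elapsed when $\beta$ starts $\Pg{D}{\beta}$ using \Cref{fact:containsD}, use the equality of epoch lengths (from $\log^*(v_\alpha)=\log^*(v_\beta)$) together with \Cref{fact:epochincreasing} and the delay hypothesis to place $\alpha$ inside epoch $\ep{\dcrit+m}{\alpha}$ with $0\le m\le 1$, and hand off to \Cref{AlphaAlsoD}. The only cosmetic difference is that you phrase the upper bound as the single inequality $R_\beta+\delay < 2\,\first{\dcrit}{\alpha} < \first{\dcrit+1}{\alpha}$, whereas the paper splits the same estimate into ``$\alpha$ has not started $\ep{\dcrit+1}{\alpha}$ when $\beta$ wakes up'' plus ``fewer than $|\ep{\dcrit+1}{\alpha}|$ further rounds elapse.''
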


\begin{proof}
	We determine an upper bound on the number of rounds that elapse until $\beta$ starts executing phase $\Pg{D}{\beta}$ in its execution. By \Cref{fact:containsD}, phase $\Pg{D}{\beta}$ belongs to epoch $\ep{\dcrit}{\beta}$, so it follows that an upper bound on the number of rounds that elapse before $\beta$ starts executing phase $\Pg{D}{\beta}$ is the number of rounds in the first $\dcrit$ epochs, i.e., $\first{\dcrit}{\beta}$. By \Cref{fact:epochincreasing}, this value is strictly less than $|\ep{\dcrit + 1}{\beta}|$. Thus, we have shown that when $\beta$ starts phase $\Pg{D}{\beta}$, strictly fewer than $|\ep{\dcrit + 1}{\beta}|$ rounds have elapsed since the start of $\beta$'s execution.
	
	From the assumption that $\delay < \first{\dcrit}{\alpha}$, we know that agent $\beta$ starts its execution before agent $\alpha$ completes its execution of epoch $\ep{\dcrit}{\alpha}$. It follows that $\alpha$ has not started epoch $\ep{\dcrit+1}{\alpha}$ when $\beta$ starts its execution. By the upper bound in the previous paragraph, strictly fewer than $|\ep{\dcrit + 1}{\beta}|$ rounds elapse from the start of $\beta$'s execution until $\beta$ starts phase $\Pg{D}{\beta}$. Since $|\ep{\dcrit + 1}{\beta}|=|\ep{\dcrit + 1}{\alpha}|$, it follows that when $\beta$ starts phase $\Pg{D}{\beta}$, agent $\alpha$'s execution has not yet reached the end of epoch $\ep{\dcrit+1}{\alpha}$. Moreover, when $\beta$ starts phase $\Pg{D}{\beta}$, agent $\alpha$'s execution is already past the start of phase $\Pg{D}{\alpha}$ (this is because we assumed that $\alpha$'s execution does not start before $\beta$'s, and the phases/epochs in both executions have the same length). It follows that $\alpha$'s execution is in some epoch $\ep{\dcrit+m}{\alpha}$ for some $m \geq 0$.
	
	Therefore, we have shown that when $\beta$ starts phase $\Pg{D}{\beta}$, agent $\alpha$'s execution is in some phase in an epoch $\ep{\dcrit+m}{\alpha}$ where $0 \leq m \leq 1$. By \Cref{AlphaAlsoD}, it follows that rendezvous occurs by the end of $\alpha$'s execution of epoch $\ep{\dcrit+2}{\alpha}$.
\end{proof}

Combining \Cref{verylargedelay,smalldelay,mediumdelay}, we get that rendezvous occurs by the end of epoch $\ep{\dcrit+2}{\alpha}$ in agent $\alpha$'s execution. Applying \Cref{cor:runningtime} completes the analysis of the algorithm in the case where $\log^*{\alpha} = \log^*{\beta}$.
\begin{lemma}\label{lem:analysissamelogstar}
	Rendezvous occurs within $O(D^2\log^*(v_\alpha))=O(D^2\log^*(v_\beta)) = O(D^2\log^*\ell) $ rounds in agent $\alpha$'s execution.
\end{lemma}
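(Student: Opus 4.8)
The plan is to assemble the three delay-regime lemmas into a single statement and then convert the resulting epoch bound into a round count. First I would observe that every nonnegative value of $\delay$ falls into exactly one of the three ranges already handled: $\delay \leq 4\cdot 2^{\dcrit}$ (\Cref{smalldelay}), $4\cdot 2^{\dcrit} < \delay < \first{\dcrit}{\alpha}$ (\Cref{mediumdelay}), and $\delay \geq \first{\dcrit}{\alpha}$ (\Cref{verylargedelay}). These ranges are manifestly exhaustive and pairwise disjoint; if it happens that $4\cdot 2^{\dcrit} \geq \first{\dcrit}{\alpha}$ then the middle range is simply empty, which causes no difficulty. Crucially, this case split draws its conclusions from lemmas that rely on the phase and epoch lengths agreeing across the two executions, and by \Cref{fact:onephase} and \Cref{fact:epochlength} this agreement holds precisely because we are in the regime $\log^*(v_\alpha)=\log^*(v_\beta)$.

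Next I would record the epoch by which each lemma guarantees rendezvous. \Cref{verylargedelay} places rendezvous no later than the end of epoch $\ep{\dcrit}{\alpha}$, and \Cref{smalldelay} places it no later than the end of phase $\Pg{D}{\alpha}$, which by \Cref{fact:containsD} also lies in epoch $\ep{\dcrit}{\alpha}$; meanwhile \Cref{mediumdelay} places it no later than the end of epoch $\ep{\dcrit+2}{\alpha}$. Taking the worst of the three regimes, rendezvous always occurs by the end of epoch $\ep{\dcrit+2}{\alpha}$ in $\alpha$'s execution.

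Finally I would invoke \Cref{cor:runningtime} with $M = \dcrit+2$, which yields a running time of $O(2^{2(\dcrit+2)}\log^*(v_\alpha))$. The one remaining step is the routine estimate $2^{\dcrit+2} = 2^{3+\floor*{\log_2 D}} \leq 8D$, whence $2^{2(\dcrit+2)} \leq 64D^2 = O(D^2)$ and the running time is $O(D^2\log^*(v_\alpha))$. The stated equalities $O(D^2\log^*(v_\alpha)) = O(D^2\log^*(v_\beta)) = O(D^2\log^*\ell)$ then follow immediately from the case assumption $\log^*(v_\alpha) = \log^*(v_\beta)$ together with $\ell = \max\{v_\alpha,v_\beta\}$, which forces $\log^*\ell = \log^*(v_\alpha) = \log^*(v_\beta)$.

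I do not expect a genuine obstacle here: the substantive arguments were carried out in \Cref{verylargedelay,smalldelay,mediumdelay}, and this lemma is essentially a bookkeeping step that unifies the three delay regimes and translates the epoch index into a round count. The only points demanding a moment's care are confirming that the three delay ranges genuinely partition all possibilities (so that the three lemmas together cover every instance) and checking that the conversion $2^{\dcrit+2}=O(D)$ loses only a constant factor; both are immediate from the definition $\dcrit = 1+\floor*{\log_2 D}$.
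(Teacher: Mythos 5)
Your proposal is correct and follows essentially the same route as the paper: both combine \Cref{verylargedelay,smalldelay,mediumdelay} to conclude rendezvous by the end of epoch $\ep{\dcrit+2}{\alpha}$, then apply \Cref{cor:runningtime} with $M=\dcrit+2$ and the estimate $2^{2(\dcrit+2)} \leq 64D^2$. Your extra remarks on the exhaustiveness of the three delay ranges are sound but just make explicit what the paper leaves implicit.
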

\begin{proof}
	Let $M = \dcrit+2$. By \Cref{verylargedelay,smalldelay,mediumdelay}, we get that rendezvous occurs by the end of epoch $\ep{M}{\alpha}$ in agent $\alpha$'s execution. By \Cref{cor:runningtime}, the number of rounds that elapse in agent $\alpha$'s execution by the end of epoch $\ep{M}{\alpha}$ is $O(2^{2M}\log^*(v_\alpha)) = O(2^{2M}\log^*(v_\beta))$, and note that $2^{2M} = 2^{ 2\dcrit+4} = 2^{2 + 2\floor*{\log_2 D} + 4} = 64\cdot (2^{\floor*{\log_2 D}})^2 \leq 64\cdot D^2$.
\end{proof}

\subsubsection{Algorithm analysis in the case where \texorpdfstring{$\log^*(v_\alpha) \neq \log^*(v_\beta)$}{log*(vα) ≠ log*(vβ)}}\label{difflogstar}

In this section, we analyze the algorithm in the case where $\log^*(v_\alpha) \neq \log^*(v_\beta)$. \Cref{verylargedelay} handles the case where the delay between the starting times of the agents is very large. So, in the remainder of the analysis, we assume that agent $\beta$'s execution starts before agent $\alpha$ completes epoch $\ep{\dcrit}{\alpha}$, i.e., $\delay < \first{\dcrit}{\alpha}$.

We observe from \Cref{fact:onephase,fact:epochlength} that the lengths of phases and epochs are not the same across both agents' executions, i.e., for all $g \geq 1$ and $j \geq 1$, we know that  $|\Pg{g}{\alpha}|\neq|\Pg{g}{\beta}|$ and $|\ep{j}{\alpha}| \neq |\ep{j}{\beta}|$. As a result, the analysis must be carried out more carefully. Once again, the overall idea is to ensure that there is a time interval during which one agent (``the searcher'') is visiting all nodes in the $D$-neighbourhood of its starting node, while the other agent (``the sitter'') remains stationary at its starting node during the entire time interval. While in stage 0 of a phase, an agent is attempting to be the sitter, and while in stages 1 and 2, an agent is attempting to be the searcher. However, as seen in the description of the algorithm, the lengths of these stages depend on several values: the current phase index, the current epoch index, and the value of $\log^*{v_x}$ where $v_x$ is the label of the starting node of agent $x$. As a result, there are several different factors that go into determining whether a sitter is waiting long enough at its starting node for the searcher to find it:
\begin{itemize}
	\item The delay between the starting times of the agents affects how `far ahead' one agent is, i.e., agent $\alpha$ can be executing a phase or epoch with a much larger index when agent $\beta$ starts its execution.
	\item The lengths of phases and epochs affects how `far ahead' one agent can get, since an agent with shorter phases and epochs will move through the phases and epochs more quickly. There are two possible situations to worry about: if agent $\alpha$ starts its execution much earlier and has very short phases/epochs, then its execution gets `very far ahead'; however, if agent $\beta$ has very short phases/epochs, then its execution `catches up and passes $\alpha$' and is eventually executing phases/epochs with higher index than $\alpha$.
	\item The ratio between $\log^*{v_\alpha}$ and $\log^*{v_\beta}$ acts as an additional scaling factor. For example, even if an agent is executing a phase with much smaller phase number than the other agent, its waiting period in stage 0 can be very large due to the size of $\log^*(v_x)$, and this can make up for a large difference between phase or epoch indices.
\end{itemize}
The challenge in the analysis is to account for these factors and demonstrate that there is always an interval of time where one agent is a sitter for long enough while the other agent searches. In fact, agents $\alpha$ and $\beta$ will not have fixed roles, i.e., depending on the specific combination of the above factors, agent $\alpha$ will sometimes be the searcher and sometimes be the sitter when rendezvous occurs. As a result, the analysis needs to be broken down into several subcases based on which agent's phases are shorter, and the specific ratio between $\log^*{v_\alpha}$ and $\log^*{v_\beta}$. To this end, the remainder of the analysis is broken down into two main sections: the first considers the case where $\log^*{v_\alpha} < \log^*{v_\beta}$, and the second considers the case where $\log^*{v_\alpha} > \log^*{v_\beta}$. Within each section, the analysis is broken down based on the ratio $\mu$ between $\log^*{v_\alpha}$ and $\log^*{v_\beta}$.

\subsubsection*{The case where $\log^*(v_\alpha) < \log^*(v_\beta)$}

Define $\mu > 1$ such that $\mu\cdot\log^{*}{v_\alpha} = \log^*{v_\beta}$.
We first consider the case where $\mu$ is very large, i.e., the value of $\log^*{v_\beta}$ is much larger than the value of $\log^*{v_\alpha}$. In this case, we notice that even in the first phase of $\beta$'s execution, its waiting period in Stage 0 is extremely long, i.e., it contains enough rounds to wait for $\alpha$ to complete its first $\dcrit+1$ epochs. So, $\beta$ will be waiting at its starting node long enough for $\alpha$ to visit all nodes in the $D$-neighbourhood of $v_\alpha$ at least once, so rendezvous will occur.

\begin{lemma}\label{betabigger:largemu}
Suppose that $\mu > 12D^2$. Rendezvous occurs by the end of agent $\alpha$'s execution of epoch $\ep{\dcrit+1}{\alpha}$.
\end{lemma}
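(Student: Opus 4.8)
The plan is to make $\beta$ the \emph{sitter} and $\alpha$ the \emph{searcher}, exploiting the fact that $\beta$'s very first action---Stage~0 of its opening phase $\Pg{1}{\beta}$---is an idle wait that is enormous relative to $\alpha$'s timescale, because $\log^*(v_\beta)=\mu\log^*(v_\alpha)$ with $\mu>12D^2$. Concretely, I would first fix the global clock so that $\alpha$ starts at round $0$ and $\beta$ starts at round $\delay$, where the standing assumption of this section gives $0\le\delay<\first{\dcrit}{\alpha}$. By \Cref{fact:onephase} (with $g=1$, $j=1$), Stage~0 of $\Pg{1}{\beta}$ keeps $\beta$ stationary at $v_\beta$ for $72\klog{v_\beta}$ rounds, i.e.\ throughout the global interval $[\delay,\ \delay+72\klog{v_\beta})$.

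Next I would locate a $D$-search by $\alpha$ inside this idle window. I would examine the round $\first{\dcrit}{\alpha}$ at which $\alpha$ begins epoch $\ep{\dcrit+1}{\alpha}$. Since $\alpha$ is then at the very start of that epoch (hence not yet at the final block of Stage~2 of its final phase), \Cref{lemma:willdoDsearch} with $j=\dcrit+1$ guarantees that $\alpha$ visits every node of the $D$-neighbourhood of $v_\alpha$---in particular $v_\beta$, which lies at distance $D$---within the next $44\klog{v_\alpha}2^{\dcrit+1}$ rounds. Thus $\alpha$ stands on $v_\beta$ at some global round $t^\star$ with $\first{\dcrit}{\alpha}\le t^\star\le\first{\dcrit}{\alpha}+44\klog{v_\alpha}2^{\dcrit+1}$, and this $t^\star$ falls inside $\alpha$'s epoch $\ep{\dcrit+1}{\alpha}$ (the search length $44\klog{v_\alpha}2^{\dcrit+1}$ is far smaller than $|\ep{\dcrit+1}{\alpha}|$), matching the claimed deadline.

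It then remains to show that this search window sits inside $\beta$'s idle window, which reduces to two inequalities on the same global clock. The lower endpoint is immediate: $t^\star\ge\first{\dcrit}{\alpha}>\delay$ by the standing assumption $\delay<\first{\dcrit}{\alpha}$, so $\beta$ is already idling when $\alpha$'s search begins. The upper endpoint is where the threshold $\mu>12D^2$ is spent: using \Cref{fact:epochsum} to bound $\first{\dcrit}{\alpha}\le 50\klog{v_\alpha}2^{2\dcrit}$ together with the elementary estimate $2^{\dcrit}\le 2D$ (so $2^{2\dcrit}\le 4D^2$), I get $\first{\dcrit}{\alpha}+44\klog{v_\alpha}2^{\dcrit+1}\le(200+176)\kappa D^2\log^*(v_\alpha)=376\kappa D^2\log^*(v_\alpha)$. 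Since $72\klog{v_\beta}=72\kappa\mu\log^*(v_\alpha)>72\cdot 12\,\kappa D^2\log^*(v_\alpha)=864\kappa D^2\log^*(v_\alpha)$, the right endpoint of $\alpha$'s search window lies strictly below $\delay+72\klog{v_\beta}$, so $t^\star$ is strictly inside $\beta$'s idle interval. Hence $\beta$ is stationary at $v_\beta$ in round $t^\star$ while $\alpha$ arrives there, and rendezvous occurs by the end of $\ep{\dcrit+1}{\alpha}$.

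The routine part is the constant-chasing in the last paragraph; the only step demanding genuine care is ensuring that both windows are compared on the same global clock and that the \emph{lower}-endpoint inequality $\first{\dcrit}{\alpha}>\delay$ truly holds, since it is what places $\alpha$'s $D$-search after $\beta$ has already entered its long wait. The comfortable gap between $376$ and $864$ confirms that $\mu>12D^2$ is more than sufficient, so I do not expect the arithmetic itself to be the real obstacle.
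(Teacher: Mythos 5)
Your proof is correct and follows essentially the same route as the paper's: both make $\beta$ the sitter during the $72\klog{v_\beta}=72\mu\klog{v_\alpha}$-round Stage-0 wait of $\Pg{1}{\beta}$ and use $\mu>12D^2$ to show this wait swallows the prefix of $\alpha$'s execution up to and including a $D$-radius search in epoch $\ep{\dcrit+1}{\alpha}$. The only cosmetic difference is that you locate that search via \Cref{lemma:willdoDsearch} and bound $\first{\dcrit}{\alpha}+44\klog{v_\alpha}2^{\dcrit+1}$, whereas the paper bounds all of $\first{\dcrit+1}{\alpha}$ and points to Stage 1 of a phase of epoch $\ep{\dcrit+1}{\alpha}$; both yield the same conclusion.
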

\begin{proof}
Consider the start of $\beta$'s execution. Stage 0 of phase $\Pg{1}{\beta}$ is a waiting period of length $72\klog{v_\beta} = 72\mu\klog{v_\alpha} > 864D^2\klog{v_\alpha} > 800D^2\klog{v_\alpha} = 2\klog{v_\alpha}\cdot 25\cdot 2^{4+2\log_2 D} \geq 2\klog{v_\alpha}\cdot 25\cdot 2^{4+2\floor*{\log_2 D}} = 2\klog{v_\alpha}\cdot 25\cdot 2^{2(\dcrit+1)} \geq \first{\dcrit+1}{\alpha}$ (where the last inequality uses \Cref{fact:epochsum} with $M=\dcrit+1$). This means that, during $\beta$'s waiting period in stage 0 of phase $\Pg{1}{\beta}$, there is enough time for $\alpha$ to execute at least its first $\dcrit+1$ epochs. By our assumption about $\delay$, note that $\alpha$'s execution has not yet reached the end of epoch $\ep{\dcrit}{\alpha}$. Therefore, agent $\alpha$ will execute at least one phase of epoch $\ep{\dcrit+1}{\alpha}$ while $\beta$ is executing stage 0 of phase $\Pg{1}{\beta}$. But in stage 1 of each phase of epoch $\ep{\dcrit+1}{\alpha}$, agent $\alpha$ visits all nodes in the $D$-neighbourhood of $v_\alpha$, which means that $\alpha$ will visit $v_\beta$ while agent $\beta$ is waiting there in stage 0 of its first phase. So, rendezvous occurs by the end of epoch $\ep{\dcrit+1}{\alpha}$ in $\alpha$'s execution, as desired.
\end{proof}

Next, we consider the case where $\mu$ is strictly larger than 4, but not larger than $12D^2$. In this case, we consider the round in which $\beta$ starts epoch $\ep{\dcrit+2}{\beta}$, and consider the number of rounds it takes $\beta$ to reach this round in its execution. We can then figure out how far along $\alpha$ is in its execution at this time, and, roughly speaking, show that $\alpha$ is at most $\log\mu-1$ epochs `ahead', i.e., it's searching and waiting periods are larger by a factor of at most $\mu/2$. But the fact that $\log^*(v_\beta) = \mu\cdot\log^*(v_\alpha)$ means that $\beta$'s waiting period in stage 0 has an additional $\mu$ factor built into its length, so this waiting period at the start of epoch $\ep{\dcrit+2}{\beta}$ is long enough to ensure that it includes $\alpha$'s next search, so rendezvous will occur. This idea is formalized in the following result.

\begin{lemma}\label{betabigger:mubigger4}
Suppose that $4 < \mu \leq 12D^2$. Rendezvous occurs by the end of agent $\alpha$'s execution of epoch $\ep{\dcrit+\log\mu+1}{\alpha}$.
\end{lemma}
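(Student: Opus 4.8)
The plan is to make $\beta$ the ``sitter'' during a long Stage 0 wait and $\alpha$ the ``searcher'' that, during that wait, completes a search of its $D$-neighbourhood reaching $v_\beta$. Throughout I assume WLOG that $\alpha$ starts first, so $\delay \geq 0$, and I use the standing assumptions $\delay < \first{\dcrit}{\alpha}$ and $4 < \mu \leq 12D^2$. The single most useful observation is that, since $\klog{v_\beta} = \mu\klog{v_\alpha}$, the closed form in \Cref{fact:epochsum} gives the clean scaling $\first{M}{\beta} = \mu\cdot\first{M}{\alpha}$ for every $M$, which lets me compare the two executions' clocks directly. I would then fix the reference round: let $T$ be the global round in which $\beta$ starts epoch $\ep{\dcrit+2}{\beta}$, so exactly $\first{\dcrit+1}{\beta}$ rounds have elapsed in $\beta$'s execution. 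The first phase of that epoch is $\Pg{2^{\dcrit+1}}{\beta}$, whose Stage 0 is a wait of length $W = 36\cdot 2^{\dcrit+2}\cdot\klog{v_\beta} = 144\mu\cdot 2^{\dcrit}\cdot\klog{v_\alpha}$ rounds (\Cref{fact:onephase} with $j=\dcrit+2$); hence $\beta$ sits at $v_\beta$ throughout the interval $[T,T+W]$.

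Next I would locate $\alpha$ at round $T$. The number of rounds elapsed in $\alpha$'s execution at $T$ is $T_\alpha = \delay + \first{\dcrit+1}{\beta}$. For the lower bound, $T_\alpha \geq \first{\dcrit+1}{\beta} = \mu\first{\dcrit+1}{\alpha} > \first{\dcrit+1}{\alpha}$, so $\alpha$ is already in an epoch of index at least $\dcrit+1$; by \Cref{fact:ggeqD} every search it now performs has radius at least $D$ and so reaches $v_\beta$. For the upper bound, $T_\alpha < \first{\dcrit}{\alpha} + \mu\first{\dcrit+1}{\alpha}$, and substituting the closed forms from \Cref{fact:epochsum} and using $\mu>4$, I would verify $T_\alpha < \first{\dcrit+\log\mu+1}{\alpha}$ (reading $\dcrit+\log\mu+1$ as the appropriately rounded integer epoch index). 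The reason this is comfortable is that $\first{M}{\alpha}$ grows like $2^{2M}$, so advancing the epoch index by $\log\mu$ multiplies the elapsed-round count by $\mu^2$, which dominates the single factor $\mu$ separating the two clocks. The same computation in fact shows $T_\alpha$ lies strictly before the final block of Stage 2 of the final phase of epoch $\ep{\dcrit+\log\mu+1}{\alpha}$, since the gap between $T_\alpha$ and $\first{\dcrit+\log\mu+1}{\alpha}$ is of order $\mu^2 2^{2\dcrit}\klog{v_\alpha}$ while a single block is only of order $\mu 2^{\dcrit}$ rounds.

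Finally I would run the search-fits-in-wait argument. Applying \Cref{willdosearch} with $j=\dcrit+\log\mu+1$ at round $T$ — its two hypotheses, that $\alpha$'s current guess satisfies $g \leq 2^j-1$ and that $T$ is before the final block of Stage 2 of $\Pg{2^j-1}{\alpha}$, are exactly what the previous paragraph establishes — agent $\alpha$ visits every node of its $g$-neighbourhood within $44\klog{v_\alpha}2^{\dcrit+\log\mu+1} = 88\mu\cdot 2^{\dcrit}\klog{v_\alpha}$ rounds after $T$. Since $88\mu\cdot 2^{\dcrit}\klog{v_\alpha} < 144\mu\cdot 2^{\dcrit}\klog{v_\alpha} = W$ and $g \geq D$, agent $\alpha$ reaches $v_\beta$ while $\beta$ is still waiting there, so rendezvous occurs. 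Comparing the rendezvous round $T_\alpha + 88\mu\cdot 2^{\dcrit}\klog{v_\alpha}$ against $\first{\dcrit+\log\mu+1}{\alpha}$ (the added term being of lower order in $2^{\dcrit}$) shows this happens by the end of $\alpha$'s execution of epoch $\ep{\dcrit+\log\mu+1}{\alpha}$, as claimed.

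I expect the main obstacle to be the careful bookkeeping of the inequalities in the middle step: pinning down the non-integer quantity $\dcrit+\log\mu+1$ as an epoch index, and ensuring that the lower-order $2^{\dcrit}$-scale terms (the $44\klog{v_\alpha}2^{j}$ search window and a single block) never overtake the $2^{2\dcrit}$-scale separation between the two agents' clocks. The genuinely delicate point is the final-block boundary case — if $\alpha$ happened to sit in the final searching block of its current epoch at $T$ it could spill into epoch $\ep{\dcrit+\log\mu+2}{\alpha}$ before searching — and it is precisely the quadratic-versus-linear growth in $2^{\dcrit}$ (which forces $T_\alpha$ to fall well short of that final block) that rules this out and keeps the conclusion at epoch $\ep{\dcrit+\log\mu+1}{\alpha}$.
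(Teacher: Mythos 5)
Your proposal is correct and follows essentially the same route as the paper's proof: the same reference round (when $\beta$ starts epoch $\ep{\dcrit+2}{\beta}$), the same sitter/searcher assignment, the same comparison of $\beta$'s Stage~0 wait ($144\mu\,2^{\dcrit}\klog{v_\alpha}$ rounds) against the $88\mu\,2^{\dcrit}\klog{v_\alpha}$-round search window from \Cref{willdosearch}, and the same $\mu^2$-versus-$\mu$ clock-separation argument to place $\alpha$'s execution inside (and before the final block of) epoch $\ep{\dcrit+\log\mu+1}{\alpha}$. The only cosmetic difference is that you bound $T_\alpha$ directly against $\first{\dcrit+\log\mu+1}{\alpha}$ minus one block, where the paper bounds it by $\first{\dcrit+\log\mu}{\alpha}+\tfrac12|\ep{\dcrit+\log\mu+1}{\alpha}|$ and invokes \Cref{fact:finalphasefraction}.
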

\begin{proof}
Consider the round in which agent $\beta$ starts epoch $\ep{\dcrit+2}{\beta}$. The number of rounds that occur in $\beta$'s execution up to that round is calculated by
\begin{align*}
	\first{\dcrit+1}{\alpha}& = 2\klog{v_\beta}\left[25\cdot 2^{2(\dcrit+1)} - 2^{(\dcrit+1)} - 24\right]\\
	& = 2\mu\klog{v_\alpha}\left[25\cdot 2^{2(\dcrit+1)} - 2^{(\dcrit+1)} - 24\right]\\\\
	& = 2\klog{v_\alpha}\left[25\cdot 2^{2(\dcrit+1+\frac{1}{2}\log\mu)} - 2^{(\dcrit+1+\frac{1}{2}\log\mu)}\sqrt{\mu} - 24\mu\right]\\
	& \leq 2\klog{v_\alpha}\left[25\cdot 2^{2(\dcrit+1+\frac{1}{2}\log\mu)} - 2^{(\dcrit+1+\frac{1}{2}\log\mu)} - 24\right]\\
	& < 2\klog{v_\alpha}\left[25\cdot 2^{2(\dcrit+\log\mu)} - 2^{(\dcrit+\log\mu)} - 24\right]\hspace{3mm}\textrm{\color{gray}(as $1 < \frac{1}{2}\log\mu$)}\\
	& \leq \first{\dcrit+\log\mu}{\alpha}
\end{align*}

To obtain an upper bound on the number of rounds that have elapsed in $\alpha$'s execution when $\beta$ starts epoch $\ep{\dcrit+2}{\beta}$, we can take the number of rounds that have elapsed in $\beta$'s execution and add to it the delay between start times, i.e., $\first{\dcrit+1}{\beta} + \delay$. Using the above bound and our assumption on the delay between starting times, we get
\begin{align*}
	& \first{\dcrit+1}{\beta} + \delay \\
	<\ & \first{\dcrit+\log\mu}{\alpha} + \first{\dcrit}{\alpha}\\
	<\ & \first{\dcrit+\log\mu}{\alpha}  + \frac{1}{2}|\ep{\dcrit+2}{\alpha}| \hspace{3mm}\textrm{\color{gray}(using \Cref{fact:epochincreasing})}\\
	<\ & \first{\dcrit+\log\mu}{\alpha}  + \frac{1}{2}|\ep{\dcrit+\log\mu + 1}{\alpha}| \hspace{3mm}\textrm{\color{gray}(since $2 < \log\mu+1$)}
\end{align*}
In particular, this means that when $\beta$ starts epoch $\ep{\dcrit+2}{\beta}$, agent $\alpha$'s execution has not yet started the second half of epoch $\ep{\dcrit+{\log\mu}+1}{\alpha}$. By \Cref{fact:finalphasefraction}, the final phase of epoch $\ep{\dcrit+{\log\mu}+1}{\alpha}$ is less than half of the length of the epoch, so it follows that $\alpha$ has not yet started executing the final phase of epoch $\ep{\dcrit+{\log\mu}+1}{\alpha}$. Moreover, when $\beta$ is starting epoch $\ep{\dcrit+2}{\beta}$, agent $\alpha$ is in an epoch $\ep{j}{\alpha}$ with $j \geq \dcrit+2$ since $\alpha$'s phases are shorter and $\alpha$ did not start its execution after $\beta$. By \Cref{lemma:willdoDsearch}, agent $\alpha$ visits all nodes in the $D$-neighbourhood of $v_\alpha$ within the next $44\klog{v_\alpha}2^{\dcrit+{\log\mu}+1}$ rounds.

Finally, note that $\beta$ starts its execution of epoch $\ep{\dcrit+2}{\beta}$ with a waiting period consisting of $36\klog{v_\beta}2^{\dcrit+2} = 72\klog{v_\beta}2^{\dcrit+1} = 72\mu\klog{v_\alpha}2^{\dcrit+1} = 72\klog{v_\alpha}2^{\dcrit+{\log\mu}+1}$ rounds. Thus, $\beta$ is at its starting node $v_\beta$ when $\alpha$ visits $v_\beta$, so rendezvous occurs by the end of epoch $\ep{\dcrit+\log\mu+1}{\alpha}$ in $\alpha$'s execution.
\end{proof}

Next, we consider the case where $\mu$ is strictly larger than $3$, but not larger than $4$. The argument is similar in structure to the previous case: consider the round in which $\beta$ starts epoch $\ep{\dcrit+1}{\beta}$, and consider the number of rounds it takes $\beta$ to reach this round in its execution. We can then figure out how far along $\alpha$ is in its execution at this time, and, roughly speaking, show that $\alpha$ is at most one epoch 'ahead', i.e., it's searching and waiting periods are larger by a factor of at most 2. But the fact that $\log^*(v_\beta) = \mu\cdot\log^*(v_\alpha)$ with $\mu > 3$ means that $\beta$'s waiting period in stage 0 has an additional $\mu>3$ factor built into its length, so this waiting period at the start of epoch $\ep{\dcrit+1}{\beta}$ is long enough to ensure that it includes $\alpha$'s next search, so rendezvous will occur. This idea is formalized in the following result.

\begin{lemma}\label{betabigger:mubigger3}
Suppose that $3 < \mu \leq 4$. Rendezvous occurs by the end of agent $\alpha$'s execution of epoch $\ep{\dcrit+2}{\alpha}$.
\end{lemma}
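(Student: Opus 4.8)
The plan is to follow the template of \Cref{betabigger:mubigger4}, but to fix attention on the moment that agent $\beta$ begins epoch $\ep{\dcrit+1}{\beta}$ (rather than $\ep{\dcrit+2}{\beta}$) and to track where $\alpha$'s execution sits at that same real-time moment. Since $\beta$ has completed its first $\dcrit$ epochs by then, exactly $\first{\dcrit}{\beta}$ rounds have elapsed in $\beta$'s execution, so $\first{\dcrit}{\beta} + \delay$ rounds have elapsed in $\alpha$'s execution (recall $\alpha$ starts no later than $\beta$). Using $\log^*(v_\beta) = \mu\log^*(v_\alpha)$ together with \Cref{fact:epochsum}, the key identity $\first{\dcrit}{\beta} = \mu\,\first{\dcrit}{\alpha}$ lets me re-express $\beta$'s progress in terms of $\alpha$'s quantities.

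Next I would sandwich $\alpha$'s progress between two epochs. For the lower bound, since $\mu > 1$ I have $\first{\dcrit}{\beta}+\delay \ge \first{\dcrit}{\beta} = \mu\,\first{\dcrit}{\alpha} > \first{\dcrit}{\alpha}$, so $\alpha$ has finished epoch $\ep{\dcrit}{\alpha}$ and is in some epoch $\ep{j'}{\alpha}$ with $j' \ge \dcrit+1$; by \Cref{fact:ggeqD} every phase there uses a guess $g \ge D$. For the upper bound I would show $\first{\dcrit}{\beta} + \delay < \first{\dcrit+1}{\alpha} + \tfrac12|\ep{\dcrit+2}{\alpha}|$. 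This splits into two pieces: first, $\first{\dcrit}{\beta} = \mu\,\first{\dcrit}{\alpha} < \first{\dcrit+1}{\alpha}$, which is precisely where the hypothesis $\mu \le 4$ enters, via the elementary inequality $\mu\bigl[25\cdot 2^{2\dcrit}-2^{\dcrit}-24\bigr] < 100\cdot 2^{2\dcrit}-2\cdot 2^{\dcrit}-24$ (bound $\mu$ by $4$ and compare the lower-order terms); second, $\delay < \first{\dcrit}{\alpha} < |\ep{\dcrit+1}{\alpha}| < \tfrac12|\ep{\dcrit+2}{\alpha}|$, using the standing assumption on $\delay$ and \Cref{fact:epochincreasing} twice. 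By \Cref{fact:finalphasefraction}, the final phase of $\ep{\dcrit+2}{\alpha}$ lies in the second half of that epoch, so $\alpha$ has not yet started the final phase --- in particular not the final block of Stage 2 of the final phase --- of epoch $\ep{\dcrit+2}{\alpha}$.

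With $\alpha$ pinned to an epoch $\ep{j'}{\alpha}$ with $\dcrit+1 \le j' \le \dcrit+2$, and strictly before the final block of Stage 2 of the final phase of $\ep{\dcrit+2}{\alpha}$, I would invoke \Cref{lemma:willdoDsearch} with $j = \dcrit+2$ to conclude that, measured from this common moment, $\alpha$ visits every node of the $D$-neighbourhood of $v_\alpha$ (hence $v_\beta$, which is at distance $D$) within the next $44\klog{v_\alpha}2^{\dcrit+2}$ rounds, and that this visit occurs within epoch $\ep{\dcrit+2}{\alpha}$. Meanwhile, the first phase of $\beta$'s epoch $\ep{\dcrit+1}{\beta}$ opens with a Stage 0 wait of $36\cdot 2^{\dcrit+1}\klog{v_\beta} = 72\mu\cdot 2^{\dcrit}\klog{v_\alpha}$ rounds, and since $\mu > 3$ gives $72\mu > 216 > 176 = 44\cdot 4$, this wait exceeds $44\cdot 2^{\dcrit+2}\klog{v_\alpha}$; thus $\beta$ remains at $v_\beta$ throughout $\alpha$'s search, so $\alpha$ reaches $\beta$ and rendezvous occurs by the end of $\alpha$'s epoch $\ep{\dcrit+2}{\alpha}$.

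The main obstacle --- and the only place the precise constants matter --- is the two-sided confinement of $\alpha$: I must simultaneously rule out that $\alpha$ has lagged back into epoch $\ep{\dcrit}{\alpha}$ (where its search radius could fall below $D$) and that it has raced past the final phase of $\ep{\dcrit+2}{\alpha}$ (which would void the hypotheses of \Cref{lemma:willdoDsearch}). Landing the upper bound exactly at $\tfrac12|\ep{\dcrit+2}{\alpha}|$ relies on the numeric inequality supplied by $\mu \le 4$, while the concluding waiting-period comparison relies on the complementary slack supplied by $\mu > 3$; the window $3 < \mu \le 4$ is calibrated so that both the confinement and the search/wait comparison hold at the same time.
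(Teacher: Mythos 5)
Your proposal is correct and follows essentially the same route as the paper's proof: you pivot on the moment $\beta$ starts epoch $\ep{\dcrit+1}{\beta}$, confine $\alpha$ to an epoch between $\ep{\dcrit+1}{\alpha}$ and the first half of $\ep{\dcrit+2}{\alpha}$ using the same bound $\first{\dcrit}{\beta}+\delay < \first{\dcrit+1}{\alpha}+\tfrac12|\ep{\dcrit+2}{\alpha}|$, and then compare $\beta$'s Stage~0 wait of $36\mu\klog{v_\alpha}2^{\dcrit+1} > 108\klog{v_\alpha}2^{\dcrit+1}$ against the $88\klog{v_\alpha}2^{\dcrit+1}$ search window from \Cref{lemma:willdoDsearch}, exactly as the paper does. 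The only cosmetic difference is that you split the upper bound into $\mu\first{\dcrit}{\alpha}<\first{\dcrit+1}{\alpha}$ plus $\delay<\tfrac12|\ep{\dcrit+2}{\alpha}|$, whereas the paper reaches the identical bound via one chained computation.
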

\begin{proof}
Consider the round in which agent $\beta$ starts epoch $\ep{\dcrit+1}{\beta}$. The number of rounds that occur in $\beta$'s execution up to that round is calculated by
\begin{align*}
	& \first{\dcrit}{\beta} \\
	=\ & 2\klog{v_\beta}\left[25\cdot 2^{2\dcrit} - 2^{\dcrit} - 24\right]\\
	=\ & 2\kappa[\mu\log^*(v_\alpha)]\left[25\cdot 2^{2\dcrit} - 2^{\dcrit} - 24\right]\\
	=\ & 2\kappa[\log^*(v_\alpha)+(\mu-1)\log^*(v_\alpha)]\left[25\cdot 2^{2\dcrit} - 2^{\dcrit} - 24\right]\\
	=\ & 2\klog{v_\alpha}\left[25\cdot 2^{2\dcrit} - 2^{\dcrit} - 24\right] +2(\mu-1)\klog{v_\alpha}\left[25\cdot 2^{2\dcrit} - 2^{\dcrit} - 24\right]\\
	=\ & \first{\dcrit}{\alpha} + 2(\mu-1)\klog{v_\alpha}\left[25\cdot 2^{2\dcrit} - 2^{\dcrit} - 24\right]
\end{align*}

To obtain an upper bound on the number of rounds that have elapsed in $\alpha$'s execution when $\beta$ starts epoch $\ep{\dcrit+1}{\beta}$, we can take the number of rounds that have elapsed in $\beta$'s execution and add to it the delay between start times, i.e., $\first{\dcrit}{\beta}+ \delay$. Using the above equality and our assumption on the delay between starting times, we get

\begin{align*}
	& \first{\dcrit}{\beta}+\delay\\
	<\ & \first{\dcrit}{\beta}+\first{\dcrit}{\alpha}\\
	=\ & \first{\dcrit}{\beta} + 2\klog{v_\alpha}\left[25\cdot 2^{2\dcrit} - 2^{\dcrit} - 24\right]\\
	=\ & \first{\dcrit}{\alpha} + 2(\mu-1)\klog{v_\alpha}\left[25\cdot 2^{2\dcrit} - 2^{\dcrit} - 24\right]\\
	& +2\klog{v_\alpha}\left[25\cdot 2^{2\dcrit} - 2^{\dcrit} - 24\right]\\
	=\ & \first{\dcrit}{\alpha} +2\mu\klog{v_\alpha}\left[25\cdot 2^{2\dcrit} - 2^{\dcrit} - 24\right]\\
	\leq\ & \first{\dcrit}{\alpha} + 2(4)\klog{v_\alpha}\left[25\cdot 2^{2\dcrit} - 2^{\dcrit} - 24\right]\\
	=\ & \first{\dcrit}{\alpha} + 2\klog{v_\alpha}\left[50\cdot 2^{2\dcrit+1} - 2^{\dcrit+2} - 96\right]\\
	<\ & \first{\dcrit}{\alpha} + 2\klog{v_\alpha}\left[50\cdot 2^{2\dcrit+1} - 2^{\dcrit+1} - 96\right]\\
	<\ & \first{\dcrit}{\alpha} + 2|\ep{\dcrit+1}{\alpha}|
	\hspace{3mm}\textrm{\color{gray}(using \Cref{fact:epochlength} with $j=\dcrit+1$)}\\
	<\ & \first{\dcrit}{\alpha} + |\ep{\dcrit+1}{\alpha}| + \frac{1}{2}|\ep{\dcrit+2}{\alpha}|
	\hspace{3mm}\textrm{\color{gray}(using \Cref{fact:epochincreasing} with $j=\dcrit+1$)}\\
\end{align*}
We conclude that $\alpha$'s execution has not yet started the second half of epoch $\ep{\dcrit+2}{\alpha}$. By \Cref{fact:finalphasefraction}, the final phase of epoch $\ep{\dcrit+2}{\alpha}$ is less than half of the length of the epoch, so it follows that $\alpha$ has not yet started executing the final phase of epoch $\ep{\dcrit+2}{\alpha}$. Moreover, when $\beta$ is starting epoch $\ep{\dcrit+1}{\beta}$, agent $\alpha$ is in an epoch $\ep{j}{\alpha}$ with $j \geq \dcrit+1$ since $\alpha$'s phases are shorter and $\alpha$ did not start its execution after $\beta$. By \Cref{lemma:willdoDsearch}, agent $\alpha$ visits all nodes in the $D$-neighbourhood of $v_\alpha$ within the next $44\klog{v_\alpha}2^{\dcrit+2} = 88\klog{v_\alpha}2^{\dcrit+1}$ rounds.

Finally, note that $\beta$ starts its execution of epoch $\ep{\dcrit+1}{\beta}$ by execution stage 0 of the first phase of the epoch, i.e., a waiting period consisting of $36\klog{v_\beta}2^{\dcrit+1} = 36\mu\klog{v_\alpha}2^{\dcrit+1} > 36(3)\klog{v_\alpha}2^{\dcrit+1} = 108\klog{v_\alpha}2^{\dcrit+1}$ rounds. Therefore, $\beta$ will be waiting at its starting node $v_\beta$ when $\alpha$ visits $v_\beta$, so rendezvous will occur by the end of epoch $\ep{\dcrit+2}{\alpha}$ in $\alpha$'s execution, as desired.
\end{proof}

Next, we consider the remaining case, i.e., where $\mu$ is at most $3$. At a high level, we consider the round in which agent $\beta$ starts epoch $\ep{\dcrit+1}{\beta}$, and prove that $\alpha$'s execution has not yet reached the end of epoch $\ep{\dcrit+1}{\alpha}$. In the case where $\mu \geq 11/9$, agent $\beta$'s waiting period is long enough to contain a search by $\alpha$, so rendezvous occurs with $\alpha$ as `searcher' and $\beta$ as `sitter'. However, when $\mu < 11/9$, this is not true. So, in this case, we instead look at when $\alpha$ starts epoch $\ep{\dcrit+2}{\alpha}$, and notice that its waiting period at the start of this epoch is long enough to include the next search by $\beta$, so rendezvous occurs with $\beta$ as `searcher' and $\alpha$ as `sitter'.  This idea is formalized in the following result.

\begin{lemma}\label{betabigger:musmaller3}
Suppose that $1 < \mu \leq 3$. Rendezvous occurs by the end of agent $\alpha$'s execution of epoch $\ep{\dcrit+2}{\alpha}$.
\end{lemma}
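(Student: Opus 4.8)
The plan is to pinpoint the round $\tau$ at which agent $\beta$ begins epoch $\ep{\dcrit+1}{\beta}$ and to establish one common fact driving both halves of the argument: at round $\tau$, agent $\alpha$ is executing epoch $\ep{\dcrit+1}{\alpha}$ and has not yet entered the final block of Stage 2 of that epoch's final phase. To see this, I note that $\beta$ has run $\first{\dcrit}{\beta}=\mu\,\first{\dcrit}{\alpha}$ rounds by round $\tau$ (using that the closed form in \Cref{fact:epochsum} scales linearly in $\log^*(v_x)$), so $\alpha$ has run $\first{\dcrit}{\beta}+\delay$ rounds. The standing assumption $\delay<\first{\dcrit}{\alpha}$ together with $\mu\le 3$ gives $\first{\dcrit}{\beta}+\delay<(\mu+1)\first{\dcrit}{\alpha}\le 4\,\first{\dcrit}{\alpha}$, and a direct comparison with \Cref{fact:epochsum} shows $4\,\first{\dcrit}{\alpha}<\first{\dcrit+1}{\alpha}-4\cdot 2^{\dcrit+1}$; since one Stage-2 block of this epoch has length $4\cdot 2^{\dcrit+1}$, this places $\alpha$ strictly before the final block. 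The matching lower bound $\first{\dcrit}{\beta}=\mu\,\first{\dcrit}{\alpha}>\first{\dcrit}{\alpha}$ shows $\alpha$ is already past the start of $\ep{\dcrit+1}{\alpha}$, so $\alpha$ is indeed inside $\ep{\dcrit+1}{\alpha}$.

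From here I split on the threshold $\mu=11/9$. If $\mu\ge 11/9$, agent $\alpha$ plays searcher and $\beta$ plays sitter. Applying \Cref{lemma:willdoDsearch} with $j=\dcrit+1$ (legitimate by the common fact), $\alpha$ visits the entire $D$-neighbourhood of $v_\alpha$ within the next $44\,\klog{v_\alpha}2^{\dcrit+1}$ rounds. Meanwhile $\beta$ opens epoch $\ep{\dcrit+1}{\beta}$ with a Stage-0 wait at $v_\beta$ of length $36\cdot 2^{\dcrit+1}\klog{v_\beta}=36\mu\cdot 2^{\dcrit+1}\klog{v_\alpha}\ge 44\cdot 2^{\dcrit+1}\klog{v_\alpha}$, precisely because $36\mu\ge 44$. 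Thus $\beta$ is stationary at $v_\beta$ throughout $\alpha$'s search, so $\alpha$ reaches $v_\beta$ and rendezvous occurs inside $\ep{\dcrit+1}{\alpha}$, hence certainly by the end of $\ep{\dcrit+2}{\alpha}$.

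If instead $1<\mu<11/9$, the roles reverse and I examine the later moment $\sigma$ at which $\alpha$ starts epoch $\ep{\dcrit+2}{\alpha}$ (assuming rendezvous has not already happened). At $\sigma$ agent $\alpha$ has run $\first{\dcrit+1}{\alpha}$ rounds and $\beta$ has run $\first{\dcrit+1}{\alpha}-\delay$; comparing with $\first{\dcrit}{\beta}=\mu\,\first{\dcrit}{\alpha}$ and $\first{\dcrit+1}{\beta}=\mu\,\first{\dcrit+1}{\alpha}$ (both from the closed form of \Cref{fact:epochsum}) shows $\beta$'s elapsed time lies strictly between $\first{\dcrit}{\beta}$ and $\first{\dcrit+1}{\beta}$, i.e. $\beta$ is inside epoch $\ep{\dcrit+1}{\beta}$. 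Agent $\alpha$ now sits at $v_\alpha$ during its Stage-0 wait of length $36\cdot 2^{\dcrit+2}\klog{v_\alpha}=72\cdot 2^{\dcrit+1}\klog{v_\alpha}$, while \Cref{lemma:willdoDsearch} (with $j=\dcrit+1$) makes $\beta$ sweep its $D$-neighbourhood within $44\,\klog{v_\beta}2^{\dcrit+1}=44\mu\cdot 2^{\dcrit+1}\klog{v_\alpha}$ rounds; since $44\mu<44\cdot\tfrac{11}{9}<72$, $\beta$'s sweep finishes while $\alpha$ is still waiting, giving rendezvous by the end of $\ep{\dcrit+2}{\alpha}$.

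The main obstacle is the legitimacy of invoking \Cref{lemma:willdoDsearch} in this reversed case: that lemma fails exactly when the searcher sits in the final block of its epoch, and a crude timing estimate cannot exclude this when $\mu\to 1^+$ and $\delay$ is tiny, since then $\beta$'s elapsed time $\first{\dcrit+1}{\alpha}-\delay$ can be pushed arbitrarily close to $\first{\dcrit+1}{\beta}$. The point that rescues the argument is the integrality of the iterated logarithm: as $\log^*(v_\alpha)$ and $\log^*(v_\beta)$ are positive integers with $\log^*(v_\beta)>\log^*(v_\alpha)$, we have $(\mu-1)\log^*(v_\alpha)=\log^*(v_\beta)-\log^*(v_\alpha)\ge 1$, whence $(\mu-1)\first{\dcrit+1}{\alpha}\ge 2\kappa\bigl[25\cdot 2^{2\dcrit+2}-2^{\dcrit+1}-24\bigr]$, which dwarfs the single-block length $4\cdot 2^{\dcrit+1}$. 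This fixed gap forces $\beta$'s elapsed time to fall more than one block short of $\first{\dcrit+1}{\beta}$, so $\beta$ is genuinely before the final block and \Cref{lemma:willdoDsearch} applies. I expect the elapsed-round bookkeeping, together with checking that the two numeric inequalities $36\mu\ge 44$ and $44\mu<72$ land on the correct sides of $\mu=11/9$, to be the only delicate part; the rest is a direct application of the phase- and epoch-length facts.
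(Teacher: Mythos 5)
Your proposal is correct and follows essentially the same route as the paper's proof: the same split at $\mu=11/9$, the same searcher/sitter role reversal between the two subcases, the same invocation of \Cref{lemma:willdoDsearch} after locating each agent before the final block of the relevant epoch, and the same crucial use of the integrality gap $\log^*(v_\beta)-\log^*(v_\alpha)\ge 1$ to push $\beta$'s elapsed time more than one block short of $\first{\dcrit+1}{\beta}$ in the subcase $\mu<11/9$. The only differences are cosmetic (e.g., you bound $\alpha$'s progress at $\tau$ by $(\mu+1)\first{\dcrit}{\alpha}\le 4\first{\dcrit}{\alpha}$ and locate $\beta$ at $\sigma$ by direct computation rather than by the paper's monotonicity remark), and all the numeric thresholds land on the correct sides.
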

\begin{proof}
Consider the round in which agent $\beta$ starts epoch $\ep{\dcrit+1}{\beta}$. The number of rounds that have elapsed in $\beta$'s execution is 
\begin{align*}
	& \first{\dcrit}{\beta}\\
	& = 2\klog{v_\beta}\left[25\cdot 2^{2\dcrit} - 2^{\dcrit} - 24\right]\\
	& = 2\kappa[\mu\log^*(v_\alpha)]\left[25\cdot 2^{2\dcrit} - 2^{\dcrit} - 24\right]\\
	& = 2\kappa[\log^*(v_\alpha)+(\mu-1)\log^*(v_\alpha)]\left[25\cdot 2^{2\dcrit} - 2^{\dcrit} - 24\right]\\
	& = 2\klog{v_\alpha}\left[25\cdot 2^{2\dcrit} - 2^{\dcrit} - 24\right]+2(\mu-1)\klog{v_\alpha}\left[25\cdot 2^{2\dcrit} - 2^{\dcrit} - 24\right]\\
	& = \first{\dcrit}{\alpha} + 2(\mu-1)\klog{v_\alpha}\left[25\cdot 2^{2\dcrit} - 2^{\dcrit} - 24\right]
\end{align*}

To obtain an upper bound on the number of rounds that have elapsed in $\alpha$'s execution when $\beta$ starts epoch $\ep{\dcrit+1}{\beta}$, we can take the number of rounds that have elapsed in $\beta$'s execution and add to it the delay between start times, i.e., $\first{\dcrit}{\beta}+ \delay$. Using the above equality and our assumption on the delay between starting times, we get

\begin{align*}
	& \first{\dcrit}{\beta}+\delay\\
	<\ & \first{\dcrit}{\beta}+\first{\dcrit}{\alpha}\\
	=\ & \first{\dcrit}{\beta} + 2\klog{v_\alpha}\left[25\cdot 2^{2\dcrit} - 2^{\dcrit} - 24\right]\\
	=\ & \first{\dcrit}{\alpha} + 2(\mu-1)\klog{v_\alpha}\left[25\cdot 2^{2\dcrit} - 2^{\dcrit} - 24\right]\\ 
	& +2\klog{v_\alpha}\left[25\cdot 2^{2\dcrit} - 2^{\dcrit} - 24\right]\\
	=\ & \first{\dcrit}{\alpha} +2\mu\klog{v_\alpha}\left[25\cdot 2^{2\dcrit} - 2^{\dcrit} - 24\right]\\
	\leq\ & \first{\dcrit}{\alpha} + 2(3)\klog{v_\alpha}\left[25\cdot 2^{2\dcrit} - 2^{\dcrit} - 24\right]\\
	=\ & \first{\dcrit}{\alpha} + \klog{v_\alpha}\left[75\cdot 2^{2\dcrit+1} - 3\cdot 2^{\dcrit+1} - 144\right]\\
	=\ & \first{\dcrit}{\alpha} + \klog{v_\alpha}\left[75\cdot 2^{2\dcrit+1} - 2^{\dcrit+1} - 144\right] - 2\cdot2^{\dcrit+1}\klog{v_\alpha}\\
	\leq\ & \first{\dcrit}{\alpha} + \klog{v_\alpha}\left[75\cdot 2^{2\dcrit+1} - 2^{\dcrit+1} - 144\right] - 4\cdot2^{\dcrit+1}\\
	&\textrm{\color{gray}(since $\kappa\geq 2$ and $\log^*(v_\alpha) \geq 1$)}\\
	<\ & \first{\dcrit}{\alpha} + |\ep{\dcrit+1}{\alpha}| - 4\cdot2^{\dcrit+1}
	\hspace{3mm}\textrm{\color{gray}(using \Cref{fact:epochlength} with $j=\dcrit+1$)}\\
\end{align*}

In every phase of $\ep{\dcrit+1}{\alpha}$ in $\alpha$'s execution, each block in stage 2 has length exactly $4\cdot2^{\dcrit+1}$. So, the above inequality proves that, when $\beta$ starts its execution of epoch $\ep{\dcrit+1}{\beta}$, agent $\alpha$'s execution has not yet reached the final block of the final phase of epoch $\ep{\dcrit+1}{\alpha}$. Moreover, when $\beta$ is starting epoch $\ep{\dcrit+1}{\beta}$, agent $\alpha$ is in an epoch $\ep{j}{\alpha}$ with $j \geq \dcrit+1$ since $\alpha$'s phases are shorter and $\alpha$ did not start its execution after $\beta$.

The remainder of the proof considers two subcases based on the value of $\mu$.

\begin{itemize}
	\item Case (a): $\mu \geq 11/9$\\
	By \Cref{lemma:willdoDsearch}, agent $\alpha$ visits all nodes in the $D$-neighbourhood of $v_\alpha$ within the next $44\klog{v_\alpha}2^{\dcrit+1}$ rounds. However, $\beta$ starts its execution of epoch $\ep{\dcrit+1}{\beta}$ with a waiting period whose length is $36\klog{v_\beta}2^{\dcrit+1} = 36\mu\klog{v_\alpha}2^{\dcrit+1} \geq 36(11/9)\klog{v_\alpha}2^{\dcrit+1} = 44\klog{v_\alpha}2^{\dcrit+1}$.\\Therefore, $\beta$ will be waiting at its starting node $v_\beta$ when $\alpha$ visits $v_\beta$, so rendezvous will occur by the end of epoch $\ep{\dcrit+2}{\alpha}$ in $\alpha$'s execution, as desired.
	\item Case (b): $1 < \mu < 11/9$\\
	Consider the round in which $\alpha$ starts epoch $\ep{\dcrit+2}{\alpha}$. First, note that agent $\beta$'s execution has passed the start of epoch $\ep{\dcrit+1}{\beta}$ at this time: indeed, we showed above that, when agent $\beta$ started epoch $\ep{\dcrit+1}{\beta}$, agent $\alpha$'s execution had not yet reached the end of epoch $\ep{\dcrit+1}{\alpha}$.
	
	Next, we set out to show that, at this time, agent $\beta$'s execution has not yet reached the final block of the final phase of epoch $\ep{\dcrit+1}{\beta}$. The number of rounds that elapse in agent $\alpha$'s execution up until it starts epoch $\ep{\dcrit+2}{\alpha}$ is
	\begin{align*}
		\first{\dcrit+1}{\alpha} =\ &    2\klog{v_\alpha}\left[25\cdot 2^{2\dcrit+2} - 2^{\dcrit+1} - 24\right]\\
		\leq\ &   2\kappa(\log^*(v_\beta)-1)\left[25\cdot 2^{2\dcrit+2} - 2^{\dcrit+1} - 24\right]\\
		=\ &  2\klog{v_\beta}\left[25\cdot 2^{2\dcrit+2} - 2^{\dcrit+1} - 24\right]\\
		& - 2\kappa\left[25\cdot 2^{2\dcrit+2} - 2^{\dcrit+1} - 24\right]\\
		=\ &  \first{\dcrit+1}{\beta} - 2\kappa\left[25\cdot 2^{2\dcrit+2} - 2^{\dcrit+1} - 24\right]\\
		=\ &  \first{\dcrit+1}{\beta} - 2\kappa\left[50\cdot 2^{2\dcrit+1} - 2^{\dcrit+1} - 24\right]\\
		\leq\ &  \first{\dcrit+1}{\beta} - 2\kappa\left[50\cdot 2^{2\dcrit+1} - 2^{2\dcrit+1} - 24\cdot 2^{2\dcrit+1}\right]\\
		=\ &  \first{\dcrit+1}{\beta} - 2\kappa\left[25\cdot 2^{2\dcrit+1}\right]\\
		<\ &  \first{\dcrit+1}{\beta} - 4\cdot 2^{\dcrit+1}\\
	\end{align*}
	
	Since $\alpha$'s execution starts before or at the same time as agent $\beta$'s, it follows that $\first{\dcrit+1}{\alpha}$ is an upper bound on the number of rounds that have elapsed in agent $\beta$'s execution. This means that $\first{\dcrit+1}{\beta} - 4\cdot 2^{\dcrit+1}$ is an upper bound on the number of rounds that have elapsed in agent $\beta$'s execution at the moment when $\alpha$ starts epoch $\ep{\dcrit+2}{\alpha}$. Since each block in every phase of epoch $\ep{\dcrit+1}{\beta}$ has length exactly $4\cdot 2^{\dcrit+1}$, the bound $\first{\dcrit+1}{\beta} - 4\cdot 2^{\dcrit+1}$ tells us that $\beta$'s execution has not yet reached the final block of the final phase of epoch $\ep{\dcrit+1}{\beta}$. 
	
	From the above, we have proved that agent $\beta$'s execution has passed the start of epoch $\ep{\dcrit+1}{\beta}$, but has not yet reached the final block of the final phase of epoch $\ep{\dcrit+1}{\beta}$. Thus, by \Cref{lemma:willdoDsearch}, agent $\beta$ visits all nodes in the $D$-neighbourhood of $v_\beta$ within $44\klog{v_\beta}2^{\dcrit+1} = 44\mu\klog{v_\alpha}2^{\dcrit+1}  < 44(11/9)\klog{v_\alpha}2^{\dcrit+1} < 55\klog{v_\alpha}2^{\dcrit+1}$ rounds. However, agent $\alpha$ starts its execution of epoch $\ep{\dcrit+2}{\alpha}$ with a waiting period consisting of $36\klog{v_\alpha}2^{\dcrit+2} = 72\klog{v_\alpha}2^{\dcrit+1}$ rounds. Therefore, $\alpha$ will be waiting at its starting node $v_\alpha$ when $\beta$ visits all nodes in the $D$-neighbourhood of $v_\beta$, so rendezvous will occur by the end of agent $\alpha$'s execution of epoch $\ep{\dcrit+2}{\alpha}$.
\end{itemize}

In both cases, rendezvous occurs before the end of epoch $\ep{\dcrit+2}{\alpha}$ in $\alpha$'s execution, as desired.
\end{proof}

Finally, to cover all possible values of $\mu$, we combine \Cref{betabigger:largemu,betabigger:mubigger4,betabigger:mubigger3,betabigger:musmaller3} to conclude the analysis in the case where $\log^*(v_\beta) > \log^*(v_\alpha)$.

\begin{lemma}\label{logstarbetabigger}
Rendezvous will occur within $O(\min\{D^2(\log^*(v_\beta))^2,D^4\log^*(v_\beta)\})$ rounds in agent $\alpha$'s execution.
\end{lemma}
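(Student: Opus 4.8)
The plan is to assemble this running-time bound directly from the four case-lemmas \Cref{betabigger:largemu,betabigger:mubigger4,betabigger:mubigger3,betabigger:musmaller3}, together with \Cref{verylargedelay} for the very large delay regime and \Cref{cor:runningtime} to convert a bound on the terminating epoch into a bound on elapsed rounds. First I would dispose of the case $\delay \geq \first{\dcrit}{\alpha}$ using \Cref{verylargedelay}, which already gives $O(D^2\log^*(v_\alpha))$ rounds; since $\log^*(v_\alpha) < \log^*(v_\beta)$, this lies inside the claimed bound. For the remaining case $\delay < \first{\dcrit}{\alpha}$, I would split on the value of $\mu$ exactly as the four lemmas do, so that in each subcase I have an index $M$ with the guarantee that rendezvous occurs by the end of epoch $\ep{M}{\alpha}$, and then apply \Cref{cor:runningtime} to conclude that the elapsed round count is $O(2^{2M}\log^*(v_\alpha))$.

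The key computation is to evaluate $2^{2M}$ in each subcase. I would first record that $D \leq 2^{\dcrit} \leq 2D$, so $2^{2\dcrit} = \Theta(D^2)$. In the three subcases $\mu > 12D^2$ (where $M = \dcrit+1$) and $1 < \mu \leq 4$ (where $M = \dcrit+2$), the factor $2^{2M}$ is only a constant times $2^{2\dcrit}$, so the running time is $O(D^2\log^*(v_\alpha))$, which again sits inside the target bound because $\log^*(v_\alpha) < \log^*(v_\beta)$. The decisive subcase is $4 < \mu \leq 12D^2$, where $M = \dcrit + \log\mu + 1$ and hence $2^{2M} = \Theta(D^2\mu^2)$; using the defining relation $\mu\log^*(v_\alpha) = \log^*(v_\beta)$, the running time becomes $O(D^2\mu^2\log^*(v_\alpha)) = O(D^2\mu\log^*(v_\beta))$.

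To reach the minimum I would then exploit two independent upper bounds on $\mu$ available in this decisive subcase: on one hand $\mu \leq 12D^2$ by the case hypothesis, and on the other hand $\mu = \log^*(v_\beta)/\log^*(v_\alpha) \leq \log^*(v_\beta)$ since $\log^*(v_\alpha) \geq 1$. Substituting the first bound into $O(D^2\mu\log^*(v_\beta))$ yields $O(D^4\log^*(v_\beta))$, and substituting the second yields $O(D^2(\log^*(v_\beta))^2)$. Because the running time is simultaneously at most a constant times each of these two quantities, it is at most a constant times their minimum, which is precisely $O(\min\{D^2(\log^*(v_\beta))^2, D^4\log^*(v_\beta)\})$.

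I expect the main obstacle to be bookkeeping rather than conceptual: one must verify the two bounds $O(D^4\log^*(v_\beta))$ and $O(D^2(\log^*(v_\beta))^2)$ separately and then invoke the elementary fact that a quantity bounded by both $A$ and $B$ is bounded by $\min\{A,B\}$. A minor technical point is that $\log\mu$ need not be an integer while epoch indices are; I would handle this by replacing the index $\dcrit+\log\mu+1$ with $\dcrit+\lceil\log\mu\rceil+1$, which inflates $2^{2M}$ only by a constant factor and so does not affect the asymptotics. I would also confirm that the three non-decisive subcases are genuinely dominated, i.e.\ that $O(D^2\log^*(v_\alpha))$ is contained in the claimed minimum, which is immediate since $\log^*(v_\alpha) < \log^*(v_\beta) \leq (\log^*(v_\beta))^2$ and $D^2 \leq D^4$.
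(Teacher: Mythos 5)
Your proposal is correct and follows essentially the same route as the paper: split on $\mu$ according to the four case-lemmas, apply \Cref{cor:runningtime} with the appropriate terminating epoch, and in the decisive case $4 < \mu \leq 12D^2$ derive the two bounds $O(D^4\log^*(v_\beta))$ and $O(D^2(\log^*(v_\beta))^2)$ from the inequalities $\mu \leq 12D^2$ and $\mu \leq \log^*(v_\beta)$ respectively. Your explicit handling of the large-delay case and the remark about rounding $\log\mu$ to an integer epoch index are harmless refinements of what the paper does implicitly.
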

\begin{proof}
When $4 < \mu \leq 12D^2$, \Cref{betabigger:mubigger4} proves that rendezvous occurs by the end of agent $\alpha$'s execution of epoch $\ep{\dcrit+\log\mu+1}{\alpha}$. By \Cref{cor:runningtime}, the total number of rounds that elapse in $\alpha$'s execution of the algorithm is $O(2^{2\dcrit+2\log\mu+2}\log^*(v_\alpha)) = O(D^2\mu^2\log^*(v_\alpha)) = O(D^2\mu\log^*(v_\beta))$. Since $\mu=\frac{log^*(v_\beta)}{\log^*(v_\alpha)}$ and $\mu \leq 12D^2$, we conclude that the order $O(D^2\mu\log^*(v_\beta))$ is included in $O(\min\{D^2(\log^*(v_\beta))^2,D^4\log^*(v_\beta)\})$.

For all other values of $\mu$, \Cref{betabigger:largemu,betabigger:mubigger3,betabigger:musmaller3} imply that rendezvous occurs by the end of agent $\alpha$'s execution of epoch $\ep{\dcrit+2}{\alpha}$. By \Cref{cor:runningtime}, the number of rounds that elapse in $\alpha$'s execution of the algorithm is $O(2^{2\dcrit+4}\log^*(v_\alpha)) = O(D^2\log^*(v_\alpha)) \subseteq O(D^2\log^*(v_\beta)) \subseteq O(\min\{D^2(\log^*(v_\beta))^2,D^4\log^*(v_\beta)\})$.
\end{proof}

\subsubsection*{The case where $\log^*(v_\alpha) > \log^*(v_\beta)$}

Define $\mu > 1$ such that $\mu\cdot\log^{*}{v_\beta} = \log^*{v_\alpha}$. The overall structure of the following analysis is similar to the analysis in the previous section, but is not symmetric in the details.

The most significant difference is the fact that, although $\alpha$ starts its execution before or at the same time as $\beta$ does, agent $\alpha$'s execution moves through the phases/epochs more `slowly' since the phase/epoch lengths are larger. So, there will be a point in time in the execution where $\beta$'s execution `overtakes' $\alpha$'s execution with respect to the index of the epoch they are executing. After this point occurs, we are in a similar situation as the previous section (i.e., one agent is simultaneously `ahead' and has shorter phases), and we can use similar arguments to analyze the algorithm. The next result, which we informally refer to as the Crossover Lemma, gives a bound on how long it takes for $\beta$'s execution to `overtake' $\alpha$'s execution.

\begin{lemma}\label{lem:crossover}
When agent $\alpha$'s execution starts epoch $\ep{\dcrit+\log\log^*(v_\alpha)+1}{\alpha}$, agent $\beta$'s execution has already reached the start of epoch $\ep{\dcrit+\log\log^*(v_\alpha)+1}{\beta}$. In particular, this implies that $\beta$'s execution has already reached the start of epoch $\ep{\dcrit+1}{\beta}$.
\end{lemma}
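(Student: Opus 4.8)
The plan is to track both agents purely by round counts. Write $L = \log^*(v_\alpha)$ and set $M = \dcrit + \log L + 1$ (recall $\log$ is base $2$, so $2^{\log L} = L$). The claim is that the instant $\alpha$ begins epoch $\ep{M}{\alpha}$, agent $\beta$ has already begun epoch $\ep{M}{\beta}$; the second sentence of the lemma then comes for free, since $L > \log^*(v_\beta) \geq 1$ forces $L \geq 2$, hence $\log L \geq 1$ and $M-1 = \dcrit + \log L > \dcrit$, so having reached the start of $\ep{M}{\beta}$ implies $\beta$ has passed the start of $\ep{\dcrit+1}{\beta}$. The moment $\alpha$ begins epoch $\ep{M}{\alpha}$ is exactly $\first{M-1}{\alpha}$ rounds into $\alpha$'s execution. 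Since $\alpha$ starts no later than $\beta$ and the gap between start times is $\delay$, at that same global instant $\beta$ has been running for $\first{M-1}{\alpha} - \delay$ rounds (which is positive since $M-1 \geq \dcrit$ and $\delay < \first{\dcrit}{\alpha}$). Hence it suffices to prove $\first{M-1}{\alpha} - \delay \geq \first{M-1}{\beta}$, as this guarantees $\beta$ has completed at least its first $M-1$ epochs.

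Because the standing assumption of this section is $\delay < \first{\dcrit}{\alpha}$, I would instead prove the stronger delay-free inequality
\[
\first{M-1}{\alpha} - \first{M-1}{\beta} \geq \first{\dcrit}{\alpha}.
\]
Here I invoke \Cref{fact:epochsum}, which gives $\first{m}{x} = 2\klog{v_x}\left[25\cdot 2^{2m} - 2^{m} - 24\right]$, so the only agent-dependence is through the factor $\klog{v_x} = \kappa\log^*(v_x)$. The crucial simplification is that $m = M-1 = \dcrit + \log L$, whence $2^{M-1} = 2^{\dcrit}L$ and $2^{2(M-1)} = 2^{2\dcrit}L^2$. Substituting, the left side equals $2\kappa(L - \log^*(v_\beta))\left[25\cdot 2^{2\dcrit}L^2 - 2^{\dcrit}L - 24\right]$ while the right side equals $2\kappa L\left[25\cdot 2^{2\dcrit} - 2^{\dcrit} - 24\right]$.

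To finish, I use that $\log^*$ takes integer values and $L = \log^*(v_\alpha) > \log^*(v_\beta)$, so the gap satisfies $L - \log^*(v_\beta) \geq 1$; since the bracket $25\cdot 2^{2\dcrit}L^2 - 2^{\dcrit}L - 24$ is positive for all $L \geq 1$ and $\dcrit \geq 1$, replacing this gap by $1$ only weakens the left side. It then remains to check
\[
25\cdot 2^{2\dcrit}L^2 - 2^{\dcrit}L - 24 \geq L\left[25\cdot 2^{2\dcrit} - 2^{\dcrit} - 24\right],
\]
which rearranges to $(L-1)\left[25\cdot 2^{2\dcrit}L + 24\right] \geq 0$, true because $L \geq 1$. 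This establishes the displayed inequality and hence the lemma.

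The main obstacle I anticipate is not the algebra, which collapses cleanly once $2^{M-1} = 2^{\dcrit}L$ is substituted, but the bookkeeping: correctly converting ``$\alpha$ starts epoch $\ep{M}{\alpha}$'' into the count $\first{M-1}{\alpha}$, and subtracting $\delay$ in the correct direction while remembering that $\alpha$ (the agent with the \emph{longer} phases) is the one that woke up first. The conceptual content the proof makes precise is that the integer gap $L - \log^*(v_\beta) \geq 1$ between the two iterated logarithms, amplified by the doubling of epoch lengths, is exactly what allows $\beta$'s shorter phases to overcome $\alpha$'s head start within only $\log L$ additional epochs.
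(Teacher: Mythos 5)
Your proof is correct and takes essentially the same route as the paper's: both reduce the claim to the inequality $\first{\dcrit+\log\log^*(v_\alpha)}{\alpha} \geq \first{\dcrit+\log\log^*(v_\alpha)}{\beta} + \first{\dcrit}{\alpha}$ via \Cref{fact:epochsum}, exploiting the integer gap $\log^*(v_\alpha) - \log^*(v_\beta) \geq 1$ together with the substitution $2^{\log\log^*(v_\alpha)} = \log^*(v_\alpha)$, and then absorb $\delay < \first{\dcrit}{\alpha}$. The only difference is cosmetic bookkeeping (you factor the difference of the two $F$-values directly, while the paper splits $\log^*(v_\alpha) \geq \log^*(v_\beta)+1$ inside the product), and your handling of the "in particular" clause via $\log^*(v_\beta) \geq 1$ is consistent with the paper's standing assumption that all labels exceed $1$.
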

\begin{proof}
Consider the start of epoch $\ep{\dcrit+\log\log^*(v_\alpha)+1}{\alpha}$ in $\alpha$'s execution. The number of rounds that have elapsed in $\alpha$'s execution is
\begin{align*}
	&\first{\dcrit+\log\log^*(v_\alpha)}{\alpha} \\
	=\ & 2\klog{v_\alpha}\left[25\cdot 2^{2(\dcrit+\log\log^*(v_\alpha))} - 2^{(\dcrit+\log\log^*(v_\alpha))} - 24\right]\\
	\geq\ &  2\kappa (\log^*(v_\beta)+1)\left[25\cdot 2^{2(\dcrit+\log\log^*(v_\alpha))} - 2^{(\dcrit+\log\log^*(v_\alpha))} - 24\right]\\
	=\ & \first{\dcrit+\log\log^*(v_\alpha)}{\beta} + 2\kappa\left[25\cdot 2^{2(\dcrit+\log\log^*(v_\alpha))} - 2^{(\dcrit+\log\log^*(v_\alpha))} - 24\right]\\
	\geq\ & \first{\dcrit+\log\log^*(v_\alpha)}{\beta}\\
	& + 2\kappa\left[25\cdot 2^{2\dcrit+\log\log^*(v_\alpha)} - 2^{(\dcrit+\log\log^*(v_\alpha))} - 24\cdot 2^{\log\log^*(v_\alpha)}\right]\\
	=\ & \first{\dcrit+\log\log^*(v_\alpha)}{\beta} + 2\klog{v_\alpha}\left[25\cdot 2^{2\dcrit} - 2^{\dcrit} - 24\right]\\
	=\ & \first{\dcrit+\log\log^*(v_\alpha)}{\beta} + \first{\dcrit}{\alpha}\\
\end{align*}

To calculate the number of rounds that have elapsed in $\beta$'s execution, we take the number of rounds that have elapsed in $\alpha$'s execution and subtract the delay between the starting times. However, by assumption, the delay is bounded above by $\first{\dcrit}{\alpha}$. Combining this fact with the above inequality, we get that the number of rounds that have elapsed in $\beta$'s execution is
\begin{align*}
	&	\first{\dcrit+\log\log^*(v_\alpha)}{\alpha} - \delay\\
	\geq\ & [\first{\dcrit+\log\log^*(v_\alpha)}{\beta} + \first{\dcrit}{\alpha}] - [\first{\dcrit}{\alpha}]\\
	=\ & \first{\dcrit+\log\log^*(v_\alpha)}{\beta}
\end{align*}
In particular, we have shown that $\beta$'s execution has already completed its first $\dcrit + \log\log^*(v_\alpha)$ phases, which implies that $\beta$'s execution has already reached the start of epoch $\ep{\dcrit + \log\log^*(v_\alpha)+1}{\beta}$.
\end{proof}

We now proceed to prove that rendezvous occurs, and consider various cases based on the value of $\mu$. We first consider the case where $\mu$ is very large, i.e., the value of $\log^*{v_\alpha}$ is much larger than the value of $\log^*{v_\beta}$. We assume that $\mu > 4D$, and consider two subcases depending on which agent reaches its $(\dcrit+1)$'th epoch first.

In the first subcase, we assume that $\alpha$'s execution reaches the start of epoch $\ep{\dcrit+1}{\alpha}$ at the same time or before agent $\beta$'s execution reaches the start of epoch $\ep{\dcrit+1}{\beta}$. We prove that $\alpha$'s waiting period in stage 0 at the start of epoch $\ep{\dcrit+1}{\alpha}$ is very large, i.e., large enough for agent $\beta$ to execute the first $\dcrit+2$ epochs in its execution. Therefore, while $\alpha$ is at its starting node in this waiting period, agent $\beta$ will execute its entire epoch $\ep{\dcrit+2}{\beta}$, and this includes a visit to all nodes in the $D$-neighbourhood of $v_\beta$, which guarantees rendezvous.

\begin{lemma}\label{alphabigger:largemu1}
Suppose that $\mu > 4D$, and suppose that $\alpha$'s execution reaches the start of epoch $\ep{\dcrit+1}{\alpha}$ at the same time or before agent $\beta$'s execution reaches the start of epoch $\ep{\dcrit+1}{\beta}$. Then, rendezvous occurs by the end of agent $\alpha$'s execution of epoch $\ep{\dcrit+1}{\alpha}$.
\end{lemma}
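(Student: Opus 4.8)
The plan is to cast $\alpha$ as the \emph{sitter} and $\beta$ as the \emph{searcher}: during the long Stage~0 waiting period at the start of epoch $\ep{\dcrit+1}{\alpha}$, agent $\alpha$ stays put at $v_\alpha$, and I will show that within this window agent $\beta$ necessarily performs a search of its entire $D$-neighbourhood (which contains $v_\alpha$, since $d(v_\alpha,v_\beta)=D$), forcing rendezvous at $v_\alpha$. Fix a global clock with $\alpha$ starting in round $0$, so $\beta$ starts in round $\delay$. Let $T_\alpha=\first{\dcrit}{\alpha}$ and $T_\beta=\delay+\first{\dcrit}{\beta}$ be the global rounds in which $\alpha$ and $\beta$ start epochs $\ep{\dcrit+1}{\alpha}$ and $\ep{\dcrit+1}{\beta}$ respectively. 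The first phase of $\ep{\dcrit+1}{\alpha}$ is $\Pg{2^{\dcrit}}{\alpha}$, for which $d=\dcrit+1$, so by the algorithm description $\alpha$ sits idle at $v_\alpha$ throughout the global interval $[T_\alpha,T_\alpha+L)$, where $L=36\cdot 2^{\dcrit+1}\klog{v_\alpha}=72\cdot 2^{\dcrit}\mu\klog{v_\beta}$.

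First I would pin down when $\beta$ is guaranteed to run a search of radius at least $D$. By the hypothesis of this lemma $T_\alpha\le T_\beta$, so at round $T_\alpha$ agent $\beta$ has not yet entered epoch $\ep{\dcrit+1}{\beta}$; moreover, using the standing assumption $\delay<\first{\dcrit}{\alpha}$ in force throughout this subsection, I get $T_\beta-T_\alpha=\delay+\first{\dcrit}{\beta}-\first{\dcrit}{\alpha}<\first{\dcrit}{\beta}$. Thus $\beta$ enters epoch $\ep{\dcrit+1}{\beta}$ at a global round $T_\beta\in[T_\alpha,\,T_\alpha+\first{\dcrit}{\beta})$. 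At that moment $\beta$ is at the very start of epoch $\ep{\dcrit+1}{\beta}$ (in particular, not yet at the final block of Stage~2 of its final phase), so \Cref{lemma:willdoDsearch} with $j=\dcrit+1$ applies and guarantees that $\beta$ visits every node of the $D$-neighbourhood of $v_\beta$ within the next $44\klog{v_\beta}2^{\dcrit+1}$ rounds, hence by a global round strictly less than $T_\alpha+\first{\dcrit}{\beta}+44\klog{v_\beta}2^{\dcrit+1}$.

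It then remains to verify the single arithmetic inequality $\first{\dcrit}{\beta}+44\klog{v_\beta}2^{\dcrit+1}\le L$, which places the whole search inside $\alpha$'s waiting interval $[T_\alpha,T_\alpha+L)$ and so yields rendezvous at $v_\alpha$ during epoch $\ep{\dcrit+1}{\alpha}$. Using \Cref{fact:epochsum}, $\first{\dcrit}{\beta}<50\cdot 2^{2\dcrit}\klog{v_\beta}$, so after dividing through by $2^{\dcrit}\klog{v_\beta}$ it suffices to prove $72\mu\ge 50\cdot 2^{\dcrit}+88$. Since $2^{\dcrit}=2^{1+\floor*{\log_2 D}}\le 2D$, the right-hand side is at most $100D+88$, while the hypothesis $\mu>4D$ gives $72\mu>288D$; and $288D\ge 100D+88$ holds for every $D\ge 1$. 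This closes the argument.

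The main obstacle I anticipate is precisely this constant bookkeeping. The naive reading of the high-level intuition — that $\alpha$'s Stage~0 is long enough for $\beta$ to run its \emph{entire} first $\dcrit+2$ epochs — would require $L\ge\first{\dcrit+2}{\beta}=\Theta(2^{2\dcrit}\klog{v_\beta})$, which forces $\mu=\Omega(D)$ with a constant around $22$ and is incompatible with the hypothesis $\mu>4D$. The saving idea, and the delicate point, is that I only need $\beta$ to complete \emph{one} $D$-search after reaching $\ep{\dcrit+1}{\beta}$; charging that single search to the cheap bound $44\klog{v_\beta}2^{\dcrit+1}=\Theta(2^{\dcrit}\klog{v_\beta})$ from \Cref{lemma:willdoDsearch} (rather than to a full epoch, which costs $\Theta(2^{2\dcrit}\klog{v_\beta})$) is exactly what lets the weaker threshold $\mu>4D$ suffice. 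Care is also needed to invoke \Cref{lemma:willdoDsearch} only once $\beta$ has actually reached an epoch of index at least $\dcrit+1$, where \Cref{fact:ggeqD} ensures every phase searches with radius at least $D$.
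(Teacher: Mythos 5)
Your proof is correct, and it follows the paper's basic plan: $\alpha$ is the sitter during Stage~0 of the first phase of epoch $\ep{\dcrit+1}{\alpha}$, $\beta$ is the searcher, and the hypothesis $\mu>4D\geq 2^{\dcrit+1}$ is exactly what converts the $\log^*$-ratio into the factor of $2^{\dcrit}$ needed for the waiting period $36\cdot 2^{\dcrit+1}\klog{v_\alpha}=36\mu\cdot 2^{\dcrit+1}\klog{v_\beta}$ to dominate a quantity of order $2^{2\dcrit}\klog{v_\beta}$. Where you genuinely differ is in the accounting of when $\beta$'s $D$-search completes, counted from the round $T_\alpha$ at which $\alpha$ starts the epoch. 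The paper charges the time for $\beta$ to finish its entire first $\dcrit+1$ epochs (so that all of $\ep{\dcrit+1}{\beta}$, including a Stage-1 search, lands in the window), bounding this by $\first{\dcrit+1}{\beta}$; you charge only the time for $\beta$ to \emph{reach} the start of $\ep{\dcrit+1}{\beta}$ --- which requires the extra observation $T_\beta-T_\alpha<\first{\dcrit}{\beta}$, obtained from the standing assumption $\delay<\first{\dcrit}{\alpha}$ together with the hypothesis $T_\alpha\le T_\beta$ --- plus a single search charged at $44\klog{v_\beta}2^{\dcrit+1}$ via \Cref{lemma:willdoDsearch}. This is not merely cosmetic. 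By \Cref{fact:epochsum}, $\first{\dcrit+1}{\beta}\le 50\klog{v_\beta}2^{2\dcrit+2}$ (the paper's intermediate bound of $25\klog{v_\beta}2^{2\dcrit+2}$ drops the leading factor of $2$), and $50\cdot 2^{2\dcrit+2}=200\cdot 2^{2\dcrit}$ is \emph{not} dominated by the waiting period's $36\mu\cdot 2^{\dcrit+1}>144\cdot 2^{2\dcrit}$, so the paper's comparison as literally written does not close; your bound $\first{\dcrit}{\beta}+44\klog{v_\beta}2^{\dcrit+1}<\bigl(50\cdot 2^{\dcrit}+88\bigr)2^{\dcrit}\klog{v_\beta}\le(100D+88)\,2^{\dcrit}\klog{v_\beta}<288D\cdot 2^{\dcrit}\klog{v_\beta}<72\mu\cdot 2^{\dcrit}\klog{v_\beta}$ does close under $\mu>4D$. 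One small calibration: your closing remark correctly explains why charging $\beta$'s first $\dcrit+2$ epochs cannot work at this threshold, but that criticism applies to the informal preamble, not to the paper's actual proof, which only ever tries to charge the first $\dcrit+1$.
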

\begin{proof}
We will use the fact that $\mu > 4D = 4\cdot 2^{\log_2 D} \geq 4\cdot 2^{\floor*{\log_2 D}} = 2^{2+\floor*{\log_2 D}} = 2^{\dcrit+1}$.

Consider the round $\tau$ in which $\alpha$'s execution is at the start of epoch $\ep{\dcrit+1}{\alpha}$. From round $\tau$, the number of rounds until $\beta$ reaches the start of epoch $\ep{\dcrit+2}{\beta}$ is bounded above by $\first{\dcrit+1}{\beta} = 2\klog{v_\beta}[25\cdot2^{2\dcrit+2} - 2^{\dcrit+1} - 24] \leq 25\klog{v_\beta} 2^{2\dcrit+2}$. This includes an execution of the entire epoch $\ep{\dcrit+1}{\beta}$ (since $\beta$ has not yet started epoch $\ep{\dcrit+1}{\beta}$ at round $\tau$), so $\beta$ searches its $D$-neighbourhood at least once during these rounds. 

Agent $\alpha$ starts its execution of $\ep{\dcrit+1}{\alpha}$ with a waiting period of length $36\klog{v_\alpha}2^{\dcrit+1} = 36\mu\klog{v_\beta}2^{\dcrit+1} > 36(2^{\dcrit+1})\klog{v_\beta}2^{\dcrit+1} = 36\klog{v_\beta}2^{2\dcrit+2}$ rounds.

Since  $36\klog{v_\beta}2^{2\dcrit+2} > 25\klog{v_\beta}2^{2\dcrit+2}$, it follows that while $\beta$ visits all nodes in the $D$-neighbourhood of $v_\beta$, agent $\alpha$ will be located at $v_\alpha$ during stage 0 of the first phase of epoch $\ep{\dcrit+1}{\alpha}$, so rendezvous will occur by the end of epoch $\ep{\dcrit+1}{\alpha}$ in $\alpha$'s execution.
\end{proof}

In the second subcase, we assume that $\beta$'s execution reaches the start of epoch $\ep{\dcrit+1}{\beta}$ before agent $\alpha$'s execution reaches the start of epoch $\ep{\dcrit+1}{\alpha}$. We consider the round in which agent $\alpha$ starts epoch $\ep{\dcrit+2}{\alpha}$, and prove that agent $\beta$'s execution is in an epoch $\ep{j}{\beta}$ with $\dcrit+1 \leq j < \dcrit+\log\mu+1$. This means that $\beta$ has at least passed the critical epoch (so its searches will include the entire $D$-neighbourhood of $v_\beta$), but it also means that $\beta$ is not `too far ahead' in its execution, i.e., the lengths of its phases are at most a $\mu/2$ factor larger than $\alpha$'s. However, the fact that $\log^*(v_\alpha)$ is a $\mu$ factor larger than $\log^*(v_\beta)$ makes up for this difference, i.e., $\alpha$'s waiting period in stage 0 at the start of epoch $\ep{\dcrit+2}{\alpha}$ is large enough to include a search by $\beta$, which guarantees rendezvous.

\begin{lemma}\label{alphabigger:largemu2}
Suppose that $\mu > 4D$, and suppose that $\beta$'s execution reaches the start of epoch $\ep{\dcrit+1}{\beta}$ before agent $\alpha$'s execution reaches the start of epoch $\ep{\dcrit+1}{\alpha}$. Then, rendezvous occurs by the end of agent $\alpha$'s execution of epoch $\ep{\dcrit+2}{\alpha}$.
\end{lemma}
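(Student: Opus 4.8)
The plan is to make $\alpha$ the ``sitter'' and $\beta$ the ``searcher'': although $\alpha$'s phases are longer, $\beta$'s execution has overtaken $\alpha$'s, and I claim that at the start of a fresh epoch of $\alpha$ its Stage-0 waiting period (inflated by the large factor $\klog{v_\alpha}$) is long enough to contain a complete $D$-search by $\beta$. Concretely, I would consider the round $\tau$ at which $\alpha$ begins epoch $\ep{\dcrit+2}{\alpha}$, i.e.\ the round at which $\alpha$ has executed exactly $\first{\dcrit+1}{\alpha}$ rounds. Since $\alpha$ starts no later than $\beta$ and $\delay \geq 0$, the number of rounds elapsed in $\beta$'s execution at round $\tau$ is $\first{\dcrit+1}{\alpha}-\delay \leq \first{\dcrit+1}{\alpha}$. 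The whole argument then reduces to pinning down which epoch $\beta$ is executing at round $\tau$.

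For the lower bound on $\beta$'s epoch index I would use the hypothesis directly: $\tau$ occurs strictly after $\alpha$ reaches the start of $\ep{\dcrit+1}{\alpha}$, and by assumption $\beta$ reaches the start of $\ep{\dcrit+1}{\beta}$ even earlier, so at round $\tau$ agent $\beta$ is in some epoch $\ep{j}{\beta}$ with $j \geq \dcrit+1$. This is exactly what makes every search $\beta$ performs from $\tau$ onward cover its whole $D$-neighbourhood, in particular the node $v_\alpha$, by \Cref{fact:ggeqD}. For the upper bound, I would show that at $\tau$ agent $\beta$ has not yet reached the final block of the final phase of epoch $\ep{\dcrit+\log\mu}{\beta}$. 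Using \Cref{fact:epochsum} together with $\log^*(v_\alpha)=\mu\log^*(v_\beta)$, the leading term of $\first{\dcrit+\log\mu}{\beta}$ is larger than that of $\first{\dcrit+1}{\alpha}$ by a factor $\tfrac{\mu}{4}$, so $\first{\dcrit+1}{\alpha} < \first{\dcrit+\log\mu}{\beta}$ with a comfortable margin once $\mu>4$ (and $\mu>4D$ supplies far more room than this). Following the style of \Cref{betabigger:mubigger4}, I would actually bound $\beta$'s elapsed rounds by $\first{\dcrit+\log\mu}{\beta}-\tfrac{1}{2}|\ep{\dcrit+\log\mu}{\beta}|$ and invoke \Cref{fact:finalphasefraction} (valid since $\dcrit+\log\mu \geq 3$) to conclude that $\beta$ is before the final phase, hence before the final block of the final phase, of epoch $\ep{\dcrit+\log\mu}{\beta}$.

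With both bounds in hand, \Cref{lemma:willdoDsearch} applied with index $j=\dcrit+\log\mu$ guarantees that $\beta$ visits every node of its $D$-neighbourhood within $44\klog{v_\beta}2^{\dcrit+\log\mu}$ rounds after $\tau$; since $2^{\dcrit+\log\mu}=\mu 2^{\dcrit}$ and $\mu\log^*(v_\beta)=\log^*(v_\alpha)$, this bound equals $44\klog{v_\alpha}2^{\dcrit}$. On the other hand, at $\tau$ agent $\alpha$ begins Stage~0 of the first phase of $\ep{\dcrit+2}{\alpha}$, a waiting period of $36\cdot 2^{\dcrit+2}\klog{v_\alpha}=144\klog{v_\alpha}2^{\dcrit}$ rounds. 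Since $144>44$, agent $\alpha$ stays fixed at $v_\alpha$ throughout $\beta$'s search, so $\beta$ reaches $v_\alpha$ (a node at distance $D$ from $v_\beta$) while $\alpha$ is sitting there, and rendezvous occurs during epoch $\ep{\dcrit+2}{\alpha}$ of $\alpha$'s execution.

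I expect the main obstacle to be the upper bound on $\beta$'s epoch index, i.e.\ verifying $\first{\dcrit+1}{\alpha} < \first{\dcrit+\log\mu}{\beta}-\tfrac{1}{2}|\ep{\dcrit+\log\mu}{\beta}|$ cleanly: this requires carefully carrying the subleading $-2^{\dcrit+\log\mu}$ and $-24$ terms in the expression from \Cref{fact:epochsum}, handling the integrality issue hidden in using $\log\mu$ as an epoch index, and confirming that the margin coming from $\mu>4D$ genuinely absorbs both the half-epoch subtraction and the final-block length. Once that round count is established, the remainder is just the direct numerical comparison $44 < 144$ of the two explicit search/wait bounds.
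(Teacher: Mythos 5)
Your plan coincides with the paper's at the top level: fix $\tau$ to be the round in which $\alpha$ begins epoch $\ep{\dcrit+2}{\alpha}$, bound $\beta$'s elapsed rounds by $\first{\dcrit+1}{\alpha}$, show $\beta$ is then in an epoch of index at least $\dcrit+1$ (your argument for this is the paper's), invoke the $D$-search lemma for $\beta$, and compare against $\alpha$'s Stage-0 wait of $36\klog{v_\alpha}2^{\dcrit+2}=144\klog{v_\alpha}2^{\dcrit}$ rounds. The gap is in your upper-bound step. You propose to localize $\beta$ before the final phase of epoch $\ep{\dcrit+\log\mu}{\beta}$ by proving $\first{\dcrit+1}{\alpha} < \first{\dcrit+\log\mu}{\beta} - \tfrac12|\ep{\dcrit+\log\mu}{\beta}|$, but this inequality is false in part of the regime the lemma covers. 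In units of $\klog{v_\beta}$, \Cref{fact:epochsum,fact:epochlength} give a right-hand side of $\tfrac{125}{4}\mu^2 2^{2\dcrit} - \tfrac32\mu 2^{\dcrit} - 48$ against $\first{\dcrit+1}{\alpha} = 200\mu 2^{2\dcrit} - 4\mu 2^{\dcrit} - 48\mu$, so the leading terms force $\tfrac{125}{4}\mu > 200$, i.e.\ $\mu$ greater than about $6.4$. Since $D=1$ is permitted here, the hypothesis $\mu>4D$ only guarantees $\mu>4$: for $D=1$, $\mu=5$ one gets $\first{2}{\alpha}=3720\,\klog{v_\beta}$ versus $\first{1+\log 5}{\beta}-\tfrac12|\ep{1+\log 5}{\beta}| = 4932-1870 = 3062$ in the same units, and your inequality fails. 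Subtracting half an epoch (via \Cref{fact:finalphasefraction}) is a far stronger localization than you need, and it costs more than the $\mu/4$ factor you have to spend.

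The repair is exactly what the paper does: prove only $\first{\dcrit+1}{\alpha} \le \first{\dcrit+\log\mu}{\beta}$ (which holds for all $\mu>4$, the leading coefficients being $200\mu$ versus $50\mu^2$), conclude that $\beta$ has not yet \emph{started} epoch $\ep{\dcrit+\log\mu+1}{\beta}$ --- hence a fortiori has not reached the final block of its final phase --- and apply the search bound with index $\dcrit+\log\mu+1$. The search then takes at most $44\klog{v_\beta}2^{\dcrit+\log\mu+1}=88\klog{v_\alpha}2^{\dcrit}$ rounds, still safely below the $144\klog{v_\alpha}2^{\dcrit}$-round wait. (Alternatively, keep index $\dcrit+\log\mu$ but subtract only one block length $4\cdot 2^{\dcrit+\log\mu}$ rather than half the epoch; that version does follow from $\mu>4$.) The non-integrality of $\log\mu$ that you flag is present in the paper's own proof and is harmless because $\first{M}{x}$ is increasing in $M$; the real defect is the half-epoch subtraction.
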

\begin{proof}
We will use the fact that $1 < \frac{1}{2}\log\mu$, which is true since $\mu > 4D \geq 4$.

Consider the round in which agent $\alpha$ starts epoch $\ep{\dcrit+2}{\alpha}$. The number of rounds that occur in $\alpha$'s execution up to that round is calculated by
\begin{align*}
	\first{\dcrit+1}{\alpha} & = 2\klog{v_\alpha}\left[25\cdot 2^{2(\dcrit+1)} - 2^{(\dcrit+1)} - 24\right]\\
	& = 2\mu\klog{v_\beta}\left[25\cdot 2^{2(\dcrit+1)} - 2^{(\dcrit+1)} - 24\right]\\
	& = 2\klog{v_\beta}\left[25\cdot 2^{2(\dcrit+1+\frac{1}{2}\log\mu)} - 2^{(\dcrit+1+\frac{1}{2}\log\mu)}\sqrt{\mu} - 24\mu\right]\\
	& \leq 2\klog{v_\beta}\left[25\cdot 2^{2(\dcrit+1+\frac{1}{2}\log\mu)} - 2^{(\dcrit+1+\frac{1}{2}\log\mu)} - 24\right]\\
	& < 2\klog{v_\beta}\left[25\cdot 2^{2(\dcrit+\log\mu)} - 2^{(\dcrit+\log\mu)} - 24\right]\hspace{3mm}\textrm{\color{gray}(as $1 < \frac{1}{2}\log\mu$)}\\
	& \leq \first{\dcrit+\log\mu}{\beta}
\end{align*}

However, the number of rounds that have elapsed in $\alpha$'s execution is an upper bound on the number of rounds that have elapsed in $\beta$'s execution. In particular, this means that at most $\first{\dcrit+\log\mu}{\beta}$ rounds have elapsed in $\beta$'s execution, so it has not yet started epoch $\ep{\dcrit+\log\mu+1}{\beta}$ when $\alpha$ starts epoch $\ep{\dcrit+2}{\alpha}$. Namely, $\beta$ is executing some epoch $\ep{j'}{\beta}$ with $j' < \dcrit+\log\mu+1$. Moreover, we can conclude that $j' \geq \dcrit+1$ since we assumed that $\beta$ started epoch $\ep{\dcrit+1}{\beta}$ before $\alpha$ started epoch $\ep{\dcrit+1}{\alpha}$. By \Cref{lemma:willdoDsearch}, agent $\beta$ visits all nodes in the $D$-neighbourhood of $v_\beta$ within the next $44\klog{v_\beta}2^{\dcrit+{\log\mu}+1}$ rounds.

Finally, note that $\alpha$ starts its execution of epoch $\ep{\dcrit+2}{\alpha}$ with a waiting period consisting of $36\klog{v_\alpha}2^{\dcrit+2} = 72\klog{v_\alpha}2^{\dcrit+1} = 72\mu\klog{v_\beta}2^{\dcrit+1} = 72\klog{v_\beta}2^{\dcrit+{\log\mu}+1}$ rounds. Therefore, while $\beta$ visits the $D$-neighbourhood of $v_\beta$, agent $\alpha$ will be located at $v_\alpha$ during stage 0 of the first phase of epoch $\ep{\dcrit+2}{\alpha}$, so rendezvous will occur by the end of epoch $\ep{\dcrit+2}{\alpha}$ in $\alpha$'s execution.
\end{proof}

Next, we consider the case where $\mu$ is strictly larger than 4, but not larger than $8D$. We consider the round in which agent $\alpha$ starts epoch $\ep{\dcrit+\log\log^*(v_\alpha)+2}{\alpha}$. First, using the Crossover Lemma, we can prove that $\beta$'s execution has at least passed the critical epoch $\ep{\dcrit}{\beta}$, so all future searches will include the entire $D$-neighbourhood of $v_\beta$. We also prove that agent $\beta$'s execution is fewer than $\log\mu$ phases `ahead' of $\alpha$'s execution, so the lengths of its phases are at most a $\mu/2$ factor larger than $\alpha$'s. However, the fact that $\log^*(v_\alpha)$ is a $\mu$ factor larger than $\log^*(v_\beta)$ makes up for this difference, i.e., $\alpha$'s waiting period in stage 0 at the start of epoch $\ep{\dcrit+\log\log^*(v_\alpha)+2}{\alpha}$ is large enough to include a search by $\beta$, which guarantees rendezvous.

\begin{lemma}\label{alphabigger:mubigger4}
Suppose that $4 < \mu \leq 4D$. Rendezvous occurs by the end of agent $\alpha$'s execution of epoch $\ep{\dcrit+\log\log^*(v_\alpha)+2}{\alpha}$.
\end{lemma}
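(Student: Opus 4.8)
The plan is to reuse the architecture of \Cref{alphabigger:largemu2} almost verbatim, but to anchor the analysis at the later epoch $\ep{\dcrit+\log\log^*(v_\alpha)+2}{\alpha}$ and to invoke the Crossover Lemma to certify that $\beta$ has already passed its critical epoch. Concretely, I would consider the round in which $\alpha$ starts $\ep{\dcrit+\log\log^*(v_\alpha)+2}{\alpha}$. By \Cref{lem:crossover}, already by the time $\alpha$ starts the \emph{earlier} epoch $\ep{\dcrit+\log\log^*(v_\alpha)+1}{\alpha}$ agent $\beta$ has reached the start of $\ep{\dcrit+\log\log^*(v_\alpha)+1}{\beta}$, which is at or beyond $\ep{\dcrit+1}{\beta}$; hence at the (later) moment under consideration, $\beta$ is in some epoch $\ep{j'}{\beta}$ with $j' \geq \dcrit+1$, so by \Cref{fact:ggeqD} every search $\beta$ performs from now on sweeps a neighbourhood of radius $g\geq D$, thus covering the whole $D$-neighbourhood of $v_\beta$. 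This makes $\beta$ the ``searcher'' and $\alpha$ the ``sitter''.

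Next I would bound how far ahead $\beta$ is. The number of rounds elapsed in $\alpha$'s execution at this moment is $\first{\dcrit+\log\log^*(v_\alpha)+1}{\alpha}$, which by \Cref{fact:epochsum} equals $2\klog{v_\alpha}\left[25\cdot 2^{2M}-2^{M}-24\right]$ with $M=\dcrit+\log\log^*(v_\alpha)+1$. Substituting $\klog{v_\alpha}=\mu\klog{v_\beta}$ and absorbing $\mu=2^{\log\mu}$ into the epoch index exactly as in \Cref{alphabigger:largemu2} (replacing $2M$ by $2(M+\tfrac12\log\mu)$, and using $\mu 2^{M}\geq 2^{M+\frac12\log\mu}$ together with $24\mu\geq 24$), this is at most $\first{M+\frac12\log\mu}{\beta}$. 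Since $\mu>4$ gives $1<\tfrac12\log\mu$, we have $M+\tfrac12\log\mu<\dcrit+\log\log^*(v_\alpha)+\log\mu$, so monotonicity of $\first{\cdot}{\beta}$ yields $\first{M}{\alpha}<\first{\dcrit+\log\log^*(v_\alpha)+\log\mu}{\beta}$. Because $\alpha$ starts no later than $\beta$, the elapsed count for $\beta$ is at most that for $\alpha$, so fewer than $\first{\dcrit+\log\log^*(v_\alpha)+\log\mu}{\beta}$ rounds have elapsed in $\beta$'s execution; thus $\beta$ is in an epoch $\ep{j'}{\beta}$ with $j'<\dcrit+\log\log^*(v_\alpha)+\log\mu+1$, and \emph{a fortiori} $\beta$ has not yet reached the final block of Stage 2 of the final phase of $\ep{\dcrit+\log\log^*(v_\alpha)+\log\mu+1}{\beta}$.

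Finally I would compare the searcher's deadline against the sitter's idle window. Applying \Cref{willdosearch} together with \Cref{fact:ggeqD} using the index $j=\dcrit+\log\log^*(v_\alpha)+\log\mu+1$ (valid because $\beta$'s epoch index is at most $j-1$ and at least $\dcrit+1$, exactly as \Cref{alphabigger:largemu2} applies the bound with the incremented index), agent $\beta$ visits every node of its $D$-neighbourhood within $44\klog{v_\beta}2^{\dcrit+\log\log^*(v_\alpha)+\log\mu+1}=88\mu\klog{v_\beta}2^{\dcrit}\log^*(v_\alpha)$ rounds, using $2^{\log\log^*(v_\alpha)}=\log^*(v_\alpha)$ and $2^{\log\mu}=\mu$. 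On the other hand, $\alpha$ opens $\ep{\dcrit+\log\log^*(v_\alpha)+2}{\alpha}$ with a Stage~0 waiting period of $36\klog{v_\alpha}2^{\dcrit+\log\log^*(v_\alpha)+2}=144\mu\klog{v_\beta}2^{\dcrit}\log^*(v_\alpha)$ rounds (again using $\klog{v_\alpha}=\mu\klog{v_\beta}$ and $2^{\log\log^*(v_\alpha)}=\log^*(v_\alpha)$). Since $144>88$, agent $\alpha$ remains idle at $v_\alpha$ throughout the interval in which $\beta$ sweeps $v_\alpha$, so rendezvous occurs by the end of $\alpha$'s execution of $\ep{\dcrit+\log\log^*(v_\alpha)+2}{\alpha}$, as claimed.

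I expect the main obstacle to be the bookkeeping that converts a strict inequality on elapsed rounds into the statement ``$\beta$ is before the relevant final block'', since $\dcrit+\log\log^*(v_\alpha)+\log\mu$ need not be an integer and $\beta$ could in principle sit inside the final block of its current epoch. I would resolve this the same way as \Cref{alphabigger:largemu2}: invoke the search bound with the incremented index $j+1$, so that $\beta$ is strictly below epoch $\ep{j}{\beta}$ and hence automatically before that epoch's final block, letting the constant-factor slack ($144$ versus $88$) absorb both this loss and any rounding of $\log\log^*(v_\alpha)$ and $\log\mu$ to integer epoch indices. Note that the hypothesis $\mu>4$ enters only through $1<\tfrac12\log\mu$, whereas the upper bound $\mu\le 4D$ is irrelevant to correctness and is retained only so that the running-time estimate can be controlled in the subsequent combining lemma.
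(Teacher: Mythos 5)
Your proposal is correct and follows essentially the same route as the paper's proof: anchor at the round $\alpha$ starts $\ep{\dcrit+\log\log^*(v_\alpha)+2}{\alpha}$, bound $\alpha$'s elapsed rounds by $\first{\dcrit+\log\log^*(v_\alpha)+\log\mu}{\beta}$ by absorbing $\mu$ into the epoch index via $1<\tfrac12\log\mu$, invoke the Crossover Lemma to place $\beta$ at or beyond epoch $\dcrit+1$, and compare $\beta$'s search deadline $44\klog{v_\beta}2^{\dcrit+\log\log^*(v_\alpha)+\log\mu+1}$ against $\alpha$'s Stage~0 wait of $72\klog{v_\beta}2^{\dcrit+\log\log^*(v_\alpha)+\log\mu+1}$ (your $88$ vs.\ $144$ is the same comparison scaled by $2$). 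The rounding caveat you flag at the end is present in the paper's argument as well and is handled there with the same slack you describe.
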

\begin{proof}
Let $M=\dcrit+\log\log^*(v_\alpha)$. Consider the round in which agent $\alpha$ starts epoch $\ep{M+2}{\alpha}$. The number of rounds that occur in $\alpha$'s execution up to that round is calculated by
\begin{align*}
	\first{M+1}{\alpha} & = 2\klog{v_\alpha}\left[25\cdot 2^{2(M+1)} - 2^{(M+1)} - 24\right]\\
	& = 2\mu\klog{v_\beta}\left[25\cdot 2^{2(M+1)} - 2^{(M+1)} - 24\right]\\\\
	& = 2\klog{v_\beta}\left[25\cdot 2^{2(M+1+\frac{1}{2}\log\mu)} - 2^{(M+1+\frac{1}{2}\log\mu)}\sqrt{\mu} - 24\mu\right]\\
	& \leq 2\klog{v_\beta}\left[25\cdot 2^{2(M+1+\frac{1}{2}\log\mu)} - 2^{(M+1+\frac{1}{2}\log\mu)} - 24\right]\\
	& < 2\klog{v_\beta}\left[25\cdot 2^{2(M+\log\mu)} - 2^{(M+\log\mu)} - 24\right]\hspace{3mm}\textrm{\color{gray}(since $1 < \frac{1}{2}\log\mu$)}\\
	& \leq \first{M+\log\mu}{\beta}
\end{align*}

However, the number of rounds that have elapsed in $\alpha$'s execution is an upper bound on the number of rounds that have elapsed in $\beta$'s execution. In particular, this means that at most $\first{M+\log\mu}{\beta}$ rounds have elapsed in $\beta$'s execution, so it has not yet started epoch $\ep{M+\log\mu+1}{\beta}$ when $\alpha$ starts epoch $\ep{M+2}{\alpha}$. Namely, $\beta$ is executing some epoch $\ep{j}{\beta}$ with $j < M+\log\mu+1$. Moreover, by \Cref{lem:crossover}, when agent $\alpha$ started epoch $\ep{M+1}{\alpha}$, agent $\beta$ had already started epoch $\ep{\dcrit+1}{\beta}$, so we can conclude that $j \geq \dcrit+1$. By \Cref{lemma:willdoDsearch}, agent $\beta$ visits all nodes in the $D$-neighbourhood of $v_\beta$ within the next $44\klog{v_\beta}2^{M+{\log\mu}+1}$ rounds.

Finally, note that $\alpha$ starts its execution of epoch $\ep{M+2}{\alpha}$ with a waiting period consisting of $36\klog{v_\alpha}2^{M+2} = 72\klog{v_\alpha}2^{M+1} = 72\mu\klog{v_\beta}2^{M+1} = 72\klog{v_\beta}2^{M+{\log\mu}+1}$ rounds. Therefore, while $\beta$ visits the $D$-neighbourhood of $v_\beta$, agent $\alpha$ will be located at $v_\alpha$ during stage 0 of the first phase of epoch $\ep{M+2}{\alpha}$, so rendezvous will occur by the end of epoch $\ep{M+2}{\alpha} = \ep{\dcrit+\log\log^*(v_\alpha)+2}{\alpha}$ in $\alpha$'s execution.
\end{proof}

Next, we consider the case where $\mu$ is strictly larger than 3, but not larger than $4$. We consider the round in which agent $\alpha$ starts epoch $\ep{\dcrit+\log\log^*(v_\alpha)+1}{\alpha}$. First, using the Crossover Lemma, we can prove that $\beta$'s execution has at least passed the critical epoch $\ep{\dcrit}{\beta}$, so all future searches will include the entire $D$-neighbourhood of $v_\beta$. We also prove that agent $\beta$'s execution is fewer than $2$ phases `ahead' of $\alpha$'s execution, so the lengths of its phases are at most a factor of 2 larger than $\alpha$'s. However, the fact that $\log^*(v_\alpha)$ is at least a factor of 3 larger than $\log^*(v_\beta)$ makes up for this difference, i.e., $\alpha$'s waiting period in stage 0 at the start of epoch $\ep{\dcrit+\log\log^*(v_\alpha)+1}{\alpha}$ is large enough to include a search by $\beta$, which guarantees rendezvous.

\begin{lemma}\label{alphabigger:mubigger3}
Suppose that $3 < \mu \leq 4$. Rendezvous occurs by the end of agent $\alpha$'s execution of epoch $\ep{\dcrit+\log\log^*(v_\alpha)+1}{\alpha}$.
\end{lemma}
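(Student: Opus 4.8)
The plan is to mirror the structure of \Cref{alphabigger:mubigger4}, but to aim one epoch earlier and, crucially, to avoid the step ``$1 < \frac12\log\mu$'' (which is false here, since $\mu \le 4$). Set $M = \dcrit + \log\log^*(v_\alpha)$ and consider the round in which agent $\alpha$ starts epoch $\ep{M+1}{\alpha}$; note $M+1 = \dcrit + \log\log^*(v_\alpha)+1$, so this is exactly the epoch named in the statement. At this round $\alpha$'s elapsed time is $\first{M}{\alpha}$, and since $\alpha$ starts no later than $\beta$, the number of rounds elapsed in $\beta$'s execution is at most $\first{M}{\alpha}$.

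First I would pin down where $\beta$ is. By the Crossover Lemma (\Cref{lem:crossover}), at this round $\beta$ has already reached the start of epoch $\ep{M+1}{\beta}$, so $\beta$ has completed at least its first $M$ epochs; in particular it has passed its critical epoch. For the matching upper bound I would compute the gap $\first{M+1}{\beta} - \first{M}{\alpha}$. Using $\klog{v_\alpha} = \mu\klog{v_\beta}$ together with the closed form of \Cref{fact:epochsum}, this gap equals $2\klog{v_\beta}\bigl[25(4-\mu)2^{2M} + (\mu-2)2^M + 24(\mu-1)\bigr]$. The key observation is that the coefficient of $\mu$ in this bracket, namely $-25\cdot 2^{2M}+2^M+24$, is negative for $M\ge 1$ (which always holds since $M\ge\dcrit\ge 1$), so the bracket is decreasing in $\mu$ and is therefore minimized over $3<\mu\le 4$ at $\mu=4$, where it equals $2^{M+1}+72$. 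Since $\klog{v_\beta}\ge 2$, the gap is at least $4\cdot 2^{M+1}+288 > 4\cdot 2^{M+1}$; that is, $\first{M}{\alpha} < \first{M+1}{\beta} - 4\cdot 2^{M+1}$.

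Because each block of every phase of $\ep{M+1}{\beta}$ has length $4\cdot 2^{M+1}$, the inequality above shows that $\beta$ has not yet reached the final block of Stage 2 of the final phase of $\ep{M+1}{\beta}$; combined with the Crossover bound this forces $\beta$ to be in epoch $\ep{M+1}{\beta}$. As $M+1 \ge \dcrit+1$, \Cref{lemma:willdoDsearch} applies with $j=M+1$ and guarantees that $\beta$ visits every node of the $D$-neighbourhood of $v_\beta$ (which contains $v_\alpha$) within the next $44\klog{v_\beta}2^{M+1}$ rounds. Meanwhile, $\alpha$ opens epoch $\ep{M+1}{\alpha}$ with a Stage 0 waiting period of $36\klog{v_\alpha}2^{M+1} = 36\mu\klog{v_\beta}2^{M+1} > 108\klog{v_\beta}2^{M+1}$ rounds (using $\mu>3$), which comfortably exceeds $44\klog{v_\beta}2^{M+1}$. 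Hence $\alpha$ sits at $v_\alpha$ throughout $\beta$'s search, so $\beta$ reaches $v_\alpha$ while $\alpha$ is present, and rendezvous occurs before $\alpha$ finishes epoch $\ep{M+1}{\alpha}$, as desired.

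The main obstacle is the slack estimate in the second paragraph: unlike the $\mu>4$ case, here the dominant $2^{2M}$ contribution to the gap vanishes as $\mu \to 4$, so one cannot afford a lossy, term-by-term bound. The whole argument hinges on checking that the surviving lower-order terms still beat one full block length $4\cdot 2^{M+1}$, which is exactly why I would exploit the monotonicity of the bracket in $\mu$, evaluate it at its worst case $\mu=4$, and then invoke $\kappa\ge 2$ and $\log^*(v_\beta)\ge 1$. Everything else (the role of $\alpha$ as the ``sitter'' and $\beta$ as the ``searcher'', and the comparison of waiting versus search lengths) is routine given this estimate and the $\mu>3$ hypothesis.
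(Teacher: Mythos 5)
Your proof is correct and follows essentially the same route as the paper's: both fix $M=\dcrit+\log\log^*(v_\alpha)$, use the Crossover Lemma to place $\beta$ past its critical epoch, bound how far ahead $\beta$ can be when $\alpha$ opens epoch $\ep{M+1}{\alpha}$, and then compare $\beta$'s search time from \Cref{lemma:willdoDsearch} against $\alpha$'s Stage-0 wait of $36\mu\klog{v_\beta}2^{M+1}$ rounds. The only difference is that your monotonicity-in-$\mu$ estimate of the gap is tighter, pinning $\beta$ inside $\ep{M+1}{\beta}$ and yielding the comparison $44$ vs.\ $108$ at scale $2^{M+1}$, whereas the paper uses a lossier term-by-term bound that only places $\beta$ before the final block of $\ep{M+2}{\beta}$ and compares $44$ vs.\ $54$ at scale $2^{M+2}$; both margins suffice.
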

\begin{proof}
Let $M=\dcrit+\log\log^*(v_\alpha)$, and consider the round in which agent $\alpha$ starts epoch $\ep{M+1}{\alpha}$. The number of rounds that have elapsed in $\alpha$'s execution is 
\begin{align*}
	& \first{M}{\alpha}\\
	 =\ & 2\klog{v_\alpha}\left[25\cdot 2^{2M} - 2^{M} - 24\right]\\
	=\ & 2\kappa\left[\mu\log^*(v_\beta)\right]\left[25\cdot 2^{2M} - 2^{M} - 24\right]\\
	=\ & 2\kappa\left[\log^*(v_\beta) + (\mu-1)\log^*(v_\beta)\right]\left[25\cdot 2^{2M} - 2^{M} - 24\right]\\
	=\ & 2\kappa\log^*(v_\beta)\left[25\cdot 2^{2M} - 2^{M} - 24\right]
	+ 2\kappa(\mu-1)\log^*(v_\beta)\left[25\cdot 2^{2M} - 2^{M} - 24\right]\\
	=\ &\first{M}{\beta} + 2\kappa(\mu-1)\log^*(v_\beta)\left[25\cdot 2^{2M} - 2^{M} - 24\right]\\
	<\ & \first{M}{\beta}+ 2\kappa(4)\log^*(v_\beta)\left[25\cdot 2^{2M} - 2^{M} - 24\right]\\
	=\ & \first{M}{\beta}+ 2\kappa\log^*(v_\beta)\left[50\cdot 2^{2M+1} - 2^{M+2} - 96\right]\\
	<\ & \first{M}{\beta}+ 2\kappa\log^*(v_\beta)\left[50\cdot 2^{2M+1} - 2^{M}\right]\\
	=\ & \first{M}{\beta}+ 2\kappa\log^*(v_\beta)\left[50\cdot 2^{2(M+1)-1} - 2^{(M+1)-1}\right]\\
	=\ & \first{M}{\beta}+ 2\left[|\ep{M+1}{\beta}| - 25\kappa\log^*(v_\beta)2^{2M+1}\right]\hspace{3mm}\textrm{\color{gray}(using \Cref{fact:epochlength} with $j=M+1$)}\\
	=\ & \first{M}{\beta}+ |\ep{M+1}{\beta}| + |\ep{M+1}{\beta}| - 25\kappa\log^*(v_\beta)2^{2M+2}\\
	<\ &\first{M}{\beta}+ |\ep{M+1}{\beta}| + |\ep{M+2}{\beta}| - 4\cdot2^{M+2}\\
\end{align*}

First, by \Cref{lem:crossover}, note that agent $\beta$'s execution is currently in an epoch $\ep{M'}{\beta}$ with $M' \geq \dcrit+1$. Next, since $\beta$ starts at the same time as $\alpha$ or later, the number of rounds that have elapsed in $\alpha$'s execution gives an upper bound on the number of rounds that have elapsed in $\beta$'s execution. In particular, using the above bound, agent $\beta$ has executed strictly fewer than $\first{M}{\beta}+ |\ep{M+1}{\beta}| + |\ep{M+2}{\beta}| - 4\cdot2^{M+2}$ rounds. But this means that $\beta$'s execution has not yet reached the last block of stage 2 of the final phase of epoch $\ep{M+2}{\beta}$, since the length of that block is $4\cdot2^{M+2}$ rounds. By \Cref{willdosearch}, agent $\beta$ visits all nodes in the $D$-neighbourhood of $v_\beta$ within the next $44\klog{v_\beta}2^{M+2}$ rounds. However, note that $\alpha$ starts its execution of epoch $\ep{M+1}{\alpha}$ with a waiting period consisting of $36\klog{v_\alpha}2^{M+1} = 18\mu\klog{v_\beta}2^{M+2} \geq 18(3)\klog{v_\beta}2^{M+2} = 54\klog{v_\beta}2^{M+2}$ rounds. Therefore, while $\beta$ visits all nodes in the $D$-neighbourhood of $v_\beta$, agent $\alpha$ will be located at $v_\alpha$ during stage 0 of the first phase of epoch $\ep{M+1}{\alpha}$, so rendezvous will occur by the end of epoch $\ep{M+1}{\alpha} = \ep{\dcrit+\log\log^*(v_\alpha)+1}{\alpha}$ in $\alpha$'s execution.
\end{proof}

Next, we consider the remaining case, i.e., where $\mu$ is at most $3$. At a high level, we consider the round in which agent $\alpha$ starts epoch $\ep{\dcrit+\log\log^*(v_\alpha)+1}{\alpha}$, and prove that $\beta$'s execution has not yet reached the end of epoch $\ep{\dcrit+\log\log^*(v_\alpha)+1}{\beta}$. In the case where $\mu \geq 11/9$, agent $\alpha$'s waiting period is long enough to contain a search by $\beta$, so rendezvous occurs with $\beta$ as `searcher' and $\alpha$ as `sitter'. However, when $\mu < 11/9$, this is not true. So, in this case, we instead look at when $\beta$ starts epoch $\ep{\dcrit+\log\log^*(v_\alpha)+2}{\alpha}$, and notice that its waiting period at the start of this epoch is long enough to include the next search by $\alpha$, so rendezvous occurs with $\alpha$ as `searcher' and $\beta$ as `sitter'.  This idea is formalized in the following result.

\begin{lemma}\label{alphabigger:musmaller3}
Suppose that $1 < \mu \leq 3$. Rendezvous occurs by the end of agent $\alpha$'s execution of epoch $\ep{\dcrit+\log\log^*(v_\alpha)+1}{\alpha}$.
\end{lemma}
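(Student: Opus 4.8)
The plan is to transcribe the argument of \Cref{betabigger:musmaller3} with the two agents' roles exchanged and with the pivotal index shifted from $\dcrit$ to $M := \dcrit+\log\log^*(v_\alpha)$, the ``crossover'' index furnished by \Cref{lem:crossover}. I will use throughout that $\log^*(v_\alpha)=\mu\log^*(v_\beta)$ with $1<\mu\le 3$, and, crucially, that $\log^*(v_\alpha)$ and $\log^*(v_\beta)$ are integers, so $\log^*(v_\alpha)\ge\log^*(v_\beta)+1$. The relevant running-time index for \Cref{cor:runningtime} is $M+1=\dcrit+\log\log^*(v_\alpha)+1$, and the eventual bottleneck bound $O(D^2(\log^*\ell)^3)$ comes precisely from $2^{2(M+1)}=4\cdot 2^{2\dcrit}(\log^*(v_\alpha))^2$.

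First I would fix a common starting point. Consider the round in which $\alpha$ starts epoch $\ep{M+1}{\alpha}$; there $\alpha$ has run for exactly $\first{M}{\alpha}$ rounds, and since $\alpha$ starts no later than $\beta$, agent $\beta$ has run for at most $\first{M}{\alpha}$ rounds. Expanding both quantities with \Cref{fact:epochsum} and bounding $\log^*(v_\alpha)=\mu\log^*(v_\beta)\le 3\log^*(v_\beta)$, a short computation yields $\first{M}{\alpha}<\first{M+1}{\beta}-4\cdot 2^{M+1}$, so $\beta$ has not yet reached the final block of Stage 2 of the final phase of $\ep{M+1}{\beta}$. Combining this with the lower bound from \Cref{lem:crossover} (that $\beta$ has already started $\ep{M+1}{\beta}$) places $\beta$ inside epoch $\ep{M+1}{\beta}$, before its final phase's last block, so \Cref{lemma:willdoDsearch} with $j=M+1$ guarantees that $\beta$ visits the whole $D$-neighbourhood of $v_\beta$ within the next $44\klog{v_\beta}2^{M+1}$ rounds. \textbf{If $\mu\ge 11/9$}, then $\alpha$'s Stage 0 at the start of $\ep{M+1}{\alpha}$ is a waiting period of $36\klog{v_\alpha}2^{M+1}=36\mu\klog{v_\beta}2^{M+1}\ge 44\klog{v_\beta}2^{M+1}$ rounds; hence $\alpha$ sits at $v_\alpha$ at least as long as $\beta$'s impending search, and as $v_\alpha$ lies in $\beta$'s $D$-neighbourhood, $\beta$ reaches $v_\alpha$ while $\alpha$ is still there, giving rendezvous inside $\ep{M+1}{\alpha}$.

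\textbf{If $1<\mu<11/9$}, agent $\alpha$'s waiting period is too short, so I reverse the roles, making $\alpha$ the searcher and $\beta$ the sitter, and look instead at the round in which $\beta$ starts epoch $\ep{M+2}{\beta}$. The main-body bound shows $\beta$ has not finished $\ep{M+1}{\beta}$ when $\alpha$ starts $\ep{M+1}{\alpha}$, so $\beta$ reaches the start of $\ep{M+2}{\beta}$ strictly later; therefore at that moment $\alpha$ has already started $\ep{M+1}{\alpha}$ and is in an epoch of index $\ge M+1\ge\dcrit+1$. The crux is to show $\alpha$ has not reached the final block of the final phase of $\ep{M+1}{\alpha}$: there $\alpha$'s elapsed rounds equal $\first{M+1}{\beta}+\delay<\first{M+1}{\beta}+\first{\dcrit}{\alpha}$, while the integer gap $\log^*(v_\alpha)\ge\log^*(v_\beta)+1$ fed into \Cref{fact:epochsum} gives $\first{M+1}{\alpha}-\first{M+1}{\beta}\ge 2\kappa[25\cdot 2^{2M+2}-2^{M+1}-24]$. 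Since $2^{2M}=2^{2\dcrit}(\log^*(v_\alpha))^2$ dominates the $O(\log^*(v_\alpha)\,2^{2\dcrit})$ magnitude of $\first{\dcrit}{\alpha}$, this gap comfortably exceeds $\first{\dcrit}{\alpha}+4\cdot 2^{M+1}$, so $\alpha$'s elapsed count is below $\first{M+1}{\alpha}-4\cdot 2^{M+1}$. Then \Cref{lemma:willdoDsearch} with $j=M+1$ makes $\alpha$ search its whole $D$-neighbourhood within $44\klog{v_\alpha}2^{M+1}$ rounds, all inside $\ep{M+1}{\alpha}$, while $\beta$'s Stage 0 at the start of $\ep{M+2}{\beta}$ keeps it at $v_\beta$ for $36\klog{v_\beta}2^{M+2}=72\klog{v_\beta}2^{M+1}>44\mu\klog{v_\beta}2^{M+1}=44\klog{v_\alpha}2^{M+1}$ rounds (as $\mu<11/9<18/11$). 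Hence $\alpha$ reaches $v_\beta$ while $\beta$ sits there, again giving rendezvous inside $\ep{M+1}{\alpha}$.

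The main obstacle will be the bookkeeping in the $\mu<11/9$ case. Unlike \Cref{betabigger:musmaller3}, where the slower agent started first and so its elapsed-round count directly bounded the faster agent's, here the searcher $\alpha$ is the earlier starter, so $\delay$ enters the estimate and must be absorbed; the delay bound $\delay<\first{\dcrit}{\alpha}$ together with the $(\log^*(v_\alpha))^2$ factor hidden in $2^{2M}$ is exactly what makes this possible, and verifying that domination (and that $\alpha$'s guaranteed search lands inside $\ep{M+1}{\alpha}$ rather than spilling into $\ep{M+2}{\alpha}$) is the only delicate point. The remaining case analysis is a routine translation of the $\mu\ge 11/9$ versus $\mu<11/9$ dichotomy already carried out in \Cref{betabigger:musmaller3}.
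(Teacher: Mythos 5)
Your proposal is correct and follows essentially the same route as the paper's proof: the same pivot index $M=\dcrit+\log\log^*(v_\alpha)$, the same use of the Crossover Lemma plus the elapsed-round comparison to place $\beta$ inside $\ep{M+1}{\beta}$ before its last block, the same split at $\mu=11/9$ with the same role reversal (looking at the start of $\ep{M+2}{\beta}$ and absorbing $\delay<\first{\dcrit}{\alpha}$ via the integer gap $\log^*(v_\alpha)\ge\log^*(v_\beta)+1$), and the same final waiting-versus-search comparisons. The only cosmetic difference is that you invoke the Crossover Lemma at full strength together with \Cref{lemma:willdoDsearch}, where the paper uses its weaker corollary plus \Cref{willdosearch}; the arithmetic you leave as "a short computation" and "comfortably exceeds" checks out.
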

\begin{proof}
Let $M=\dcrit+\log\log^*(v_\alpha)$, and consider the round in which agent $\alpha$ starts epoch $\ep{M+1}{\alpha}$. The number of rounds that have elapsed in $\alpha$'s execution is 
\begin{align*}
	&\first{M}{\alpha}\\
	 =\ & 2\klog{v_\alpha}\left[25\cdot 2^{2M} - 2^{M} - 24\right]\\
	=\ & 2\kappa\left[\mu\log^*(v_\beta)\right]\left[25\cdot 2^{2M} - 2^{M} - 24\right]\\
	=\ & 2\kappa\left[\log^*(v_\beta) + (\mu-1)\log^*(v_\beta)\right]\left[25\cdot 2^{2M} - 2^{M} - 24\right]\\
	=\ & 2\kappa\log^*(v_\beta)\left[25\cdot 2^{2M} - 2^{M} - 24\right] + 2\kappa(\mu-1)\log^*(v_\beta)\left[25\cdot 2^{2M} - 2^{M} - 24\right]\\
	=\ &\first{M}{\beta}+ 2\kappa(\mu-1)\log^*(v_\beta)\left[25\cdot 2^{2M} - 2^{M} - 24\right]\\
	\leq\ & \first{M}{\beta}+ 2\kappa(2)\log^*(v_\beta)\left[25\cdot 2^{2M} - 2^{M} - 24\right]\\
	=\ &\first{M}{\beta}+ \kappa\log^*(v_\beta)\left[50\cdot 2^{2M+1} - 2^{M+2} - 96\right]\\
	<\ & \first{M}{\beta}+ \kappa\log^*(v_\beta)\left[50\cdot 2^{2M+1} - 2^{M}\right]\\
	=\ & \first{M}{\beta}+ \kappa\log^*(v_\beta)\left[50\cdot 2^{2(M+1)-1} - 2^{(M+1)-1}\right]\\
	=\ &\first{M}{\beta}+ |\ep{M+1}{\beta}| - 25\kappa\log^*(v_\beta)2^{2M+1}\hspace{3mm}\textrm{\color{gray}(using \Cref{fact:epochlength} with $j=M+1$)}\\
	<\ & \first{M}{\beta}+ |\ep{M+1}{\beta}| - 4\cdot2^{M+1}\\
\end{align*}

First, by \Cref{lem:crossover}, note that agent $\beta$'s execution is currently in an epoch $\ep{M'}{\beta}$ with $M' \geq \dcrit+1$. Next, since $\beta$ starts at the same time as $\alpha$ or later, the number of rounds that have elapsed in $\alpha$'s execution gives an upper bound on the number of rounds that have elapsed in $\beta$'s execution. In particular, agent $\beta$ has executed strictly fewer than $\first{M}{\beta}+ |\ep{M+1}{\beta}| - 4\cdot2^{M+1}$ rounds. But this means that $\beta$'s execution has not yet reached the final block of stage 2 of the final phase of epoch $\ep{M+1}{\beta}$, since the length of that block is $4\cdot2^{M+1}$ rounds. 

We consider two subcases based on the value of $\mu$.

\begin{itemize}
	\item Case (a): $\mu \geq 11/9$\\
	By \Cref{willdosearch}, agent $\beta$ visits all nodes in the $D$-neighbourhood of $v_\beta$ within the next $44\klog{v_\beta}2^{M+1}$ rounds. However, note that $\alpha$ starts its execution of epoch $\ep{M+1}{\alpha}$ with a waiting period consisting of $36\klog{v_\alpha}2^{M+1} = 36\mu\klog{v_\beta}2^{M+1} \geq 36(11/9)\klog{v_\beta}2^{M+1} = 44\klog{v_\beta}2^{M+1}$ rounds. Therefore, while $\beta$ visits all nodes in the $D$-neighbourhood of $v_\beta$, agent $\alpha$ will be located at $v_\alpha$ during stage 0 of the first phase of epoch $\ep{M+1}{\alpha}$, so rendezvous will occur by the end of epoch $\ep{M+1}{\alpha}$ in $\alpha$'s execution.
	\item Case (b): $1 < \mu < 11/9$\\
	Consider the round in which $\beta$ starts epoch $\ep{M+2}{\beta}$. First, note that agent $\alpha$'s execution has passed the start of epoch $\ep{M+1}{\alpha}$ at this time: indeed, we showed above that, when agent $\alpha$ started epoch $\ep{M+1}{\alpha}$, agent $\beta$'s execution had not yet reached the end of epoch $\ep{M+1}{\beta}$. Moreover, since $M \geq \dcrit$, it follows that agent $\alpha$'s execution has passed the start of epoch $\ep{\dcrit+1}{\alpha}$ at this time.
	
	Next, we set out to show that, at this time, agent $\alpha$'s execution has not yet reached the final block of the final phase of epoch $\ep{M+1}{\alpha}$. The number of rounds that elapse in agent $\beta$'s execution up until it starts epoch $\ep{M+2}{\beta}$ is
	\begin{align*}
		& \first{M+1}{\beta}\\
		=\ & 2\klog{v_\beta}\left[25\cdot 2^{2M+2} - 2^{M+1} - 24\right]\\
		\leq\ &  2\kappa(\log^*(v_\alpha)-1)\left[25\cdot 2^{2M+2} - 2^{M+1} - 24\right]\\
		=\ & 2\klog{v_\alpha}\left[25\cdot 2^{2M+2} - 2^{M+1} - 24\right] - 2\kappa\left[25\cdot 2^{2M+2} - 2^{M+1} - 24\right]\\
		=\ & \first{M+1}{\alpha} - 2\kappa\left[25\cdot 2^{2M+2} - 2^{M+1} - 24\right]\\
		=\ & \first{M+1}{\alpha} - 2\kappa\left[25\cdot 2^{2\dcrit+2\log\log^*(v_\alpha)+2} - 2^{\dcrit+\log\log^*(v_\alpha)+1} - 24\right]\\
		\leq\ & \first{M+1}{\alpha}\\
		& - 2\kappa\left[25\cdot 2^{2\dcrit+2\log\log^*(v_\alpha)+2} - 2^{\dcrit+\log\log^*(v_\alpha)+1} - 24\cdot 2^{\log\log^*(v_\alpha)}\right]\\
		=\ & \first{M+1}{\alpha} - 2\klog{v_\alpha}\left[25\cdot 2^{2\dcrit+\log\log^*(v_\alpha)+2} - 2^{\dcrit+1} - 24\right]\\
		\leq\ & \first{M+1}{\alpha} - 2\klog{v_\alpha}\left[25\cdot 2^{2\dcrit+2} - 2^{\dcrit+1} - 24\right]\\
		=\ & \first{M+1}{\alpha} - 2\klog{v_\alpha}\left[25\cdot 2^{2\dcrit} - 2^{\dcrit} - 24\right]\\ & - 2\klog{v_\alpha}\left[75\cdot 2^{2\dcrit} - 2^{\dcrit}\right]\\
		=\ & \first{M+1}{\alpha} - \first{\dcrit}{\alpha} - 2\klog{v_\alpha}\left[75\cdot 2^{2\dcrit} - 2^{\dcrit}\right]\\
		\leq\ & \first{M+1}{\alpha} - \delay - 2\klog{v_\alpha}\left[75\cdot 2^{2\dcrit} - 2^{\dcrit}\right]\\
		\leq \ & \first{M+1}{\alpha} - \delay  - 2\klog{v_\alpha}\left[74\cdot 2^{2\dcrit}\right]\\
		= \ & \first{M+1}{\alpha} - \delay - 2\kappa\left[74\cdot 2^{2\dcrit+\log\log^*(v_\alpha)}\right]\\
		\leq \ & \first{M+1}{\alpha} - \delay  - 2\kappa\left[74\cdot 2^{\dcrit+\log\log^*(v_\alpha)+1}\right]\\
		= \ & \first{M+1}{\alpha} - \delay  - 2\kappa\left[74\cdot 2^{M+1}\right]\\
		< \ & \first{M+1}{\alpha} - \delay  - 4\cdot 2^{M+1}
	\end{align*}
	
	The number of rounds that have elapsed in $\alpha$'s execution can be obtained by adding $\delay$ to the number of rounds that have elapsed in $\beta$'s execution. So, by the above bound, we have shown that fewer than $\first{M+1}{\alpha}  - 4\cdot 2^{M+1}$ have elapsed in $\alpha$'s execution at the moment when $\beta$ starts epoch $\ep{M+2}{\beta}$. Since each block in every phase of epoch $\ep{M+1}{\alpha}$ has length exactly $4\cdot 2^{M+1}$, the bound $\first{M+1}{\alpha} - 4\cdot 2^{M+1}$ tells us that $\alpha$'s execution has not yet reached the final block of stage 2 of the final phase of epoch $\ep{M+1}{\alpha}$. 
	
	From the above, we have proved that agent $\alpha$'s execution has passed the start of epoch $\ep{\dcrit+1}{\alpha}$, but has not yet reached the final block of the final phase of epoch $\ep{M+1}{\alpha}$. Thus, \Cref{lemma:willdoDsearch} tells us that agent $\alpha$ visits all nodes in the $D$-neighbourhood of $v_\alpha$ within the next $44\klog{v_\alpha}2^{M+1} = 44\mu\klog{v_\beta}2^{M+1}  < 44(11/9)\klog{v_\beta}2^{M+1} < 55\klog{v_\beta}2^{M+1}$ rounds. This will occur by the end of $\alpha$'s execution of epoch $\ep{M+1}{\alpha}$, since the final block of stage 2 of the final phase of epoch $\ep{M+1}{\alpha}$ is a searching block.
	
	However, agent $\beta$ starts its execution of epoch $\ep{M+2}{\beta}$ with a waiting period consisting of $36\klog{v_\beta}2^{M+2} = 72\klog{v_\beta}2^{M+1} > 55\klog{v_\beta}2^{M+1}$ rounds. Therefore, $\beta$ will be waiting at its starting node $v_\beta$ when $\alpha$ visits all nodes in the $D$-neighbourhood of $v_\alpha$, so rendezvous will occur by the end of agent $\alpha$'s execution of epoch $\ep{M+1}{\alpha}$.
\end{itemize}

In both cases, rendezvous occurs by the end of epoch $\ep{M+1}{\alpha} = \ep{\dcrit+\log\log^*(v_\alpha)+1}{\alpha}$ in $\alpha$'s execution. 
\end{proof}

Finally, to cover all possible values of $\mu$, we combine \Cref{alphabigger:largemu1,alphabigger:largemu2,alphabigger:mubigger4,alphabigger:mubigger3,alphabigger:musmaller3} to conclude the analysis in the case where $\log^*(v_\alpha) > \log^*(v_\beta)$.

\begin{lemma}\label{logstaralphabigger}
Rendezvous occurs within $O(D^2(\log^*(v_\alpha))^3)$ rounds in agent $\alpha$'s execution.
\end{lemma}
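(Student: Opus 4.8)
The plan is to assemble the five preceding lemmas, which together cover every value of $\mu>1$, and to convert each per-case bound on the terminating epoch into a running-time bound via \Cref{cor:runningtime}. Since \Cref{verylargedelay} already disposes of the very-large-delay case, I would work under the standing hypothesis $\delay<\first{\dcrit}{\alpha}$, which is exactly the assumption under which \Cref{alphabigger:largemu1,alphabigger:largemu2,alphabigger:mubigger4,alphabigger:mubigger3,alphabigger:musmaller3} were proved. The argument is thus a routine case split on $\mu$ with no new combinatorial content; all that remains is arithmetic on the epoch indices. Recall \Cref{cor:runningtime} bounds the running time by $O\!\left(2^{2M}\log^*(v_\alpha)\right)$ when rendezvous occurs by the end of epoch $\ep{M}{\alpha}$.

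For the arithmetic I would record two identities. First, $2^{2\dcrit}=2^{2+2\floor*{\log_2 D}}\le 4\left(2^{\log_2 D}\right)^2=4D^2$, so $2^{2(\dcrit+c)}=O(D^2)$ for any constant $c$. Second, since $\log$ is base $2$ and $\log^*(v_\alpha)\ge 2$ in this section (so $\log\log^*(v_\alpha)$ is well-defined and positive), we have $2^{2\log\log^*(v_\alpha)}=\left(\log^*(v_\alpha)\right)^2$. Applying \Cref{cor:runningtime} case by case: for $\mu>4D$, \Cref{alphabigger:largemu1,alphabigger:largemu2} give $M\le\dcrit+2$, so $2^{2M}=O(D^2)$ and the running time is $O\!\left(D^2\log^*(v_\alpha)\right)$; for $4<\mu\le 4D$, \Cref{alphabigger:mubigger4} gives $M=\dcrit+\log\log^*(v_\alpha)+2$, so $2^{2M}=2^{2\dcrit+4}\left(\log^*(v_\alpha)\right)^2=O\!\left(D^2(\log^*(v_\alpha))^2\right)$ and the running time is $O\!\left(D^2(\log^*(v_\alpha))^3\right)$; and for $\mu\le 4$, both \Cref{alphabigger:mubigger3} and \Cref{alphabigger:musmaller3} give $M=\dcrit+\log\log^*(v_\alpha)+1$, so again $2^{2M}=O\!\left(D^2(\log^*(v_\alpha))^2\right)$ and the running time is $O\!\left(D^2(\log^*(v_\alpha))^3\right)$.

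The dominant cases are $4<\mu\le 4D$ and $\mu\le 4$, each yielding $O\!\left(D^2(\log^*(v_\alpha))^3\right)$, which is the claimed bound; the $\mu>4D$ case yields the strictly smaller $O\!\left(D^2\log^*(v_\alpha)\right)$ and is therefore subsumed. I expect no genuine obstacle: the one point requiring care is that the additive $\log\log^*(v_\alpha)$ in the terminating epoch index exponentiates to only a $\left(\log^*(v_\alpha)\right)^2$ factor inside $2^{2M}$, which, multiplied by the extra $\log^*(v_\alpha)$ supplied by \Cref{cor:runningtime}, produces exactly the cube $\left(\log^*(v_\alpha)\right)^3$. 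I would also verify that the dependence on $D$ never exceeds $D^2$: the factor $\mu\le 4D$ enters the $4<\mu\le 4D$ case only through the $\log\mu$ bound used internally in \Cref{alphabigger:mubigger4}, but the terminating epoch there is stated in terms of $\log\log^*(v_\alpha)$ rather than $\log\mu$, so no additional power of $D$ appears, and $2^{2M}=O(D^2(\log^*(v_\alpha))^2)$ as claimed. Taking the maximum over all cases gives $O\!\left(D^2(\log^*(v_\alpha))^3\right)$, completing the proof.
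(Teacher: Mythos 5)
Your proposal is correct and follows essentially the same route as the paper: a case split on $\mu$, with the lemmas for $\mu > 4D$ giving rendezvous by epoch $\ep{\dcrit+2}{\alpha}$ (hence $O(D^2\log^*(v_\alpha))$ rounds) and the remaining lemmas giving rendezvous by epoch $\ep{\dcrit+\log\log^*(v_\alpha)+2}{\alpha}$ (hence $O(2^{2\dcrit+2\log\log^*(v_\alpha)+4}\log^*(v_\alpha)) = O(D^2(\log^*(v_\alpha))^3)$ rounds via \Cref{cor:runningtime}). Your finer sub-split of the small-$\mu$ cases and the explicit observation that the additive $\log\log^*(v_\alpha)$ in the epoch index exponentiates to only a $(\log^*(v_\alpha))^2$ factor are exactly the arithmetic the paper performs implicitly.
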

\begin{proof}
When $\mu > 4D$, \Cref{alphabigger:largemu1,alphabigger:largemu2} imply that rendezvous occurs by the end of epoch $\ep{\dcrit+2}{\alpha}$ in $\alpha$'s execution. By \Cref{cor:runningtime}, the total number of rounds that elapse in $\alpha$'s execution of the algorithm is $O(2^{2\dcrit+4}\log^*(v_\alpha)) = O(D^2\log^*(v_\alpha))$.

For all other values of $\mu$, \Cref{alphabigger:mubigger4,alphabigger:mubigger3,alphabigger:musmaller3} imply that rendezvous occurs by the end of agent $\alpha$'s execution of epoch $\ep{\dcrit+\log\log^*(v_\alpha)+2}{\alpha}$. By \Cref{cor:runningtime}, the total number of rounds that elapse in $\alpha$'s execution of the algorithm is 
$$O(2^{2\dcrit+2\log\log^*(v_\alpha)+4}\log^*(v_\alpha)) = O(D^2(\log^*(v_\alpha))^3).$$
\end{proof}

The analysis of the algorithm is now complete. We combine \Cref{verylargedelay,lem:analysissamelogstar,logstarbetabigger,logstaralphabigger} to prove that rendezvous is guaranteed to occur within $O(D^2(\log^*\ell)^3)$ rounds, where $\ell$ is the larger label of the agents' starting nodes.
\begin{theorem}\label{totalanalysis}
Algorithm $\rvunknownDalg$ solves rendezvous in $O(D^2(\log^*\ell)^3)$ rounds on lines with arbitrary node labelings when two agents start at arbitrary positions at unknown distance $D$, and when the delay between the rounds in which they start executing the algorithm is arbitrary.
\end{theorem}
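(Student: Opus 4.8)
The plan is to assemble the theorem from the case analyses already carried out in the preceding lemmas, since the bulk of the technical work is done. Throughout I keep the standing conventions: $\alpha$ is the earlier agent and rendezvous time is measured from the start of $\alpha$'s execution; moreover $\ell = \max\{v_\alpha,v_\beta\}$, and since $\log^*$ is monotone nondecreasing we have $\log^*\ell = \max\{\log^*(v_\alpha),\log^*(v_\beta)\}$. The argument is therefore a synthesis step: partition the space of instances, invoke the matching lemma on each part, and verify that every resulting time bound collapses into $O(D^2(\log^*\ell)^3)$.

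First I would peel off the very-large-delay regime. If $\delay \geq \first{\dcrit}{\alpha}$, then \Cref{verylargedelay} immediately guarantees rendezvous within $O(D^2\log^*(v_\alpha))$ rounds, which lies in $O(D^2(\log^*\ell)^3)$ because $\log^*(v_\alpha) \leq \log^*\ell$ and $\log^*\ell \geq 1$. This disposes of all large delays uniformly, so in the remaining cases I may assume $\delay < \first{\dcrit}{\alpha}$, which is exactly the standing hypothesis under which the remaining lemmas were established.

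Next I would split on the trichotomy of $\log^*(v_\alpha)$ versus $\log^*(v_\beta)$, which is visibly exhaustive. In the equal case, \Cref{lem:analysissamelogstar} gives a bound of $O(D^2\log^*\ell)$. When $\log^*(v_\alpha) < \log^*(v_\beta)$, here $\log^*(v_\beta) = \log^*\ell$, so the bound $O(\min\{D^2(\log^*(v_\beta))^2, D^4\log^*(v_\beta)\})$ from \Cref{logstarbetabigger} is at most $O(D^2(\log^*\ell)^2)$. When $\log^*(v_\alpha) > \log^*(v_\beta)$, here $\log^*(v_\alpha) = \log^*\ell$, so \Cref{logstaralphabigger} yields $O(D^2(\log^*\ell)^3)$. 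Each of these three bounds, together with the very-large-delay bound, is contained in $O(D^2(\log^*\ell)^3)$ since $\log^*\ell \geq 1$, and taking the maximum over the cases gives the claimed running time.

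There is no genuine obstacle remaining; the only points needing care are bookkeeping. I must confirm that the four invoked lemmas jointly exhaust all instances — the very-large-delay lemma covers $\delay \geq \first{\dcrit}{\alpha}$ regardless of labels, while the trichotomy on $\log^*(v_\alpha),\log^*(v_\beta)$ covers the complementary delay range — and I must check that the dominant term is the cubic one arising in the $\log^*(v_\alpha) > \log^*(v_\beta)$ branch, which is precisely the $O(D^2(\log^*\ell)^3)$ asserted in the statement. Once these are verified, the theorem follows by combining \Cref{verylargedelay,lem:analysissamelogstar,logstarbetabigger,logstaralphabigger}.
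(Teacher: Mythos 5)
Your proposal is correct and follows essentially the same route as the paper's own proof: first dispose of the regime $\delay \geq \first{\dcrit}{\alpha}$ via \Cref{verylargedelay}, then split on the trichotomy of $\log^*(v_\alpha)$ versus $\log^*(v_\beta)$ and invoke \Cref{lem:analysissamelogstar,logstarbetabigger,logstaralphabigger} respectively, checking that each resulting bound is absorbed into $O(D^2(\log^*\ell)^3)$. The only cosmetic difference is that you identify $\log^*\ell$ with $\max\{\log^*(v_\alpha),\log^*(v_\beta)\}$ up front, whereas the paper handles the comparison of $v_\alpha$ and $v_\beta$ case by case; the content is identical.
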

\begin{proof}
In the case where the delay between the agents is very large, i.e., agent $\beta$ does not start its execution before agent $\alpha$ reaches the end of epoch $\ep{\dcrit}{\alpha}$ in its execution, \Cref{verylargedelay} tells us that rendezvous occurs within $O(D^2\log^*(v_\alpha))$ rounds in $\alpha$'s execution of the algorithm. When $v_\alpha \geq v_\beta$, this gives the upper bound $O(D^2\log^*\ell)$. Otherwise, when $v_\beta > v_\alpha$, we get an upper bound of $O(D^2\log^*(v_\alpha)) \subseteq O(D^2\log^*(v_\beta)) = O(D^2\log^*\ell)$.

In all other cases, we assume that agent $\beta$ starts its execution before agent $\alpha$ reaches the end of epoch $\ep{\dcrit}{\alpha}$ in its execution. In the case where $\log^*(v_\alpha) = \log^*(v_\beta)$, \Cref{lem:analysissamelogstar} tells us that rendezvous occurs within $O(D^2\log^*\ell)$ rounds. When $\log^*(v_\alpha) < \log^*(v_\beta)$, \Cref{logstarbetabigger} tells us that rendezvous occurs within $$O(\min\{D^2(\log^*(v_\beta))^2,D^4\log^*(v_\beta)\}) = O(D^2(\log^*\ell)^2)$$ 
rounds. When $\log^*(v_\alpha) > \log^*(v_\beta)$, \Cref{logstaralphabigger} tells us that rendezvous occurs within $O(D^2(\log^*(v_\alpha))^3) = O(D^2(\log^*\ell)^3)$ rounds.
\end{proof}

\section{The Line Colouring Algorithm}\label{sec:linecolouring}

As announced in \Cref{CValg} in \Cref{sec:tools}, in this section we design and analyze a deterministic distributed algorithm \EarlyStopCV that, in the $\mathcal{LOCAL}$ model, properly 3-colours any infinite labeled line such that the execution of this algorithm at any node $x$ with initial label $\mathrm{ID}_x$ terminates in time $O(\log^*(\mathrm{ID}_x))$. It is assumed that the initial node labels form a proper colouring of the line. We describe the algorithm in \Cref{sec:linecolouringalg,sec:stringops,sec:linecolouringpseudo}. In \Cref{sec:linecolouringcomplexity}, we prove that each node $x$ with initial label $\mathrm{ID}_x$ terminates within time $\log^*(\mathrm{ID}_x) + 59$. In \Cref{sec:linecolouringcorrectness}, we prove that the algorithm is correct. In \Cref{sec:linecolouringextend}, we discuss how the algorithm can be applied to cycles and finite paths.

\subsection{Algorithm description}\label{sec:linecolouringalg}
Our algorithm builds upon a solution to a problem posed in Homework Exercises 1 from the Principles of Distributed Computing course at ETH Z\"{u}rich \cite{DisCo}.

At a very high level, we want to execute the Cole and Vishkin algorithm \cite{CV}, but with some modifications. We introduce four special colours, denoted by $\lmin$, $\lmax$, $\mathit{Pdone}$, $\mathit{Cdone}$, that are defined in such a way that they are not equal to any colour that could be chosen using the Cole and Vishkin strategy (e.g., they could be defined as negative integers). These colours will be chosen by a node in certain situations where it needs to choose a colour that is guaranteed to be different from its neighbour's, but using the Cole and Vishkin strategy might not work. Another significant modification is that each node will actually choose two colours: its ``final colour'' that it will output upon terminating the algorithm, but also an intermediate ``Phase 1 colour''. These roughly correspond to the two parts of the Cole and Vishkin strategy: in Phase 1, each node picks a colour that is guaranteed to be different from its neighbours' chosen colour (but selects it from a `large' range of colours), and then in Phase 2, each node picks a final colour from the set $\{0,1,2\}$. However, since we want to allow different nodes to execute different phases of the algorithm at the same time (since we want to allow them to terminate at different times), each node must maintain and advertise its ``Phase 1'' and ``final'' colours separately, so that another node that is still performing Phase 1 is basing its decisions on its neighbours' Phase 1 colours (and not their final colours). Finally, we note that the original Cole and Vishkin strategy is described for a directed tree, i.e., where each node has at most one parent and perhaps some children, whereas our algorithm must work in an undirected infinite line, so we introduce a pre-processing step (Phase 0) to set up parent/child relationships.

The first part of the algorithm partitions the undirected infinite line into directed sub-lines that will perform the rest of the algorithm in parallel. In round 1, each node shares its ID with both neighbours. In round 2 (also referred to as Phase 0), each node compares its own ID with those that it received from its neighbours in round 1. If a node determines that it is a \emph{local minimum} (i.e., its ID is less than the ID's of its neighbours), then it picks the special colour $\lmin$ as its Phase 1 colour and proceeds directly to Phase 2 without performing Phase 1. If a node determines that it is a \emph{local maximum} (i.e., its ID is greater than the ID's of its neighbours), then it picks the special colour $\lmax$ as its Phase 1 colour and proceeds directly to Phase 2 without performing Phase 1. All other nodes, i.e., those that are not a local minimum or a local maximum, pick their own ID as their Phase 1 colour and continue to perform Phase 1. Further, each such node also chooses one \emph{parent} neighbour (the neighbour with smaller ID) and one \emph{child} neighbour (the neighbour with larger ID). In particular, the nodes that perform Phase 1 are each part of a directed sub-line that is bordered by local maxima/minima, so the sub-lines can all perform Phase 1 in parallel without worrying that the chosen colours will conflict with colours chosen in other sub-lines.

The second part of the algorithm, referred to as Phase 1, essentially implements the Cole and Vishkin strategy within each directed sub-line. The idea is that each node follows the Cole and Vishkin algorithm using its own Phase 1 colour and the Phase 1 colour of its parent until it has selected a Phase 1 colour in the range $\{0,\ldots,\maxCol\}$ that is different from its parent's Phase 1 colour, and then the node will proceed to Phase 2 where it will pick a final colour in the range $\{0,1,2\}$. However, this idea must be modified to avoid three potential pitfalls:
\begin{enumerate}
	\item Suppose that, in a round $t$, a node $v$ has a parent or child that stopped performing Phase 1 in an earlier round (and possibly has finished Phase 2 and has stopped executing the algorithm). If node $v$ still has a large colour in round $t$, i.e., not in the range $\{0,\ldots,\maxCol\}$, but continues to use the Cole and Vishkin strategy, then there is no guarantee that the new colour it chooses will still be different than those of its neighbours. This is not an issue for the original Cole and Vishkin algorithm, since it was designed in such a way that all nodes execute the algorithm the exact same number of times (using \emph{a priori} knowledge of the network size). To deal with this scenario, node $v$ first looks at the Phase 1 colours that were most recently advertised by its parent and child, and sees if either of them is in the range $\{0,\ldots,\maxCol\}$. If node $v$ sees that its parent's Phase 1 colour is in the range $\{0,\ldots,\maxCol\}$, then it knows that its parent is not performing Phase 1 in this round, so instead of following the Cole and Vishkin strategy, it immediately adopts the special colour \textit{Pdone} as its Phase 1 colour and moves on to Phase 2 of the algorithm. On the other hand, if node $v$ sees that its child's colour is in the range $\{0,\ldots,\maxCol\}$, then it knows that its child is not performing Phase 1 in this round, so instead of following the Cole and Vishkin strategy, it immediately adopts the special colour \textit{Cdone} as its Phase 1 colour and moves on to Phase 2 of the algorithm. By adopting the special colours in this way, node $v$'s Phase 1 colour is guaranteed to be different than any non-negative integer colour that was previously adopted by its neighbours.
	\item Suppose that a node $v$'s parent has a Phase 1 colour equal to a special colour (i.e., one of $\lmin,\lmax,\mathit{Pdone},\mathit{Cdone}$). The Cole and Vishkin strategy is not designed to work with such special colours, so, a node $v$ with such a parent will pretend that its parent has colour 0 instead. By doing this, it will choose some non-negative integer as dictated by the Cole and Vishkin strategy, and this integer is guaranteed to be different from all special colours, so $v$'s chosen Phase 1 colour will be different from its parent's.
	\item Suppose that a node $v$ has a parent with an extremely large integer colour. The Cole and Vishkin strategy will make sure that $v$ chooses a new Phase 1 colour that is different than the Phase 1 colour chosen by its parent, however, it is not guaranteed that this new colour is significantly smaller than the colour $v$ started with. In particular, we want to guarantee that node $v$ terminates Phase 1 within $\log^*(\mathrm{ID}_v)$ rounds, so we want node $v$'s newly-chosen Phase 1 colour to be bounded above by a logarithmic function of its \emph{own} colour in every round (and never depend on its parent's much larger colour). To ensure this, we apply a suffix-free encoding to the binary representation of each node's colour {\bf before} applying the Cole and Vishkin strategy. Doing this guarantees that the smallest index where two binary representations of colours differ is bounded above by the length of the binary representation of the \emph{smaller} colour.
\end{enumerate} 
When a node is ready to proceed to Phase 2, it has chosen a Phase 1 colour from the set $\{0,\ldots,\maxCol\} \cup \{\lmin,\lmax\} \cup \{\mathit{Pdone},\mathit{Cdone}\}$, and its chosen Phase 1 colour is guaranteed to be different from the Phase 1 colours of its two neighbours.

The third part of the algorithm, referred to as Phase 2, uses a round-robin strategy over the 56 possible Phase 1 colours, which guarantees that any two neighbouring nodes pick their final colour in different rounds. In particular, when executing each round of Phase 2, a node calculates the current \emph{token} value, which is defined as the current round number modulo 56. We assume that all nodes start the algorithm at the same time, so the current token value is the same at all nodes in each round. In each round, each node that is performing Phase 2 compares the token value to its own Phase 1 colour. If a node $v$'s Phase 1 colour is in $\{0,\ldots,51\}$ and is equal to the current token value, then it proceeds to choose its final colour (as described below). Otherwise, if a node $v$'s Phase 1 colour is one of the special colours, it waits until the current token value is equal to a value that is dedicated to that special colour (i.e., 52 for $\lmin$, 53 for $\lmax$, 54 for $\mathit{Pdone}$, 55 for $\mathit{Cdone}$), then chooses its final colour in that round. To choose its final colour, node $v$ chooses the smallest colour from $\{0,1,2\}$ that was never previously advertised as a final colour by its neighbours. Then, $v$ immediately sends out a message to its neighbours to advertise the final colour that it chose, then $v$ terminates. Since two neighbouring nodes are guaranteed to have different Phase 1 colours, they will choose their final colour in different rounds, so the later of the two nodes always avoids the colour chosen by the earlier node, and there is always a colour from $\{0,1,2\}$ available since each node only has two neighbours.

\subsection{String operations}\label{sec:stringops}
For any two binary strings $S_1,S_2$, denote by $S_1 \cdot S_2$ the concatenation of string $S_1$ followed by string $S_2$. For any positive integer $i$, the function $\textsc{BinaryRep}(i)$ returns the binary string consisting of the base-2 representation of $i$. Conversely, for any binary string $S$, the function $\textsc{IntVal}(S)$ returns the integer value when $S$ is interpreted as a base-2 integer. For any binary string $S$ of length $\ell \geq 1$, the string is a concatenation of bits, i.e., $S = s_{\ell-1}\cdots s_0$, and we will write $S[i]$ to denote the bit $s_i$. The notation $|S|$ denotes the length of $S$.

For any binary string $S$ of length $\ell \geq 1$, we define a function $\textsc{EncodeSF}(S)$ that returns the binary string obtained by replacing each 0 in $S$ with 01, replacing each 1 in $S$ with 10, then prepending 00 to the result. More formally, $\textsc{EncodeSF}(S)$ returns a string $S'$ of length $2\ell+2$ such that $S'[2\ell+1]=S'[2\ell]=0$, and, for each $i \in \{0,\ldots,\ell-1\}$, $S'[2i+1]=S[i]$ and $S'[2i]=1-S[i]$. For example, $\textsc{EncodeSF}(101)=00100110$. The function $\textsc{EncodeSF}$ is an encoding method with two important properties: an encoded string is uniquely decodable, and,  no encoded string is a suffix of another encoded string. These properties are formalized in the following results.

\begin{proposition}\label{SFDecodable}
	For any two non-empty binary strings $S,T$, we have that $\textsc{EncodeSF}(S) = \textsc{EncodeSF}(T)$ if and only if $S=T$.
\end{proposition}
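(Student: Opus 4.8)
The plan is to prove both directions, with the reverse implication carrying all the content. The forward direction, $S=T \Rightarrow \textsc{EncodeSF}(S)=\textsc{EncodeSF}(T)$, is immediate, since $\textsc{EncodeSF}$ is a well-defined function and thus produces the same output on equal inputs. So the real work is to establish injectivity of $\textsc{EncodeSF}$.

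For the reverse direction, I would argue directly from the explicit characterization of $\textsc{EncodeSF}$ given in the statement, rather than constructing an inverse decoder. The first observation is that the output length determines the input length: if $S$ has length $\ell_S$, then $\textsc{EncodeSF}(S)$ has length $2\ell_S + 2$, and similarly $\textsc{EncodeSF}(T)$ has length $2\ell_T+2$. Hence $\textsc{EncodeSF}(S)=\textsc{EncodeSF}(T)$ forces $2\ell_S+2 = 2\ell_T+2$, so $\ell_S = \ell_T$; denote this common value by $\ell$ (and note $\ell \geq 1$ since $S,T$ are non-empty).

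The second, key observation is that each input bit is copied verbatim into a fixed odd-indexed position of the output. Writing $S' = \textsc{EncodeSF}(S)$ and $T' = \textsc{EncodeSF}(T)$, the defining formula gives $S'[2i+1] = S[i]$ and $T'[2i+1] = T[i]$ for every $i \in \{0,\ldots,\ell-1\}$. Since $S' = T'$ as strings, in particular $S'[2i+1] = T'[2i+1]$ for every such $i$, and therefore $S[i] = T[i]$ for all $i \in \{0,\ldots,\ell-1\}$. As the two strings have the same length and agree in every position, $S = T$, completing the reverse direction.

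I do not expect a genuine obstacle here: the encoding is a fixed-rate, position-by-position substitution, so recovering the $i$-th input bit from position $2i+1$ of the output amounts to reading off the definition. The only point requiring minor care is the indexing convention (bits indexed from the least significant end, with the pair encoding bit $i$ occupying positions $2i$ and $2i+1$). It is also worth noting that the complemented bit at each even position $2i$ and the prepended $00$ are irrelevant to this argument; they play no role in uniqueness and matter only for the separate suffix-freeness property.
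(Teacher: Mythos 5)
Your proof is correct. The paper states \Cref{SFDecodable} without proof, treating it as immediate from the definition of $\textsc{EncodeSF}$, and your argument — matching lengths via $|S'|=2\ell+2$ and then reading each input bit off the odd-indexed output position $S'[2i+1]=S[i]$ — is exactly the routine verification being left implicit, with the indexing convention handled correctly.
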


\begin{proposition}\label{SuffixFree}
	Consider any non-empty binary strings $S_1,S_2$. Let $S_1' = \textsc{EncodeSF}(S_1)$ and let $S_2'=\textsc{EncodeSF}(S_2)$. If $S_1 \neq S_2$, then there exists an $i \leq \min\{|S_1'|-1,|S_2'|-1\} = \min\{2|S_1|+1,2|S_2|+1\}$ such that $S_1'[i] \neq S_2'[i]$.
\end{proposition}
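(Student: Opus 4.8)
The plan is to exploit the one structural feature of the encoding that makes it suffix-free: every two-bit block that \textsc{EncodeSF} produces from an original bit is either $01$ (from a $0$) or $10$ (from a $1$), so each such block contains exactly one $1$, whereas the two-bit marker that \textsc{EncodeSF} prepends is $00$. Since strings are indexed from the low-order (right) end and both encoded strings share the index range $0,\ldots,\min\{|S_1'|,|S_2'|\}-1$, I would align $S_1'$ and $S_2'$ at position $0$ and locate a disagreement inside the shorter of the two. Recalling $|S_k'|-1 = 2|S_k|+1$, the target range is exactly the set of indices valid in both strings.

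First I would dispose of the equal-length case: assume $|S_1|=|S_2|=\ell$. Since $S_1\neq S_2$, they disagree at some bit index $j\in\{0,\ldots,\ell-1\}$, i.e. $S_1[j]\neq S_2[j]$. By the definition of \textsc{EncodeSF} we have $S_1'[2j+1]=S_1[j]$ and $S_2'[2j+1]=S_2[j]$, so the index $i=2j+1$ witnesses the disagreement, and $i=2j+1\le 2\ell-1\le 2\ell+1=\min\{2|S_1|+1,2|S_2|+1\}$ lies in the required range.

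Next I would handle the unequal-length case; by symmetry assume $|S_1|=\ell_1<\ell_2=|S_2|$, so the bound is $\min=2\ell_1+1=|S_1'|-1$ and I must find a disagreement among indices $0,\ldots,2\ell_1+1$. The key observation is that positions $2\ell_1$ and $2\ell_1+1$ of $S_1'$ are exactly its prepended marker, hence both equal $0$. In $S_2'$, because $\ell_1\le \ell_2-1$ these same two positions still lie in the encoded-bit region (the marker of $S_2'$ occupies the strictly higher positions $2\ell_2$ and $2\ell_2+1$), and together they encode the bit $S_2[\ell_1]$: one of $S_2'[2\ell_1+1],S_2'[2\ell_1]$ equals $S_2[\ell_1]$ while the other equals $1-S_2[\ell_1]$, so exactly one of them is $1$. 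Thus the pair $(S_2'[2\ell_1+1],S_2'[2\ell_1])$ differs from $(0,0)$, giving a disagreement at some $i\in\{2\ell_1,2\ell_1+1\}$, both of which are at most $2\ell_1+1=\min$.

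I do not expect any serious obstacle here; the only thing to be careful about is the index bookkeeping, in particular verifying that positions $2\ell_1$ and $2\ell_1+1$ genuinely fall inside the encoded region of the longer string (which reduces to $\ell_1\le\ell_2-1$) rather than landing on its marker, and keeping the right-to-left indexing convention consistent throughout. One could alternatively phrase the whole argument uniformly by noting that the block $00$ appears in an encoded string only as its marker, but splitting into the two length cases is the cleanest route.
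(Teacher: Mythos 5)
Your proposal is correct and follows essentially the same two-case argument as the paper's proof: the equal-length case locates the disagreement at index $2j+1$ via the encoded bit positions, and the unequal-length case exploits the fact that the $00$ marker of the shorter encoding sits at positions $2\ell_1,2\ell_1+1$ where the longer encoding must place the block $01$ or $10$ for bit $S_2[\ell_1]$. The index bookkeeping you flag checks out exactly as in the paper.
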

\begin{proof}
	Let $\ell_1 = |S_1|$ and let $\ell_2 = |S_2|$. Without loss of generality, assume that $\ell_1 \leq \ell_2$. Suppose that $S_1 \neq S_2$. We consider two cases.
	
	In the first case, suppose that $\ell_1 = \ell_2$. Since $S_1 \neq S_2$, there exists an $j \in \{0,\ldots,\ell_1-1\}$ such that $S_1[j] = 1-S_2[j]$. From the definition of $\textsc{EncodeSF}$, we see that $S_1'[2j+1] = S_1[j] = 1-S_2[j] = 1-S_2'[2j+1]$, which proves that $S_1'[2j+1] \neq S_2'[2j+1]$. Finally, from the choice of $j$, note that $2j+1 \leq 2(\ell_1-1)+1 = 2\ell_1-1 = 2\ell_2-1$. Setting $i=2j+1$, we see that $S_1'[i] \neq S_2'[i]$, and, $i \leq \min\{2\ell_1-1,2\ell_2-1\} = \min\{2|S_1|-1,2|S_2|-1\} < \min\{2|S_1|+1,2|S_2|+1\} = \min\{|S_1'|-1,|S_2'|-1\}$, as desired.
	
	In the second case, suppose that $\ell_1 < \ell_2$. From the definition of $\textsc{EncodeSF}(S_2)$, we see that $S_2'[2\ell_1+1]=S_2[\ell_1]$ and $S_2'[2\ell_1]=1-S_2[\ell_1]$. In particular, either $S_2'[2\ell_1+1]$ or $S_2'[2\ell_1]$ is equal to 1. However, from the definition of $\textsc{EncodeSF}(S_1)$, we see that $S_1'[2\ell_1+1]=S_1'[2\ell_1]=0$. Thus, there is an $i \in \{2\ell_1,2\ell_1+1\}$ such that $S_1'[i] \neq S_2'[i]$. Finally, note that $i \leq 2\ell_1+1 = \min\{2\ell_1+1,2\ell_2+1\}=\min\{2|S_1|+1,2|S_2|+1\} = \min\{|S_1'|-1,|S_2'|-1\}$, as desired.
\end{proof}

\subsection{Algorithm pseudocode}\label{sec:linecolouringpseudo}
In what follows, we assume that all nodes start the algorithm simultaneously starting in round 1, and each node has access to a variable called $\mathtt{clockVal}$ which accurately contains the current round number. Each node $v$ has a label denoted by $\mathrm{ID}_v$. We assume that each node can distinguish between its two neighbours, and it does so by referring to them as using fixed labels $A$ and $B$. There are no additional assumptions about orientation, i.e., each node independently chooses which of its neighbours it references as $A$ or $B$. Consider two neighbouring nodes $v,w$: if $w$ is the node that $v$ references using $A$, then we say that $w$ is $v$'s \emph{$A$-neighbour}, and otherwise $w$ is $v$'s \emph{$B$-neighbour}.

We define four special colours $\lmin, \lmax, \mathit{Pdone}, \mathit{Cdone}$ that are not positive integers (i.e., they cannot be confused with any node's ID, and they cannot be confused with any non-negative integer colour chosen by a node during the algorithm's execution). Practically speaking, one possible implementation is to use $\lmin=-4$, $\lmax=-3$, $Pdone=-2$, and $Cdone=-1$.

\Cref{EarlyStopCV} provides the pseudocode for the algorithm's execution. Round 1 is an initialization round in which all nodes send their ID to their neighbours. In round 2, we say that the nodes \emph{perform Phase 0}, and they will all execute lines \ref{StartPhase0}--\ref{EndPhase0}. The purpose of this phase is for each node to classify itself based on the relationship between its own ID and the ID's of its neighbours. For any round $r \geq 3$, we say that a node $v$ \emph{performs Phase 1 during round $r$} if it executes the lines \ref{StartPhase1}--\ref{UpdateDP1} during round $r$. For any round $r \geq 3$, we say that a node $v$ \emph{performs Phase 2 during round $r$} if it executes the lines \ref{StartPhase2}--\ref{EndPhase2} during round $r$.

Each node $v$ will keep track of its neighbours' chosen colours. However, it is possible that a neighbour $w$ of $v$ will finish the algorithm while $v$ is still in Phase 1, and we want $v$ to be able to make its Phase 1 decisions based on the colour that $w$ chose for Phase 1. Moreover, when $v$ eventually enters Phase 2, we want $v$ to be able to make its Phase 2 decisions based on the final colour that $w$ chose (which $w$ previously advertised while $v$ was still in Phase 1). So, for neighbour $A$, node $v$ will have a variable called $\mathtt{Acol.P1}$ in which it will store neighbour $A$'s most recent colour choice when $A$ was in Phase 1, and a variable called $\mathtt{Acol.final}$ in which it will store neighbour $A$'s final colour choice. Similarly, for neighbour $B$, node $v$ will have variables called $\mathtt{Bcol.P1}$ and $\mathtt{Bcol.final}$. All of these variables are initially null (i.e., there is no default integer value that could be confused with an actual colour choice). If a node sends out a message containing its colour choice at the end of a round, it will make sure to include an additional string in the message (either ``P1'' or ``final'') to specify whether the colour is the node's Phase 1 colour or the node's final colour, so that receivers can store the value in the correct variable.

\begin{algorithm}
	\footnotesize
	\caption{\EarlyStopCV}\label{EarlyStopCV}
	\begin{algorithmic}[1]
		\Statex {\color{gray} \%\% $\mathtt{clockVal}$ is assumed to contain the current local round number, starting at 1.}
		\Statex {\color{gray} \%\% $\mathtt{myID}$ is assumed to contain the node's label.}
		\Statex {\color{gray} \%\% Initially, $\mathtt{Acol.P1}=\mathtt{Bcol.P1}=\mathtt{Acol.final}=\mathtt{Bcol.final}=\mathit{null}$}
		\If{$\mathtt{clockVal}==1$} \Comment{{\scriptsize Round 1: get IDs of neighbours}}
		\State {Send $\mathtt{myID}$ to both neighbours}
		\State {Receive $\mathtt{ID}_A$ from $A$ and receive $\mathtt{ID}_B$ from $B$}
		\ElsIf{$\mathtt{clockVal}==2$} \Comment{{\scriptsize Phase 0: detect if local max or local min, otherwise assign parent and child}}
		\If{$\mathtt{myID} < \mathtt{ID}_A$ \textbf{and} $\mathtt{myID} < \mathtt{ID}_B$}\label{StartPhase0} \Comment{{\scriptsize I'm a local minimum}}
		\State $\mathtt{myPhase1Col} \leftarrow \lmin$
		\ElsIf{$\mathtt{myID} > \mathtt{ID}_A$ \textbf{and} $\mathtt{myID} > \mathtt{ID}_B$} \Comment{{\scriptsize I'm a local maximum}}
		\State $\mathtt{myPhase1Col} \leftarrow \lmax$
		\ElsIf{$\mathtt{ID}_A < \mathtt{myID}$ \textbf{and} $\mathtt{myID} < \mathtt{ID}_B$} \Comment{{\scriptsize neighbourhood IDs increase towards B}}
		\State $\mathtt{myPhase1Col} \leftarrow \mathtt{myID}$
		\State $\mathtt{parent} \leftarrow A$ \label{setParentA}
		\State $\mathtt{child} \leftarrow B$ \label{setChildB}
		\Else \Comment{{\scriptsize neighbourhood IDs increase towards A}}
		\State $\mathtt{myPhase1Col} \leftarrow \mathtt{myID}$
		\State $\mathtt{parent} \leftarrow B$ \label{setParentB}
		\State $\mathtt{child} \leftarrow A$ \label{setChildA}
		\EndIf
		\State {Send $\mathtt{myPhase1Col}$ to both neighbours} \label{P0SendP1Col}
		\State {Receive $\mathtt{msg}_A$ from $A$ and receive $\mathtt{msg}_B$ from $B$}
		\State $\mathtt{Acol.P1} \leftarrow \mathtt{msg}_A$\label{P0StoreA}
		\State $\mathtt{Bcol.P1} \leftarrow \mathtt{msg}_B$\label{P0StoreB}
		\State $\mathtt{doPhase1} \leftarrow (\mathtt{myPhase1Col} \not\in \{0,\ldots,\maxCol,\lmin,\lmax\})$\label{EndPhase0}
		
		\ElsIf{$\mathtt{doPhase1} == \mathbf{true}$}\label{Phase1Condition} \Comment{{\scriptsize Phase 1: detect if one of my neighbours is settled, otherwise perform CV}}
		\If{$\mathtt{parent}==A$}\label{StartPhase1}
		\State $\mathtt{myPhase1Col} \leftarrow \ChooseNewColour (\mathtt{myPhase1Col},\mathtt{Acol.P1},\mathtt{Bcol.P1})$\label{NewColourParentA}
		\Else
		\State $\mathtt{myPhase1Col} \leftarrow \ChooseNewColour (\mathtt{myPhase1Col},\mathtt{Bcol.P1},\mathtt{Acol.P1})$\label{NewColourParentB}
		\EndIf \label{EndSwitch}
		\State {Send (``P1'', $\mathtt{myPhase1Col}$) to both neighbours}\label{P1SendP1}
		\State \textbf{if} received a message of the form (\textit{key}, \textit{val}) from $A$ \textbf{then}: $\mathtt{Acol.}\mathit{key} \leftarrow \mathit{val}$\label{fromAP1}
		\State \textbf{if} received a message of the form (\textit{key}, \textit{val}) from $B$ \textbf{then}: $\mathtt{Bcol.}\mathit{key} \leftarrow \mathit{val}$\label{fromBP1}
		\State $\mathtt{doPhase1} \leftarrow (\mathtt{myPhase1Col} \not\in \{0,\ldots,\maxCol,\mathit{Pdone},\mathit{Cdone}\})$\label{UpdateDP1}
		\Else \Comment{{\scriptsize Phase 2: colour reduction down to $\{0,1,2\}$}}
		\State $\mathtt{token} \leftarrow \mathtt{clockVal} \bmod 56$\label{StartPhase2}
		\If{($\mathtt{myPhase1Col} == \mathtt{token}$) \textbf{or}\newline 
			\hspace*{8mm}($(\mathtt{myPhase1Col} == \lmin) \wedge (\mathtt{token} == 52)$) \textbf{or}\newline
			\hspace*{8mm}($(\mathtt{myPhase1Col} == \lmax) \wedge (\mathtt{token} == 53)$) \textbf{or}\newline
			\hspace*{8mm}($(\mathtt{myPhase1Col} == \mathit{Pdone}) \wedge (\mathtt{token} == 54)$) \textbf{or}\newline
			\hspace*{8mm}($(\mathtt{myPhase1Col} == \mathit{Cdone}) \wedge (\mathtt{token} == 55)$)}\label{termCondition}
		\State\hspace*{8mm} $\mathtt{myFinalCol} \leftarrow $ smallest element in $\{0,1,2\} \setminus \{\mathtt{Acol.final},\mathtt{Bcol.final}\}$\label{ChooseFinal}
		\State\hspace*{8mm} {Send (``final'', $\mathtt{myFinalCol}$) to both neighbours}
		\State\hspace*{8mm} \textbf{terminate()}\label{terminate}
		\EndIf\label{EndPhase2}
		\State \textbf{if} received a message of the form (\textit{key}, \textit{val}) from $A$ \textbf{then}: $\mathtt{Acol.}\mathit{key} \leftarrow \mathit{val}$\label{fromAP2}
		\State \textbf{if} received a message of the form (\textit{key}, \textit{val}) from $B$ \textbf{then}: $\mathtt{Bcol.}\mathit{key} \leftarrow \mathit{val}$\label{fromBP2}
		\EndIf			
	\end{algorithmic}
\end{algorithm}

\Cref{ChooseNewColour} provides the pseudocode describing the procedure that each node will use to choose its new colour in each round that it performs Phase 1. When calling this procedure, the node must pass in its own Phase 1 colour as the first input parameter, the Phase 1 colour of its parent as the second input parameter, and the Phase 1 colour of its child as the third input parameter. 

\Cref{CVChoice} provides the pseudocode that describes our modified Cole and Vishkin strategy for choosing a new colour. The first two lines apply the suffix-free encoding, and the remaining lines implement the classic Cole and Vishkin colour choice. When calling this procedure, the node must pass in its own Phase 1 colour as the first input parameter, and the Phase 1 colour of its parent as the second input parameter.

\begin{algorithm}
	\caption{\footnotesize$\ChooseNewColour (\mathtt{myPhase1Col},\mathtt{ParentPhase1Col},\mathtt{ChildPhase1Col})$}\label{ChooseNewColour}
	\begin{algorithmic}[1]
		\If{$\mathtt{ParentPhase1Col} \in \{0,\ldots,\maxCol\}$}\label{IfPSettled}
		\State $\mathtt{newColour} \leftarrow \mathit{Pdone}$\label{ReturnPdone}
		\ElsIf{$\mathtt{ChildPhase1Col} \in \{0,\ldots,\maxCol\}$}\label{IfCSettled}
		\State $\mathtt{newColour} \leftarrow \mathit{Cdone}$\label{ReturnCdone}
		\ElsIf{$\mathtt{ParentPhase1Col} \in \{\mathit{Pdone},\mathit{Cdone},\lmin,\lmax\}$}\label{IfPdone}
		\State $\mathtt{newColour} \leftarrow \CVChoice(\mathtt{myPhase1Col},0)$\label{doCVWith0}
		\Else
		\State $\mathtt{newColour} \leftarrow \CVChoice (\mathtt{myPhase1Col},\mathtt{ParentPhase1Col})$\label{doCV}
		\EndIf
		\State \Return $\mathtt{newColour}$
	\end{algorithmic}
\end{algorithm}

\begin{algorithm}
	\caption{$\CVChoice(\mathtt{MyCol},\mathtt{OtherCol})$}\label{CVChoice}
	\begin{algorithmic}[1]
		\State $\mathtt{MyString} \leftarrow \textsc{EncodeSF}(\textsc{BinaryRep}(\mathtt{MyCol}))$\label{EncodeMS}
		\State $\mathtt{OtherString} \leftarrow \textsc{EncodeSF}(\textsc{BinaryRep}(\mathtt{OtherCol}))$\label{EncodeOS}
		\State $\mathtt{i} \leftarrow $ smallest $x \geq 0$ such that $\mathtt{MyString}[x] \neq \mathtt{OtherString}[x]$\label{FindDiffIndex}
		\State $\mathtt{newString} \leftarrow \textsc{BinaryRep}(\mathtt{i}) \cdot \mathtt{myString}[\mathtt{i}]$\label{SetCVChoice}
		\State \Return $\textsc{IntVal}(\mathtt{newString})$\label{ReturnCVChoice}
	\end{algorithmic}
\end{algorithm}

\subsection{Algorithm analysis: round complexity}\label{sec:linecolouringcomplexity}

The original Cole and Vishkin algorithm guarantees that each time a node chooses a new colour, the new colour comes from a drastically smaller range of colours. We now prove a similar result about the output of \CVChoice each time that it is called. In our version, we need to incorporate the fact that we are first applying the suffix-free encoding (which increases the length of the strings), and we express the size of the new colour in terms of the \emph{smaller} of the two input colours.
\begin{lemma}\label{CVChoiceOutput}
	For any two non-negative integers $I_1,I_2$ such that $I_1 \neq I_2$, consider the execution of \CVChoice ($I_1,I_2$). If $\min\{I_1,I_2\} \in \{0,\ldots,2^{\ell_{min}}-1\}$ for some integer $\ell_{min} \geq 1$, then the value returned at the end of the execution is an integer in the range $\{0,\ldots,8\ell_{min}+3\}$.
\end{lemma}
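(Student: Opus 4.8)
The plan is to trace the execution of \CVChoice($I_1,I_2$) line by line and bound the length of the binary string it constructs. Let me set $m = \min\{I_1,I_2\}$ and assume $m \in \{0,\ldots,2^{\ell_{min}}-1\}$, so that the binary representation $\textsc{BinaryRep}(m)$ has length at most $\ell_{min}$. The key object to control is the index $\mathtt{i}$ computed on line \ref{FindDiffIndex}, because the returned value is $\textsc{IntVal}(\textsc{BinaryRep}(\mathtt{i})\cdot b)$ for a single bit $b$, and this value is at most $2^{|\textsc{BinaryRep}(\mathtt{i})|+1}-1$. So the whole proof reduces to bounding $\mathtt{i}$ and then converting a bound on $\mathtt{i}$ into a bound on the returned integer.

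First I would invoke \Cref{SuffixFree} on the two binary strings $S_1 = \textsc{BinaryRep}(I_1)$ and $S_2 = \textsc{BinaryRep}(I_2)$. Since $I_1 \neq I_2$, their binary representations differ (as non-empty strings), so \Cref{SuffixFree} guarantees that the smallest index where the encoded strings $\mathtt{MyString}$ and $\mathtt{OtherString}$ differ is at most $\min\{2|S_1|+1,2|S_2|+1\}$. The crucial feature here is that this bound involves the \emph{smaller} of the two string lengths, which is exactly what lets us express the result in terms of $m = \min\{I_1,I_2\}$ rather than the possibly enormous $\max\{I_1,I_2\}$. Since $|\textsc{BinaryRep}(m)| \leq \ell_{min}$, I would conclude $\mathtt{i} \leq 2\ell_{min}+1$. (One small subtlety to address: \Cref{SuffixFree} assumes $S_1 \neq S_2$ as strings, which follows from $I_1 \neq I_2$ since binary representations of distinct non-negative integers are distinct; I should confirm the edge behaviour when one of the inputs is $0$, whose binary representation is still a non-empty string.)

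Next I would convert the bound $\mathtt{i} \leq 2\ell_{min}+1$ into a bound on the returned integer. On line \ref{SetCVChoice}, the new string is $\textsc{BinaryRep}(\mathtt{i})$ with one bit appended, so its integer value is at most $2\cdot\mathtt{i}+1 \leq 2(2\ell_{min}+1)+1 = 4\ell_{min}+3$. Here I want to be careful about the exact arithmetic: appending a bit $b$ to the binary representation of $\mathtt{i}$ produces $2\mathtt{i}+b \leq 2\mathtt{i}+1$, so the returned value lies in $\{0,\ldots,4\ell_{min}+3\}$. Since $4\ell_{min}+3 \leq 8\ell_{min}+3$, this already suffices for the stated conclusion with room to spare.

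The main obstacle I anticipate is not the logic but the bookkeeping of constants: the gap between my computed bound $4\ell_{min}+3$ and the stated bound $8\ell_{min}+3$ suggests either the authors are being deliberately loose, or there is a factor-of-two subtlety I am glossing over in how indices versus lengths interact (for instance, whether $\mathtt{i}$ is bounded by $\min\{2|S_1|+1,\ldots\}$ as a length giving index at most that value, or whether the ``$+1$'' on the suffix-free bound and the length-versus-index distinction combine to double the coefficient). I would therefore double-check the precise statement of \Cref{SuffixFree}, whose bound is $\min\{2|S_1|+1,2|S_2|+1\}$, and verify carefully whether $\ell_{min}$ bounds $|\textsc{BinaryRep}(m)|$ exactly or whether an off-by-one in the definition of $\ell_{min}$ (note $m \leq 2^{\ell_{min}}-1$ gives $|\textsc{BinaryRep}(m)| \leq \ell_{min}$) forces a slightly larger coefficient. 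Regardless, since the target bound $8\ell_{min}+3$ is generous, any of these routine adjustments will still land comfortably inside the claimed range, so the proof is robust to the exact constant.
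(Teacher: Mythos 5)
Your proof is correct and follows essentially the same route as the paper: both invoke \Cref{SuffixFree} to bound the index $\mathtt{i}$ by $2\ell_{min}+1$ and then bound the returned integer. The only difference is the final step — you compute $\textsc{IntVal}(\textsc{BinaryRep}(\mathtt{i})\cdot b) = 2\mathtt{i}+b \leq 4\ell_{min}+3$ directly, whereas the paper bounds the \emph{length} of $\mathtt{newString}$ by $\log_2(8\ell_{min}+4)$ and then takes the maximum integer of that length, which is exactly where the looser constant $8\ell_{min}+3$ comes from; there is no hidden factor-of-two subtlety.
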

\begin{proof}
	Let $I = \min\{I_1,I_2\}$, and suppose that $I \in \{0,\ldots, 2^{\ell_{min}}-1\}$ for some $\ell_{min} \geq 1$. It follows that the binary representation of $I$ has length at most $\ell_{min}$, so, from the definition of \textsc{EncodeSF}, the result of $\textsc{EncodeSF}(\textsc{BinaryRep}(I))$ has length at most $2\ell_{min}+2$. By \Cref{SuffixFree}, line \ref{FindDiffIndex} sets $i$ to be a non-negative integer in the range $\{0,\ldots,2\ell_{min}+1\}$. Line \ref{SetCVChoice} sets $\mathtt{newString}$ to a string that has length $|\textsc{BinaryRep}(i)|+1$. But the length of the binary representation of $i$ is at most $\lfloor \log_2(i) \rfloor + 1$, so the length of $\mathtt{newString}$ is at most $\lfloor \log_2(i) \rfloor + 2 \leq \log_2(i)+2$. Since $i \leq 2\ell_{min}+1$, we get that the length of $\mathtt{newString}$ is at most $\log_2(2\ell_{min}+1)+2 = \log_2(2\ell_{min}+1)+\log_2(4) = \log_2(8\ell_{min}+4)$. When interpreted as an integer in base 2, we get an integer in the range $\{0,\ldots,2^{|\mathtt{newString}|}-1\} \subseteq \{0,\ldots,2^{\log_2(8\ell_{min}+4)}-1\} = \{0,\ldots,8\ell_{min}+3\}$, as desired.
\end{proof}

In each round that a node performs Phase 1, it chooses a new Phase 1 colour using the procedure \ChooseNewColour. The strategy that is used to choose the new colour depends on the most recent Phase 1 colours of its neighbours. The following result summarizes the range of possible colours that might be chosen.  
\begin{lemma}\label{NewColourOptions}
	Let $\mathtt{myPhase1Col} \in \mathbb{Z}_{\geq 0}$ and let $\ell = |\textsc{BinaryRep}(\mathtt{myPhase1Col})|$. Let $\mathtt{ParentPhase1Col},\mathtt{ChildPhase1Col} \in \mathbb{Z}_{\geq 0} \cup \{\mathit{Pdone},\mathit{Cdone}\} \cup \{\lmin,\lmax\}$. The execution of $$\ChooseNewColour (\mathtt{myPhase1Col},\mathtt{ParentPhase1Col},\mathtt{ChildPhase1Col})$$ returns a value from the set $\{\mathit{Pdone},\mathit{Cdone}\} \cup \{0,\ldots,11\} \cup \{0,\ldots,8\ell+3\}$.
\end{lemma}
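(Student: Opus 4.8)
The plan is to prove the statement by a direct case analysis following the four branches of the conditional in \Cref{ChooseNewColour}. The procedure inspects the parent's and child's most recent Phase 1 colours and either returns a special colour outright or delegates to \CVChoice; I would simply verify that each of the four possible return values lands in the target set $\{\mathit{Pdone},\mathit{Cdone}\} \cup \{0,\ldots,11\} \cup \{0,\ldots,8\ell+3\}$.

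The two easy branches are the first two: if the parent's Phase 1 colour lies in $\{0,\ldots,\maxCol\}$ the procedure returns $\mathit{Pdone}$, and if instead the child's Phase 1 colour lies in $\{0,\ldots,\maxCol\}$ it returns $\mathit{Cdone}$; both values are explicitly in the target set, so nothing further is needed there. The remaining two branches invoke \CVChoice and are where \Cref{CVChoiceOutput} does the real work. In the third branch the call is $\CVChoice(\mathtt{myPhase1Col},0)$, so the smaller of the two arguments is $0 \in \{0,\ldots,2^{1}-1\}$; applying \Cref{CVChoiceOutput} with $\ell_{min}=1$ yields a return value in $\{0,\ldots,8\cdot 1+3\}=\{0,\ldots,11\}$. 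In the fourth (else) branch the call is $\CVChoice(\mathtt{myPhase1Col},\mathtt{ParentPhase1Col})$, where the branch conditions force $\mathtt{ParentPhase1Col}$ to be a non-negative integer; here the crucial observation is that $\min\{\mathtt{myPhase1Col},\mathtt{ParentPhase1Col}\} \le \mathtt{myPhase1Col} \le 2^{\ell}-1$, since $\ell=|\textsc{BinaryRep}(\mathtt{myPhase1Col})|$ forces $\mathtt{myPhase1Col}<2^{\ell}$, so \Cref{CVChoiceOutput} with $\ell_{min}=\ell$ gives a return value in $\{0,\ldots,8\ell+3\}$. Taking the union over the four branches yields exactly the claimed set.

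The main point to get right — and the entire reason \Cref{CVChoiceOutput} is phrased in terms of the \emph{smaller} of its two inputs — is the fourth branch: the parent's colour may be astronomically large, yet the returned value must be bounded by a function of the node's \emph{own} colour length $\ell$. Because the minimum of the two inputs is always at most $\mathtt{myPhase1Col}$, the bound $8\ell+3$ is controlled entirely by $\ell$ no matter how large $\mathtt{ParentPhase1Col}$ is, which is precisely the per-node termination guarantee the algorithm requires. The only hypothesis of \Cref{CVChoiceOutput} needing a remark is that its two integer inputs be distinct: in the third branch this is $\mathtt{myPhase1Col}\neq 0$ and in the fourth $\mathtt{myPhase1Col}\neq\mathtt{ParentPhase1Col}$, both of which hold for any call issued during Phase 1 (a Phase-1 node has $\mathtt{myPhase1Col}\notin\{0,\ldots,\maxCol\}$, and neighbouring Phase 1 colours differ), and by \Cref{SFDecodable,SuffixFree} the differing index sought inside \CVChoice then exists. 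I expect no deeper obstacle beyond carefully tracking these applicability conditions and noting that $\ell\ge 1$ always holds since $\textsc{BinaryRep}(0)$ has length $1$.
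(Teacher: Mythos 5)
Your proof is correct and matches the paper's own argument: the same four-branch case analysis over \Cref{ChooseNewColour}, returning $\mathit{Pdone}$ or $\mathit{Cdone}$ in the first two branches and applying \Cref{CVChoiceOutput} to the two \CVChoice branches (with $\ell_{min}=1$ for the $\CVChoice(\cdot,0)$ call, and $\ell_{min}=\ell$ for the general call since the minimum of the two inputs is at most $\mathtt{myPhase1Col}\le 2^{\ell}-1$). Your closing remark about the distinctness hypothesis of \Cref{CVChoiceOutput} addresses a condition the paper's proof passes over silently, but otherwise the two arguments are essentially identical.
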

\begin{proof}
	If the condition at line \ref{IfPSettled} evaluates to true, then the value returned by the execution is $\mathit{Pdone}$. If the condition at line \ref{IfCSettled} evaluates to true, then the value returned by the execution is $\mathit{Cdone}$. If the condition at line \ref{IfPdone} evaluates to true, then the value returned by the execution is the value returned by $\CVChoice (\mathtt{MyCol},0)$, which, by \Cref{CVChoiceOutput}, is an integer in the range $\{0,\ldots,8|\textsc{BinaryRep}(0)|+3\} = \{0,\ldots,11\}$. 
	
	In all other cases, the value that is returned by the execution is the value returned by $\CVChoice (\mathtt{MyPhase1Col},\mathtt{ParentPhase1Col})$. First, note that $\mathtt{ParentPhase1Col}$ is a non-negative integer since the condition at line \ref{IfPdone} evaluated to false, so both input parameters to \CVChoice  are non-negative integers.\\
	Let $\ell_{min} = \min\{|\textsc{BinaryRep}(\mathtt{MyPhase1Col})|,|\textsc{BinaryRep}(\mathtt{ParentPhase1Col})|\}$.\\
	By \Cref{CVChoiceOutput}, the value returned is an integer in the range $\{0,\ldots,8\ell_{min}+3\}$. But $\ell_{min} \leq \ell$, so we get that the return value is an integer in the range $\{0,\ldots,8\ell+3\}$.
\end{proof}

Consider the execution of the algorithm at any fixed node $v$. For each $j \geq 2$, denote by $\mathtt{myPhase1Col}_{j}$ the value of $\mathtt{myPhase1Col}$ at the end of round $j$ in $v$'s execution, and denote by $\ell_j$ the length of the binary representation of $\mathtt{myPhase1Col}_{j}$.

Note that if $\mathtt{doPhase1}$ is set to \textit{false} in any round $j \geq 2$, then $v$ never performs Phase 1 in any round after $j$: indeed, the value of $\mathtt{doPhase1}$ is only modified in round 2 (i.e., at line \ref{EndPhase2}) or in a round during which Phase 1 is performed (i.e., at line \ref{UpdateDP1}). This fact can be used to prove the following two results, which correspond to the two possible boolean values of $\mathtt{doPhase1}$ at the end of round 2.
\begin{proposition}\label{NeverPhase1}
	Node $v$ has $\mathtt{myPhase1Col}_2 \in \{0,\ldots,\maxCol\} \cup \{\lmin,\lmax\}$ if and only if $v$ never performs Phase 1.
\end{proposition}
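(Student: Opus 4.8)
The plan is to establish both directions of the biconditional by tracking the value of the boolean flag $\mathtt{doPhase1}$ through the control flow of \Cref{EarlyStopCV}. The pivotal observation, already recorded in the paragraph preceding the statement, is that $\mathtt{doPhase1}$ is assigned a value only at the end of Phase 0 (round 2) and inside a Phase 1 round (line \ref{UpdateDP1}), and that a node performs Phase 1 in a round $r \geq 3$ precisely when $\mathtt{doPhase1} = \mathbf{true}$ at the start of that round. Thus the entire argument reduces to relating the membership condition $\mathtt{myPhase1Col}_2 \in \{0,\ldots,\maxCol\} \cup \{\lmin,\lmax\}$ to the value of $\mathtt{doPhase1}$ at the end of round 2, and then relating that value to whether Phase 1 is ever performed.

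For the forward direction, I would assume $\mathtt{myPhase1Col}_2 \in \{0,\ldots,\maxCol\} \cup \{\lmin,\lmax\}$. By inspecting the assignment $\mathtt{doPhase1} \leftarrow (\mathtt{myPhase1Col} \not\in \{0,\ldots,\maxCol,\lmin,\lmax\})$ executed at the end of Phase 0, this assumption forces $\mathtt{doPhase1}$ to be set to $\mathbf{false}$ at the end of round 2. Since Phase 1 is possible only in rounds $r \geq 3$, and since (by the noted ``once false, always false'' behaviour) $\mathtt{doPhase1}$ can never become true again once it is false at the end of round 2, node $v$ fails the Phase 1 guard at line \ref{Phase1Condition} in every round $r \geq 3$ and therefore never performs Phase 1.

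For the converse, I would argue the contrapositive: suppose $\mathtt{myPhase1Col}_2 \notin \{0,\ldots,\maxCol\} \cup \{\lmin,\lmax\}$. Then the same end-of-Phase-0 assignment sets $\mathtt{doPhase1}$ to $\mathbf{true}$. Consequently, when $v$ reaches round 3, the guard $\mathtt{doPhase1} == \mathbf{true}$ holds, so $v$ executes the Phase 1 lines \ref{StartPhase1}--\ref{UpdateDP1} in round 3; that is, $v$ performs Phase 1 at least once. This contradicts the hypothesis that $v$ never performs Phase 1, completing the contrapositive and hence the biconditional.

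I expect no substantial obstacle, as the result is a direct consequence of the algorithm's structure. The only point requiring care is justifying the ``once false, always false'' behaviour of $\mathtt{doPhase1}$, namely confirming that the flag is never reassigned outside round 2 and Phase 1 rounds, and that a node which drops to Phase 2 in some round never re-enters Phase 1 afterward. This is precisely the fact stated immediately before the proposition, so the formal write-up amounts to citing that monotonicity fact together with the single conditional assignment at the end of Phase 0 and the Phase 1 guard at line \ref{Phase1Condition}.
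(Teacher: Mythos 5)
Your proof is correct and follows essentially the same route as the paper: the paper proves this proposition implicitly by noting that $\mathtt{doPhase1}$ is only assigned in round 2 and in Phase 1 rounds (so once it becomes \textit{false} it stays \textit{false}), and that the proposition corresponds exactly to the two possible values of $\mathtt{doPhase1}$ at the end of round 2, which is precisely the structure of your argument via line \ref{EndPhase0} and the guard at line \ref{Phase1Condition}.
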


\begin{proposition}\label{Phase1Rounds}
	If the number of times that an arbitrary node $v$ performs Phase 1 is $m \geq 1$, then the set of rounds in which $v$ performs Phase 1 is exactly $\{3,\ldots,m+2\}$.
\end{proposition}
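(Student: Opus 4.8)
The plan is to exploit the \emph{absorbing} behaviour of the flag $\mathtt{doPhase1}$ that was isolated in the paragraph preceding the statement: this variable is written only at line~\ref{EndPhase0} (in round~2) and at line~\ref{UpdateDP1} (only in a round during which Phase~1 is performed), so once it holds the value \textit{false} at the start of some round $r \geq 3$, node $v$ executes Phase~2 rather than Phase~1 in round $r$, line~\ref{UpdateDP1} is skipped, and the flag therefore remains \textit{false} in every subsequent round. In other words, the rounds $r \geq 3$ in which $\mathtt{doPhase1}$ is \textit{true} when the condition at line~\ref{Phase1Condition} is evaluated --- equivalently, the rounds in which $v$ performs Phase~1 --- form an \emph{initial segment} of $\{3,4,5,\ldots\}$.

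I would make this precise by letting $T = \{\,r \geq 3 : v \text{ performs Phase 1 in round } r\,\}$ and establishing two facts. First, $T$ is downward closed within $\{3,4,\ldots\}$: if some $r' \geq 3$ is \emph{not} in $T$, then $\mathtt{doPhase1}$ is \textit{false} at the start of round $r'$, so by the absorbing property it stays \textit{false} at the start of every round $r > r'$, whence no such $r$ lies in $T$. Second, $3 \in T$: since $m \geq 1$ the set $T$ is nonempty, but if $3 \notin T$ then $\mathtt{doPhase1}$ would be \textit{false} at the start of round~3 (its value is fixed at line~\ref{EndPhase0} in round~2 and is untouched before the check in round~3), and the absorbing property would then force $T = \varnothing$, contradicting $m \geq 1$.

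Combining the two observations, $T$ is a downward-closed subset of $\{3,4,\ldots\}$ that contains $3$ and has cardinality $m$; the only such set is $\{3,4,\ldots,m+2\}$, which is exactly the claim. I expect the only delicate point to be the bookkeeping about \emph{when} the flag is read versus written relative to round boundaries: that rounds~1 and~2 never perform Phase~1 (round~1 is the $\mathtt{clockVal}==1$ branch and round~2 is Phase~0), that $\mathtt{doPhase1}$ is established at the end of round~2, and that between writing it at line~\ref{UpdateDP1} in one Phase~1 round and reading it at line~\ref{Phase1Condition} in the next round nothing else alters it. Everything else reduces to the one-line monotonicity argument already made available by the fact stated immediately before the proposition.
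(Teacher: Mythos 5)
Your argument is correct and is precisely the formalization the paper intends: the proposition is stated without a separate proof, relying on the observation in the preceding paragraph that $\mathtt{doPhase1}$ is written only at line~\ref{EndPhase0} in round~2 and at line~\ref{UpdateDP1} in Phase~1 rounds, hence is absorbing once \textit{false}, so the Phase~1 rounds form an initial segment of $\{3,4,\ldots\}$ of size $m$. Your bookkeeping about when the flag is read versus written is exactly the right delicate point, and your handling of it is sound.
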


The following result gives a bound on an arbitrary node $v$'s Phase 1 colour based on the current round number. The key to this result is to observe that a node applies the (modified) Cole and Vishkin approach in all Phase 1 rounds except for the last one, so we can repeatedly apply the bound from \Cref{CVChoiceOutput} to all but the last Phase 1 round.
\begin{lemma}\label{ColourDecreases}
	Suppose that the number of times that an arbitrary node $v$ performs Phase 1 is $m \geq 2$. Then, for each $j \in \{3,\ldots, m+1\}$, $\mathtt{myPhase1Col}_{j}$ at node $v$ is an integer in the range $\{0,\ldots,8\ell_{j-1}+3\}$.
\end{lemma}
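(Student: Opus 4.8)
The plan is to proceed by induction on $j$, at each step invoking \Cref{NewColourOptions} to bound the colour chosen in round $j$ and then using the fact that $\mathtt{doPhase1}$ has not yet been switched off to prune away the ``escape'' values that \Cref{NewColourOptions} also permits. The crucial structural observation is that, by \Cref{Phase1Rounds}, node $v$ performs Phase 1 in exactly the rounds $\{3,\ldots,m+2\}$; hence for every $j \in \{3,\ldots,m+1\}$ the subsequent round $j+1$ is still a Phase 1 round, so after $v$ executes line~\ref{UpdateDP1} in round $j$ the variable $\mathtt{doPhase1}$ remains $\mathbf{true}$, which forces $\mathtt{myPhase1Col}_{j} \notin \{0,\ldots,\maxCol,\mathit{Pdone},\mathit{Cdone}\}$. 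This exclusion is exactly what converts the disjunctive conclusion of \Cref{NewColourOptions} into a clean integer bound, and it also explains why the statement stops at $j=m+1$ rather than $m+2$: in the final Phase 1 round the node is allowed to adopt a special colour, so no integer bound need hold there.

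For the base case $j=3$, the first argument of the \ChooseNewColour call in round $3$ is $\mathtt{myPhase1Col}_2$. Since $v$ performs Phase 1 and $m \geq 2 \geq 1$, \Cref{NeverPhase1} gives $\mathtt{myPhase1Col}_2 \notin \{0,\ldots,\maxCol\}\cup\{\lmin,\lmax\}$; but in Phase 0 the only possible assignments are $\lmin$, $\lmax$, or the node's own ID, so $\mathtt{myPhase1Col}_2 = \mathtt{myID}$, a positive integer of binary length $\ell_2$. Thus \Cref{NewColourOptions} applies and places $\mathtt{myPhase1Col}_3$ in $\{\mathit{Pdone},\mathit{Cdone}\}\cup\{0,\ldots,11\}\cup\{0,\ldots,8\ell_2+3\}$. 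Discarding $\mathit{Pdone},\mathit{Cdone}$ via the paragraph above, and noting that $\{0,\ldots,11\}\subseteq\{0,\ldots,8\ell_2+3\}$ because $\ell_2 \geq 1$, we conclude $\mathtt{myPhase1Col}_3 \in \{0,\ldots,8\ell_2+3\}$ as a genuine non-negative integer.

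For the inductive step I would carry the non-negativity of the input colour as part of the hypothesis: assuming $\mathtt{myPhase1Col}_{j-1}$ is a non-negative integer for $j \in \{4,\ldots,m+1\}$ (which the preceding index $j-1 \in \{3,\ldots,m\}$ already supplies), the round-$j$ call $\ChooseNewColour(\mathtt{myPhase1Col}_{j-1},\cdot,\cdot)$ again meets the hypotheses of \Cref{NewColourOptions}, whose conclusion is pruned by the still-true $\mathtt{doPhase1}$ to yield $\mathtt{myPhase1Col}_j \in \{0,\ldots,8\ell_{j-1}+3\}$. The one genuine subtlety, and the point I expect to be the main obstacle, is precisely this bootstrapping of integrality: \Cref{NewColourOptions} and the underlying \Cref{CVChoiceOutput} require their first argument to be a non-negative integer so that \textsc{BinaryRep} and the suffix-free encoding are well defined, so the induction must simultaneously record that each $\mathtt{myPhase1Col}_{j}$ is an honest integer (not a special colour) in order to legitimately feed it into the next round's application. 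Once that invariant is threaded through the induction, everything else reduces to direct substitution into the two cited results.
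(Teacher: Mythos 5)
Your proposal is correct and follows essentially the same route as the paper's proof: both hinge on \Cref{Phase1Rounds} to conclude that round $j+1$ is still a Phase 1 round, hence $\mathtt{doPhase1}$ is set to \textit{true} at line~\ref{UpdateDP1} in round $j$, which rules out $\mathit{Pdone}$ and $\mathit{Cdone}$ and lets \Cref{NewColourOptions} deliver the bound (with $\{0,\ldots,11\}\subseteq\{0,\ldots,8\ell_{j-1}+3\}$ since $\ell_{j-1}\geq 1$). Your explicit inductive threading of the integrality of $\mathtt{myPhase1Col}_{j-1}$ is a point the paper leaves implicit, but it is the same argument in substance.
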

\begin{proof}
	Consider an arbitrary $j \in \{3,\ldots,m+1\}$ and consider $v$'s execution of Phase 1 in round $j$. The value of $\mathtt{myPhase1Col}$ at the start of the current round is equal to $\mathtt{myPhase1Col}_{j-1}$, and so the length of the binary representation of $\mathtt{myPhase1Col}$ at the start of the current round is equal to $\ell_{j-1}$.
	
	Since $j \leq m+1$, it follows from \Cref{Phase1Rounds} that $v$ will perform Phase 1 in round $j+1$. In particular, this means that the value \textit{true} is assigned to $\mathtt{doPhase1}$ at the end of round $j$ (i.e., at line \ref{UpdateDP1}). It follows that $\mathtt{myPhase1Col}_{j}$ is not equal to $\mathit{Cdone}$ or $\mathit{Pdone}$. Thus, by \Cref{NewColourOptions}, the value of $\mathtt{myPhase1Col}_{j}$ is in $\{0,\ldots,8\ell_{j-1}+3\}$, as desired.
\end{proof}

Using the previous bound on the Phase 1 colour in each round, we can now reason about how many times Phase 1 will be performed. We will use the following notation for nested logarithms: define $\log_2^{(0)}(x) := x$, and, for all $i \geq 1$, define $\log_2^{(i)}(x) := \log_2(\log_2^{(i-1)}(x))$. The iterated logarithm, denoted by $\log^*(x)$, is defined as the smallest non-negative value of $i$ such that $\log_2^{(i)}(x) \leq 1$. We prove that an arbitrary node $v$ performs Phase 1 at most $\log^*(\mathrm{ID}_v)+1$ times.

\begin{lemma}\label{Phase1Bound}
	Consider any node $v$ and denote by $\mathrm{ID}_v$ the label of $v$. If $\mathrm{ID}_v > \maxCol$, then $v$ performs Phase 1 at most $\log^*(\mathrm{ID}_v)+1$ times.
\end{lemma}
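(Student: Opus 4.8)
The plan is to reduce the question to counting iterations of a simple integer map, and then to bound that count by the iterated logarithm. First I would dispose of the trivial cases: if $v$ performs Phase~1 at most once, the bound holds since $\log^*(\mathrm{ID}_v)+1 \ge 1$. So assume $v$ performs Phase~1 exactly $m \ge 2$ times; by \Cref{Phase1Rounds} these are rounds $3,\dots,m+2$. Since $\mathrm{ID}_v > 51$, node $v$ is neither a local minimum nor a local maximum, so Phase~0 sets $x_2 = \mathrm{ID}_v$. For each non-final Phase~1 round $j \in \{3,\dots,m+1\}$, \Cref{ColourDecreases} gives $x_j \le 8\ell_{j-1}+3$, and writing the bit length as $\ell_{j-1} = \lfloor \log_2 x_{j-1}\rfloor + 1$ turns this into $x_j \le \psi(x_{j-1})$, where $\psi(x) := 8\lfloor \log_2 x\rfloor + 11$. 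Moreover each such $x_j$ exceeds $51$, since otherwise $\mathtt{doPhase1}$ would have been set false and Phase~1 would have ended sooner. As $\psi$ is monotone, induction gives $x_j \le \psi^{(j-2)}(\mathrm{ID}_v)$, so in particular $\psi^{(m-1)}(\mathrm{ID}_v) \ge x_{m+1} > 51$. Hence $m \le c(\mathrm{ID}_v)$, where $c(N)$ is the least number of applications of $\psi$ needed to bring $N$ down to a value $\le 51$.

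The core of the proof is then to show $c(N) \le \log^*(N)+1$ for every integer $N > 51$. The first point to settle is termination: for $x > 51$ one checks $\psi(x) < x$, so the iterates strictly decrease until they drop to $\le 51$. It is worth stressing that the integer floor in $\psi$ is essential here: the real relaxation $x \mapsto 8\log_2 x + 11$ has a fixed point near $58 > 51$, so termination genuinely relies on $\lfloor\log_2 x\rfloor$ (for instance $\psi(x)=51$ for every $x \in \{52,\dots,63\}$). To count the iterations, I would track the magnitude of the iterates against the tower function $\mathrm{tower}(0)=1$, $\mathrm{tower}(s)=2^{\mathrm{tower}(s-1)}$, for which $\log^*(N)=t$ holds exactly when $\mathrm{tower}(t-1) < N \le \mathrm{tower}(t)$. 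Since $c$ is monotone, it suffices to bound $c(\mathrm{tower}(t))$. Starting from $\mathrm{tower}(t)$, one step of $\psi$ yields $8\,\mathrm{tower}(t-1)+11$, and an induction then shows that after $i$ steps the value is at most $8\,\mathrm{tower}(t-i)+O(1)$; once $t-i$ falls to a small constant the value is bounded by an absolute constant (at most $163$), whence a fixed number of further $\psi$-steps reaches $\le 51$. Counting the steps yields $c(\mathrm{tower}(t)) \le t+1$, i.e. $c(N) \le \log^*(N)+1$, and combined with $m \le c(\mathrm{ID}_v)$ this completes the argument.

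The main obstacle is exactly this counting, and specifically controlling the benign constants — the multiplicative factor $8$ (from the length-doubling of $\textsc{EncodeSF}$) and the additive $11$ (from the CV index length) — across the $\Theta(\log^*)$ iterations. The naive hope that each $\psi$-step lowers $\log^*$ by one is false: a single step need not leave the current $\log^*$-band (e.g. from $2^{65536}$, which lies in the band with $\log^*=5$, one step lands at $524299$, still in that band), so the count cannot simply be read off as $\log^*(N)$. The careful accounting shows that the factor $8$ bumps $\log^*$ by precisely one at tower values — concretely $8\,\mathrm{tower}(s)=2^{\mathrm{tower}(s-1)+3}$ has $\log^*$ exactly one larger than $\mathrm{tower}(s)$ — and this single bump, together with the constant-size tail near the threshold $51$, is what produces the ``$+1$'' in the statement rather than a clean $\log^*(\mathrm{ID}_v)$. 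I therefore expect the delicate work to be the tower-tracking induction and the explicit verification of the constant-size base case, while the reduction via \Cref{ColourDecreases,Phase1Rounds} is routine.
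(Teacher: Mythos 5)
Your proof is correct and takes essentially the same route as the paper's: both iterate the bound from \Cref{ColourDecreases} and track the iterated-logarithmic decay of the Phase~1 colour with explicit constants, the only real difference being bookkeeping --- the paper maintains the invariant $\mathtt{myPhase1Col}_j \le 20\log_2^{(j-2)}(\mathrm{ID}_v)$ inside a proof by contradiction and finishes with the explicit tail $320 \to 75 \to 59 \to \maxCol$, whereas you compare iterates of $\psi$ against the tower function and finish with a constant-size tail near $163$. I checked that your tower induction closes (e.g.\ $\psi^{(i)}(\mathrm{tower}(t)) \le 8\,\mathrm{tower}(t-i)+43$ for $i \le t-3$, then three further steps from $171$ reach $\maxCol$), so $c(\mathrm{tower}(t)) \le t$ for $t \ge 4$, which is even slightly stronger than the $t+1$ you claim.

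One justification is wrong, although the step it supports is recoverable: $\mathrm{ID}_v > \maxCol$ does \emph{not} imply that $v$ is neither a local minimum nor a local maximum --- that status is determined by comparing $\mathrm{ID}_v$ with the neighbours' labels, not by its magnitude, and a node with an arbitrarily large label can be a local extremum. The correct reason that $x_2 = \mathrm{ID}_v$ in your case $m \ge 2$ is \Cref{NeverPhase1}: a node that performs Phase~1 at least once cannot have set $\mathtt{myPhase1Col}_2$ to $\lmin$ or $\lmax$ (nor to a value in $\{0,\ldots,\maxCol\}$), so Phase~0 must have set it to $\mathrm{ID}_v$.
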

\begin{proof}
	Suppose that ID$_v > \maxCol$. If $v$ performs Phase 1 at most once, then the desired statement is true since $1 \leq \log^*(\mathrm{ID}_v)+1$. So, the remainder of the proof deals with executions in which $v$ performs Phase 1 at least twice. 
	
	To obtain a contradiction, assume that $v$ performs Phase 1 at least $\log^*(\mathrm{ID}_v)+2 \geq 2$ times. 
	
	First, suppose that $\log^*(\mathrm{ID}_v) \leq 3$. Then $\mathrm{ID}_v \leq 16 < \maxCol$. However, by \Cref{NeverPhase1}, this means that $v$ never performs Phase 1, which contradicts our assumption that $v$ performs Phase 1 at least twice.
	
	Next, suppose that $\log^*(\mathrm{ID}_v) \geq 4$. We first prove a claim about how quickly the value of $\mathtt{myPhase1Col}$ decreases over the iterations of Phase 1.
	\begin{claim}\label{fastdecrease}
		For all $j \in \{2,\ldots,\log^*(\mathrm{ID}_v)-1\}$, we have the inequality\\ $\mathtt{myPhase1Col}_j \leq 20\log_2^{(j-2)}(\mathrm{ID}_v)$.
	\end{claim}
	To prove the claim, we proceed by induction on the value of $j$. 
	
	For the base case, consider $j=2$. Since $v$ performs Phase 1 at least twice, we can conclude from \Cref{NeverPhase1} that the value of $\mathtt{myPhase1Col}_2$ is neither $\lmin$ or $\lmax$, so from the code we see that the value of $\mathtt{myPhase1Col}_2$ is $\mathrm{ID}_v$. So we get that $20\log_2^{(j-2)}(\mathrm{ID}_v) = 20\log_2^{(0)}(\mathrm{ID}_v) = 20(\mathrm{ID}_v) \geq \mathrm{ID}_v = \mathtt{myPhase1Col}_2$, as desired.
	
	As induction hypothesis, assume that for some $j \in \{2,\ldots,\log^*(\mathrm{ID}_v)-2\}$, we have $\mathtt{myPhase1Col}_j \leq 20\log_2^{(j-2)}(\mathrm{ID}_v)$.
	
	For the induction step, we will prove that $\mathtt{myPhase1Col}_{j+1} \leq 20\log_2^{(j-1)}(\mathrm{ID}_v)$.
	
	Recall that $\ell_{j}$ denotes the length of the binary representation of $\mathtt{myPhase1Col}_{j}$. By the induction hypothesis, we have $\mathtt{myPhase1Col}_{j} \leq 20\log_2^{(j-2)}(\mathrm{ID}_v)$, so we conclude that $\ell_j \leq \lfloor \log_2(20\log_2^{(j-2)}(\mathrm{ID}_v)) \rfloor + 1$. By applying \Cref{ColourDecreases} with $m=\log^*(\mathrm{ID}_v)+2$, it follows that $\mathtt{myPhase1Col}_{j+1} \in \{0,\ldots,8\ell_j+3\}$. Combining these facts, we get
	\begin{align*}
		\mathtt{myPhase1Col}_{j+1} & \leq 8\ell_j+3\\
		& \leq 8(\lfloor \log_2(20\log_2^{(j-2)}(\mathrm{ID}_v)) \rfloor + 1)+3\\
		& \leq 8(\log_2(20\log_2^{(j-2)}(\mathrm{ID}_v)) + 1)+3\\
		& = 8(\log_2(20)+\log_2(\log_2^{(j-2)}(\mathrm{ID}_v)) + 1)+3\\
		& = 8\log_2(20)+8(\log_2^{(j-1)}(\mathrm{ID}_v)) + 11\\
		& < 46 + 8(\log_2^{(j-1)}(\mathrm{ID}_v))
	\end{align*}
	Finally, from the definition of $\log^*$ and the fact that $\log^*(\mathrm{ID}_v) \geq 4$, we get that\\ $\log_2^{(\log^*(\mathrm{ID}_v)-1)}(\mathrm{ID}_v) > 1$, so $\log_2^{(\log^*(\mathrm{ID}_v)-2)}(\mathrm{ID}_v) > 2$, and so $\log_2^{(\log^*(\mathrm{ID}_v)-3)}(\mathrm{ID}_v) > 4$. In particular, since $j-1 \leq \log^*(\mathrm{ID}_v)-3$ from our induction hypothesis, we conclude that $\log_2^{(j-1)}(\mathrm{ID}_v) > 4$. Therefore,
	\begin{align*}
		\mathtt{myPhase1Col}_{j+1} & < 46 + 8(\log_2^{(j-1)}(\mathrm{ID}_v))\\
		& < 12(4) + 8(\log_2^{(j-1)}(\mathrm{ID}_v))\\
		& < 12(\log_2^{(j-1)}(\mathrm{ID}_v))+ 8(\log_2^{(j-1)}(\mathrm{ID}_v))\\
		& = 20(\log_2^{(j-1)}(\mathrm{ID}_v))
	\end{align*}
	This concludes the proof of the induction step, and the proof of \Cref{fastdecrease}.
	
	By \Cref{fastdecrease} with $j=\log^*(\mathrm{ID}_v)-1$, we get $\mathtt{myPhase1Col}_{\log^*(\mathrm{ID}_v)-1} \leq 20\log_2^{(\log^*(\mathrm{ID}_v)-3)}(\mathrm{ID}_v)$. By the definition of $\log^*$, we get that $\log_2^{(\log^*(\mathrm{ID}_v))}(\mathrm{ID}_v) \leq 1$, so $\log_2^{(\log^*(\mathrm{ID}_v)-1)}(\mathrm{ID}_v) \leq 2$, so $\log_2^{(\log^*(\mathrm{ID}_v)-2)}(\mathrm{ID}_v) \leq 4$, thus $\log_2^{(\log^*(\mathrm{ID}_v)-3)}(\mathrm{ID}_v) \leq 16$. Thus, we conclude that $\mathtt{myPhase1Col}_{\log^*(\mathrm{ID}_v)-1} \leq 20\log_2^{(\log^*(\mathrm{ID}_v)-3)}(\mathrm{ID}_v) \leq 320$.
	
	The length of the binary representation of 320 is 9, so by \Cref{ColourDecreases}, we get that $\mathtt{myPhase1Col}_{\log^*(\mathrm{ID}_v)} \leq 8(9)+3 = 75$. The length of the binary representation of 75 is 7, so by \Cref{ColourDecreases}, we get that $\mathtt{myPhase1Col}_{\log^*(\mathrm{ID}_v)+1} \leq 8(7)+3 = 59$. The length of the binary representation of 59 is 6, so by \Cref{ColourDecreases}, we get that $\mathtt{myPhase1Col}_{\log^*(\mathrm{ID}_v)+2} \leq 8(6)+3 = \maxCol$. Therefore, the value \textit{false} is assigned to $\mathtt{doPhase1}$ at line \ref{UpdateDP1} when Phase 1 is performed in round $\log^*(\mathrm{ID}_v)+2$, so Phase 1 is not performed in round $\log^*(\mathrm{ID}_v)+3$. By \Cref{Phase1Rounds}, it follows that Phase 1 is performed no more than $\log^*(\mathrm{ID}_v)$ times. This contradicts the assumption that $v$ performs Phase 1 at least $\log^*(\mathrm{ID}_v)+2$ times.
\end{proof}

We now shift our attention to Phase 2. Since we use a round-robin strategy that cycles through token values $0,\ldots,55$, and each node chooses its final colour and terminates as soon as the token value corresponding to its Phase 1 colour comes up, we see that each node performs Phase 2 for at most 56 rounds.
\begin{lemma}\label{Phase2Bound}
	Each node $v$ performs Phase 2 at most 56 times.
\end{lemma}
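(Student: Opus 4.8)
The plan is to show that any node $v$ reaching Phase 2 spends at most $56$ consecutive rounds there, using the round-robin token structure. First I would recall the relevant facts about Phase 2 established by the pseudocode: once a node $v$ begins Phase 2 (i.e., once $\mathtt{doPhase1}$ is false), it executes lines \ref{StartPhase2}--\ref{EndPhase2} in every subsequent round. By \Cref{Phase1Rounds}, the set of Phase 1 rounds is a contiguous block $\{3,\ldots,m+2\}$, so the rounds in which $v$ performs Phase 2 are also contiguous, starting in some round $r_0$ and continuing until $v$ terminates.

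The key observation is that when $v$ enters Phase 2, its value of $\mathtt{myPhase1Col}$ lies in the finite set $\{0,\ldots,\maxCol\} \cup \{\lmin,\lmax,\mathit{Pdone},\mathit{Cdone}\}$, which has exactly $56$ elements (the $52$ integer colours $0,\ldots,51$ together with the four special colours). The termination condition at line \ref{termCondition} fires precisely when the current token value $\mathtt{clockVal} \bmod 56$ matches the token dedicated to $v$'s Phase 1 colour: the integer colours match themselves, and the four special colours match tokens $52,53,54,55$ respectively. Since $v$'s Phase 1 colour is fixed once Phase 2 begins and is one of these $56$ values, I would argue that over any window of $56$ consecutive rounds the token $\mathtt{clockVal} \bmod 56$ cycles through all residues $0,\ldots,55$ exactly once, so the matching token value is guaranteed to occur within the first $56$ rounds of $v$'s Phase 2 execution. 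When it does, $v$ chooses its final colour, sends it, and calls \textbf{terminate()} at line \ref{terminate}.

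Thus the proof is essentially a counting/pigeonhole argument on the token cycle: I would state that $v$ performs Phase 2 in rounds $r_0, r_0+1, \ldots$ and that among the residues $r_0 \bmod 56,\ldots,(r_0+55)\bmod 56$ every value in $\{0,\ldots,55\}$ appears, in particular the unique token value associated with $v$'s Phase 1 colour. Therefore $v$ terminates no later than round $r_0+55$, meaning it performs Phase 2 at most $56$ times. I do not expect any genuine obstacle here; the only point requiring minor care is confirming that the four special colours are assigned \emph{distinct} dedicated tokens ($52$ through $55$) that are disjoint from the integer-colour tokens $\{0,\ldots,51\}$, so that no two distinct Phase 1 colours compete for the same token and every possible Phase 1 colour indeed has a matching residue within one full cycle. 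This is immediate from inspecting line \ref{termCondition}, so the argument is short and self-contained.
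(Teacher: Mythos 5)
Your proof is correct and follows essentially the same approach as the paper: once $v$ enters Phase 2 its value of $\mathtt{myPhase1Col}$ is fixed in the 56-element set $\{0,\ldots,\maxCol\} \cup \{\lmin,\lmax,\mathit{Pdone},\mathit{Cdone}\}$, and the token $\mathtt{clockVal} \bmod 56$ takes each value in $\{0,\ldots,55\}$ exactly once over the next 56 rounds, so the termination condition fires within that window. The paper's proof is the same pigeonhole argument on the token cycle, so no further comparison is needed.
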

\begin{proof}
	Consider an arbitrary node $v$. Since the only terminate instruction is in Phase 2, and $v$ eventually stops executing Phase 1 (by \Cref{Phase1Bound}), we conclude that $v$ executes Phase 2 at least once. This means that the value of $\mathtt{doPhase1}$ was set to \textit{false} in some round $t \geq 2$ (either in Phase 0 or Phase 1), and that Phase 1 will never be executed after round $t$. From lines \ref{EndPhase0} and \ref{UpdateDP1}, this implies that $\mathtt{myPhase1Col} \in \{0,\ldots,\maxCol\} \cup \{\lmin,\lmax\} \cup \{\mathit{Pdone},\mathit{Cdone}\}$ at the end of round $t$, and that this value does not change after round $t$ (since $\mathtt{myPhase1Col}$ is never modified in Phase 2, i.e., in lines \ref{StartPhase2}--\ref{EndPhase2}). In each of the rounds $t+1,\ldots,t+56$, the variable $\mathtt{token}$ will take on a distinct value from the set $\{0,\ldots,55\}$, and the variable $\mathtt{myPhase1Col}$ keeps its fixed value from $\{0,\ldots,\maxCol\} \cup \{\lmin,\lmax\} \cup \{\mathit{Pdone},\mathit{Cdone}\}$. Therefore, the condition on line \ref{termCondition} will evaluate to true in one of the rounds $t+1,\ldots,t+56$, and so the terminate instruction will be performed by round $t+56$, which implies the desired result.
\end{proof}

In each node's execution of the algorithm: there is one initialization round, Phase 0 is performed once, Phase 1 is performed at most $\log^*(\mathrm{ID}_v)+1$ times (by \Cref{Phase1Bound}), and Phase 2 is performed at most 56 times (by \Cref{Phase2Bound}). This gives us the following upper bound on the running time of \EarlyStopCV at an arbitrary node $v$.
\begin{theorem}\label{terminationbound}
	At each node $v$, algorithm \EarlyStopCV terminates within $\log^*(\mathrm{ID}_v)+59$ rounds.
\end{theorem}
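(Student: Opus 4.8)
The plan is to prove the bound by a direct accounting of the rounds in node $v$'s execution, summing the per-phase bounds already established and adding the two fixed setup rounds. The key structural observation I would rely on is that, because of the if/elsif cascade in \Cref{EarlyStopCV} keyed on $\mathtt{clockVal}$ and $\mathtt{doPhase1}$, every round of $v$'s execution falls into exactly one of four mutually exclusive categories: the single initialization round (round 1), the single Phase 0 round (round 2), a Phase 1 round, or a Phase 2 round. Moreover, once $\mathtt{doPhase1}$ is set to \textit{false} it is never reset (as noted preceding \Cref{Phase1Rounds}), so $v$'s execution proceeds strictly in the order: round 1, round 2, then a (possibly empty) contiguous block of Phase 1 rounds, then a contiguous block of Phase 2 rounds during which termination occurs. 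This sequencing is what lets the three bounds add rather than interact.

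First I would bound the number of Phase 1 rounds. If $\mathrm{ID}_v > \maxCol$, then \Cref{Phase1Bound} gives at most $\log^*(\mathrm{ID}_v)+1$ Phase 1 rounds directly. If $\mathrm{ID}_v \leq \maxCol$, then after Phase 0 the value of $\mathtt{myPhase1Col}_2$ lies in $\{0,\ldots,\maxCol\}\cup\{\lmin,\lmax\}$ (it is either $\mathrm{ID}_v$, or one of $\lmin,\lmax$ when $v$ is a local extremum), so by \Cref{NeverPhase1} the node performs Phase 1 zero times, which is again at most $\log^*(\mathrm{ID}_v)+1$. Hence in all cases the number $m$ of Phase 1 rounds satisfies $m \leq \log^*(\mathrm{ID}_v)+1$, and by \Cref{Phase1Rounds} these rounds are exactly $3,\ldots,m+2$ (or there are none when $m=0$).

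Next I would invoke \Cref{Phase2Bound}, which guarantees that $v$ performs Phase 2 at most $56$ times before executing the terminate instruction. Since Phase 2 begins immediately after the last Phase 1 round (or right after round 2 when $m=0$), termination occurs by round $(m+2)+56 = m+58$. Substituting $m \leq \log^*(\mathrm{ID}_v)+1$ yields termination by round $\log^*(\mathrm{ID}_v)+59$, as claimed.

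There is essentially no difficult analytical step remaining: all the quantitative content of the bound already resides in \Cref{Phase1Bound,Phase2Bound}. The only points requiring care are the bookkeeping that confirms the phases are sequential and non-overlapping (so the counts simply add), and the small-label edge case $\mathrm{ID}_v \leq \maxCol$, which lies outside the hypothesis of \Cref{Phase1Bound} and must be handled separately through \Cref{NeverPhase1}.
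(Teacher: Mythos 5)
Your proof is correct and follows essentially the same approach as the paper: summing the one initialization round, the one Phase 0 round, the at most $\log^*(\mathrm{ID}_v)+1$ Phase 1 rounds from \Cref{Phase1Bound}, and the at most $56$ Phase 2 rounds from \Cref{Phase2Bound}. Your explicit handling of the case $\mathrm{ID}_v \leq \maxCol$ via \Cref{NeverPhase1} is a small point of added care that the paper glosses over, but it does not change the argument.
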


\subsection{Algorithm analysis: correctness}\label{sec:linecolouringcorrectness}

For any node $x$ and any $j \geq 2$, denote by $\mathtt{myPhase1Col}_{x,j}$ the value of $\mathtt{myPhase1Col}_{j}$ for node $x$, that is, the value of $\mathtt{myPhase1Col}$ at node $x$ at the end of round $j$. 

Suppose that $v$ sets its $\mathtt{parent}$ variable to some non-null value in round 2 (i.e., at line \ref{setParentA} or \ref{setParentB}). If node $w$ is $v$'s $A$-neighbour and $v$ sets $\mathtt{parent}$ to $A$, or, if node $w$ is $v$'s $B$-neighbour and $v$ sets $\mathtt{parent}$ to $B$, then we say that $v$'s $\mathtt{parent}$ variable \emph{corresponds} to $w$. Similarly, suppose that $v$ sets its $\mathtt{child}$ variable to some non-null value in round 2 (i.e., at line \ref{setChildB} or \ref{setChildA}). If node $w$ is $v$'s $A$-neighbour and $v$ sets $\mathtt{child}$ to $A$, or, if node $w$ is $v$'s $B$-neighbour and $v$ sets $\mathtt{child}$ to $B$, then we say that $v$'s $\mathtt{child}$ variable \emph{corresponds} to $w$.

Consider any three nodes $u,v,w$ such that $u$ and $w$ are $v$'s neighbours. We say that a node $v$ is a \emph{local minimum} if $\mathrm{ID}_v < \mathrm{ID}_u$ and $\mathrm{ID}_v < \mathrm{ID}_w$. We say that a node $v$ is a \emph{local maximum} if $\mathrm{ID}_v > \mathrm{ID}_u$ and $\mathrm{ID}_v > \mathrm{ID}_w$. Notice that if $v$ is neither a local minimum or a local maximum, then it must be the case that either $\mathrm{ID}_u > \mathrm{ID}_v > \mathrm{ID}_w$ or $\mathrm{ID}_u < \mathrm{ID}_v < \mathrm{ID}_w$. Using these facts and lines \ref{StartPhase0}--\ref{EndPhase0} of \EarlyStopCV, we make the observations stated in \Cref{partition}: essentially, at the end of round 2, the graph has been partitioned into subpaths that are separated by nodes that are local maxima/minima. Moreover, the parent/child pointers at each node in the subpaths are assigned in a consistent way so that the subpaths are directed (in particular, $\mathtt{parent}$ points to a node with smaller ID, and $\mathtt{child}$ points to a node with a larger ID).

\begin{proposition}\label{partition}
	Consider any two neighbouring processes $v,w$.
	\begin{itemize}
		\item $v$ is a local minimum if and only if $\mathtt{myPhase1Col}_{v,2} = \lmin$
		\item $v$ is a local maximum if and only if $\mathtt{myPhase1Col}_{v,2} = \lmax$
		\item If $v$ is not a local maximum/minimum and $w$ is a local minimum, then $v$'s $\mathtt{parent}$ variable corresponds to $w$ in all rounds $t \geq 3$.
		\item If $v$ is not a local maximum/minimum and $w$ is a local maximum, then $v$'s $\mathtt{child}$ variable corresponds to $w$ in all rounds $t \geq 3$.
		\item If both $v$ and $w$ are not a local maximum/minimum, then:
		\begin{itemize}
			\item If $\mathrm{ID}_v > \mathrm{ID}_w$, then $v$'s $\mathtt{parent}$ variable corresponds to $w$ in all rounds $t \geq 3$, and $w$'s $\mathtt{child}$ variable corresponds to $v$ in all rounds $t \geq 3$.
			\item If $\mathrm{ID}_v < \mathrm{ID}_w$, then $v$'s $\mathtt{child}$ variable corresponds to $w$ in all rounds $t \geq 3$, and $w$'s $\mathtt{parent}$ variable corresponds to $v$ in all rounds $t \geq 3$.
		\end{itemize}
	\end{itemize}
\end{proposition}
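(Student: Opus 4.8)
The plan is to prove each bullet by directly inspecting the Phase~0 code (lines \ref{StartPhase0}--\ref{EndPhase0} of \EarlyStopCV), since every quantity mentioned in the proposition — namely $\mathtt{myPhase1Col}_{v,2}$ and the $\mathtt{parent}$/$\mathtt{child}$ variables — is set during round~2 and, crucially, is never altered afterward. A quick scan confirms that $\mathtt{parent}$ and $\mathtt{child}$ are assigned only at lines \ref{setParentA}--\ref{setChildA} inside the round-2 branches, and neither Phase~1 nor Phase~2 touches them; hence any correspondence established at the end of round~2 persists in every round $t \geq 3$. This reduces each claim to a statement about the values assigned in the single execution of Phase~0.

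For the first two bullets I would argue the two iff-statements symmetrically. The special colour $\lmin$ is assigned to $\mathtt{myPhase1Col}$ only in the first branch of the Phase~0 conditional, whose guard ($\mathtt{myID}$ smaller than both neighbours' IDs) is exactly the definition of a local minimum; since $\lmin \neq \lmax$ and neither equals any ID or any non-negative colour, $\mathtt{myPhase1Col}_{v,2} = \lmin$ can arise from no other branch. Thus $v$ is a local minimum iff this branch is taken iff $\mathtt{myPhase1Col}_{v,2}=\lmin$, and the argument for $\lmax$ and local maximum is identical.

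The remaining three bullets form a case analysis driven by the structural observation recorded just before the proposition (valid because all IDs are distinct): a node $v$ that is neither a local maximum nor a local minimum has exactly one neighbour of smaller ID and one of larger ID. Reading off the last two branches of the Phase~0 conditional, such a $v$ always sets $\mathtt{parent}$ to its smaller-ID neighbour and $\mathtt{child}$ to its larger-ID neighbour — this holds whether the neighbourhood IDs increase towards $B$ (lines \ref{setParentA}--\ref{setChildB}) or towards $A$ (lines \ref{setParentB}--\ref{setChildA}). With this in hand: if $w$ is a local minimum then $\mathrm{ID}_w < \mathrm{ID}_v$, so $w$ is $v$'s smaller-ID neighbour and $v$'s $\mathtt{parent}$ corresponds to $w$ (bullet~3); if $w$ is a local maximum then $\mathrm{ID}_w > \mathrm{ID}_v$, so $w$ is $v$'s larger-ID neighbour and $v$'s $\mathtt{child}$ corresponds to $w$ (bullet~4); and if both $v,w$ are non-extremal, comparing $\mathrm{ID}_v$ with $\mathrm{ID}_w$ determines the parent and child at each endpoint, yielding the symmetric statement of bullet~5.

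The only subtlety — and the closest thing to an obstacle — is that each node independently chooses which neighbour it names $A$ and which it names $B$, so correspondence must be phrased purely in terms of ID comparisons rather than the $A/B$ labels. I would therefore verify that both the ``increase towards $A$'' and ``increase towards $B$'' branches yield the same assignment of the smaller-ID neighbour to $\mathtt{parent}$ and the larger-ID neighbour to $\mathtt{child}$; once that is checked, every bullet follows by substituting the appropriate ID inequality. No nontrivial computation is involved: the entire argument is bookkeeping over the four Phase~0 branches together with the observation that the relevant variables are frozen after round~2.
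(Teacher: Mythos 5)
Your proof is correct and takes essentially the same route the paper intends: the paper states this proposition without a formal proof, presenting it as a direct observation from lines \ref{StartPhase0}--\ref{EndPhase0} together with the remark that a non-extremal node has exactly one smaller-ID and one larger-ID neighbour. Your case analysis over the four Phase~0 branches, combined with the (correct) observation that $\mathtt{parent}$ and $\mathtt{child}$ are assigned only in round~2 and never reassigned, is precisely the bookkeeping the paper leaves implicit.
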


Each node is choosing two colours: its ``Phase 1'' colour (which it will use when performing Phase 2) and its ``final'' colour (which is its colour upon terminating the algorithm). When choosing these colours, it is important for the node to have accurate information about its neighbours' most recently chosen colours. Our next goal is to prove that this is the case.

For each node $x$, denote by $\mathrm{StartP2}_x$ the earliest round in which node $x$ performs Phase 2, which is well-defined integer greater than 2, since $x$ performs Phase 2 at least once after round 2 (otherwise $x$ would not terminate its execution, which would contradict \Cref{terminationbound}). If a node $x$ never performs Phase 1, then $\mathrm{StartP2}_x = 3$. Otherwise, if the number of times that a node $x$ performs Phase 1 is $m \geq 1$, then $\mathrm{StartP2}_x = m+3$ (by \Cref{Phase1Rounds}).

The next result states that, as long as a node $v$ is still performing Phase 1, it will have the most up-to-date values of its neighbours' Phase 1 colours.

\begin{proposition}\label{ABStoresP1Col}
	Consider the execution of the algorithm by any node $v$, and suppose that node $y$ is node $v$'s $A$-neighbour, and that node $z$ is node $v$'s $B$-neighbour. Then, for any round $j \in \{2,\ldots,\mathrm{StartP2}_v-1\}$, the value stored in node $v$'s variable $\mathtt{Acol.P1}$ at the end of round $j$ is equal to $\mathtt{myPhase1Col}_{y,j}$, and the value stored in node $v$'s variable $\mathtt{Bcol.P1}$ at the end of round $j$ is equal to $\mathtt{myPhase1Col}_{z,j}$.
\end{proposition}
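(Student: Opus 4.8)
The statement asserts that a node $v$ running \EarlyStopCV always has up-to-date copies of its neighbours' Phase 1 colours in $\mathtt{Acol.P1}$ and $\mathtt{Bcol.P1}$, throughout all rounds up until $v$ itself leaves Phase 1. The plan is to prove this by induction on the round index $j \in \{2,\ldots,\mathrm{StartP2}_v-1\}$, tracking exactly which lines of the pseudocode update $v$'s stored copies and what messages the neighbours $y$ (the $A$-neighbour) and $z$ (the $B$-neighbour) broadcast in each such round. The key structural fact I would lean on is that, in every round $j$ in this range, node $v$ is still performing Phase 1 (by definition of $\mathrm{StartP2}_v$ together with \Cref{Phase1Rounds}), so $v$ executes the receiving lines \ref{fromAP1}--\ref{fromBP1} at the end of round $j$. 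The crucial point is that these lines overwrite $\mathtt{Acol.}\mathit{key}$ (resp.\ $\mathtt{Bcol.}\mathit{key}$) only when a message of the form $(\mathit{key},\mathit{val})$ arrives, and I must argue that whenever a neighbour is still choosing Phase 1 colours, it sends a message tagged ``P1'' that lands in the $\mathtt{.P1}$ slot.

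\textbf{Base case and the message-sending analysis.} For the base case $j=2$, every node performs Phase 0: each sends its chosen $\mathtt{myPhase1Col}$ to both neighbours at line \ref{P0SendP1Col}, and $v$ stores the received values into $\mathtt{Acol.P1}$ and $\mathtt{Bcol.P1}$ at lines \ref{P0StoreA}--\ref{P0StoreB}. Since the value $y$ sends is exactly $\mathtt{myPhase1Col}_{y,2}$ and likewise for $z$, the base case holds. For the inductive step, assume the claim holds at the end of round $j-1$ with $j \leq \mathrm{StartP2}_v-1$; I would then analyze round $j$. The main work is a case analysis on whether each neighbour is performing Phase 1 or Phase 2 in round $j$. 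If $y$ performs Phase 1 in round $j$, it executes line \ref{P1SendP1}, broadcasting $(\text{``P1''},\mathtt{myPhase1Col}_{y,j})$, which $v$ stores into $\mathtt{Acol.P1}$ at line \ref{fromAP1}; this gives the desired equality directly. If instead $y$ has already moved to Phase 2 by round $j$, then $y$ sends no Phase 1 message (it either sends a ``final'' message at line \ref{terminate} or nothing at all), so $v$'s $\mathtt{Acol.P1}$ is \emph{not} overwritten and retains its value from round $j-1$; I must then argue that this retained value still equals $\mathtt{myPhase1Col}_{y,j}$.

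\textbf{The main obstacle.} The subtle case — and the one I expect to be the crux — is precisely when a neighbour $y$ has stopped performing Phase 1 before round $j$. Here I need two things to line up: first, that once $y$ stops Phase 1, its value $\mathtt{myPhase1Col}_{y}$ never changes again (so $\mathtt{myPhase1Col}_{y,j} = \mathtt{myPhase1Col}_{y,j'}$ for the last round $j'$ in which $y$ sent a P1 message), and second, that $v$ correctly recorded that final Phase 1 value before $y$ went silent. The first follows from the observation (already used in the proof of \Cref{Phase2Bound}) that $\mathtt{myPhase1Col}$ is never modified in Phase 2, i.e., never touched in lines \ref{StartPhase2}--\ref{EndPhase2}. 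The second requires care about timing: I must confirm that in the last round $y$ performed Phase 1, it still broadcast via line \ref{P1SendP1}, and that $v$ was still receiving (still in Phase 1 itself, which holds since $j \leq \mathrm{StartP2}_v - 1$). Chaining the induction hypothesis through the intervening rounds in which $v$'s copy is untouched but $y$'s value is frozen closes the gap. The potential pitfall to handle explicitly is ensuring that a ``final'' message from $y$ cannot corrupt the $\mathtt{Acol.P1}$ slot — this is guaranteed by the key-tagging convention, since a ``final'' message writes only to $\mathtt{Acol.final}$ and leaves $\mathtt{Acol.P1}$ intact. Once these observations are assembled, the inductive step goes through symmetrically for both neighbours, completing the proof.
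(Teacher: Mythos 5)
Your proposal is correct and follows essentially the same route as the paper's proof: induction on the round index with a base case at Phase 0, and an inductive step that splits on whether the neighbour performs Phase 1 in the current round, using that a neighbour's $\mathtt{myPhase1Col}$ is frozen once it leaves Phase 1 and that the key-tagging convention prevents ``final'' messages from overwriting the $\mathtt{.P1}$ slot. The only cosmetic difference is that you phrase the ``frozen value'' argument as chaining back to the last round the neighbour broadcast a P1 message, whereas the paper simply observes round by round that neither the neighbour's value nor $v$'s stored copy changes, so the induction hypothesis carries through directly.
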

\begin{proof}
	The proof is by induction on $j$. For the base case, consider $j=2$. When node $y$ executes line \ref{P0SendP1Col} in round 2, it sends the value of $\mathtt{myPhase1Col}_{y,2}$ since the value stored in its $\mathtt{myPhase1Col}$ does not change in lines \ref{P0SendP1Col}--\ref{EndPhase0}. Similarly, when node $z$ executes line \ref{P0SendP1Col} in round 2, it sends the value of $\mathtt{myPhase1Col}_{z,2}$. When node $v$ executes lines \ref{P0StoreA} and \ref{P0StoreB} in round 2, it stores the value of $y$'s sent message in $\mathtt{Acol.P1}$, and it stores the value of $z$'s sent message in $\mathtt{Bcol.P1}$, and then does not change these values at line \ref{EndPhase0}. It follows that, at the end of round 2, node $v$ has $\mathtt{Acol.P1}=\mathtt{myPhase1Col}_{y,2}$ and has $\mathtt{Bcol.P1}=\mathtt{myPhase1Col}_{z,2}$.
	
	As induction hypothesis, assume that for some $j \in \{2,\ldots,\mathrm{StartP2}_v-2\}$, we have that the value stored in node $v$'s variable $\mathtt{Acol.P1}$ at the end of round $j$ is equal to $\mathtt{myPhase1Col}_{y,j}$, and the value stored in node $v$'s variable $\mathtt{Bcol.P1}$ at the end of round $j$ is equal to $\mathtt{myPhase1Col}_{z,j}$.
	
	For the induction step, consider round $j+1 \in \{3,\ldots,\mathrm{StartP2}_v-1\}$. From the definition of $\mathrm{StartP2}_v$, it follows that $v$ performs Phase 1 in round $j+1$, i.e., it executes lines \ref{StartPhase1}--\ref{UpdateDP1}. We consider two cases based on the behaviour of node $y$ in round $j+1$:
	\begin{itemize}
		\item {\bf Suppose that $y$ performs Phase 1 in round $j+1$:}\\
		Note that, if $y$ performs Phase 1 in round $j+1$, then it sends (``P1'',$\mathtt{myPhase1Col}$) to $v$ at line \ref{P1SendP1} in round $j+1$. Since the value of $\mathtt{myPhase1Col}$ is not updated in lines \ref{P1SendP1}--\ref{UpdateDP1}, it follows that the value of $\mathtt{myPhase1Col}$ that node $y$ sent at line \ref{P1SendP1} in round $j+1$ is equal to $\mathtt{myPhase1Col}_{y,j+1}$. Therefore, when node $v$ executes line \ref{fromAP1} in round $j+1$, it receives $y$'s message and stores the value $\mathtt{myPhase1Col}_{y,j+1}$ in $\mathtt{Acol.P1}$, and then does not change the value of $\mathtt{Acol.P1}$ in any of the lines \ref{fromBP1}--\ref{UpdateDP1}. It follows that, at the end of round $j+1$, node $v$ has $\mathtt{Acol.P1}=\mathtt{myPhase1Col}_{y,j+1}$.
		\item {\bf Suppose that $y$ does not perform Phase 1 in round $j+1$:}\\
		From the code, we see that, if $y$ does not perform Phase 1 in round $j+1 \geq 3$, then node $y$ does not send any message of the form (``P1'', \textit{val}) in round $j+1$ (since $y$ does not send such a message in Phase 2 or after terminating). It follows that node $v$ does not receive any message of the form (``P1'', \textit{val}) from its neighbour $A$, so node $v$ does not update the value of $\mathtt{Acol.P1}$ at line \ref{fromAP1} in round $j+1$. Since the value of $\mathtt{Acol.P1}$ is not updated at any other line in \ref{StartPhase1}--\ref{UpdateDP1}, it follows that $v$ does not update $\mathtt{Acol.P1}$ during round $j+1$. Thus, we have shown that the value of node $v$'s $\mathtt{Acol.P1}$ at the end of round $j+1$ is the same as it was at the end of round $j$, which, by the induction hypothesis, is $\mathtt{myPhase1Col}_{y,j}$. However, if $y$ does not perform Phase 1 in round $j+1 \geq 3$, then the value of $y$'s variable $\mathtt{myPhase1Col}$ is not modified in round $j+1$ (since $y$ does not modify $\mathtt{myPhase1Col}$ in Phase 2 or after terminating). It follows that $\mathtt{myPhase1Col}_{y,j+1} = \mathtt{myPhase1Col}_{y,j}$, which concludes the proof that, at the end of round $j+1$, node $v$ has $\mathtt{Acol.P1}=\mathtt{myPhase1Col}_{y,j+1}$.
	\end{itemize}
	We proved in both cases that, at the end of round $j+1$, $\mathtt{Acol.P1}=\mathtt{myPhase1Col}_{y,j+1}$ at node $v$. The same case analysis about the behaviour of node $z$ in round $j+1$ proves that, at the end of round $j+1$, node $v$ has $\mathtt{Bcol.P1}=\mathtt{myPhase1Col}_{z,j+1}$, which concludes the induction.
\end{proof}

The following fact confirms that each call to \ChooseNewColour uses accurate information about the Phase 1 colours of a node's parent and child. This is true by \Cref{ABStoresP1Col} (which confirms that $\mathtt{Acol.P1}$ and $\mathtt{Bcol.P1}$ correctly store the Phase 1 colours of a node's $A$-neighbour and $B$-neighbour from the previous round), along with lines \ref{StartPhase1}--\ref{EndSwitch} (which make the appropriate call to \ChooseNewColour depending on whether a node's parent is $A$ or $B$).
\begin{corollary}\label{CorrectParentChild}
	Consider any node $v$ that performs Phase 1 in some round $t \geq 3$, and let $w$ be a neighbour of $v$. If node $v$'s $\mathtt{parent}$ variable corresponds to $w$, then node $v$'s call to \ChooseNewColour in round $t$ provides $\mathtt{myPhase1Col}_{w,t-1}$ as the second input parameter, i.e., in the execution of \ChooseNewColour in round $t$, the value of $\mathtt{ParentPhase1Col}$ is $\mathtt{myPhase1Col}_{w,t-1}$. If node $v$'s $\mathtt{child}$ variable corresponds to $w$, then node $v$'s call to \ChooseNewColour in round $t$ provides $\mathtt{myPhase1Col}_{w,t-1}$ as the third input parameter, i.e., in the execution of \ChooseNewColour in round $t$, the value of $\mathtt{ChildPhase1Col}$ is $\mathtt{myPhase1Col}_{w,t-1}$.
\end{corollary}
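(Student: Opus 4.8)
The plan is to prove the statement by directly unwinding the branch in lines \ref{StartPhase1}--\ref{EndSwitch} of \EarlyStopCV and combining it with \Cref{ABStoresP1Col}, which already guarantees that the variables $\mathtt{Acol.P1}$ and $\mathtt{Bcol.P1}$ faithfully track the neighbours' Phase 1 colours from the previous round. The only real content is a careful bookkeeping of two separate correspondences: the one between $v$'s $\mathtt{parent}$ (resp.\ $\mathtt{child}$) variable and the neighbour $w$, and the one between $w$ and $v$'s internal labels $A$ and $B$. First I would fix a node $v$ performing Phase 1 in round $t \geq 3$ and a neighbour $w$, recall that $v$ references $w$ as either its $A$-neighbour or its $B$-neighbour, and handle the $\mathtt{parent}$ statement by splitting on these two subcases.

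Consider the subcase where $w$ is $v$'s $A$-neighbour. By the definition of ``corresponds'', the hypothesis that $v$'s $\mathtt{parent}$ variable corresponds to $w$ means that $v$ set $\mathtt{parent} \leftarrow A$ in round $2$. Hence in round $t$ the condition at line \ref{StartPhase1} is true and $v$ executes line \ref{NewColourParentA}, so the second argument passed to \ChooseNewColour, namely $\mathtt{ParentPhase1Col}$, is the value held in $\mathtt{Acol.P1}$ at that point of round $t$. The key timing observation is that within a Phase 1 round the variable $\mathtt{Acol.P1}$ is modified only at line \ref{fromAP1}, which is executed strictly after the call on line \ref{NewColourParentA}; therefore the value read by the call is exactly the value of $\mathtt{Acol.P1}$ at the end of round $t-1$. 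To invoke \Cref{ABStoresP1Col} on round $t-1$ I must check the index range: since $v$ performs Phase 1 in round $t$, \Cref{Phase1Rounds} shows that the Phase 1 rounds form the interval $\{3,\ldots,m+2\}$ with $\mathrm{StartP2}_v = m+3$, so $t \leq \mathrm{StartP2}_v - 1$ and thus $t-1 \in \{2,\ldots,\mathrm{StartP2}_v-1\}$. \Cref{ABStoresP1Col} then gives that this end-of-round-$(t-1)$ value equals $\mathtt{myPhase1Col}_{w,t-1}$, since $w$ is the $A$-neighbour. The subcase where $w$ is $v$'s $B$-neighbour is identical after replacing $A$ by $B$: then $v$ set $\mathtt{parent} \leftarrow B$, so line \ref{NewColourParentB} runs and $\mathtt{ParentPhase1Col}$ reads $\mathtt{Bcol.P1}$, which at the time of the call still holds its end-of-round-$(t-1)$ value and hence equals $\mathtt{myPhase1Col}_{w,t-1}$.

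The $\mathtt{child}$ statement is proved by the same argument with the roles of the second and third parameters of \ChooseNewColour swapped: when $\mathtt{parent}==A$ the third argument is $\mathtt{Bcol.P1}$, and when $\mathtt{parent}==B$ it is $\mathtt{Acol.P1}$, so in each case the neighbour to which $\mathtt{child}$ corresponds is read off the correct stored variable and, by the same timing and range facts, equals $\mathtt{myPhase1Col}_{w,t-1}$. I expect the main (and essentially only) obstacle to be getting the within-round ordering right --- i.e.\ justifying that the call to \ChooseNewColour reads the neighbours' colours as of the \emph{previous} round rather than values freshly received in round $t$ --- together with threading the two correspondences so that the $A/B$ label matched by \Cref{ABStoresP1Col} is the same one selected by the $\mathtt{parent}$/$\mathtt{child}$ branch. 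Everything else is immediate from \Cref{ABStoresP1Col} and \Cref{Phase1Rounds}.
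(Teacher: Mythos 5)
Your proof is correct and follows essentially the same route as the paper, which presents this statement as an immediate consequence of \Cref{ABStoresP1Col} together with the branching at lines \ref{StartPhase1}--\ref{EndSwitch}; you simply make explicit the bookkeeping the paper leaves implicit (the $A$/$B$ case split, the within-round ordering showing the call reads the end-of-round-$(t-1)$ values, and the index-range check $t-1 \in \{2,\ldots,\mathrm{StartP2}_v-1\}$ via \Cref{Phase1Rounds}). All of these details check out against the pseudocode.
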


Once a node stops performing Phase 1, it never changes its Phase 1 colour ever again, and so the node has ``settled'' on a Phase 1 colour. It uses this settled Phase 1 colour when choosing its final colour in Phase 2. More importantly, if a node $w$ terminates its execution in an earlier round than node $v$, we will depend on the fact that node $v$ knows $w$'s settled Phase 1 colour while node $v$ is still choosing its Phase 1 colour. Denote by $\mathrm{SettledPhase1Col}_x$ the value of $\mathtt{myPhase1Col}$ at node $x$ at the end of round $\mathrm{StartP2}_x-1$. The following result follows from the fact that the value of $\mathtt{myPhase1Col}$ is never modified in Phase 2, so its value is fixed in all rounds from the start of Phase 2.

\begin{proposition}\label{SettledP1Col}
	For any node $x$ and any $t \geq \mathrm{StartP2}_x-1$, we have\\ $\mathtt{myPhase1Col}_{x,t} = \mathrm{SettledPhase1Col}_x$.
\end{proposition}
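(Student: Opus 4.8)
The plan is to prove this by a short induction on the round number $t$, starting from the base round $\mathrm{StartP2}_x-1$, exploiting the fact that $\mathtt{myPhase1Col}$ is never written once node $x$ enters Phase 2. The base case $t=\mathrm{StartP2}_x-1$ is immediate: by the definition of $\mathrm{SettledPhase1Col}_x$ as the value of $\mathtt{myPhase1Col}$ at node $x$ at the end of round $\mathrm{StartP2}_x-1$, we have $\mathtt{myPhase1Col}_{x,\mathrm{StartP2}_x-1}=\mathrm{SettledPhase1Col}_x$ by fiat.

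For the inductive step, the first thing I would establish is that from round $\mathrm{StartP2}_x$ onward, node $x$ executes only the Phase 2 branch of \Cref{EarlyStopCV} (it never re-enters Phase 1). This holds because $\mathtt{doPhase1}$ has been set to \textit{false} by the end of round $\mathrm{StartP2}_x-1$ --- either at line \ref{EndPhase0} in the case where $x$ never performs Phase 1 (so $\mathrm{StartP2}_x=3$), or at line \ref{UpdateDP1} during the final Phase 1 round, which by \Cref{Phase1Rounds} is exactly round $\mathrm{StartP2}_x-1$. Moreover, $\mathtt{doPhase1}$ is modified only in round 2 or during a Phase 1 round, and no such round occurs after $\mathrm{StartP2}_x-1$; hence the guard $\mathtt{doPhase1}==\mathbf{true}$ is false in every round $t\geq \mathrm{StartP2}_x$, and the top-level conditional routes $x$ into the Phase 2 block.

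It then remains to inspect the Phase 2 code (lines \ref{StartPhase2}--\ref{EndPhase2} together with the message-receiving lines \ref{fromAP2}--\ref{fromBP2}) and observe that none of these lines assigns to $\mathtt{myPhase1Col}$: Phase 2 only reads $\mathtt{myPhase1Col}$ in the token comparison at line \ref{termCondition}, computes $\mathtt{myFinalCol}$, sends messages, possibly terminates, and updates the neighbour-colour variables $\mathtt{Acol}$ and $\mathtt{Bcol}$. Consequently, assuming $\mathtt{myPhase1Col}_{x,t}=\mathrm{SettledPhase1Col}_x$ for some $t\geq \mathrm{StartP2}_x-1$, the value is unchanged during round $t+1$ --- and if $x$ has already terminated by round $t+1$ its variables are frozen at their final values --- so $\mathtt{myPhase1Col}_{x,t+1}=\mathrm{SettledPhase1Col}_x$, completing the induction. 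There is no substantive mathematical obstacle here; the only point demanding care is stating precisely that $\mathtt{myPhase1Col}$ admits no further write after round $\mathrm{StartP2}_x-1$, which decomposes into the monotonicity of $\mathtt{doPhase1}$ (so Phase 1 is never revisited) and the code-level fact that the Phase 2 branch leaves $\mathtt{myPhase1Col}$ untouched.
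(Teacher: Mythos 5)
Your proof is correct and follows the same reasoning the paper uses: the paper justifies this proposition in a single sentence by noting that $\mathtt{myPhase1Col}$ is never modified once the node enters Phase 2, which is exactly the content of your induction (base case by definition of $\mathrm{SettledPhase1Col}_x$, inductive step from the facts that $\mathtt{doPhase1}$ stays \textit{false} and the Phase 2 branch never writes $\mathtt{myPhase1Col}$). Your version simply makes explicit the details the paper leaves implicit.
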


The main fact we want to prove is that every pair of neighbouring nodes $v,w$ will settle on different Phase 1 colours, as this will enable them to choose different colours from the set $\{0,1,2\}$ in Phase 2. Our approach is to first show that if $v,w$ start any round $t \geq 3$ with different Phase 1 colours, then they will have different Phase 1 colours at the end of round $t$, and then use this fact in an induction argument to prove that the colours they eventually settled on are different. There are several situations to consider depending on various factors: (1) if at least one of $v,w$ is a local maximum or local minimum, (2) the parent/child relationship between the two nodes, and, (3) when each node stops performing Phase 1 (i.e., in the same round, or if one stops earlier than the other).

One easy case to consider is when neither $v$ or $w$ perform Phase 1 in a round $t \geq 3$. Then neither node modifies their Phase 1 colour, so their chosen Phase 1 colours will still be different at the end of the round.

\begin{lemma}\label{neitherDoPhase1}
	Consider two neighbouring nodes $v,w$ and any round $t \geq 3$. Suppose that $\mathtt{myPhase1Col}_{v,t-1} \neq \mathtt{myPhase1Col}_{w,t-1}$. If neither $v$ or $w$ perform Phase 1 in round $t$, then we have $\mathtt{myPhase1Col}_{v,t} \neq \mathtt{myPhase1Col}_{w,t}$.
\end{lemma}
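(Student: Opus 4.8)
Lemma \ref{neitherDoPhase1} (restated):

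The plan is to prove this by directly inspecting what happens when a node does \emph{not} perform Phase 1 in a given round $t \geq 3$. The central observation, which I would establish first, is that the variable $\mathtt{myPhase1Col}$ is only ever modified in two places in the pseudocode of \Cref{EarlyStopCV}: during Phase 0 (round 2, lines \ref{StartPhase0}--\ref{EndPhase0}) and during a round in which the node performs Phase 1 (lines \ref{StartPhase1}--\ref{UpdateDP1}, specifically at line \ref{NewColourParentA} or \ref{NewColourParentB}). Crucially, $\mathtt{myPhase1Col}$ is never modified during Phase 2 (lines \ref{StartPhase2}--\ref{EndPhase2}), and a terminated node does nothing at all. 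Therefore, if a node does not perform Phase 1 in round $t \geq 3$, then it is either performing Phase 2 or has already terminated, and in either case its value of $\mathtt{myPhase1Col}$ is unchanged during round $t$.

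Applying this observation to both $v$ and $w$ gives the result almost immediately. Since neither $v$ nor $w$ performs Phase 1 in round $t$, we conclude that $\mathtt{myPhase1Col}_{v,t} = \mathtt{myPhase1Col}_{v,t-1}$ and $\mathtt{myPhase1Col}_{w,t} = \mathtt{myPhase1Col}_{w,t-1}$. Combining these two equalities with the hypothesis $\mathtt{myPhase1Col}_{v,t-1} \neq \mathtt{myPhase1Col}_{w,t-1}$, we obtain $\mathtt{myPhase1Col}_{v,t} = \mathtt{myPhase1Col}_{v,t-1} \neq \mathtt{myPhase1Col}_{w,t-1} = \mathtt{myPhase1Col}_{w,t}$, which is exactly the desired conclusion.

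This is one of the genuinely easy cases in the broader correctness argument, so I expect no real obstacle here; the only care required is in justifying the claim that $\mathtt{myPhase1Col}$ is inert outside of Phase 0 and Phase 1. The justification is purely a matter of reading the pseudocode and confirming that no line in the Phase 2 block (lines \ref{StartPhase2}--\ref{fromBP2}) assigns to $\mathtt{myPhase1Col}$, and that a terminated node executes no lines. I would state this inertness fact explicitly (it is the same fact already invoked in the proof of \Cref{SettledP1Col} and in the ``does not perform Phase 1'' case of \Cref{ABStoresP1Col}), and then the lemma follows by transitivity of equality and the hypothesis. The result will serve as the base-case-style ``easy'' subcase feeding into the more substantial lemmas that handle rounds where at least one of the two neighbours is still performing Phase 1.
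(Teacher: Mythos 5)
Your proof is correct and matches the paper's reasoning exactly: the paper states this lemma without a formal proof, justifying it only by the preceding remark that neither node modifies its Phase 1 colour in such a round, which is precisely the inertness observation you spell out. No issues.
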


Another relatively easy case to consider is when at least one of $v$ or $w$ is a local maximum or minimum.

\begin{lemma}\label{atLeastOneMaxMin}
	Consider two neighbouring nodes $v,w$ and any round $t \geq 3$. Suppose that $\mathtt{myPhase1Col}_{v,t-1} \neq \mathtt{myPhase1Col}_{w,t-1}$. If at least one of $v$ or $w$ is a local maximum or a local minimum, then we have $\mathtt{myPhase1Col}_{v,t} \neq \mathtt{myPhase1Col}_{w,t}$.
\end{lemma}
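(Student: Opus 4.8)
The plan is to exploit the fact that a local maximum or local minimum node \emph{never performs Phase~1}: in round~2 such a node sets its Phase~1 colour to $\lmax$ or $\lmin$, and by \Cref{NeverPhase1} it proceeds directly to Phase~2 and never modifies $\mathtt{myPhase1Col}$ thereafter. So if $v$ is (say) a local maximum, then $\mathtt{myPhase1Col}_{v,t} = \mathtt{myPhase1Col}_{v,t-1} = \lmax$ by \Cref{SettledP1Col} (or simply because Phase~2 never touches $\mathtt{myPhase1Col}$). The whole difficulty is then to rule out that the \emph{other} node $w$ could update its Phase~1 colour to one of the special values $\lmax$ or $\lmin$ during round~$t$. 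I would first dispose of the trivial case where $w$ also does not perform Phase~1 in round~$t$, which is exactly \Cref{neitherDoPhase1}. This leaves the case where $w$ performs Phase~1 in round~$t$, so $w$ executes a call to \ChooseNewColour.

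The key observation is that the output of \ChooseNewColour always lies in $\{\mathit{Pdone},\mathit{Cdone}\} \cup \{0,\ldots,8\ell+3\}$ by \Cref{NewColourOptions} --- crucially, this set \emph{never contains} $\lmin$ or $\lmax$, since $\lmin,\lmax$ are defined to be negative integers distinct from every non-negative integer and from $\mathit{Pdone},\mathit{Cdone}$. Therefore, whatever new value $w$ assigns to $\mathtt{myPhase1Col}$ in round~$t$, it cannot equal $\lmin$ or $\lmax$. I would split into two subcases according to which special colour the local-extremum node holds. If the extremum node $v$ has $\mathtt{myPhase1Col}_{v,t} \in \{\lmin,\lmax\}$ while $w$ performs Phase~1 and so lands in a set disjoint from $\{\lmin,\lmax\}$, the two colours are automatically different at the end of round~$t$, and we are done. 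A symmetric argument handles the case where it is $w$ that is the extremum and $v$ that performs Phase~1.

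The only remaining worry is the possibility that \emph{both} $v$ and $w$ are local extrema; but two neighbouring nodes cannot both be local maxima, nor both local minima (their IDs are distinct and comparable), and a local maximum neighbouring a local minimum hold the distinct colours $\lmax \neq \lmin$, so the conclusion holds trivially --- and indeed this is already subsumed by the hypothesis $\mathtt{myPhase1Col}_{v,t-1} \neq \mathtt{myPhase1Col}_{w,t-1}$ combined with the fact that neither changes its colour. I expect the main (though still modest) obstacle to be purely bookkeeping: carefully enumerating which node is the extremum, invoking \Cref{partition} to identify the parent/child correspondence so that \Cref{NewColourOptions} applies to the non-extremum node's \ChooseNewColour call, and checking in each branch that the range guaranteed by \Cref{NewColourOptions} is disjoint from the singleton special colour held by the extremum. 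No genuine computation is needed; the argument reduces to the disjointness of the special-colour set $\{\lmin,\lmax\}$ from the output range of \ChooseNewColour, together with the fact that extremum nodes freeze their Phase~1 colour at round~2.
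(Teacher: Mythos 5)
Your proposal is correct and follows essentially the same route as the paper: the local extremum freezes its special colour $\lmin$ or $\lmax$ from round 2 onward (via \Cref{partition} and \Cref{NeverPhase1}), and the other node can never acquire $\lmin$ or $\lmax$ after round 2 because every output of \ChooseNewColour lies in $\{\mathit{Pdone},\mathit{Cdone\}}\cup\mathbb{Z}_{\geq 0}$ by \Cref{NewColourOptions}. The only cosmetic difference is that you split on whether $w$ performs Phase 1 in round $t$ (citing \Cref{neitherDoPhase1} for the idle branch) whereas the paper splits on $w$'s round-2 colour, but both decompositions yield the same argument.
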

\begin{proof}
	Without loss of generality, assume that $v$ is a local minimum. Then, by \Cref{partition}, it follows that $\mathtt{myPhase1Col}_{v,2} = \lmin$. By \Cref{NeverPhase1}, node $v$ never performs Phase 1, so $\mathtt{myPhase1Col}_{v,t} = \lmin$ for all $t \geq 3$ since $\mathtt{myPhase1Col}$ is never modified in Phase 2.
	
	Next, note that $w$ cannot also be a local minimum, since otherwise we would have $\mathrm{ID}_v < \mathrm{ID}_w$ and $\mathrm{ID}_v > \mathrm{ID}_w$. So, by \Cref{partition}, it follows that $\mathtt{myPhase1Col}_{w,2} \neq \lmin$. Consider two possibilities:
	\begin{itemize}
		\item {\bf Suppose that $\mathtt{myPhase1Col}_{w,2} \in \{\lmax\} \cup \{0,\ldots,\maxCol\}$}\\
		By \Cref{NeverPhase1}, node $w$ never performs Phase 1, so $\mathtt{myPhase1Col}_{w,t} \in \{\lmax\} \cup \{0,\ldots,\maxCol\}$ for all $t \geq 3$ since $\mathtt{myPhase1Col}$ is never modified in Phase 2. This proves that $\mathtt{myPhase1Col}_{v,t} \neq \mathtt{myPhase1Col}_{w,t}$.
		\item {\bf Suppose that $\mathtt{myPhase1Col}_{w,2} \not\in \{\lmax\} \cup \{0,\ldots,\maxCol\}$}\\
		Then we have that $\mathtt{myPhase1Col}_{w,2} \not\in \{\lmin,\lmax\} \cup \{0,\ldots,\maxCol\}$, so, by \Cref{NeverPhase1}, we know that $v$ performs Phase 1 at least once. For any $t \geq 3$, if $\mathtt{myPhase1Col}$ is modified, it must occur at line \ref{NewColourParentA} or \ref{NewColourParentB}, and its new value is the value returned by a call to \ChooseNewColour . By \Cref{NewColourOptions}, the value returned is a non-negative integer, or, one of $\{\mathit{Pdone},\mathit{Cdone}\}$. In particular, the value of $\mathtt{myPhase1Col}$ is not assigned the value $\lmin$ in any round $t \geq 3$. This proves that $\mathtt{myPhase1Col}_{v,t} \neq \mathtt{myPhase1Col}_{w,t}$.
	\end{itemize}
\end{proof}

So, for the remaining cases, we assume that neither $v$ or $w$ is a local maximum or minimum, and, we assume that at least one of $v$ or $w$ performs Phase 1 in round $t$. These cases are relatively easy to handle if $v$ and $w$ both start round $t$ with large integer colours, since we will be able to apply an argument similar to the original Cole and Vishkin proof. However, additional care must be taken now that the algorithm can choose one of the special colours \textit{Pdone} or \textit{Cdone}, and the following fact will help us argue about situations that involve these colours.

\begin{proposition}\label{ParentPdone}
	Suppose that \ChooseNewColour  is called in some round $t \geq 3$ by some node $w$ whose $\mathtt{parent}$ variable corresponds to a node $v$. If this call to \ChooseNewColour returns the value \textit{Pdone}, then $\mathtt{myPhase1Col}_{v,t-1} \in \{0,\ldots,\maxCol\}$.
\end{proposition}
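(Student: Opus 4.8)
The plan is to trace through the control flow of \ChooseNewColour (\Cref{ChooseNewColour}) and then invoke \Cref{CorrectParentChild} to translate a fact about the procedure's input parameter into a fact about $v$'s Phase 1 colour. First I would observe that \ChooseNewColour is invoked only at lines \ref{NewColourParentA} and \ref{NewColourParentB}, which lie strictly inside the Phase 1 block of \EarlyStopCV. Hence, if $w$ calls \ChooseNewColour in round $t$, then $w$ performs Phase 1 in round $t$; together with the hypothesis that $w$'s $\mathtt{parent}$ variable corresponds to $v$, this is exactly what is needed to apply \Cref{CorrectParentChild}.

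Next I would argue that the return value $\mathit{Pdone}$ can only be produced by line \ref{ReturnPdone}, which is reached precisely when the guard on line \ref{IfPSettled} evaluates to true, i.e. when $\mathtt{ParentPhase1Col} \in \{0,\ldots,\maxCol\}$. To justify that no other branch can return $\mathit{Pdone}$, I would inspect the three remaining assignments: line \ref{ReturnCdone} assigns $\mathit{Cdone} \neq \mathit{Pdone}$, while lines \ref{doCVWith0} and \ref{doCV} assign the output of \CVChoice, which by \Cref{CVChoiceOutput} (equivalently, by \Cref{NewColourOptions}) is a \emph{non-negative integer}. Since $\mathit{Pdone}$ is one of the four special colours defined to be a non-positive integer (e.g. $\mathit{Pdone}=-2$) and therefore distinct from every non-negative integer, the \CVChoice branches cannot yield $\mathit{Pdone}$. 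Thus, a returned value of $\mathit{Pdone}$ forces the guard on line \ref{IfPSettled} to hold, giving $\mathtt{ParentPhase1Col} \in \{0,\ldots,\maxCol\}$ for this particular call.

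Finally I would close the argument by identifying the value of $\mathtt{ParentPhase1Col}$. Because $w$ performs Phase 1 in round $t$ and $w$'s $\mathtt{parent}$ corresponds to $v$, \Cref{CorrectParentChild} guarantees that the second input parameter supplied to \ChooseNewColour in round $t$ is exactly $\mathtt{myPhase1Col}_{v,t-1}$; that is, $\mathtt{ParentPhase1Col} = \mathtt{myPhase1Col}_{v,t-1}$ in this call. Combining this with the previous paragraph yields $\mathtt{myPhase1Col}_{v,t-1} \in \{0,\ldots,\maxCol\}$, as claimed.

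I do not expect a genuine obstacle here: the proof is essentially a case analysis over the four branches of \ChooseNewColour plus one citation. The only subtlety worth stating explicitly is the step ruling out that \CVChoice might coincidentally output $\mathit{Pdone}$, which is handled cleanly by recalling that the special colours are defined to be non-positive integers while \CVChoice returns $\textsc{IntVal}$ of a binary string (a non-negative integer). Everything else is a direct reading of the pseudocode together with \Cref{CorrectParentChild}.
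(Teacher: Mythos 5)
Your proposal is correct and follows essentially the same two-step argument as the paper: use \Cref{CorrectParentChild} to identify $\mathtt{ParentPhase1Col}$ with $\mathtt{myPhase1Col}_{v,t-1}$, and observe that a return value of $\mathit{Pdone}$ can only arise from line \ref{ReturnPdone}, forcing the guard on line \ref{IfPSettled} to hold. The only difference is that you spell out explicitly why the other branches cannot return $\mathit{Pdone}$ (the \CVChoice branches return non-negative integers while the special colours are not non-negative integers), a detail the paper leaves implicit.
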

\begin{proof}
	Since node $w$'s $\mathtt{parent}$ variable corresponds to node $v$, it follows from \Cref{CorrectParentChild} that $\mathtt{ParentPhase1Col}$ is equal to $\mathtt{myPhase1Col}_{v,t-1}$.
	
	Suppose that $w$'s call to \ChooseNewColour in round $t$ returns the value \textit{Pdone}. From the code of \ChooseNewColour, this return value must have been chosen at line \ref{ReturnPdone}, which means that the condition on line \ref{IfPSettled} was satisfied. It follows that $\mathtt{ParentPhase1Col} \in \{0,\ldots,\maxCol\}$, and so $\mathtt{myPhase1Col}_{v,t-1} \in \{0,\ldots,\maxCol\}$.
\end{proof}

Next, we consider the situation where exactly one of $v$ or $w$ performs Phase 1 in round $t$. We designed the algorithm so that if a node $v$ already has a colour in the range $\{0,\ldots,\maxCol\}$ at the start of round $t-1$, then its child $w$ concludes that $v$ will not modify its Phase 1 colour in round $t-1$, so $w$ will choose the colour \textit{Pdone} in round $t-1$. The following fact proves that this is the only situation in which $w$ will choose colour $\textit{Pdone}$, i.e., if $v$ performs Phase 1 in round $t-1$, then it is not possible that its child $w$ chose colour $\textit{Pdone}$ in round $t-1$.

\begin{proposition}\label{ChildNotPdone}
	Consider two neighbouring nodes $v,w$ and suppose that neither $v$ or $w$ is a local maximum or minimum. Consider any round $t \geq 3$ in which $w$ does not perform Phase 1. If $v$'s $\mathtt{child}$ variable corresponds to node $w$ and $v$ performs Phase 1 in round $t-1$, then $\mathtt{myPhase1Col}_{w,t-1} \neq \textit{Pdone}$.
\end{proposition}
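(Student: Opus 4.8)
The plan is to argue by contradiction, combining \Cref{ParentPdone} with the observation that a node can only adopt the colour \textit{Pdone} while performing Phase 1, after which it permanently leaves Phase 1.

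First I would fix the orientation between the two nodes. Since neither $v$ nor $w$ is a local maximum or minimum and $v$'s $\mathtt{child}$ variable corresponds to $w$, \Cref{partition} yields $\mathrm{ID}_v < \mathrm{ID}_w$ and, crucially, that $w$'s $\mathtt{parent}$ variable corresponds to $v$ in all rounds $\geq 3$. This is precisely the hypothesis required to apply \Cref{ParentPdone} to $w$.

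Next I would assume, toward a contradiction, that $\mathtt{myPhase1Col}_{w,t-1} = \textit{Pdone}$. The value \textit{Pdone} is assigned to $\mathtt{myPhase1Col}$ only at line \ref{ReturnPdone} of \ChooseNewColour, and \ChooseNewColour is invoked only in a Phase 1 round (lines \ref{NewColourParentA}--\ref{NewColourParentB}); furthermore, once $\mathtt{myPhase1Col}$ becomes \textit{Pdone}, line \ref{UpdateDP1} sets $\mathtt{doPhase1}$ to false and the value is never altered in Phase 2. Hence there is a single round $s$ with $3 \leq s \leq t-1$ in which $w$ performs Phase 1 and its call to \ChooseNewColour returns \textit{Pdone}. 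Applying \Cref{ParentPdone} to this call gives $\mathtt{myPhase1Col}_{v,s-1} \in \{0,\ldots,\maxCol\}$. By line \ref{UpdateDP1} (or line \ref{EndPhase0} in the case $s-1 = 2$), this makes $v$'s $\mathtt{doPhase1}$ false at the end of round $s-1$; and since $\mathtt{doPhase1}$ can never revert to true (as noted in the discussion preceding \Cref{Phase1Rounds}), node $v$ performs no Phase 1 round numbered $s$ or later. Because $t-1 \geq s$, this contradicts the hypothesis that $v$ performs Phase 1 in round $t-1$, and the claim follows.

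I do not expect a genuine obstacle; the proof is short once the earlier machinery is in place. The one point demanding care is the round-index bookkeeping: I must confirm that the round $s$ in which $w$ settles on \textit{Pdone} satisfies $s \leq t-1$, so that the conclusion ``$v$ is not executing Phase 1 from round $s$ onward'' already rules out $v$ performing Phase 1 in round $t-1$. A secondary detail is handling $s-1 = 2$ separately, where the relevant assignment to $\mathtt{doPhase1}$ occurs in Phase 0 (line \ref{EndPhase0}) rather than in a Phase 1 round.
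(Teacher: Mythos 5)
Your proof is correct and follows essentially the same route as the paper's: assume $\mathtt{myPhase1Col}_{w,t-1}=\mathit{Pdone}$, locate the Phase~1 round in which $w$'s call to \ChooseNewColour returned $\mathit{Pdone}$, apply \Cref{ParentPdone} to conclude $v$'s Phase~1 colour was already in $\{0,\ldots,\maxCol\}$ one round earlier, and derive a contradiction with $v$ performing Phase~1 in round $t-1$. The only (harmless) difference is that the paper pins down that round as $\mathrm{StartP2}_w-1$ via \Cref{SettledP1Col} and the hypothesis that $w$ skips Phase~1 in round $t$, whereas you identify it directly as the unique round $s\le t-1$ in which $\mathit{Pdone}$ was adopted, which also subsumes the paper's separate $t=3$ base case.
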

\begin{proof}
	First, consider the case where $t=3$. Then in round $t-1=2$, we see from lines \ref{StartPhase0}--\ref{EndPhase0} that $\mathtt{myPhase1Col}$ is never assigned the value of \textit{Pdone}, so we conclude that $\mathtt{myPhase1Col}_{w,t-1} \neq \textit{Pdone}$.
	
	Suppose that $t \geq 4$ and that $v$ performs Phase 1 in round $t-1$. If $v$'s $\mathtt{child}$ variable corresponds to node $w$, then from lines \ref{StartPhase0}--\ref{EndPhase0}, it must be the case that $\mathrm{ID}_w > \mathrm{ID}_v$. Since we assumed that $w$ is not a local maximum, we conclude from \Cref{partition} that node $w$'s $\mathtt{parent}$ variable corresponds to node $v$. 
	
	To obtain a contradiction, assume that $\mathtt{myPhase1Col}_{w,t-1} = \textit{Pdone}$. From the assumption that $w$ does not perform Phase 1 in round $t$, it follows that $\mathrm{StartP2}_w \leq t$, and so $\mathrm{StartP2}_w-1 \leq t-1$. By \Cref{SettledP1Col} and the definition of $\mathrm{SettledPhase1Col}_w$, we get that $\mathtt{myPhase1Col}_{w,\mathrm{StartP2}_w-1} = \mathrm{SettledPhase1Col}_w = \mathtt{myPhase1Col}_{w,t-1} = \textit{Pdone}$. Thus, in round $\mathrm{StartP2}_w-1$ of $w$'s execution, the value \textit{Pdone} was assigned to $\mathtt{myPhase1Col}$. But for this to happen, the call to \ChooseNewColour  must return \textit{Pdone} in round $\mathrm{StartP2}_w-1$ of $w$'s execution, so, by \Cref{ParentPdone}, it follows that $\mathtt{myPhase1Col}_{v,\mathrm{StartP2}_w-2} \in \{0,\ldots,\maxCol\}$. However, this means that $\mathtt{doPhase1}$ is \textit{false} in round $\mathrm{StartP2}_w-2$ of $v$'s execution, which implies that $\mathrm{StartP2}_v \leq \mathrm{StartP2}_w-1$. In other words, node $v$ does not perform Phase 1 in or after round $\mathrm{StartP2}_w-1$. But we showed above that $\mathrm{StartP2}_w-1 \leq t-1$, so we conclude that node $v$ does not perform Phase 1 in or after round $t-1$. This contradicts the fact that node $v$ performs Phase 1 in round $t-1$. Therefore, our assumption was incorrect, and we conclude that $\mathtt{myPhase1Col}_{w,t-1} \neq \textit{Pdone}$.
\end{proof}

We can now prove that, for any neighbouring nodes $v$ and $w$ that start some round $t \geq 3$ with different colours, if exactly one of them performs Phase 1 in round $t$, then it will choose a new Phase 1 colour in such a way that both nodes have different Phase 1 colours at the end of the round.

\begin{lemma}\label{ExactlyOneP1}
	Consider two neighbouring nodes $v,w$ and any round $t \geq 3$. Suppose that $\mathtt{myPhase1Col}_{v,t-1} \neq \mathtt{myPhase1Col}_{w,t-1}$, and that neither $v$ or $w$ is a local maximum or a local minimum. If exactly one of $v$ or $w$ perform Phase 1 in round $t$, then we have $\mathtt{myPhase1Col}_{v,t} \neq \mathtt{myPhase1Col}_{w,t}$.
\end{lemma}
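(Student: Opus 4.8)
The plan is to reduce to one case by symmetry and then walk through the branches of \ChooseNewColour. Since exactly one of $v,w$ performs Phase 1 in round $t$, I would assume without loss of generality that $v$ performs Phase 1 in round $t$ while $w$ does not; then $w$ leaves its colour untouched, so $\mathtt{myPhase1Col}_{v,t}$ must be compared against the fixed value $\mathtt{myPhase1Col}_{w,t} = \mathtt{myPhase1Col}_{w,t-1}$. The first step is to pin down the two colours entering the round. Because $w$ does not perform Phase 1 in round $t$ and, being neither a local maximum nor a local minimum, is never assigned $\lmin$ or $\lmax$ (\Cref{partition}), the combination of \Cref{NeverPhase1}, \Cref{Phase1Rounds} and \Cref{SettledP1Col} yields $\mathtt{myPhase1Col}_{w,t-1} \in \{0,\ldots,\maxCol\} \cup \{\mathit{Pdone},\mathit{Cdone}\}$. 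Conversely, since $v$ \emph{does} perform Phase 1 in round $t$, the flag $\mathtt{doPhase1}$ was true at the end of round $t-1$, which (again using that $v$ is not a local max/min) forces $\mathtt{myPhase1Col}_{v,t-1}$ to be a non-negative integer exceeding $\maxCol$.

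Next I would split on the parent/child relationship between $v$ and $w$, which by \Cref{partition} is fixed by the comparison of $\mathrm{ID}_v$ and $\mathrm{ID}_w$. If $\mathrm{ID}_w < \mathrm{ID}_v$ then $w$ is $v$'s parent, so by \Cref{CorrectParentChild} the value $\mathtt{myPhase1Col}_{w,t-1}$ is passed to $v$'s call of \ChooseNewColour as $\mathtt{ParentPhase1Col}$; if $\mathrm{ID}_w > \mathrm{ID}_v$ then $w$ is $v$'s child and its colour is passed as $\mathtt{ChildPhase1Col}$. In each case I read off $v$'s new colour from the guards of \ChooseNewColour according to whether $\mathtt{myPhase1Col}_{w,t-1}$ lies in $\{0,\ldots,\maxCol\}$ or is a special colour. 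When $w$'s colour is in $\{0,\ldots,\maxCol\}$, the matching guard ($\mathtt{ParentPhase1Col} \in \{0,\ldots,\maxCol\}$ in the parent case, or $\mathtt{ChildPhase1Col} \in \{0,\ldots,\maxCol\}$ in the child case) fires and $v$ adopts a special colour ($\mathit{Pdone}$ or $\mathit{Cdone}$); these are negative and so differ from $w$'s non-negative colour. When $w$'s colour is itself special, \Cref{NewColourOptions} restricts $v$'s output to $\{\mathit{Pdone},\mathit{Cdone}\} \cup \mathbb{Z}_{\geq 0}$, and inspecting the guards shows the only way $v$ could return the \emph{same} special value as $w$ is in one of two configurations: $w$ is $v$'s parent and holds $\mathit{Cdone}$, or $w$ is $v$'s child and holds $\mathit{Pdone}$.

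The hard part is ruling out exactly these two configurations, and I would do so by a timing contradiction resting on \Cref{ParentPdone} and its mirror image. If $w$ is $v$'s child and settled on $\mathit{Pdone}$, then $w$'s parent corresponds to $v$, so applying \Cref{ParentPdone} to $w$ at its final Phase 1 round $\mathrm{StartP2}_w - 1$ gives $\mathtt{myPhase1Col}_{v,\mathrm{StartP2}_w-2} \in \{0,\ldots,\maxCol\}$; but such a small colour would have set $v$'s $\mathtt{doPhase1}$ flag to false by round $\mathrm{StartP2}_w - 2$, forcing $\mathrm{StartP2}_v \leq \mathrm{StartP2}_w - 1 \leq t-1$ and hence that $v$ does not perform Phase 1 in round $t$, a contradiction. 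The symmetric statement --- that a node settling on $\mathit{Cdone}$ must have had a child that already settled on a colour in $\{0,\ldots,\maxCol\}$, proved by the same argument as \Cref{ParentPdone} applied to the $\mathit{Cdone}$ branch of \ChooseNewColour --- rules out the case where $w$ is $v$'s parent holding $\mathit{Cdone}$ (here $v$ itself is $w$'s child). Once both configurations are excluded, every remaining branch gives either a special colour against a non-negative one, or two distinct special colours, so $\mathtt{myPhase1Col}_{v,t} \neq \mathtt{myPhase1Col}_{w,t}$ in all cases. The main obstacle throughout is bookkeeping the special-colour collisions and seeing that the two genuinely dangerous ones are prevented by when each node stops performing Phase 1.
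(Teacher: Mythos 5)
Your proof is correct and follows the same skeleton as the paper's: fix which agent is still performing Phase 1, pin down the two incoming colours (a large integer for $v$, a settled value in $\{0,\ldots,\maxCol\}\cup\{\mathit{Pdone},\mathit{Cdone}\}$ for $w$), split on the parent/child orientation and on whether $w$'s colour is small or special, and dispose of the special-colour collisions by the timing argument underlying \Cref{ParentPdone} and \Cref{ChildNotPdone}. The one place you genuinely diverge is the sub-case where $w$ is $v$'s parent with a special colour: the paper concludes there that line \ref{doCVWith0} is executed, so $v$ gets an integer in $\{0,\ldots,11\}$, which silently assumes the guard on line \ref{IfCSettled} fails, i.e., that $v$'s other neighbour has not just settled on a colour in $\{0,\ldots,\maxCol\}$. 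You instead allow $v$ to return $\mathit{Cdone}$ in that branch and exclude the resulting collision with the mirrored timing argument: $w$, being $v$'s parent and hence having $v$ as its child, could only have settled on $\mathit{Cdone}$ after observing $v$ adopt a colour in $\{0,\ldots,\maxCol\}$, contradicting that $v$ still performs Phase 1 in round $t$. Your symmetric pairing of the two dangerous configurations ($w$ parent holding $\mathit{Cdone}$, $w$ child holding $\mathit{Pdone}$) is the more careful organization of this case analysis and covers a branch the paper's write-up glosses over.
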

\begin{proof}
	Without loss of generality, assume that $\mathrm{StartP2}_v \geq \mathrm{StartP2}_w$. In particular, we assume that node $v$ performs Phase 1 in round $t$ and that node $w$ does not perform Phase 1 in round $t$. 
	
	First, we will determine the value of $\mathtt{myPhase1Col}_{v,t-1}$. The fact that node $v$ performs Phase 1 in round $t \geq 3$ means that $\mathtt{doPhase1}$ must have been set to \textit{true} in all rounds $\{2,\ldots,t-1\}$. In particular, it follows that $\mathtt{myPhase1Col}_{v,t-1} \not\in \{0,\ldots,\maxCol\} \cup \{\lmin,\lmax\} \cup \{\mathit{Pdone},\mathit{Cdone}\}$. Thus, we conclude that $\mathtt{myPhase1Col}_{v,t-1}$ is an integer and $\mathtt{myPhase1Col}_{v,t-1} \geq 52$.
	
	Next, we will determine the value of $\mathtt{myPhase1Col}_{w,t-1}$. The fact that node $w$ does not perform Phase 1 in round $t \geq 3$ means that $t \geq \mathrm{StartP2}_w$, which implies that $t-1 \geq \mathrm{StartP2}_w-1$. By \Cref{SettledP1Col}, we conclude that\\
	$\mathtt{myPhase1Col}_{w,t-1} = \mathrm{SettledPhase1Col}_w = \mathtt{myPhase1Col}_{w,\mathrm{StartP2}_w-1}$. From the definition of $\mathrm{StartP2}_w$, we know that $w$ performs Phase 2 for the first time in round $\mathrm{StartP2}_w$, which means that $\mathtt{doPhase1}$ must have been set to \textit{false} in round $\mathrm{StartP2}_w-1$. From lines \ref{EndPhase0} and \ref{UpdateDP1}, we conclude that $\mathtt{myPhase1Col}_{w,\mathrm{StartP2}_w-1} \in \{0,\ldots,\maxCol\} \cup \{\lmin,\lmax\} \cup \{\mathit{Pdone},\mathit{Cdone}\}$. Thus, we have shown that $\mathtt{myPhase1Col}_{w,t-1} \in \{0,\ldots,\maxCol\} \cup \{\lmin,\lmax\} \cup \{\mathit{Pdone},\mathit{Cdone}\}$.
	
	Next, we will determine the value of $\mathtt{myPhase1Col}_{v,t}$ and prove that it is not equal to $\mathtt{myPhase1Col}_{w,t-1}$. Since $v$ performs Phase 1 in round $t \geq 3$, we conclude that $t \in \{3,\ldots,\mathrm{StartP2}_v-1\}$, so $t-1 \in \{2,\ldots,\mathrm{StartP2}_v-2\}$. By \Cref{ABStoresP1Col}, the value of $\mathtt{myPhase1Col}_{w,t-1}$ is stored in one of $\mathtt{Acol.P1}$ or $\mathtt{Bcol.P1}$ at the end of round $t-1$, so the same is true at the start of round $t$. 
	Therefore, when \ChooseNewColour  is executed at line \ref{NewColourParentA} or line \ref{NewColourParentB} in round $t$, the value of $\mathtt{myPhase1Col}_{w,t-1}$ will be passed in as the second parameter (if $w$ corresponds to $v$'s $\mathtt{parent}$) or third parameter (if $w$ corresponds to $v$'s $\mathtt{child}$). Moreover, when \ChooseNewColour  is executed at line \ref{NewColourParentA} or line \ref{NewColourParentB} in round $t$, the first input parameter is the value of $\mathtt{myPhase1Col}$ at the end of the previous round, i.e., $\mathtt{myPhase1Col}_{v,t-1}$. So, to determine the value of $\mathtt{myPhase1Col}_{v,t}$, we consider the execution of \ChooseNewColour , and separately consider cases based on the value of $\mathtt{myPhase1Col}_{w,t-1}$ (which, from the previous paragraph, comes from the set $\{0,\ldots,\maxCol\} \cup \{\lmin,\lmax\} \cup \{\mathit{Pdone},\mathit{Cdone}\}$):
	\begin{itemize}
		\item {\bf Suppose that $\mathtt{myPhase1Col}_{w,t-1} \in \{0,\ldots,\maxCol\}$}\\
		Since $\mathtt{myPhase1Col}_{w,t-1}$ is passed in as either the second or third input parameter, it follows that one of the conditions on lines \ref{IfPSettled} or \ref{IfCSettled} of \ChooseNewColour  will evaluate to true, so $\mathtt{myPhase1Col}_{v,t} \in \{\mathit{Pdone},\mathit{Cdone}\}$. Since $\mathit{Pdone}$ and $\mathit{Cdone}$ were defined in such a way that $\mathit{Pdone},\mathit{Cdone} \not\in \{0,\ldots,\maxCol\}$, it follows that $\mathtt{myPhase1Col}_{v,t} \neq \mathtt{myPhase1Col}_{w,t-1}$.
		\item {\bf Suppose that $\mathtt{myPhase1Col}_{w,t-1} \not\in \{0,\ldots,\maxCol\}$}\\
		Since we are assuming that $\mathtt{myPhase1Col}_{w,t-1} \not\in \{0,\ldots,\maxCol\}$, it follows that\\
		$\mathtt{myPhase1Col}_{w,t-1} \in \{\lmin,\lmax\} \cup \{\mathit{Pdone},\mathit{Cdone}\}$.
		
		First, suppose $w$ corresponds to $v$'s $\mathtt{parent}$. This means that $\mathtt{myPhase1Col}_{w,t-1}$ is passed as the $2^{nd}$ input parameter of \ChooseNewColour, i.e., $\mathtt{ParentPhase1Col} = \mathtt{myPhase1Col}_{w,t-1}$ in the execution of \ChooseNewColour. It follows that $\mathtt{ParentPhase1Col} \in \{\lmin,\lmax\} \cup \{\mathit{Pdone},\mathit{Cdone}\}$, so the condition on line \ref{IfPdone} of \ChooseNewColour evaluates to true, so $\mathtt{myPhase1Col}_{v,t}$ gets the return value of $\CVChoice (\mathtt{myPhase1Col}_{v,t-1},0)$. We proved earlier that $\mathtt{myPhase1Col}_{v,t-1} \geq 52$, so both inputs into \CVChoice  are non-negative integers. By \Cref{CVChoiceOutput}, since the length of the binary representation of 0 is 1, we get that the value returned by $\CVChoice (\mathtt{myPhase1Col}_{v,t-1},0)$ is in $\{0,\ldots,11\}$. Therefore, we have shown that $\mathtt{myPhase1Col}_{v,t} \in \{0,\ldots,11\}$. Since $\mathtt{myPhase1Col}_{w,t-1} \in \{\lmin,\lmax\} \cup \{\mathit{Pdone},\mathit{Cdone}\}$, it follows that $\mathtt{myPhase1Col}_{v,t} \neq \mathtt{myPhase1Col}_{w,t-1}$ .
		
		Next, suppose that $w$ corresponds to $v$'s $\mathtt{child}$, so $\mathtt{myPhase1Col}_{w,t-1}$ is passed in as the third input parameter of \ChooseNewColour . In particular, consider the execution of \ChooseNewColour  with $\mathtt{ChildPhase1Col} = \mathtt{myPhase1Col}_{w,t-1}$. Since we assumed that $w$ does not perform Phase 1 in round $t$, then by \Cref{ChildNotPdone}, we conclude that $\mathtt{myPhase1Col}_{w,t-1} \neq \mathit{Pdone}$. Thus, $\mathtt{ChildPhase1Col}=\mathtt{myPhase1Col}_{w,t-1} \in \{\lmin,\lmax,\mathit{Cdone}\}$. One consequence is that the condition on line \ref{IfCSettled} of \ChooseNewColour  does not evaluate to true. Therefore, the return value is set at one of lines \ref{ReturnPdone}, \ref{doCVWith0}, or \ref{doCV}. In all three cases, the return value is not in $\{\lmin,\lmax,\mathit{Cdone}\}$, which, in particular, implies that the return value is not equal to $\mathtt{myPhase1Col}_{w,t-1}$. Thus, we have shown that $\mathtt{myPhase1Col}_{v,t} \neq \mathtt{myPhase1Col}_{w,t-1}$.
	\end{itemize}
	
	Finally, we will determine the value of $\mathtt{myPhase1Col}_{w,t}$. The fact that node $w$ does not perform Phase 1 in round $t \geq 3$ means that $t \geq \mathrm{StartP2}_w$, which implies that $t-1 \geq \mathrm{StartP2}_w-1$. By \Cref{SettledP1Col}, we conclude that $\mathtt{myPhase1Col}_{w,t} = \mathrm{SettledPhase1Col}_w = \mathtt{myPhase1Col}_{w,t-1}$. As we showed above that $\mathtt{myPhase1Col}_{v,t} \neq \mathtt{myPhase1Col}_{w,t-1}$, we conclude that $\mathtt{myPhase1Col}_{v,t} \neq \mathtt{myPhase1Col}_{w,t}$.
\end{proof}

Next, we consider the situation where both $v$ and $w$ perform Phase 1 in round $t$. First, we prove the following useful fact that will help us argue about simultaneous executions of \CVChoice by $v$ and $w$.
\begin{lemma}\label{CVChoiceDifferent}
	For any three non-negative integers $a,b,c$ such that $a \neq b$ and $b \neq c$, the executions of $\CVChoice (a,b)$ and $\CVChoice (b,c)$ return different integer values.
\end{lemma}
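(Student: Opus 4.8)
The plan is to first make explicit what a call to \CVChoice returns as a function of its two inputs. Given inputs $p \neq q$, the procedure forms $P = \textsc{EncodeSF}(\textsc{BinaryRep}(p))$ and $Q = \textsc{EncodeSF}(\textsc{BinaryRep}(q))$, locates the smallest index $i$ at which $P$ and $Q$ differ, and returns $\textsc{IntVal}(\textsc{BinaryRep}(i)\cdot P[i])$. Reading this off, the returned integer is exactly $2i + P[i]$, i.e., the index $i$ shifted left by one bit with the differing bit $P[i]\in\{0,1\}$ appended. So I would introduce $A,B,C$ for the encodings of $a,b,c$, let $i$ be the first index where $A,B$ differ and $j$ the first index where $B,C$ differ, and record that $\CVChoice (a,b) = 2i + A[i]$ while $\CVChoice (b,c) = 2j + B[j]$. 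The key structural observation is that the output packs the index and the bit so that the two possible values for a given index, namely $\{2i, 2i+1\}$, form a pair of consecutive integers disjoint from the pair $\{2j,2j+1\}$ whenever $i \neq j$.

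Before the case analysis I would confirm that both indices are well-defined. Since $a \neq b$ implies $\textsc{BinaryRep}(a)\neq\textsc{BinaryRep}(b)$, injectivity of the encoding (\Cref{SFDecodable}) gives $A \neq B$, and \Cref{SuffixFree} guarantees a common valid index at which $A$ and $B$ differ; the same reasoning applies to $B,C$ using $b\neq c$. With $i,j$ in hand, I would split into two cases. If $i \neq j$, then since $A[i],B[j]\in\{0,1\}$ we have $2i+A[i] \in \{2i,2i+1\}$ and $2j+B[j]\in\{2j,2j+1\}$, and these ranges are disjoint, so the returned values differ. If $i = j$, then both returns share the same leading part $2i$, so they are equal if and only if their final bits agree; but $i$ is by definition the first index at which $A$ and $B$ differ, so $A[i]\neq B[i]$, and since the second return's final bit is $B[j]=B[i]$, the two returns differ.

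The proof is short, and its content is essentially the classical Cole--Vishkin colour-reduction argument; the only point requiring genuine care — and where I expect a careless argument to go wrong — is keeping track of which string supplies the appended bit in each call. In $\CVChoice (a,b)$ the appended bit is $A[i]$ (a bit of $a$'s encoding), whereas in $\CVChoice (b,c)$ it is $B[j]$ (a bit of $b$'s encoding). The equal-index case works precisely because the shared index $i$ is the first place where $A$ and $B$ disagree, so $A[i]\neq B[i]$; it would \emph{not} suffice to know merely that $i$ is where $B$ and $C$ disagree. Making sure the two calls are compared through the common middle encoding $B$ is thus the crux, and everything else is the disjointness-of-consecutive-pairs bookkeeping.
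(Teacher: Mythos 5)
Your proof is correct and follows essentially the same route as the paper's: both split on whether the two first-differing indices coincide, and in the equal-index case both rely on the appended bits being $A[i]$ and $B[i]$ with $A[i]\neq B[i]$ by the definition of $i$. Your explicit formulation of the return value as $2i+P[i]$ and the disjointness of the pairs $\{2i,2i+1\}$ and $\{2j,2j+1\}$ is a slightly cleaner way to handle the unequal-index case than the paper's string-level comparison, but the substance is identical.
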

\begin{proof}
	In the execution of $\CVChoice (a,b)$, denote by $a_{\mathrm{SF}}$ the string stored in $\mathtt{MyString}$ after executing line \ref{EncodeMS}, and denote by $b_{\mathrm{SF}}$ the string stored in $\mathtt{OtherString}$ after executing line \ref{EncodeOS}. From the fact that $a \neq b$ and by \Cref{SuffixFree}, it follows that $a_{\mathrm{SF}} \neq b_{\mathrm{SF}}$, and it follows that there exists at least one non-negative integer $i$ such that $a_{\mathrm{SF}}[i] \neq b_{\mathrm{SF}}[i]$. Denote by $i_{a,b}$ the least such integer $i$, and note that $i_{a,b}$ is the value stored in the variable $\mathtt{i}$ after executing line \ref{FindDiffIndex}. Denote by $S_{a,b}$ the binary representation of $i_{a,b}$. Denote by $\mathit{NS}_{a,b}$ the value stored in $\mathtt{newString}$ after executing line \ref{SetCVChoice}, which is the string $S_{a,b} \cdot a_{\mathrm{SF}}[i_{a,b}]$.
	
	In the execution of $\CVChoice (b,c)$, denote by $b_{\mathrm{SF}}$ the string stored in $\mathtt{MyString}$ after executing line \ref{EncodeMS}, and denote by $c_{\mathrm{SF}}$ the string stored in $\mathtt{OtherString}$ after executing line \ref{EncodeOS}. From the fact that $b \neq c$ and by \Cref{SuffixFree}, it follows that $b_{\mathrm{SF}} \neq c_{\mathrm{SF}}$, and it follows that there exists at least one non-negative integer $i$ such that $b_{\mathrm{SF}}[i] \neq c_{\mathrm{SF}}[i]$. Denote by $i_{b,c}$ the least such integer $i$, and note that $i_{b,c}$ is the value stored in the variable $\mathtt{i}$ after executing line \ref{FindDiffIndex}. Denote by $S_{b,c}$ the binary representation of $i_{b,c}$. Denote by $\mathit{NS}_{b,c}$ the value stored in $\mathtt{newString}$ after executing line \ref{SetCVChoice}, which is the string $S_{b,c} \cdot b_{\mathrm{SF}}[i_{b,c}]$.
	
	It suffices to prove that $\mathit{NS}_{a,b} \neq \mathit{NS}_{b,c}$, since this will imply that the integer values returned at line \ref{ReturnCVChoice} of each execution are different. There are two cases to consider:
	\begin{itemize}
		\item {\bf Suppose that $i_{a,b} \neq i_{b,c}$.}\\
		It follows that $S_{a,b} \neq S_{b,c}$, so the strings $S_{a,b} \cdot a_{\mathrm{SF}}[i_{a,b}]$ and $S_{b,c} \cdot b_{\mathrm{SF}}[i_{b,c}]$ cannot be equal (regardless of whether or not the rightmost bit $a_{\mathrm{SF}}[i_{a,b}]$ is equal to the rightmost bit $b_{\mathrm{SF}}[i_{b,c}]$). This proves that $\mathit{NS}_{a,b} \neq \mathit{NS}_{b,c}$.
		\item {\bf Suppose that $i_{a,b} = i_{b,c}$.}\\
		From the definition of $i_{a,b}$, we know that $a_{\mathrm{SF}}[i_{a,b}] \neq b_{\mathrm{SF}}[i_{a,b}]$. Since $i_{a,b} = i_{b,c}$, we conclude that $a_{\mathrm{SF}}[i_{a,b}] \neq b_{\mathrm{SF}}[i_{b,c}]$. Thus, the rightmost bit of $\mathit{NS}_{a,b}$ differs from the rightmost bit of $\mathit{NS}_{b,c}$, which proves that $\mathit{NS}_{a,b} \neq \mathit{NS}_{b,c}$.
	\end{itemize}
	In all cases, we proved that $\mathit{NS}_{a,b} \neq \mathit{NS}_{b,c}$, as desired.
\end{proof}

Now we consider two neighbouring nodes that both perform Phase 1 in any round $t \geq 3$. We assume that each of $v$ and $w$ start the round with Phase 1 colours that are different than their neighbours' Phase 1 colours, and prove that $v$ and $w$ choose different Phase 1 colours in round $t$.
\begin{lemma}\label{BothP1}
	Consider two neighbouring nodes $v,w$ and any round $t \geq 3$, let $\hat{v}$ be $v$'s other neighbour, and let $\hat{w}$ be $w$'s other neighbour. Suppose that neither $v$ or $w$ is a local maximum or a local minimum. Further, suppose that $\mathtt{myPhase1Col}_{v,t-1} \neq \mathtt{myPhase1Col}_{w,t-1}$, $\mathtt{myPhase1Col}_{v,t-1} \neq \mathtt{myPhase1Col}_{\hat{v},t-1}$, and $\mathtt{myPhase1Col}_{w,t-1} \neq \mathtt{myPhase1Col}_{\hat{w},t-1}$. If both $v$ and $w$ perform Phase 1 in round $t$, then we have $\mathtt{myPhase1Col}_{v,t} \neq \mathtt{myPhase1Col}_{w,t}$.
\end{lemma}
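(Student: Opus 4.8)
The plan is to reduce the entire argument to a single application of \Cref{CVChoiceDifferent}, after carefully pinning down which node is whose parent and exactly which arguments each node feeds into \ChooseNewColour in round $t$. Since the conclusion is symmetric in $v$ and $w$, and the three inequality hypotheses are symmetric under swapping $v \leftrightarrow w$ (and $\hat{v} \leftrightarrow \hat{w}$), I may assume without loss of generality that $\mathrm{ID}_v < \mathrm{ID}_w$. As neither node is a local maximum or minimum, \Cref{partition} then tells me that $w$'s $\mathtt{parent}$ variable corresponds to $v$ and $v$'s $\mathtt{child}$ variable corresponds to $w$; consequently $\hat{v}$ is $v$'s parent and $\hat{w}$ is $w$'s child. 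Writing $a = \mathtt{myPhase1Col}_{v,t-1}$, $b = \mathtt{myPhase1Col}_{w,t-1}$, $p = \mathtt{myPhase1Col}_{\hat{v},t-1}$, and $c = \mathtt{myPhase1Col}_{\hat{w},t-1}$, the hypotheses give $a \neq b$, $a \neq p$, and $b \neq c$.

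Next I would establish that $a$ and $b$ are both integers at least $52$. This is exactly the argument used at the start of the proof of \Cref{ExactlyOneP1}: a node that performs Phase 1 in round $t \geq 3$ must have had $\mathtt{doPhase1}$ equal to \emph{true} throughout rounds $2,\ldots,t-1$, so its colour at the end of round $t-1$ lies outside $\{0,\ldots,\maxCol\} \cup \{\lmin,\lmax\} \cup \{\mathit{Pdone},\mathit{Cdone}\}$ and is therefore a non-negative integer that is at least $52$. Then, using \Cref{ABStoresP1Col} together with \Cref{CorrectParentChild} and the parent/child correspondences fixed above, I can read off the precise calls made in round $t$: node $w$ executes $\ChooseNewColour(b,a,c)$ and node $v$ executes $\ChooseNewColour(a,p,b)$.

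With these calls in hand, the proof splits by inspecting the code of \ChooseNewColour. For $w$: since $a \geq 52$ lies neither in $\{0,\ldots,\maxCol\}$ nor in $\{\mathit{Pdone},\mathit{Cdone},\lmin,\lmax\}$, the only two possible outcomes are $\mathtt{myPhase1Col}_{w,t} = \mathit{Cdone}$ (when $c \in \{0,\ldots,\maxCol\}$) or $\mathtt{myPhase1Col}_{w,t} = \CVChoice(b,a)$ (otherwise); here $\CVChoice(b,a)$ is well-defined because $a \neq b$, and by \Cref{NewColourOptions} and \Cref{CVChoiceOutput} its value is a non-negative integer. For $v$: since its child colour $b \geq 52$ is not in $\{0,\ldots,\maxCol\}$, the outcome is $\mathit{Pdone}$ (when $p \in \{0,\ldots,\maxCol\}$), or $\CVChoice(a,0)$ (when $p \in \{\mathit{Pdone},\mathit{Cdone},\lmin,\lmax\}$), or $\CVChoice(a,p)$ (when $p$ is an integer at least $52$); all of these are well-defined since $a \neq p$ and $a \geq 52 \neq 0$.

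I would then compare outputs case by case. If $w$ chose $\mathit{Cdone}$, then $v$'s output is either $\mathit{Pdone}$ or a non-negative integer, all of which differ from $\mathit{Cdone}$, and we are done. If $w$ chose the integer $\CVChoice(b,a)$, then when $v$ chose $\mathit{Pdone}$ the two clearly differ, and when $v$ chose $\CVChoice(a,0)$ or $\CVChoice(a,p)$ I apply \Cref{CVChoiceDifferent} to the chains $b \to a \to 0$ and $b \to a \to p$, respectively: the shared middle argument $a$ is what makes that lemma applicable, giving $\CVChoice(b,a) \neq \CVChoice(a,p)$ because $b \neq a$ and $a \neq p$, and $\CVChoice(b,a) \neq \CVChoice(a,0)$ because $b \neq a$ and $a \neq 0$. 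This exhausts all cases. The main obstacle is not any single calculation but the bookkeeping: correctly using \Cref{partition} and \Cref{CorrectParentChild} to recognize that $v$'s and $w$'s freshly chosen colours are \CVChoice calls sharing the middle value $a$, which is exactly the configuration \Cref{CVChoiceDifferent} was built to handle, and dispatching the special-colour branches ($\mathit{Pdone}$, $\mathit{Cdone}$) first so that the remaining comparisons genuinely reduce to that lemma.
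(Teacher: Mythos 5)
Your proof is correct and follows essentially the same route as the paper's: show both nodes' round-$(t-1)$ colours are integers at least $52$, use \Cref{partition} and \Cref{ABStoresP1Col}/\Cref{CorrectParentChild} to pin down the exact \ChooseNewColour calls, rule out the impossible special-colour returns, and resolve the remaining CV-vs-CV case via \Cref{CVChoiceDifferent} applied to two calls sharing a middle argument. The only differences are cosmetic (you take $\mathrm{ID}_v<\mathrm{ID}_w$ where the paper takes $\mathrm{ID}_v>\mathrm{ID}_w$, and you organize the final case split by $w$'s output rather than by listing excluded outputs first).
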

\begin{proof}
	The fact that $v$ performs Phase 1 in round $t$ means that, in each of the rounds $\{2,\ldots,t-1\}$, the variable $\mathtt{doPhase1}$ was assigned the value \textit{true}. From lines \ref{EndPhase0} and \ref{UpdateDP1} of \EarlyStopCV, it follows that $\mathtt{myPhase1Col}_{v,j} \not\in \{0,\ldots,\maxCol\} \cup \{\lmin,\lmax\} \cup \{\mathit{Pdone},\mathit{Cdone}\}$ for each $j \in \{2,\ldots,t-1\}$. In particular, we conclude that $\mathtt{myPhase1Col}_{v,t-1}$ is an integer greater than or equal to 52. Using the same argument, we also conclude that $\mathtt{myPhase1Col}_{w,t-1}$ is an integer greater than or equal to 52.
	
	Without loss of generality, assume that $\mathrm{ID}_v > \mathrm{ID}_w$. Since neither $v$ or $w$ is a local maximum or a local minimum, \Cref{partition} tells us that $v$'s $\mathtt{parent}$ variable corresponds to $w$, and $w$'s $\mathtt{child}$ variable corresponds to $v$. Moreover, since $w$ is not a local minimum, we know that $\mathrm{ID}_w > \mathrm{ID}_{\hat{w}}$, so $w$'s $\mathtt{parent}$ variable corresponds to $\hat{w}$. From these facts, along with \Cref{ABStoresP1Col}, we conclude:
	\begin{itemize}
		\item In round $t$, node $v$ calls \ChooseNewColour with first input parameter equal to $\mathtt{myPhase1Col}_{v,t-1}$ and second input parameter equal to $\mathtt{myPhase1Col}_{w,t-1}$. In particular, this means that, during the execution of \ChooseNewColour at $v$ in round $t$, we have $\mathtt{myPhase1Col}$ equal to $\mathtt{myPhase1Col}_{v,t-1}$ and $\mathtt{ParentPhase1Col}$ equal to $\mathtt{myPhase1Col}_{w,t-1}$.
		\item In round $t$, node $w$ calls \ChooseNewColour with first input value equal to $\mathtt{myPhase1Col}_{w,t-1}$, second input value equal to $\mathtt{myPhase1Col}_{\hat{w},t-1}$, and third input value equal to $\mathtt{myPhase1Col}_{v,t-1}$. In particular, this means that, during the execution of \ChooseNewColour at $w$ in round $t$, we have $\mathtt{myPhase1Col}$ equal to $\mathtt{myPhase1Col}_{w,t-1}$, $\mathtt{ParentPhase1Col}$ equal to $\mathtt{myPhase1Col}_{\hat{w},t-1}$, and $\mathtt{ChildPhase1Col}$ equal to $\mathtt{myPhase1Col}_{v,t-1}$.
	\end{itemize}
	We proved above that both $\mathtt{myPhase1Col}_{v,t-1}$ and $\mathtt{myPhase1Col}_{w,t-1}$ are integers that are greater than or equal to 52, so we can immediately deduce certain facts about the return value of \ChooseNewColour at $v$ and $w$:
	\begin{itemize}
		\item The condition on line \ref{IfCSettled} in $w$'s execution of \ChooseNewColour evaluates to false, so $w$'s execution does not return \textit{Cdone} in round $t$.
		\item The condition on line \ref{IfPSettled} in $v$'s execution of \ChooseNewColour evaluates to false, so $v$'s execution does not return \textit{Pdone} in round $t$.
		\item The condition on line \ref{IfPdone} in $v$'s execution of \ChooseNewColour evaluates to false, so $v$'s execution does not perform line \ref{doCVWith0} of \ChooseNewColour in round $t$.
	\end{itemize}
	We consider the remaining possible outcomes of $v$'s and $w$'s executions of \ChooseNewColour:
	\begin{itemize}
		\item {\bf Suppose that $w$'s execution of \ChooseNewColour returns \textit{Pdone}.}\\
		We proved above that $v$'s execution of \ChooseNewColour does not return \textit{Pdone}, so it follows that $\mathtt{myPhase1Col}_{v,t} \neq \mathtt{myPhase1Col}_{w,t}$.
		\item {\bf Suppose that $v$'s execution of \ChooseNewColour returns \textit{Cdone}.}\\
		We proved above that $w$'s execution of \ChooseNewColour does not return \textit{Cdone}, so it follows that $\mathtt{myPhase1Col}_{v,t} \neq \mathtt{myPhase1Col}_{w,t}$.
		\item {\bf Suppose that $v$'s execution does not return \textit{Cdone} and that $w$'s execution does not return \textit{Pdone}.}\\
		We proved above that $v$'s execution of \ChooseNewColour does not return \textit{Pdone} and does not perform line \ref{doCVWith0}, so the remaining possibility is that the return value of $v$'s execution of \ChooseNewColour is set at line \ref{doCV}. In particular, $v$'s execution returns the value returned by $\CVChoice (\mathtt{myPhase1Col}_{v,t-1},\mathtt{myPhase1Col}_{w,t-1})$. We proved above that $w$'s execution of \ChooseNewColour does not return \textit{Cdone}, so the only remaining possibilities are that the return value of $w$'s execution of \ChooseNewColour is set at line \ref{doCVWith0} or \ref{doCV}. In particular, $w$'s execution returns the value that was returned by $\CVChoice (\mathtt{myPhase1Col}_{w,t-1},z)$ for some non-negative integer $z \neq \mathtt{myPhase1Col}_{w,t-1}$ (since $z$ is either 0 or $\mathtt{myPhase1Col}_{\hat{w},t-1}$). By \Cref{CVChoiceDifferent} with $x = \mathtt{myPhase1Col}_{v,t-1}$ and $y=\mathtt{myPhase1Col}_{w,t-1}$, we conclude that the executions of \ChooseNewColour by $v$ and $w$ return different integer values. Therefore, we get that $\mathtt{myPhase1Col}_{v,t} \neq \mathtt{myPhase1Col}_{w,t}$.
	\end{itemize}
	In all cases, we proved that $\mathtt{myPhase1Col}_{v,t} \neq \mathtt{myPhase1Col}_{w,t}$, as desired.
\end{proof}

We now combine the above results to show that, in every round $t \geq 2$, no two neighbouring nodes have the same Phase 1 colour.

\begin{lemma}\label{NbrDiffP1Cols}
	For any neighbouring nodes $v,w$ and any $t \geq 2$, we have\\ $\mathtt{myPhase1Col}_{v,t} \neq \mathtt{myPhase1Col}_{w,t}$.
\end{lemma}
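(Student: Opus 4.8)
The plan is to prove the statement by induction on $t$, where the inductive claim is universally quantified over \emph{all} neighbouring pairs: for a fixed round $t$, every pair of neighbouring nodes has distinct Phase 1 colours at the end of round $t$. Quantifying over all pairs is essential, because the case where both nodes perform Phase 1 (\Cref{BothP1}) requires information not just about the pair $v,w$ but also about each node's \emph{other} neighbour. The four single-round lemmas \Cref{neitherDoPhase1,atLeastOneMaxMin,ExactlyOneP1,BothP1} will supply every case of the inductive step, so the work of the proof is reduced to verifying the base case, checking that the cases are exhaustive, and confirming that the induction hypothesis delivers exactly the premises each lemma needs.

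For the base case $t=2$, I would argue directly from \Cref{partition}. Take any two neighbours $v,w$; since IDs are distinct, assume without loss of generality that $\mathrm{ID}_v < \mathrm{ID}_w$. Then $w$ cannot be a local minimum and $v$ cannot be a local maximum, so it is impossible for both nodes to choose $\lmin$ or for both to choose $\lmax$. By \Cref{partition}, each node's Phase 1 colour at the end of round $2$ is either $\lmin$, $\lmax$, or the node's own ID. Since $\lmin \neq \lmax$ and both special colours are defined to be non-integers (hence distinct from every ID), the only remaining way for a collision would be both nodes adopting their own IDs, but those are distinct. Hence $\mathtt{myPhase1Col}_{v,2} \neq \mathtt{myPhase1Col}_{w,2}$.

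For the inductive step, assume the statement holds at round $t-1$ for every neighbouring pair. Fix a pair $v,w$ and let $\hat v, \hat w$ denote their other neighbours. By the induction hypothesis applied to the three pairs $(v,w)$, $(v,\hat v)$, and $(w,\hat w)$, all three pairs have distinct Phase 1 colours at the end of round $t-1$. I then split into cases that match the lemmas exactly: if at least one of $v,w$ is a local maximum or minimum, apply \Cref{atLeastOneMaxMin}; otherwise, depending on how many of $v,w$ perform Phase 1 in round $t$, apply \Cref{neitherDoPhase1} (neither), \Cref{ExactlyOneP1} (exactly one), or \Cref{BothP1} (both). Each lemma takes as hypothesis that the relevant pairs differ at the end of round $t-1$, which the induction hypothesis supplies, and concludes that $v$ and $w$ differ at the end of round $t$.

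The only real subtlety, and where I would be most careful, is ensuring that the induction hypothesis supplies precisely the premises that \Cref{BothP1} demands: that lemma's hypotheses reference the other neighbours $\hat v$ and $\hat w$, so the inductive statement must be the universally quantified one (over all neighbouring pairs) rather than a claim about a single fixed pair. With that formulation every premise of each case lemma is met, and the four cases are exhaustive, since they first partition on whether either node is a local extremum and, failing that, on the number of the two nodes performing Phase 1 in round $t$. Hence the induction goes through and the result holds for all $t \geq 2$.
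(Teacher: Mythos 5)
Your proof is correct and follows essentially the same route as the paper: induction on $t$ with the hypothesis quantified over all neighbouring pairs, a base case from \Cref{partition} and the distinctness of IDs, and an inductive step that dispatches the four exhaustive cases to \Cref{atLeastOneMaxMin,neitherDoPhase1,ExactlyOneP1,BothP1}. Your explicit observation that the universally quantified induction hypothesis is needed to supply the premises of \Cref{BothP1} concerning the other neighbours $\hat v,\hat w$ is exactly the point the paper relies on implicitly.
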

\begin{proof}
	We prove the statement by induction on $t$.
	
	For the base case, consider $t=2$ and any neighbouring nodes $v,w$. From lines \ref{StartPhase0}--\ref{EndPhase0}, we see that $\mathtt{myPhase1Col}_{v,2} \in \{\lmin,\lmax,\mathrm{ID}_v\}$ and $\mathtt{myPhase1Col}_{w,2} \in \{\lmin,\lmax,\mathrm{ID}_w\}$. If $\mathtt{myPhase1Col}_{v,2} = \lmin$, then $\mathtt{myPhase1Col}_{w,2} \neq \lmin$ since it is not possible that neighbouring nodes are both local minima. If $\mathtt{myPhase1Col}_{v,2} = \lmax$, then $\mathtt{myPhase1Col}_{w,2} \neq \lmax$ since it is not possible that neighbouring nodes are both local maxima. If $\mathtt{myPhase1Col}_{v,2} = \mathrm{ID}_v$, then our definition of $\lmin$ and $\lmax$ guarantees that $\mathrm{ID}_v \not\in \{\lmin,\lmax\}$, and our assumption that the initial node labels form a proper colouring guarantees that $\mathrm{ID}_v \neq \mathrm{ID}_w$, so $\mathtt{myPhase1Col}_{v,2} \not\in \{\lmin,\lmax,\mathrm{ID}_w\}$, and so $\mathtt{myPhase1Col}_{v,2} \neq \mathtt{myPhase1Col}_{w,2}$.
	
	As induction hypothesis, assume that for any neighbouring nodes $v,w$ and some $t-1 \geq 2$, we have $\mathtt{myPhase1Col}_{v,t-1} \neq \mathtt{myPhase1Col}_{w,t-1}$. For the induction step, we set out to prove that $\mathtt{myPhase1Col}_{v,t} \neq \mathtt{myPhase1Col}_{w,t}$. Consider the following exhaustive list of possible cases:
	\begin{itemize}
		\item {\bf Suppose that at least one of $v$ or $w$ is a local maximum or local minimum.} This case is handled by \Cref{atLeastOneMaxMin}.
		\item {\bf Suppose that neither $v$ or $w$ performs Phase 1 in round $t$.} This case is handled by \Cref{neitherDoPhase1}.
		\item {\bf Suppose that neither $v$ or $w$ is a local maximum or local minimum, and that exactly one of $v$ or $w$ performs Phase 1 in round $t$.} This case is handled by \Cref{ExactlyOneP1}.
		\item {\bf Suppose that neither $v$ or $w$ is a local maximum or local minimum, and that both $v$ and $w$ perform Phase 1 in round $t$.} This case is handled by \Cref{BothP1}.
	\end{itemize}
	In all cases, we proved that $\mathtt{myPhase1Col}_{v,t} \neq \mathtt{myPhase1Col}_{w,t}$, which concludes the induction.
\end{proof}

Using the fact that any neighbouring nodes $v,w$ have different Phase 1 colours in each round $t \geq 3$, we now show that they will settle on different Phase 1 colours, i.e., they each start Phase 2 with different Phase 1 colours.
\begin{lemma}\label{DifferentSettledP1}
	For any two neighbouring nodes $v$ and $w$, we have $\mathrm{SettledPhase1Col}_v \neq \mathrm{SettledPhase1Col}_w$.
\end{lemma}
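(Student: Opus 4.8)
The plan is to reduce this statement almost entirely to \Cref{NbrDiffP1Cols}, which already guarantees that neighbouring nodes disagree on their Phase~1 colour in \emph{every} round $t \geq 2$. The only real work is to identify a single round in which both $v$ and $w$ have already settled on their respective Phase~1 colours, and then apply \Cref{NbrDiffP1Cols} in that round.

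Concretely, I would set $t^* = \max\{\mathrm{StartP2}_v - 1,\ \mathrm{StartP2}_w - 1\}$. Since every node performs Phase~2 at least once (so each $\mathrm{StartP2}_x$ is a well-defined integer greater than $2$), we have $\mathrm{StartP2}_v - 1 \geq 2$ and $\mathrm{StartP2}_w - 1 \geq 2$, and hence $t^* \geq 2$. By the choice of $t^*$ as the maximum of the two quantities, we have both $t^* \geq \mathrm{StartP2}_v - 1$ and $t^* \geq \mathrm{StartP2}_w - 1$. Applying \Cref{SettledP1Col} to each node at round $t^*$ then gives $\mathtt{myPhase1Col}_{v,t^*} = \mathrm{SettledPhase1Col}_v$ and $\mathtt{myPhase1Col}_{w,t^*} = \mathrm{SettledPhase1Col}_w$.

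Finally, since $t^* \geq 2$ and $v,w$ are neighbours, \Cref{NbrDiffP1Cols} yields $\mathtt{myPhase1Col}_{v,t^*} \neq \mathtt{myPhase1Col}_{w,t^*}$. Substituting the two equalities from the previous step gives $\mathrm{SettledPhase1Col}_v \neq \mathrm{SettledPhase1Col}_w$, as required.

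I do not expect a genuine obstacle here: the entire difficulty was front-loaded into the inductive argument of \Cref{NbrDiffP1Cols} (together with the case analysis in \Cref{atLeastOneMaxMin,neitherDoPhase1,ExactlyOneP1,BothP1}) and into the stabilization fact \Cref{SettledP1Col}. The only point requiring a moment's care is that one must evaluate at the \emph{later} of the two settling rounds rather than, say, $\mathrm{StartP2}_v - 1$ alone; taking the maximum is exactly what ensures that \Cref{SettledP1Col} applies simultaneously to both endpoints, so that the per-round inequality of \Cref{NbrDiffP1Cols} can be transferred to the settled colours.
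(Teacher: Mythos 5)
Your proof is correct and is essentially the paper's own argument: the paper likewise evaluates at round $\mathrm{StartP2}_v-1$ after assuming WLOG that $\mathrm{StartP2}_v \geq \mathrm{StartP2}_w$ (which is exactly your $t^*$), applies \Cref{NbrDiffP1Cols} there, and uses \Cref{SettledP1Col} to identify the round-$t^*$ colours with the settled ones. No issues.
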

\begin{proof}
	Without loss of generality, assume $\mathrm{StartP2}_v \geq \mathrm{StartP2}_w$. From the definition of $\mathrm{SettledPhase1Col}$, we have $\mathrm{SettledPhase1Col}_v = \mathtt{myPhase1Col}_{v,\mathrm{StartP2}_v-1}$. By \Cref{NbrDiffP1Cols}, we get that $\mathtt{myPhase1Col}_{v,\mathrm{StartP2}_v-1} \neq \mathtt{myPhase1Col}_{w,\mathrm{StartP2}_v-1}$. However, since $\mathrm{StartP2}_v-1 \geq \mathrm{StartP2}_w-1$, we conclude from \Cref{SettledP1Col} that $\mathtt{myPhase1Col}_{w,\mathrm{StartP2}_v-1} = \mathrm{SettledPhase1Col}_w$. To summarize, we have
	$\mathrm{SettledPhase1Col}_v = \mathtt{myPhase1Col}_{v,\mathrm{StartP2}_v-1} \neq \mathtt{myPhase1Col}_{w,\mathrm{StartP2}_v-1} = \mathrm{SettledPhase1Col}_w$, which implies that $\mathrm{SettledPhase1Col}_v \neq \mathrm{SettledPhase1Col}_w$.
\end{proof}

Using the code, we next observe that each node chooses a final colour and advertises this final colour to its neighbours exactly once.
\begin{proposition}\label{FinalExactlyOnce}
	Each node performs lines \ref{ChooseFinal}--\ref{terminate} exactly once.
\end{proposition}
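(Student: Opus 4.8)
The plan is to argue the two directions ``at least once'' and ``at most once'' separately, both directly from the structure of the pseudocode of \EarlyStopCV together with the termination guarantee already established in \Cref{terminationbound}.

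First I would establish that each node executes lines \ref{ChooseFinal}--\ref{terminate} at least once. The key observation is that the only \textbf{terminate()} instruction appearing anywhere in the algorithm is line \ref{terminate}, and this line sits at the end of the three-line block \ref{ChooseFinal}--\ref{terminate} that is guarded by the condition on line \ref{termCondition}. By \Cref{terminationbound}, every node $v$ halts within $\log^*(\mathrm{ID}_v)+59$ rounds; since halting can only occur by executing the terminate instruction on line \ref{terminate}, node $v$ must reach line \ref{terminate} in some round. Because lines \ref{ChooseFinal}--\ref{terminate} form a straight-line block (no branching or looping between them) entered only when the guard on line \ref{termCondition} holds, reaching line \ref{terminate} forces $v$ to have executed line \ref{ChooseFinal} and the intervening send instruction in the same round. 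Hence every node performs the block at least once.

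Next I would argue ``at most once.'' Executing \textbf{terminate()} on line \ref{terminate} halts the node's execution permanently, so after the round in which a node performs this block it never executes any further line of the algorithm. In particular it cannot re-enter the guarded block, so it performs lines \ref{ChooseFinal}--\ref{terminate} no more than once. Combining the two directions yields that each node performs these lines exactly once.

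There is no substantial obstacle here; the result is essentially a read-off from the code once \Cref{terminationbound} is in hand. The only point requiring a little care is confirming the ``all-or-nothing'' nature of the block: because the three instructions are sequential statements inside the single conditional whose guard is line \ref{termCondition}, and the block ends with the halting instruction, a node can neither choose a final colour without terminating nor terminate without first choosing a final colour. This atomicity is what lets us identify ``performs line \ref{terminate}'' with ``performs the whole block,'' and I would state it explicitly before combining the two counting bounds.
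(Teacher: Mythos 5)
Your proof is correct and follows essentially the same route as the paper's: ``at most once'' because line \ref{terminate} halts execution, and ``at least once'' because line \ref{terminate} is the only termination instruction and \Cref{terminationbound} guarantees termination. The extra remark about the atomicity of the block \ref{ChooseFinal}--\ref{terminate} is a harmless elaboration of what the paper leaves implicit.
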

\begin{proof}
	Consider an arbitrary node $v$. Node $v$ performs lines \ref{ChooseFinal}--\ref{terminate} at most once, since $v$'s execution terminates after performing line \ref{terminate} for the first time. Node $v$ performs lines \ref{ChooseFinal}--\ref{terminate} at least once since line \ref{terminate} is the only termination instruction, and \Cref{terminationbound} tells us that $v$ eventually terminates its execution.
\end{proof}

We now prove that, when choosing its final colour, a node will always be able to choose a value from $\{0,1,2\}$, which proves that the algorithm uses at most 3 colours.

\begin{proposition}\label{ValidFinalCol}
	After performing line \ref{ChooseFinal}, the value of $\mathtt{myFinalCol}$ is a value in $\{0,1,2\}$.
\end{proposition}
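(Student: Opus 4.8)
The plan is to observe that this is a pure counting argument about a set difference, so essentially no machinery from the rest of the paper is needed. Line \ref{ChooseFinal} assigns to $\mathtt{myFinalCol}$ the smallest element of the set $\{0,1,2\} \setminus \{\mathtt{Acol.final},\mathtt{Bcol.final}\}$, and the entire content of the claim is that this set is non-empty (so that a smallest element exists) and that, being a subset of $\{0,1,2\}$, every element it contains lies in $\{0,1,2\}$.

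First I would note that the set $\{0,1,2\}$ has exactly three elements, while the set $\{\mathtt{Acol.final},\mathtt{Bcol.final}\}$ has at most two elements, regardless of what those two variables currently hold (they may be equal to each other, may be $\mathit{null}$, or may store values outside $\{0,1,2\}$ — none of this matters). Hence the cardinality of the difference satisfies $|\{0,1,2\} \setminus \{\mathtt{Acol.final},\mathtt{Bcol.final}\}| \geq 3 - 2 = 1$, so the difference is non-empty and its smallest element is well-defined. Since this smallest element is, by definition of set difference, a member of $\{0,1,2\}$, the value assigned to $\mathtt{myFinalCol}$ lies in $\{0,1,2\}$, as required.

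There is no real obstacle here: the only point that deserves an explicit sentence is the non-emptiness of the set difference, which is the reason the ``smallest element'' operation on line \ref{ChooseFinal} does not fail. I would deliberately \emph{not} invoke any correctness lemma (e.g.\ \Cref{DifferentSettledP1}) in this proof, since the claim is purely about one line of code operating on a three-element colour palette with at most two forbidden values; the interaction between this proposition and the distinctness of neighbours' settled Phase 1 colours is what guarantees a \emph{proper} final colouring, but that is a separate statement to be handled elsewhere. The proof I would write is therefore only two or three sentences long.
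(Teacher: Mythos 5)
Your proof is correct and uses exactly the same argument as the paper: the set $\{\mathtt{Acol.final},\mathtt{Bcol.final}\}$ has at most two elements, so $\{0,1,2\} \setminus \{\mathtt{Acol.final},\mathtt{Bcol.final}\}$ is non-empty and any element of it lies in $\{0,1,2\}$. Your additional remark that the correctness lemmas are irrelevant here is accurate; the paper's proof likewise invokes none of them.
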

\begin{proof}
	The set $\{\mathtt{Acol.final},\mathtt{Bcol.final}\}$ can contain at most two values from $\{0,1,2\}$, so the set $\{0,1,2\} \setminus \{\mathtt{Acol.final},\mathtt{Bcol.final}\}$ is non-empty when line \ref{ChooseFinal} is performed.
\end{proof}

From the fact that every two neighbouring nodes will settle on different Phase 1 colours, we can now confirm the round-robin nature of Phase 2, i.e., we can guarantee that no two neighbouring nodes will choose their final colour in the same round.

\begin{lemma}\label{P2DifferentRounds}
	If two neighbouring nodes $v,w$ have\\ $\mathrm{SettledPhase1Col}_v \neq \mathrm{SettledPhase1Col}_w$, then nodes $v$ and $w$ execute line \ref{ChooseFinal} in different rounds.
\end{lemma}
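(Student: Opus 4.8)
The plan is to exploit the global synchrony of the $\mathtt{token}$ variable together with the fact that each admissible Phase 1 colour is assigned a private token value. Since all nodes start in round 1 and $\mathtt{token}$ is computed as $\mathtt{clockVal} \bmod 56$ at line~\ref{StartPhase2}, the value of $\mathtt{token}$ is identical at every node in any fixed round. By \Cref{FinalExactlyOnce}, each of $v$ and $w$ performs line~\ref{ChooseFinal} exactly once; call these rounds $r_v$ and $r_w$. It therefore suffices to prove that $r_v \neq r_w$.

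First I would pin down the token value in the round in which a node fires. A node performs line~\ref{ChooseFinal} only when the guard on line~\ref{termCondition} evaluates to true, and inspecting that guard shows it associates to each admissible Phase 1 colour a single token value: a colour in $\{0,\ldots,\maxCol\}$ fires exactly when $\mathtt{token}$ equals that colour, while the four special colours $\lmin,\lmax,\mathit{Pdone},\mathit{Cdone}$ fire exactly when $\mathtt{token}$ equals $52,53,54,55$, respectively. I would record this as a map $\tau$ from $\{0,\ldots,\maxCol\}\cup\{\lmin,\lmax,\mathit{Pdone},\mathit{Cdone}\}$ into $\{0,\ldots,55\}$ and observe that $\tau$ is injective (in fact a bijection, since $\maxCol=51$). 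Since $\mathtt{myPhase1Col}$ is never modified during Phase 2, the colours appearing in the guard when $v$ and $w$ fire are precisely $\mathrm{SettledPhase1Col}_v$ and $\mathrm{SettledPhase1Col}_w$; moreover, by lines~\ref{EndPhase0} and~\ref{UpdateDP1}, both settled colours lie in the domain of $\tau$, so $\tau$ is defined on each of them.

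The remaining step chains these facts. When $v$ fires in round $r_v$, the guard forces $r_v \bmod 56 = \tau(\mathrm{SettledPhase1Col}_v)$, and likewise $r_w \bmod 56 = \tau(\mathrm{SettledPhase1Col}_w)$. By hypothesis $\mathrm{SettledPhase1Col}_v \neq \mathrm{SettledPhase1Col}_w$, so injectivity of $\tau$ yields $\tau(\mathrm{SettledPhase1Col}_v) \neq \tau(\mathrm{SettledPhase1Col}_w)$, and hence $r_v \bmod 56 \neq r_w \bmod 56$. In particular $r_v \neq r_w$, which is exactly the claim.

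I do not anticipate a genuine obstacle; the argument is essentially bookkeeping. The two points that require care are justifying that the token value is consistent across nodes in each round — which rests on the standing assumption that all nodes begin in round 1 and share the clock — and confirming that each settled colour really lies in $\{0,\ldots,\maxCol\}\cup\{\lmin,\lmax,\mathit{Pdone},\mathit{Cdone}\}$, so that $\tau$ applies to it. Both are immediate from the pseudocode, so the proof is short.
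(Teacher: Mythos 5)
Your proof is correct and is essentially the paper's argument run in the forward direction: the paper proves the contrapositive (same firing round $\Rightarrow$ same token $\Rightarrow$ same $\mathtt{myPhase1Col}$ $\Rightarrow$, via \Cref{SettledP1Col}, same settled colour), while you go directly via the injective colour-to-token map, using exactly the same ingredients (clock synchrony, the structure of the guard on line \ref{termCondition}, and the immutability of $\mathtt{myPhase1Col}$ in Phase 2). No gap; the two write-ups are interchangeable.
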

\begin{proof}
	We prove the contrapositive of the given statement. Suppose that $v$ and $w$ execute line \ref{ChooseFinal} in the same round $r$. Due to the assumption that all nodes start the algorithm in the same round, we have that the value of $\mathtt{clockVal}$ at nodes $v$ and $w$ are equal to $r$ in round $r$, so the value of $\mathtt{token}$ is the same at nodes $v$ and $w$ in round $r$. Since both $v$ and $w$ execute line \ref{ChooseFinal} in round $r$, it follows that the \textbf{if} condition on line \ref{termCondition} evaluates to true at both nodes in round $r$. It follows that $v$ and $w$ have the same value of $\mathtt{myPhase1Col}$ in round $r$. By \Cref{SettledP1Col}, it follows that $\mathrm{SettledPhase1Col}_v = \mathrm{SettledPhase1Col}_w$, as desired.
\end{proof}

If two nodes choose their final colour in different rounds, then the node that chooses later can always avoid the colour that was chosen by the earlier node. We use this fact to confirm that, as long as two nodes choose their final colour in different rounds, they will terminate the algorithm with different final colours.

\begin{lemma}\label{DifferentFinalCol}
	If two neighbouring nodes $v,w$ execute line \ref{ChooseFinal} in different rounds, then they terminate the algorithm with different colours stored in $\mathtt{myFinalCol}$.
\end{lemma}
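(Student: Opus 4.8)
The plan is to exploit the fact that whichever of the two nodes chooses its final colour first broadcasts that colour to its neighbours immediately before terminating, so the later node has this value on hand and deliberately avoids it when making its own choice. Without loss of generality, suppose $w$ executes line~\ref{ChooseFinal} in a round $r_w$ that is strictly earlier than the round $r_v$ in which $v$ executes line~\ref{ChooseFinal}; both rounds are well-defined and distinct by \Cref{FinalExactlyOnce} together with the hypothesis. Let $c_w$ and $c_v$ denote the values that $w$ and $v$ respectively store in $\mathtt{myFinalCol}$ when performing line~\ref{ChooseFinal}; by \Cref{ValidFinalCol}, both lie in $\{0,1,2\}$. The goal is to establish that $c_v \neq c_w$.

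First I would track $w$'s behaviour in round $r_w$: immediately after choosing $c_w$ on line~\ref{ChooseFinal}, it sends the message $(\text{``final''}, c_w)$ to both neighbours and then terminates on line~\ref{terminate}. Since $v$ is a neighbour of $w$, node $v$ receives this message in round $r_w$; say $w$ is $v$'s $A$-neighbour (the $B$-neighbour case is entirely symmetric, using $\mathtt{Bcol.final}$ in place of $\mathtt{Acol.final}$). Because $r_w < r_v$, node $v$ has not yet terminated in round $r_w$, so it is executing either Phase 1 or Phase 2 during that round. If $v$ is in Phase 1, it reaches the message-handling line~\ref{fromAP1}; if $v$ is in Phase 2, then since it does not execute line~\ref{ChooseFinal} in round $r_w$ (it does so only in round $r_v \neq r_w$), the terminate condition on line~\ref{termCondition} evaluates to false, so $v$ falls through to line~\ref{fromAP2}. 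In either case $v$ stores the received value into $\mathtt{Acol.final}$, giving $\mathtt{Acol.final} = c_w$ at the end of round $r_w$.

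Next I would argue that this stored value persists until round $r_v$. By \Cref{FinalExactlyOnce}, node $w$ sends a ``final'' message exactly once and then halts, so $v$ receives no further messages from its $A$-neighbour after round $r_w$; consequently $\mathtt{Acol.final}$ retains the value $c_w$ throughout all rounds from $r_w$ up to and including $r_v$. Finally, in round $r_v$, node $v$ sets $c_v$ to the smallest element of $\{0,1,2\} \setminus \{\mathtt{Acol.final}, \mathtt{Bcol.final}\}$ on line~\ref{ChooseFinal}; since $\mathtt{Acol.final} = c_w$ at that moment, the value $c_w$ is excluded from the candidate set, forcing $c_v \neq c_w$, as required.

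The only delicate point, and the step I expect to require the most care, is the round-$r_w$ bookkeeping: one must verify that $v$, although still mid-execution, reliably records $w$'s final colour regardless of which phase $v$ is in, and in particular that in the Phase 2 case the hypothesis $r_w \neq r_v$ is precisely what rules out $v$ prematurely terminating before reaching the message-handling lines~\ref{fromAP2}--\ref{fromBP2}. Everything else is a direct reading of the pseudocode combined with the two cited propositions.
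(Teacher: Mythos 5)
Your proof is correct and follows essentially the same argument as the paper's: the earlier node broadcasts its final colour before terminating, the later node records it in $\mathtt{Acol.final}$ (or $\mathtt{Bcol.final}$), the value persists because the terminated node sends nothing further, and line~\ref{ChooseFinal} then excludes it. Your extra care about which message-handling line ($\ref{fromAP1}$ vs.\ $\ref{fromAP2}$) fires depending on the phase of the later node is a slightly more explicit version of what the paper states in one breath.
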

\begin{proof}
	For each $x \in \{v,w\}$, denote by $t_x$ the round in which node $x$ performs line \ref{ChooseFinal}, which is well-defined by \Cref{FinalExactlyOnce}. Suppose that $t_v \neq t_w$, and without loss of generality, assume that $t_v > t_w$. 
	
	In round $t_w$, node $w$ sets its value of $\mathtt{myFinalCol}$ to a value in $\{0,1,2\}$ (by \Cref{ValidFinalCol}) and sends this value to its neighbours with the string ``final''. Therefore, in round $t_w$, node $v$ receives a message of the form (``final'',$c_w$) where $c_w$ denotes the final colour chosen by $w$. At one of lines \ref{fromAP1}, \ref{fromBP1}, \ref{fromAP2}, \ref{fromBP2}, node $v$ will store the value of $c_w$ in either $\mathtt{Acol.final}$ or $\mathtt{Bcol.final}$ (depending on whether $w$ corresponds to $v$'s neighbour $A$ or neighbour $B$). Without loss of generality, assume that $v$ stores $c_w$ in $\mathtt{Acol.final}$. Since $w$ terminates its execution at line \ref{terminate} in round $t_w$, node $w$ will not modify its value of $\mathtt{myFinalCol}$ in any round after $t_w$, so $\mathtt{myFinalCol}$ at $w$ is $c_w$ at all times after round $t_w$. Moreover, since $w$ terminates its execution at line \ref{terminate} in round $t_w$, node $w$ will not send out another message after round $t_w$, so $v$'s value of $\mathtt{Acol.final}$ is $c_w$ at all times after round $t_w$.
	
	Finally, in round $t_v > t_w$, node $v$ sets its value of $\mathtt{myFinalCol}$ at line \ref{ChooseFinal}, and from the code, we see that $v$'s final colour is set to a value that is not equal to $\mathtt{Acol.final}$. But we proved above that $\mathtt{Acol.final}$ is equal to $c_w$ in round $t_v$, so $v$ chooses a final colour that is different from $w$'s. Since $v$ terminates its execution at line \ref{terminate} in round $t_v$, the value of $\mathtt{myFinalCol}$ at node $v$ is never modified again, which concludes the proof that $v$ and $w$ terminate the algorithm with different values stored in $\mathtt{myFinalCol}$.
\end{proof}

We are now able to prove the correctness of the algorithm.

\begin{theorem}
	For all pairs of neighbouring nodes $v,w$, the value of $\mathtt{myFinalCol}$ at $v$ upon $v$'s termination of the algorithm is different than the value of $\mathtt{myFinalCol}$ at $w$ upon $w$'s termination of the algorithm, and both of these values are in $\{0,1,2\}$.
\end{theorem}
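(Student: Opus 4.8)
The plan is to chain together the four structural lemmas proved immediately above this statement, since each one was engineered to supply exactly one link in the argument. I would begin by fixing an arbitrary pair of neighbouring nodes $v,w$ and splitting the goal into two independent claims: that both final colours lie in $\{0,1,2\}$, and that the two final colours differ.

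For the membership claim, I would first invoke \Cref{FinalExactlyOnce} to record that each of $v$ and $w$ executes line \ref{ChooseFinal} exactly once, so that $\mathtt{myFinalCol}$ is actually assigned at both nodes before they terminate and is therefore well-defined at termination. Then I would apply \Cref{ValidFinalCol}, which guarantees that the value assigned at line \ref{ChooseFinal} always lies in $\{0,1,2\}$. Since a node terminates at line \ref{terminate} immediately after line \ref{ChooseFinal} and never modifies $\mathtt{myFinalCol}$ thereafter, the value present at termination is exactly the one chosen at line \ref{ChooseFinal}, hence in $\{0,1,2\}$ for both $v$ and $w$.

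For the distinctness claim, I would run a three-step pipeline. \Cref{DifferentSettledP1} gives $\mathrm{SettledPhase1Col}_v \neq \mathrm{SettledPhase1Col}_w$; feeding this into \Cref{P2DifferentRounds} shows that $v$ and $w$ execute line \ref{ChooseFinal} in distinct rounds; and feeding that conclusion into \Cref{DifferentFinalCol} shows that they terminate with distinct values stored in $\mathtt{myFinalCol}$. Combining the membership and distinctness claims yields the theorem.

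The final step is thus essentially a bookkeeping composition: all the genuine difficulty has already been absorbed into the earlier results, above all into \Cref{DifferentSettledP1}, whose proof rests on the inductive invariant of \Cref{NbrDiffP1Cols} that neighbouring nodes never share a Phase 1 colour in any round, itself established through the exhaustive case analysis of \Cref{atLeastOneMaxMin,neitherDoPhase1,ExactlyOneP1,BothP1}. Consequently I anticipate no real obstacle here; the only points worth verifying are that the two claims are logically independent (they are, since membership follows solely from \Cref{ValidFinalCol} while distinctness follows from the settled-colour pipeline) and that I do not silently rely on any assumption beyond the standing synchronized-clock hypothesis of the $\mathcal{LOCAL}$ model, which is precisely what \Cref{P2DifferentRounds} already exploits via the shared $\mathtt{clockVal}$.
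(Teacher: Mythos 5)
Your proposal is correct and follows exactly the same route as the paper's proof: \Cref{FinalExactlyOnce} and \Cref{ValidFinalCol} for membership in $\{0,1,2\}$, then the chain \Cref{DifferentSettledP1} $\Rightarrow$ \Cref{P2DifferentRounds} $\Rightarrow$ \Cref{DifferentFinalCol} for distinctness. No gaps; the composition is precisely the intended bookkeeping step.
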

\begin{proof}
	By \Cref{FinalExactlyOnce,ValidFinalCol}, both $v$ and $w$ set their value of $\mathtt{myFinalCol}$ to a value from $\{0,1,2\}$. By \Cref{DifferentSettledP1}, it follows that $\mathrm{SettledPhase1Col}_v \neq \mathrm{SettledPhase1Col}_w$, so, by \Cref{P2DifferentRounds}, nodes $v$ and $w$ execute line \ref{ChooseFinal} in different rounds. By \Cref{DifferentFinalCol}, it follows that $v$ and $w$ choose different values for $\mathtt{myFinalCol}$.
\end{proof}

\subsection{Extending to other graphs}\label{sec:linecolouringextend}
The exact same algorithm and proofs work when the graph is a cycle. In the case where the graph is a finite path, the only change needed is to modify the definition of local minimum and local maximum: a node is a local minimum if it has no neighbour with smaller ID, and a node is a local maximum if it has no neighbour with larger ID. Then, each endpoint of the path is considered a local maximum or local minimum, and the remaining details of the algorithm and proof work without further changes.

\section{Conclusion}\label{sec:conclusion}
We presented rendezvous algorithms for three scenarios: the scenario of the canonical line,  the scenario of arbitrary labeling with known initial distance $D$, and the scenario where each agent knows {\em a priori} only the label of its starting node. While for the first two scenarios the complexity of our algorithms is optimal (respectively $O(D)$ and $O(D\log^*\ell)$, where $\ell$ is the larger label of the two starting nodes), for the most general scenario, where each agent knows {\em a priori} only the label of its starting node, the complexity of our algorithm is $O(D^2(\log^*\ell)^3)$, for arbitrary unknown $D$, while the best known lower bound, valid also in this scenario, is $\Omega(D\log^*\ell)$ .

The natural open problem is the optimal complexity of rendezvous in the most general scenario (with arbitrary labeling and unknown $D$), both for the infinite labeled line and for the finite labeled lines and cycles. This open problem can be generalized to the class of arbitrary trees or even arbitrary graphs.

\bibliographystyle{plainurl}
\bibliography{references}
\end{document}